\theoremstyle{plain}
\newtheorem{assumption}{\protect\assumptionname}
\theoremstyle{plain}
\newtheorem{prop}{\protect\propositionname}
\theoremstyle{plain}
\newtheorem{lem}{\protect\lemmaname}
\theoremstyle{plain}
\newtheorem{cor}{\protect\corollaryname}
\theoremstyle{plain}
\newtheorem{remark}{Remark}
\newcommand\independent{\protect\mathpalette{\protect\independenT}{\perp}}
\def\independenT#1#2{\mathrel{\rlap{$#1#2$}\mkern2mu{#1#2}}}
\providecommand{\assumptionname}{Assumption}
\providecommand{\lemmaname}{Lemma}
\providecommand{\propositionname}{Proposition}
\providecommand{\corollaryname}{Corollary}
\begin{document}
\pagestyle{plain}

\thispagestyle{empty}

\setcounter{footnote}{0}

\renewcommand{\thefootnote}{\fnsymbol{footnote}}

\newcounter{daggerfootnote}
\newcommand*{\daggerfootnote}[1]{%
    \setcounter{daggerfootnote}{\value{footnote}}%
    \renewcommand*{\thefootnote}{\fnsymbol{footnote}}%
    \footnote[2]{#1}%
    \setcounter{footnote}{\value{daggerfootnote}}%
    \renewcommand*{\thefootnote}{\arabic{footnote}}%
    }

\begin{titlepage}

\noindent \begin{center}
~\\
\vspace{35bp}
{\LARGE{}Further Education During Unemployment}\footnote[1]{We are grateful for insightful comments from the co-editor and four anonymous reviewers. We also thank Burt Barnow, Damon Clark, Rajeev Darolia, Sue Dynarski, Nathan Grawe, David S. Lee, Mike Lovenheim, Jordan Matsudaira, Doug Miller, Olivia Mitchell, Ronni Pavan, Ceci Rouse, Haiyuan Wan, Abbie Wozniak, and participants at AASLE, APPAM, Binghamton, CEU, CIRANO-CIREQ, Duke, IZA ed workshop, Michigan State, Northwestern, NTA, Princeton, SOLE, UC Davis, UVA, and Zurich for suggestions and discussions. Lexin Cai, Amanda Eng, Rebecca Jackson, Hyewon Kim, Suejin Lee, Kathryn McGinnis, and Katherine Wen provided excellent research assistance. We are indebted to Lisa Neilson and the staff members at the Center for Human Resource Research at Ohio State University, the Ohio Department of Jobs and Family Services, and the Ohio Department of Higher Education for providing the data and answering our many questions. We thank Jeff Smith for generously sharing the National JTPA Study data. Financial support from the Cornell Institute of Social Sciences is gratefully acknowledged. All errors and opinions are our own.}\footnote[2]{The Ohio Longitudinal Data Archive is a project of the Ohio Education Research Center (\url{oerc.osu.edu}) and provides researchers with centralized access to administrative data. The OLDA is managed by The Ohio State University's Center for Human Resource Research (\url{chrr.osu.edu}) in collaboration with Ohio's state workforce and education agencies (\url{olda.ohio.gov}), with those agencies providing oversight and funding. For information on OLDA sponsors, see \url{https://chrr.osu.edu/projects/ohio-longitudinal-data-archive}.}
\par\end{center}\vspace{20bp}

\renewcommand{\thefootnote}{\arabic{footnote}}

\setcounter{footnote}{0}

\begin{multicols}{2} 

\begin{singlespace}
\begin{center}
{\large{}Pauline Leung}\footnote{{\footnotesize{}Email: pleung@cornell.edu}}{\large{}}\\
{\large{}Cornell University}{\large\par}
\par\end{center}
\end{singlespace}

\columnbreak

\begin{singlespace}
\begin{center}
{\large{}Zhuan Pei}\footnote{{\footnotesize{}Email: zhuan.pei@cornell.edu}}{\large{}}\\
{\large{}Cornell University}{\large\par}
\par\end{center}
\end{singlespace}

\end{multicols}\vspace{-20bp}

\ 

\begin{center}
{\large{}December 2025}{\large\par}
\par\end{center}
\begin{abstract}
\begin{singlespace} \vspace{-10bp}

{\normalsize{}Evidence on the effectiveness of retraining U.S. unemployed workers primarily comes from evaluations of training programs, which represent one narrow avenue for skill acquisition. We use high-quality records from Ohio and a matching method to estimate the effects of retraining, broadly defined as enrollment in postsecondary institutions. Our simple method bridges two strands of the dynamic treatment effect literature that estimate the treatment-now-versus-later and treatment-versus-no-treatment effects. We find that enrollees experience earnings gains of six percent three to four years after enrolling, after depressed earnings during the first two years. The earnings effects are driven by industry-switchers, particularly to healthcare.}{\normalsize\par}

\bigskip{}

Keywords: Training, Unemployment, Community College, Dynamic Treatment Effect

JEL codes: J24, J68, I26

\end{singlespace}
\end{abstract}
\thispagestyle{empty}

\end{titlepage}

\pagebreak{}

\renewcommand{\thefootnote}{\arabic{footnote}}

\newcommand{\multfootsep}{\textsuperscript{,}}

\setcounter{footnote}{0}

\setlength{\abovedisplayskip}{3pt}

\setlength{\belowdisplayskip}{3pt}

\setlength{\abovedisplayshortskip}{3pt}

\setlength{\belowdisplayshortskip}{3pt}

\section{Introduction\label{sec:Intro}}

The U.S. labor market has become increasingly polarized in recent decades, as high- and low-skilled jobs grow at the expense of middle-skilled jobs that traditionally employ workers with moderate levels of education (\citealp{AutorEtAl2006,Autoretal2008}; \citealp{AutorDorn2013}; \citealp{Autor2015}). These patterns were exacerbated during economic downturns, leading to disproportionately high unemployment among workers without a college degree (\citealp{Katz2010}; \citealp{Hoynesetal2012}; \citealp{JaimovichSiu2020}). A long line of research shows that job displacement, especially during economic downturns, is associated with large and persistent earnings losses, adverse health outcomes, and negative impacts on the children of the unemployed (\citealp{JacobsonEtAl1993}; \citealp{CouchPlaczek2010}; \citealp{Krolikowski2018}; \citealp{SullivanandVonWachter2009}; \citealp{DavisvonWachter2011}; \citealp{Oreopoulosetal2008}; \citealp{StevensandSchaller2011}). To mitigate these social and economic costs, economists and policymakers across the political spectrum have advocated for new skill acquisition through further education (\citealp{Holzer2015}). At the peak of the Great Recession, for example, the U.S. Departments of Labor and Education created the website opportunity.gov and encouraged state governments to contact unemployment insurance (UI) claimants and inform them of resources (e.g., federal financial aid) and institutions (e.g., community colleges) for reskilling.

Much of our knowledge about the effects of further education for unemployed workers in the U.S. comes from evaluations of government-sponsored training programs (e.g., the Workforce Investment Act program, or WIA, now replaced by the Workforce Innovation and Opportunity Act, or WIOA), even though these programs only constitute one narrow avenue through which unemployed workers upgrade their skills. In reality, many more workers enroll directly in a local postsecondary institution such as a community college. For instance, in the fall of 2017, 4.5 million nontraditional (i.e., 25 years old or above) undergraduate students were enrolled nationwide, compared to 1.6 million participants in the largest U.S. training program in 2016-17 (WIOA Adult and Dislocated Worker programs), of which only a minority received training services (\citealp{NCES_Digest2018}; \citealp{WIA_Databook_2016}). While a large literature analyzes the effects of community college education (see \citealp{KaneRouse1999} and \citealp{BelfieldBailey2011} for reviews), few studies focus on unemployed workers, a policy-relevant group that differs from other community college attendees. Unemployed workers tend to be older and more experienced, but they have different opportunity costs, face different labor market barriers, and therefore may see different returns to further education. The only studies that directly examine unemployed workers enrolled in community colleges are \citet{Jacobson_etal2005_JE,Jacobson_etal2005_ILR}, which focus on long-tenured Washington state workers laid off in the early 1990s. This leaves a twenty-year research void on this important topic.

Our study seeks to fill this void. We estimate the labor market effects of retraining among unemployed workers, where retraining is broadly defined as enrollment in a postsecondary institution.\footnote{We use the word ``retrain'' in accordance with its meaning in the \href{https://dictionary.cambridge.org/us/dictionary/english/retrain}{Cambridge Dictionary}: to learn new skills so you can do a different job, or to teach someone a new skill so that they can do a different job. Workers can ``retrain'' irrespective of their educational background.} We link together high-quality administrative data from the state of Ohio, which include UI claims, quarterly wage records, course enrollment and credential data from all in-state public higher education institutions (including community colleges and technical centers), and WIA records. By following unemployed workers who filed a UI claim between 2004 and 2011, we observe that indeed the majority of retraining does not occur within the context of a narrowly defined training program: in our data, nearly 88,000 workers enroll in public postsecondary institutions following a layoff, compared with 27,000 workers who retrain through WIA.

We also tackle methodological issues along the way of our empirical inquiry. To estimate the effects of retraining, we use a matching method that compares the labor market outcomes of unemployed workers who pursue further education (enrollees) versus observably similar workers who do not (matched non-enrollees) within two years after layoff. Because workers enroll at different times, a standard matching estimand will only identify the effect of enrolling now versus potentially enrolling later (e.g., \citealp{Sianesi2004}), but not the effect of enrolling versus not enrolling that we are interested in. We show that, with a testable additional assumption regarding the selection into training that is implied by models in \citet{Heckman1978}, \citet{AshenfelterandCard1985}, and \citet{HeckmanandRobb1985JOE,HeckmanandRobb1985book}, we can identify a lower bound of the latter effect with a simple modification of the standard estimand. The estimated lower bound appears to be tight in our empirical context.

Our matching specification is informed by the large literature that uses selection-on-observables designs to evaluate training programs in both the U.S. and international contexts (see \citealp{McCalletal2016_Handbook} for a comprehensive review). Moreover, to support our specification, we have conducted our own validation analysis in the spirit of \citet{LaLonde1986}, using data from the National Job Training Partnership Act Study (NJS) (details can be found in our previous working paper \citealp{LeungandPei2020}). This analysis, which builds on the influential work by \citet{Heckman_etal1997,Heckmanetal1998}, \citet{Heckmanetal1998_ECMA}, and \citet{HeckmanSmith1999}, evaluates the ability of various models and specifications (including those based on machine-learning) to recover a causal effect. We find that when we have a sample of workers recently attached to the labor market and incorporate detailed earnings histories linearly into the covariate set, conventional (logit-based) propensity score matching performs well, indicating the plausibility of the underlying conditional independence assumption. 

We graphically present the average earnings trajectories of enrollees and matched non-enrollees in the five years before and four years after enrollment. The trajectories reveal little difference in earnings pre-enrollment, followed by temporarily depressed earnings of enrollees while they are in school (the ``lock-in'' effect), and sustained positive effects thereafter. Overall, we estimate that the lower bound earnings effect among enrollees is \$348 per quarter, or about six percent, in the third and fourth years after enrolling. A decomposition of this earnings gain reveals that retraining affects earnings mostly at the extensive margin. While the magnitudes of enrollment effects are heterogeneous across various subgroups, we consistently observe earnings gains four years after enrolling. Following an early subset of workers for a longer period, we find that the retraining effect persists and widens to 13 percent at the end of a ten-year horizon.

Another advantage of our study relative to existing training program evaluations is our ability to look into the ``black box'' of retraining. That is, we observe the courses taken and credentials received by enrollees in our sample, which allows us to explore the types of training underlying our estimates. A simple accounting exercise suggests that the enrollment effects are driven by workers who train and subsequently find employment in new industries post-layoff, particularly the healthcare sector.

This paper makes the following contributions. First, it bridges two largely separate strands of empirical literature on training programs and on community colleges by studying the policy-relevant unemployed worker population that intersects with both. As mentioned above, the U.S. training literature focuses mainly on evaluating government-sponsored programs, which finds mixed results.\footnote{For WIA, recent experimental and non-experimental evaluations find zero to long-lasting negative effects of training for dislocated workers (\citealp{Heinrichetal2008}; \citealp{McConnelletal2016}; \citealp{Fortsonetal2017}; \citealp{Anderssonetal2013}). For the Trade Adjustment Assistance (TAA) program, which provides training to workers affected by trade, one non-experimental evaluation finds initially large negative effects that fade to zero over a four-year period, while another study utilizing quasi-random variation on TAA petition approvals finds positive effects, though the two studies present estimates of different quantities that are not directly comparable (\citealp{Schochet2012}; \citealp{Hyman2018}).} The community college literature primarily estimates the earnings gain associated with specific credentials, finding that associate degrees yield earnings gains of about 18 to 26 percent relative to no degree, mixed effects of other credentials, and positive effects of healthcare-related programs (see review by \citealp{Belfield_Bailey2017a}). Since many community college students do not earn a credential (``non-completers''), these estimates are not directly comparable to our enrollment effect. In the one study that reports both credential and per-credit effects---\citet{Jepsen_etal2014}---we calculate that enrollment raises earnings by roughly 16 percent for women and 6 percent for men.\footnote{We weight their estimates for each credential type by credential shares and, for non-completers, multiply their per-credit estimate by 20 credits---the average among non-completers from \citet{Belfield_Bailey2017a}, since \citet{Jepsen_etal2014} do not report it.} As noted above, within the community college literature, \citet{Jacobson_etal2005_JE,Jacobson_etal2005_ILR} are the only studies that look specifically at unemployed workers. Their preferred regression model suggests that enrollment increases earnings between six and eight percent (their main earnings effect finding of nine to thirteen percent is for one year of full-time enrollment, and we scale it based on the average course load in their data), but their estimates are sensitive to the model used.\footnote{Most of the community college research using similar administrative earnings data relies on a fixed effects model. In our setting, we find differential pre-trends between enrollees and non-enrollees that would render estimates from fixed effects models biased, similar to \citet{Jacobson_etal2005_JE}. We also find that fixed effects specifications yield biased estimates in our validation exercise (see \citealp{LeungandPei2020}).}

Second, we contribute methodologically to the dynamic treatment effect literature by connecting the studies that estimate the treatment-now-versus-later and treatment-versus-no-treatment effects of training. In particular, by modifying the treatment-now-versus-later estimand per \citet{Sianesi2004}, we can use "static" propensity score matching to bound the treatment-versus-no-treatment effects on the treated that are typically identified with dynamic estimands as in \citet{Lechner2009JBES} and \citet{LechnerandMiquel2009}. Our simple estimator can be implemented with off-the-shelf software commands and avoids the inferential challenges their dynamic counterparts encounter. 

Finally, this paper sheds light on the effects of retraining for a recent period, which includes the Great Recession and covers a wide range of economic conditions, and in a Rust Belt state characterized by movement away from declining manufacturing industries. Recency of data is important: labor market trends such as the rise in automation and trade in past decades (\citealp{Autoretal2013}; \citealp{Autoretal2014}; \citealp{AcemogluRestrepo2020}) may have impacted training effects particularly in former manufacturing centers, which makes our estimates more informative for current policy-making relative to those by \citet{Jacobson_etal2005_JE,Jacobson_etal2005_ILR} from Washington state in the early 1990s. The economic boom and bust in our sample period allow us to speak to the literature examining the dependence of educational returns on labor market conditions. Consistent with previous studies (\citealp{LechnerWunsch2009}; \citealp{Kahn2010}; \citealp{Oreopoulosetal2012}), we find retraining leads to larger average earnings gains for those enrolled during the Great Recession, who sought jobs afterwards in a thawing labor market.

\section{Institutional Background\label{sec:Institutional-Background}}

\subsection{Unemployment Insurance}

In this paper, we identify unemployed workers as those who claim unemployment insurance. To be eligible for UI, workers must have lost a job through no fault of their own and have sufficient earnings and work weeks prior to job loss. In Ohio, workers must have worked at least 20 weeks and have an average weekly wage of about \$200 in the one-year period that begins five calendar quarters prior to job loss. As a result, our study population consists of UI claimants previously attached to the labor force.

While workers generally need to actively search for jobs and be available to work in order to continue receiving benefit payments, they can pursue ``approved training'' opportunities without losing UI eligibility. States vary in their definitions of approved training, though it generally includes vocationally-oriented or basic education training. According to \citet{NASWA2010}, the Ohio unemployment agency automatically approves all training through workforce programs and has 7,000 courses listed as approved. It also approves academic courses that do not lead to a specific occupation on a case-by-case basis.

\subsection{Postsecondary Institutions}

Our study focuses on the impact of classroom training, which can take place in different settings. First, workers may choose to attend community colleges and enroll in courses that may lead to an associate degree or sub-associate credentials such as certificates. Alternatively, workers may enroll at technical centers. Technical centers typically offer occupation-specific programs that may lead to a state license or other credentials. Examples include state license for practical nursing or professional certification in welding.

The cost of attendance varies by institution and program. According to the Integrated Postsecondary Education Data System (IPEDS), the average tuition and fees across Ohio institutions were approximately \$6,400 per year in 2010. However, since unemployed workers are often financially constrained, they are likely to be eligible for and rely on several forms of financial assistance. First, workers may be eligible for federal financial aid such as Pell grants, subsidized loans, and tuition tax credits.\footnote{Although the amount of federal aid typically depends on income from about two years prior, UI claimants may qualify for simplified needs tests or automatic zero expected family contribution starting in 2009.} Second, they may obtain training funding from workforce programs like WIA or TAA. In WIA, eligible participants may receive an Individual Training Account (ITA) voucher that can be used towards approved training (fewer than 20 percent of WIA participants receive training services, while the others only receive ``core'' or ``intensive'' services such as assistance in job search, placement, employment and career planning). The TAA program provides tuition assistance to workers affected by import competition. To understand the relative sizes of the various sources of financial assistance, \citet{BarnowSmith2016} report that in 2014, Pell Grants for those who pursued vocational education totaled \$8.2 billion. In contrast, the expenditures for the WIA Dislocated Worker program and TAA were only \$1.2 and \$0.3 billion, respectively.

\section{Data, Analysis Sample, and Descriptive Statistics\label{sec:Data}}

Our analysis primarily draws on several administrative data sources from Ohio: 1) UI claim records, 2) student records from public postsecondary institutions, including community colleges and technical centers, 3) quarterly wage records, and 4) WIA participant records.

For our analysis of labor market effects, we study workers who file an eligible UI claim between 2004 and the third quarter of 2011. To focus on workers who seek further education after unemployment, we exclude those who enroll at any point within two years prior to layoff.\footnote{This restriction eliminates roughly 147,000 claims. The excluded claimants are younger with lower tenure and lower pre-layoff earnings.} Our analysis sample contains 1.9 million claims, coming from 1.3 million unique individuals (see Appendix \ref{sec:Data Appendix} for details on sample construction and data elements). The UI records contain the claim date, demographics (gender, race, number of dependents, age, and zip code), and prior job information (industry and occupation).\footnote{The number of dependents is recorded because the maximum UI benefit amount is higher when a claimant has more dependents. However, only workers with high enough prior earnings receive the maximum UI amount. Since many workers' UI benefits do not change with dependents, this measure likely understates the true number of dependents.}

Our schooling data cover all public postsecondary institutions in Ohio. The Higher Education Information system (HEI) records contain enrollment information for 37 public two- and four-year institutions, though we focus on those who first enroll in a two-year institution in this analysis. In the HEI data, we observe terms enrolled, courses taken, and degrees or other credentials obtained (i.e., graduate or professional, bachelors, associate, or less than two-year awards). We also have student records from 53 publicly funded technical centers through the Ohio Technical Centers (OTC) database. In the OTC data, the courses offered range from one-day courses to certificate programs that last several years. We observe the dates of enrollment in courses as well as any credentials obtained. Despite the expansive coverage of the HEI and OTC data, we do not know whether a worker enrolls in a private institution. While we acknowledge this to be a limitation of our paper as our ``non-enrollees'' may in fact enroll in a program we do not observe, many studies in the training literature also suffer from similar issues. For example, a survey of trainees in the WIA Gold Standard Evaluation reveals that 28 percent of the training was not funded by WIA (\citealp{Fortsonetal2017}).\footnote{Specifically, \citet{Fortsonetal2017} find that while 43 percent of workers assigned to the ``full WIA'' experiment arm self-reported to have trained, only 31 percent in the experiment arm received WIA funding. This implies that $(43-31)/43=28$ percent of the trainees sought training outside WIA.} In comparison, the share of observed non-enrollees in our data enrolling in a private institution is likely to be smaller. According to IPEDS, among students 25 or older at two-year or lower institutions in the fall of 2007, less than 10 percent were enrolled in private institutions, of which 94 percent were in for-profit institutions. The unobserved private school enrollment is likely to lead us to understate the effect of training if there is private enrollment in our matched comparison sample---\citet{CelliniTurner2019} show that attending for-profit institutions leads to a positive, but statistically insignificant, earnings gain. We provide more details on the potential bias from unobserved private enrollment in Appendix \ref{subsec:courses_appendix}.

We construct our main outcome variables using quarterly earnings data from the state's UI system (we discuss how out-of-state earnings may impact our estimates in Appendix \ref{sec:Data Appendix}). In addition to earnings, we also observe, through the third quarter of 2017, number of weeks worked from 2003 and industry for each private sector employer from 1995 (unlike the UI claim records, the quarterly wage data contain no information on occupation). The long earnings history allows us to observe at least three years of pre-layoff earnings. We also use this data to construct measures of pre-layoff job tenure and outcomes like industry switching.

Finally, we observe whether a worker in our analysis sample is in the WIA Standardized Record Data, which cover participants of WIA Adult, Youth, and Dislocated Worker programs. We use information on the dates of WIA training to identify the subset of enrollees observed in the HEI and OTC data who received WIA training services. Next we state our definition of an enrollee and provide descriptive statistics on enrollee demographics, the timing of enrollment, and enrollment characteristics.

\textbf{Definition of an enrollee} We define an enrollee as a worker who enrolls in a community college or a technical center within two years of layoff.\footnote{Following a reviewer suggestion, we remove those who initially enroll in a four-year institution from the analysis entirely (i.e., they are not in treatment or comparison groups).}\multfootsep\footnote{\label{fn:enrollment-quarter}Since we observe enrollment term rather than date of enrollment in the HEI data, we approximate enrollment terms winter, spring, summer, and fall to the first, second, third, and fourth calendar quarters, respectively. Since enrollment typically occurs in the fall and spring terms, and those terms are likely to begin earlier than the corresponding fourth and second calendar quarters, we are likely to report an enrollment start date that is on average later than when workers actually begin schooling. In 2013, all Ohio public colleges switched to the semester system, which eliminates the corresponding winter quarter.} The two-year window is motivated by the fact that UI benefits were available for up to 99 weeks during the Great Recession, which our analysis period covers. We choose to have a consistent definition of treatment by using the same two-year window throughout the entire sample, even though UI is only available for 26 weeks under normal economic conditions. One may be concerned as to whether workers receiving 26 weeks of UI are still unemployed two years after layoff, but as we show in Section \ref{sec:Results}, enrollees on average have continuously depressed earnings between layoff and enrollment regardless of how soon they enroll.

\textbf{Who enrolls? }Panel A of Table \ref{tab:descriptives} presents descriptive statistics for our UI claimant sample by enrollment status. Of the 1.9 million claims in our data, 71,745, or 4 percent, are followed by enrollment in a public postsecondary (two-year or less) institution (see \citealp{Minayaetal2023} for related statistics). While women make up only 34 percent of UI claimants, they are better represented among enrollees, at 44 percent. Compared with 13 percent among non-enrollees, Blacks make up 18 percent of the enrollee population. In terms of prior job characteristics, enrollees are less likely to have worked in manufacturing and transportation, have lower job tenure, are younger, and have lower prior earnings (all earnings are expressed in 2012 dollars).

\textbf{When do workers enroll, and what are their enrollment characteristics?} Table \ref{tab:enrollment_chars} shows that workers take time to enroll---on average 3.7 quarters after layoff---and the mean enrollment length is 4.5 terms. Enrollees in our UI claimant sample mostly attend community colleges (87 percent). 90 percent of enrolled workers take at least one occupational course, and the average proportion of occupational courses is 60 percent, where courses are classified as occupational based on their Classification of Instructional Program (CIP) code following the taxonomy by the National Center for Education Statistics. The overall credential receipt rate within four years of enrollment (including sub-baccalaureate awards, licenses, and industry credentials) is 26 percent. The vast majority of degrees or credentials obtained are associate degrees or lower: sub-associate credentials account for 58 percent of total credential receipts, associate degrees for 40 percent, and bachelor's and graduate degrees make up the remaining 2 percent (while we focus on workers who begin enrollment in a community college or technical center, some eventually go on to four-year institutions).

\section{Identification, Estimation, and Empirical Implementation\label{sec:Empirical-Strategy}}

The empirical challenge in estimating labor market effects of enrollment stems from the differences in the characteristics of workers who do and do not enroll that relate to their future earnings potential. Following a long line of research in training program evaluation as reviewed by \citet{McCalletal2016_Handbook}, we adopt a selection-on-observables research design to measure the causal effects of retraining. Because enrollment timing relative to layoff varies across workers, we rely on a dynamic framework for partial identification, with our main proposition providing a treatment effect lower bound. We discuss the underlying assumptions and define the treatment effect parameters of interest in Section \ref{subsec:Identification} below.

Before we proceed, we highlight several rationales for why matching may ``work'' in our context, given the (often justified) skepticism towards it. First, our sample construction helps to mitigate the ``Ashenfelter's dip'' problem, a major challenge in training program evaluation. The Ashenfelter's dip refers to the phenomenon in most evaluations of U.S. training programs wherein trainees experience (on average) an earnings dip prior to training, while non-trainees do not. This is likely because the decision to retrain is often a reaction to transitory shocks (\citealp{Ashenfelter1978}). A prime example of such shocks is the loss of employment. By starting from the set of recently unemployed workers in Ohio, we shut down job loss as a channel that triggers training, as both enrollees and non-enrollees in our sample have experienced it.

Second, in our matching specification we take advantage of the information available on past labor market histories from our rich administrative data, which makes the selection-on-observables assumption plausible as recent studies have argued. As \citet{Anderssonetal2013} succinctly state, ``{[}m{]}otivated workers, and high ability workers, should do persistently well in the labor market; if so, conditioning on earlier labor market outcomes will remove any selection bias that results from motivation and ability also helping to determine training receipt.'' \citet{Anderssonetal2013} also find that adding firm fixed effects does not change causal estimates relative to specifications which incorporate detailed labor market histories. Similarly, in extensive empirical Monte Carlo simulations based on German administrative data, \citet{LechnerWunsch2013} find that including variables on firm characteristics, industry- and occupation- specific experience, health, program compliance, desired job characteristics, and detailed regional information does not further reduce bias relative to only using basic demographics and labor market histories. Finally, \citet{Caliendoetal2017} find that controlling for typically unobserved non-cognitive traits adds little beyond past labor market histories.

Third, our matching specification is guided by a validation exercise in the spirit of \citet{LaLonde1986}, \citet{Heckman_etal1997,Heckmanetal1998}, and \citet{Heckmanetal1998_ECMA}. As described in detail in a previous version of this paper (\citealp{LeungandPei2020}), we use data from the NJS to assess whether more flexible machine-learning-based specifications offer better performance than a conventional (logit) propensity score model where the terms enter linearly. We find that conventional propensity score matching methods perform competitively even without the use of machine learning algorithms, and are able to recover a causal effect  in subsamples that more closely resemble our Ohio study population---namely workers with previous labor market attachment, for whom past earnings are likely to be predictive of future prospects.

Fourth, the specification informed by our validation study achieves high match quality. In Section \ref{subsec:Estimation-Strategy-and}, we show overlapping support in the propensity score distributions (except for outliers that amount to one percent of the enrollee sample) and covariate balance across enrollees and matched non-enrollees.

Finally, we believe matching to be better suited than other empirical methods for our setting. In Appendix \ref{sec:Alternative-Identification-Strat}, we explore the possibility of adopting alternative research designs---which may identify different causal parameters than matching---including fixed effects models, the use of a distance instrument, and a regression discontinuity design based on layoff timing. We discuss why we cannot use them for our analysis.

Despite these reasons in favor of using matching for our analysis, doubts may still linger over the validity of the selection-on-observables assumption. Chief among them is the possibility that even within a matched pair with the same labor market history, the non-enrollee chooses not to enroll because she expects to be recalled to her previous employer or has a job offer in hand to start in the future (\citealp{Sianesi2004}; \citealp{FredrikssonJohansson2008}). As we show next, our main identification result provides a lower bound on a dynamic treatment effect parameter, and the presence of such ``forward-looking'' workers may further bias our estimate downward, against finding an enrollment effect. 

\subsection{Parameters of Interest, Identification and Estimation\label{subsec:Identification}}

The main assumption underlying our method is the conditional independence assumption (CIA), or ``unconfoundedness''. For expositional purposes, we begin with the simple \emph{static} case where the enrollment decision takes place at one point in time for all laid-off workers. Let $D$ denote whether a worker enrolls in school, with $D=1$ if the worker enrolls and $D=0$ if she does not. Let $Y(\cdot)$ denote the potential post-enrollment earnings of the worker: $Y(1)$ is the worker's potential future earnings if she enrolls, and $Y(0)$ is her potential future earnings if she does not enroll. The observed outcome is $Y=Y(1)D+Y(0)(1-D)$.

The static unconfoundedness assumption standard in the matching literature is 
\begin{equation}
Y(0)\independent D|\mathbf{X}\label{eq:static_unconfound}
\end{equation}
where $\mathbf{X}$ is a vector of observed covariates realized at or prior to enrollment. Together with a common support condition---the propensity score $p(\mathbf{X})\equiv\Pr(D=1|\mathbf{X})$ is less than one (almost) everywhere on the support of $\mathbf{X}$---CIA implies the identification of the treatment effect on the treated (TOT) parameter via propensity score matching:
\begin{align}
E\left[Y|D=1\right]-E\left[E[Y|D=0,p(\mathbf{X})]|D=1\right] & =\underbrace{E[Y(1)-Y(0)|D=1]}_{\text{TOT}}.\label{eq:static_identification}
\end{align}

In our study, the ``treatment'' (enrollment) is allowed to occur over a period of two years, a complication that necessitates a \emph{dynamic} variant of the framework above. An important consideration in the dynamic setting is the treatment effect parameter of interest. Several earlier studies (\citealp{Sianesi2004}; \citealp{FredrikssonJohansson2008}; \citealp{Biewen_etal2014}) rely on the dynamic counterpart of the conditional independence assumption and use an estimand similar to that in equation (\ref{eq:static_identification}) to identify the treatment effect of enrolling in one period versus not enrolling in that period but possibly later (the ``treatment-now-versus-later'' effect). This is different from the dynamic version of the TOT parameter we are interested in, which is the effect of enrolling versus not enrolling in the two-year window (the ``treatment-versus-no-treatment'' effect). This treatment-versus-no-treatment parameter gets us closer to the effect of enrolling versus not enrolling within the two-year window or thereafter, which, as \citet{McCalletal2016_Handbook} point out, can be more easily incorporated into a cost-benefit analysis than the treatment-now-versus-later parameter.\footnote{The ``treatment-now-versus-later'' effect may be the more relevant parameter in other contexts such as the earnings impact of job displacement (\citealp{Krolikowski2018}).} 

In this paper, we modify the method used by studies that estimate the ``treatment-now-versus-later'' effect. This modified method preserves the simplicity of \emph{static} propensity score matching and can partially identify the \emph{dynamic} TOT parameter under an additional testable assumption. In the remainder of this subsection, we state our assumptions and identification results. We discuss the connection between our method and alternatives in the dynamic treatment effect literature in the next subsection.

For ease of exposition, we focus on a two-period case here and leave the more general multi-period case to Appendix \ref{sec:identification_appendix} (we have an enrollment window of eight quarters in our empirical setting). In the two-period case, $D$ equals one if a worker starts training either in the first or second period post-layoff. We use the binary variable $D_{1}$ to denote whether a worker begins enrollment in period 1, and $D_{2}$ to denote whether she begins enrollment in period 2. It follows that $D=D_{1}+D_{2}$ with two implications. First, $D=0$ implies that $D_{1}=0$ and $D_{2}=0$ (a worker who never enrolls does not start training in either period). Second, at most one of $D_{1}$ and $D_{2}$ can equal one, meaning that $D_{1}=1$ implies $D_{2}=0$ (a worker who already enrolls in period 1 is no longer ``at risk'' for starting enrollment in period 2), and vice versa (a worker who starts enrollment in period 2 cannot have enrolled in period 1).

We use $\mathbf{X}^{1}$ and $\mathbf{X}^{2}$ to denote the vectors of conditioning covariates available at the beginning of the first and second period and before the realization of $D_{1}$ and $D_{2}$, respectively. The superscript notation, consistent with \citet{AbbringandHeckman2007}, indicates that the covariates incorporate cumulative information up to the beginning of a time period before the training decision for that period. More concretely, we think of $\mathbf{X}^{1}$ as incorporating the relevant covariates available at the beginning of period 1 and before the realization of $D_{1}$; $\mathbf{X}^{2}$ includes all the variables in $\mathbf{X}^{1}$, and also additional variables realized after $D_{1}$ but prior to $D_{2}$.\footnote{The conditioning set notation in \citet{Lechner2009JBES} and \citet{LechnerandMiquel2009} is similar: our $\mathbf{X}^{1}$ and $\mathbf{X}^{2}$ correspond to their $\text{\ensuremath{\underline{X}_{0}}}$ and $\text{\ensuremath{\underline{X}_{1}}}$.} Denoting earnings of period $t$ by $Y_{t}$, which realizes at the end of the period, the additional variables in $\mathbf{X}^{2}$ but not $\mathbf{X}^{1}$ typically include $Y_{1}$. In summary, the sequence of variable realizations is:
\[
\mathbf{X}^{1}\text{ (realized at layoff)}\to D_{1}\to Y_{1}\to\mathbf{X}^{2}\to D_{2}\to Y_{2}\to Y_{3}\to Y_{4}\to\dots
\]
Note that because training can only take place during period 1 or 2 in the two-period case, it is not necessary to include the conditioning set $\mathbf{X}^{t}$ and treatment $D_{t}$ when $t>2$ in the sequence above.

Finally, we use $Y_t(1)$ and $Y_t(0)$ to denote the potential earnings at the end of period $t$ corresponding to the treatment $D$. While it is common in the dynamic treatment effect literature to define potential outcomes corresponding to $D_s$ or just $s$ (e.g., \citealp{AbbringandHeckman2007}, \citealp{LechnerandMiquel2009}, and \citealp{Fitzenbergeretal2023}), we define them with respect to the single binary $D$ for simplicity. We choose this notation because our primary interest is the treatment effect of enrolling ($D$) rather than the treatment effect of beginning enrollment in a particular period ($D_s$), and because it maintains consistency with the static case.

Our main identification result uses the following observation rule: for a worker who begins enrollment in period $s$ ($s=1,2$), the observed post-treatment ($t\geqslant s$) outcome is $Y_t=Y_t(1)$; for a worker who enrolls in neither period, $Y_t=Y_t(0)$ for all $t$. Missing from the observation rule is the relation between observed and potential \emph{pre}-treatment outcomes among the treated population. While it is tempting to write $Y_t=Y_t(0)$ for a period-$s$ enrollee when $t<s$, doing so requires an additional assumption of no anticipation. No-anticipation is not needed for our main identification result, but we do invoke it to derive testable implications of our assumptions---more details are in Appendix \ref{subsec:Discussion-of-Assumption2}.

Here are the assumptions for our main identification result. The first is a dynamic CIA assumption
\begin{assumption}
\label{assu:S2_seq_unconfound}a) $Y_t(0)\independent D_{1}|\mathbf{X}^{1}$ for $t\geqslant1$ and b) $Y_t(0)\independent D_{2}|\mathbf{X}^{2},D_{1}=0$ for $t\geqslant2$.
\end{assumption}
\noindent Assumption \ref{assu:S2_seq_unconfound} and variations thereof are standard in the literature---they are referred to as ``sequential randomization'' (e.g., \citealp{Robins1997}) or ``weak dynamic conditional independence'' (e.g., \citealp{Lechner2009JBES}). In our context, it says that within the ``at-risk'' set of workers who have not yet enrolled and are therefore able to begin enrollment at the start of period $s$ ($s=1,2$), the potential future outcome $Y_t(0)$ is independent of a worker's enrollment decision conditional on the information available. Following an argument analogous to the static case, Assumption \ref{assu:S2_seq_unconfound} allows for the identification of the training effect for period-2 enrollees when conditioning on $\mathbf{X}^{2}$ and $D_{1}=0$ (i.e., the effect of $D_{2}$). By simply conditioning on $\mathbf{X}^{1}$, it also allows for the identification of the effect of enrolling in period 1 versus not enrolling in period 1 but possibly in period 2 (i.e., the effect of $D_{1}$) with the estimand:
\begin{equation}
E\left[Y_t|D_{1}=1\right]-E\left[E[Y_t|D_{1}=0,\mathbf{X}^{1}]|D_{1}=1\right]\label{eq:now-v-later}
\end{equation}

\noindent for $t\geqslant1$. However, it is more complex to identify the TOT parameter of training versus no-training (i.e., the effect of $D$) for period-1 enrollees. We invoke an additional assumption to bound it from below:
\begin{assumption}
\label{assu:S2_potential_outcome}$E[Y_t(0)|D_{1}=0,D_{2}=1,\mathbf{X}^{1}]\leqslant E[Y_t(0)|D_{1}=0,D_{2}=0,\mathbf{X}^{1}]\text{ for \ensuremath{t\geqslant1}}.$
\end{assumption}
\noindent Assumption \ref{assu:S2_potential_outcome} says that among workers who have the same observable characteristics at the beginning of period 1, those who begin enrollment in period 2 have lower average potential future earnings absent training than their never-enrolled counterparts. It reflects the idea that workers who select into training tend to have lower opportunity costs in doing so. As we discuss in Appendix \ref{subsec:Discussion-of-Assumption2}, under additional conditions that are not substantively more restrictive, this assumption is consistent with classic models of selection into training by \citet{Heckman1978}, \citet{AshenfelterandCard1985}, and \citet{HeckmanandRobb1985JOE,HeckmanandRobb1985book}. The frameworks in these papers also lead to testable implications: To the extent that the earnings process has positive serial dependence, the observed period-1 earnings $Y_1$ among period-1 non-enrollees can proxy for their future potential outcome $Y_t(0)$ (as mentioned above, this test requires an additional no-anticipation assumption, but we point out in Appendix \ref{subsec:Discussion-of-Assumption2} that the earnings process specifications in the seminal studies referenced above implicitly maintain this assumption). Thus, we can test Assumption \ref{assu:S2_potential_outcome} by comparing the $Y_{1}$ of period-2 enrollees and never-enrolled workers with similar $\mathbf{X}^{1}$. The test can be implemented via propensity score matching, and we present evidence in support of Assumption \ref{assu:S2_potential_outcome} in Section \ref{subsec:Overall-Returns}.

Assumption \ref{assu:S2_potential_outcome} allows us to bound the TOT effect for period-1 enrollees from below with a simple modification of the treatment-now-versus-later estimand in (\ref{eq:now-v-later}):
\begin{equation}
E\left[Y_t|D_{1}=1\right]-E\left[E[Y_t|D=0,\mathbf{X}^{1}]|D_{1}=1\right].\label{eq:lb_X1}
\end{equation}
The only difference between (\ref{eq:now-v-later}) and (\ref{eq:lb_X1}) is the comparison group: the comparison group in (\ref{eq:now-v-later}) consists of workers who did not enroll in period 1 ($D_{1}=0)$, while the comparison group in (\ref{eq:lb_X1}) consists of workers who did not enroll in either period ($D=0$). Intuitively, it is impossible to identify the TOT with the treatment-now-versus-later estimand (\ref{eq:now-v-later}) because we do not observe $Y_t(0)$ of the later-enrollees (i.e., workers in its comparison group who enroll in period 2). By going from (\ref{eq:now-v-later}) to (\ref{eq:lb_X1}), we replace the later-enrollees by non-enrollees with similar $\mathbf{X}^{1}$, for whom $Y_t(0)$ is the observed $Y_t$. Assumption \ref{assu:S2_potential_outcome} says that these non-enrollees have a higher average $Y_t(0)$, allowing (\ref{eq:lb_X1}) to provide a lower bound for the TOT parameter.

Directly implementing estimand (\ref{eq:lb_X1}) is subject to the curse of dimensionality when $\mathbf{X}^{1}$ contains many covariates. The use of a propensity score is the usual remedy for overcoming this challenge. However, a strong CIA assumption for the estimand (i.e., $Y_t(0)\independent D_{2}|D_{1}=0,\mathbf{X}^{1}$ for $t\geqslant 1$) would imply that Assumption 2 holds with equality, which is potentially inconsistent with classic models of training selection as discussed above. (Note that this CIA assumption restricted to $t\geqslant2$ is different from Assumption \ref{assu:S2_seq_unconfound}b; the former conditions on $\mathbf{X}^{1}$, the information available just prior to the realization of $D_1$, whereas the latter conditions on $\mathbf{X}^{2}$, the information available just prior to the realization of $D_2$.) As we are unwilling to make such a strong independence assumption, we do not have a standard propensity score theorem at our disposal. Fortunately, as we show in Appendix \ref{subsec:Details-on-Bounds}, propensity score matching preserves inequality.\footnote{By similar reasoning, propensity score matching also preserves inequality in the simple static setting. Specifically, if selection into treatment is negative conditioning on covariates: $E[Y_t(0)|D=0,\mathbf{X}]\geqslant E[Y_t(0)|D=1,\mathbf{X}]$, then selection into treatment is also negative conditioning on the propensity score: $E[Y_t(0)|D=0,p(\mathbf{X})]\geqslant E[Y_t(0)|D=1,p(\mathbf{X})]$. This is a useful result, as it can help sign the population bias in propensity score matching.}

Formally, define the propensity scores $p_{1}(\mathbf{X}^{1})\equiv\Pr(D_{1}=1|D_{2}=0,\mathbf{X}^{1})$ and $p_{2}(\mathbf{X}^{2})\equiv\Pr(D_{2}=1|D_{1}=0,\mathbf{X}^{2})$, and our main identification result is
\begin{prop}
\label{prop:Prop_S2_lb}Under Assumptions \ref{assu:S2_seq_unconfound} and \ref{assu:S2_potential_outcome} and provided that $p_{1}(\mathbf{X}^{1}),p_{2}(\mathbf{X}^{2})<1$,

(a):
\begin{equation}
E\left[Y_t|D_{1}=1\right]-E\left[E[Y_t|D=0,p_{1}(\mathbf{X}^{1})]|D_{1}=1\right]\leqslant\underbrace{E[Y_t(1)-Y_t(0)|D_{1}=1]}_{\text{TOT}\text{ for }D_{1}=1\text{: }\delta_{1t}}\text{ for \ensuremath{t\geqslant1}}\label{eq:ID_ps_T1_lb}
\end{equation}
\begin{equation}
E\left[Y_t|D_{2}=1\right]-E\left[E[Y_t|D=0,p_{2}(\mathbf{X}^{2})]|D_{2}=1\right]=\underbrace{E[Y_t(1)-Y_t(0)|D_{2}=1]}_{\text{TOT}\text{ for }D_{2}=1\text{: }\delta_{2t}}\text{ for \ensuremath{t\geqslant2}};\label{eq:ID_ps_T2_lb}
\end{equation}
(b):
\[
\sum_{s=1}^{2}\left\{ E\left[Y_t|D_{s}=1\right]-E\left[E[Y_t|D=0,p_{s}(\mathbf{X}^{s})]|D_{s}=1\right]\right\} \Pr(D_{s}=1|D=1)\leqslant\underbrace{E[Y_t(1)-Y_t(0)|D=1]}_{\text{Overall TOT: \ensuremath{\delta_t}}}\text{ for \ensuremath{t\geqslant2}}.
\]
\end{prop}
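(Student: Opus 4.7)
The plan is to handle parts (a) and (b) in turn, treating equation (\ref{eq:ID_ps_T2_lb}) as the warm-up since it requires only standard dynamic matching machinery. Viewing the subpopulation with $D_{1}=0$ as the relevant ``universe,'' Assumption \ref{assu:S2_seq_unconfound}(b) is a static CIA within this universe for the ``treatment'' $D_{2}$ with conditioning set $\mathbf{X}^{2}$. Noting that $\{D=0\}=\{D_{1}=0,D_{2}=0\}$ supplies the right control group (so $Y_{t}=Y_{t}(0)$ on this event), I would apply the Rosenbaum--Rubin propensity score theorem with score $p_{2}(\mathbf{X}^{2})$ to obtain the equality, exactly as in the static case (\ref{eq:static_identification}).

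For the lower bound (\ref{eq:ID_ps_T1_lb}), I would work conditional on $\mathbf{X}^{1}$ and produce a pointwise inequality before integrating. The first step uses Assumption \ref{assu:S2_seq_unconfound}(a) to replace $E[Y_{t}(0)\mid D_{1}=1,\mathbf{X}^{1}]$ with $E[Y_{t}(0)\mid D_{1}=0,\mathbf{X}^{1}]$. The second step decomposes the latter by total expectation over $D_{2}\in\{0,1\}$:
\begin{equation*}
E[Y_{t}(0)\mid D_{1}=0,\mathbf{X}^{1}]=\pi\, E[Y_{t}(0)\mid D_{1}=0,D_{2}=1,\mathbf{X}^{1}]+(1-\pi)\, E[Y_{t}(0)\mid D_{1}=0,D_{2}=0,\mathbf{X}^{1}],
\end{equation*}
where $\pi=\Pr(D_{2}=1\mid D_{1}=0,\mathbf{X}^{1})$. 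The third step invokes Assumption \ref{assu:S2_potential_outcome} to bound the first term above by the second, yielding
\begin{equation*}
E[Y_{t}(0)\mid D_{1}=0,\mathbf{X}^{1}]\leqslant E[Y_{t}(0)\mid D=0,\mathbf{X}^{1}]=E[Y_{t}\mid D=0,\mathbf{X}^{1}],
\end{equation*}
where the equality uses the observation rule on $\{D=0\}$. Combining with step one and rearranging gives the $\mathbf{X}^{1}$-conditional version of (\ref{eq:ID_ps_T1_lb}); integrating against the distribution of $\mathbf{X}^{1}\mid D_{1}=1$ produces the covariate-based bound. Finally, I would reduce to the propensity score $p_{1}(\mathbf{X}^{1})$ by appealing to the ``propensity score matching preserves inequality'' result proved in Appendix \ref{subsec:Details-on-Bounds}. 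The main obstacle here is that this last reduction cannot use a standard Rosenbaum--Rubin argument, because we never assume a full CIA relating $Y_{t}(0)$ to $D_{1}$ given $D_{2}=0$; one must verify that an inequality on $\mathbf{X}^{1}$-strata survives coarsening to the one-dimensional score when only one side of the comparison is governed by unconfoundedness.

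Part (b) then follows by a simple accounting argument. Because $\{D=1\}=\{D_{1}=1\}\sqcup\{D_{2}=1\}$, the law of total expectation gives
\begin{equation*}
\delta_{t}=\delta_{1t}\Pr(D_{1}=1\mid D=1)+\delta_{2t}\Pr(D_{2}=1\mid D=1).
\end{equation*}
For $t\geqslant 2$, part (a) supplies a lower bound for $\delta_{1t}$ and an exact expression for $\delta_{2t}$ (hence trivially a lower bound); substituting these and using that the weights are nonnegative yields the claimed inequality. No further technical step is required beyond noting that the asserted bound is monotone in each summand because each weight is a probability.
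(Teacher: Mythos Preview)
Your proposal is correct and follows essentially the same route as the paper: Lemma~\ref{lem:lem_X_lb} is exactly your covariate-level argument (Assumption~\ref{assu:S2_seq_unconfound}(a), split over $D_{2}$, apply Assumption~\ref{assu:S2_potential_outcome}), and part~(b) is the same weighted aggregation. The one place where you gesture rather than execute---the reduction from $\mathbf{X}^{1}$ to $p_{1}(\mathbf{X}^{1})$---is handled in the paper by Lemma~\ref{lem:ps_property}: because $p_{1}$ is defined as $\Pr(D_{1}=1\mid D_{2}=0,\mathbf{X}^{1})$, any function of $\mathbf{X}^{1}$ is balanced across $D_{1}$ \emph{within the $D_{2}=0$ stratum}, and the trick is to insert $D_{2}=0$ for free (since $D_{1}=1\Rightarrow D_{2}=0$) before invoking that balance to pass the inner conditional expectation from the $D_{1}=1$ side to the $D=0$ side. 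You correctly flagged this as the nontrivial step.
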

All proofs are in Appendix \ref{sec:identification_appendix}. Part (a) of Proposition \ref{prop:Prop_S2_lb} consists of identification results for $\delta_{1t}$ and $\delta_{2t}$, the TOTs at time $t$ among period-1 and period-2 enrollees, respectively. Part (b) aggregates across the two enrollee populations and provides a lower bound for $\delta_t$, the overall TOT parameter at time $t$. For expositional ease, $t$, our time index here, refers to time relative to layoff. But we can also define $\delta$ with an alternative time index, such as time relative to the beginning of enrollment. Concretely, to obtain the overall TOT $\tau$ quarters after enrollment, we take an average of $\delta_{1\tau}$ and $\delta_{2(\tau+1)}$ weighted by the shares of period-1 and period-2 enrollees. Since enrollment is our treatment variable of interest, we use this alternative time indexing in most of our empirical analyses, which is consistent with the convention in event studies. 

The identification results in Proposition \ref{prop:Prop_S2_lb}(a) lead to standard propensity score matching estimators for the lower bound of $\delta_{1t}$ and for the value of $\delta_{2t}$. We can compute their corresponding asymptotic variances by following \citet{Abadie_Imbens2016}. To estimate the lower bound of the overall TOT $\delta_t$, we simply take an average of the two estimators weighted by the shares of period-1 and period-2 enrollees and compute its asymptotic variance accordingly (details in Appendix \ref{subsec:Details-on-Bounds}). 

A natural question that arises is how conservative the lower bounds from Proposition \ref{prop:Prop_S2_lb} are. It is easy to see that the tightness of our bound for $\delta_{1t}$ depends crucially on the share of later-enrollees, i.e., $\Pr(D_{2}=1|D_{1}=0,p_{1}(\mathbf{X}^{1}))$. When this share is zero, the lower bound estimand point identifies $\delta_{1t}$. Intuitively, if no one enrolls in period 2, the treatment-now-versus-later effect for $D_{1}=1$ becomes the treatment effect of enrolling in period 1 versus not enrolling in either period. The bounds are likely to stay informative when this share of later-enrollees is small, which is indeed the case in our empirical context.

More concretely, we can empirically assess the tightness of the bound in two ways. First, we can also construct an upper bound of $\delta_{1t}$ (and therefore $\delta_t$) via propensity score matching. The construction uses the non-negativity of earnings, allowing us to bound the $Y_t(0)$ of later-enrollees by zero from below. We can easily estimate this upper bound and apply standard inference procedures (details in Appendix \ref{subsec:Details-on-Bounds}). The second way is to recognize that $\delta_{1t}$ is actually point identified under Assumption \ref{assu:S2_seq_unconfound}, per results by \citet{Lechner2009JBES} and \citet{LechnerandMiquel2009}. While the associated estimator from Lechner's identification result is more complex to implement and comes with inferential challenges, we can use it to generate a point estimate and compare to our estimated lower bound. In the next section, we discuss the connection of our method to Lechner's and the broader dynamic treatment effect literature.

\subsection{Relation to the Dynamic Treatment Effect Literature\label{subsec:Relations-to-DTE-lit}}

This is certainly not the first paper to consider identification of treatment effects in a dynamic context. As mentioned above, previous studies (e.g., \citealp{Sianesi2004}) have estimated treatment-now-versus-later effects. But many other studies aim to estimate alternative causal parameters. We review relevant research in this latter category below with a particular focus on Lechner's work, providing more context for our method.\footnote{A recent study by \citet{vandenBergVikstrom2019} analyzes dynamic treatment effects on earnings in a setting where training eligibility hinges on a worker remaining unemployed. The closest connection of \citet{vandenBergVikstrom2019} to our paper is their treatment effect parameter: they are also interested in the TOT effect of training versus not training. Since workers in our setting are not subject to the eligibility criterion they consider, our methodology is more closely related to the work by Lechner.}

In a series of influential studies (e.g., \citealp{Robins1986,Robins1997,GillandRobins2001}), James Robins extends the static potential outcomes framework to consider identification of dynamic treatment effects. Specifically, Robins studies identification of potential outcomes under alternative treatment sequences. These treatment sequences are related to but distinct from our treatment variable defined in Section \ref{subsec:Identification}: Whereas our treatment variable is whether a worker begins training, each element in Robins's sequence corresponds to whether a worker receives training in a given period. Under a sequential-randomization assumption and a no-anticipation assumption, the distributions of potential outcomes under alternative treatment sequences are identified in the Robins framework. In Appendix \ref{subsec:Robins_details}, we formally discuss Robins's assumptions and results by adapting the excellent summary in \citet{AbbringandHeckman2007} to our two-period setting.

Two important studies by \citet{Lechner2009JBES} and \citet{LechnerandMiquel2009} extend the work by Robins. They focus on the identification of average effects such as the TOTs and propose estimators based on sequential propensity score matching (\citealp{LechnerandMiquel2009}) and sequential inverse probability weighting (\citealp{Lechner2009JBES}). The estimators offer an advantage over those by Robins as they require no functional form assumptions for potential outcomes.

A remarkable implication of the elegant results by \citet{Lechner2009JBES} and \citet{LechnerandMiquel2009} is that $\delta_{1t}$, the TOT for period-1 enrollees from Section \ref{subsec:Identification}, is point identified under dynamic CIA (Assumption \ref{assu:S2_seq_unconfound}). Consequently, the aggregate TOT parameter $\delta_t$ is also identified under dynamic CIA. Using our notation, the estimand that identifies $\delta_{1t}$ takes the form:\footnote{Like Robins, \citet{Lechner2009JBES} and \citet{LechnerandMiquel2009} define potential outcomes for various treatment sequences. Using their potential outcome notation (reviewed in Appendix \ref{subsec:Robins_details}), the second term in estimand (\ref{eq:Lechner}) identifies the mean potential outcome $E[Y_t^{00}|D_{1}=1]$ in the counterfactual scenario where period-1 enrollees do not enroll in either period.}
\begin{equation}
E[Y_t|D_{1}=1]-E\left[E[E[Y_t|p_{2}(\mathbf{X}^{2}),\tilde{p}_{1}(\mathbf{X}^{1}),D=0]|\tilde{p}_{1}(\mathbf{X}^{1}),D_{1}=0]|D_{1}=1\right].\label{eq:Lechner}
\end{equation}
In (\ref{eq:Lechner}), the propensity score for $D_{1}=1$ is defined as $\tilde{p}_{1}(\mathbf{X}^{1})\equiv\Pr(D_{1}=1|\mathbf{X}^{1})$, which differs from $p_{1}(\mathbf{X}^{1})$ by not conditioning on $D_{2}=0$, i.e., period-2 enrollees are in the at-risk set when computing $\tilde{p}_{1}(\mathbf{X}^{1})$.

We now provide an intuitive account of Lechner's identification result and how it relates to ours. For ease of understanding, we focus on identification results with estimands that directly match on covariates---that is, replace the propensity scores $p_{1}(\mathbf{X}^{1})$, $\tilde{p}_{1}(\mathbf{X}^{1})$, and $p_{2}(\mathbf{X}^{2})$ in Proposition \ref{prop:Prop_S2_lb} and (\ref{eq:Lechner}) by the covariates $\mathbf{X}^{1}$, $\mathbf{X}^{1}$, and $\mathbf{X}^{2}$, respectively. Constructing the Lechner estimand  entails two steps in a two-period setting. The first step is equivalent to the treatment-now-versus-later matching of \citet{Sianesi2004}, for which the matched control set consists of period-1 non-enrollees ($D_{1}=0$) with similar $\mathbf{X}^{1}$. This matched control set can be broken down into two groups of workers: i. those who enroll later ($D_{2}=1$) and ii. the ``never-enrollees'' ($D_{2}=0$). The second step updates the matched control set: it replaces group i by their never-enrolled counterparts with similar $\mathbf{X}^{2}$ (since $\mathbf{X}^{2}$ contains all the covariates in $\mathbf{X}^{1}$, matching on $\mathbf{X}^{2}$ is equivalent to matching on both $\mathbf{X}^{1}$ and $\mathbf{X}^{2}$).

To make sense of these two steps, first note that Assumption \ref{assu:S2_seq_unconfound}(a) ensures that the average $Y_t(0)$ in the matched control set in step one is the average $Y_t(0)$ for the $D_{1}=1$ population. The problem is that $Y_t(0)$ is not observed for group i, so we need to replace it with observable quantities. This is what step two accomplishes. Under Assumption \ref{assu:S2_seq_unconfound}(b), the average $Y_t$ among group i's replacements is equal to the average $Y_t(0)$ of group i, which implies that the average $Y_t$ of the updated matched control set at the end of step two identifies $E[Y_t(0)|D_{1}=1]$.

To compare our method with Lechner's, it is useful to recast our estimand construction also as a two-step process. The first step is identical to Lechner's, resulting in a matched control set consisting of groups i and ii. In the second step, we also replace group i by their never-enrolled counterparts. But unlike Lechner, these replacements share similar $\mathbf{X}^{1}$ with group i, rather than $\mathbf{X}^{2}$. Under our Assumption \ref{assu:S2_potential_outcome}, the observed mean $Y_t$ among these replacements is a lower bound for the mean $Y_t(0)$ of group i. Because we only use covariates $\mathbf{X}^{1}$, we can merge the two steps and directly look for matches with similar $\mathbf{X}^{1}$ among the never-enrollees.\footnote{We also see a parallel between our approach and that of \citet{Lee2009} as both invoke a monotonicity assumption  for partial identification. \citet{Lee2009} addresses sample selection (missing wage data for those not working) by assuming a nonnegative treatment effect on the selection variable (employment) and trimming the tails of the outcome distribution based on the excess proportion of employed individuals in the treatment group. We address dynamic treatment selection (missing counterfactual outcome) by maintaining Assumption \ref{assu:S2_potential_outcome} and substituting the unobserved potential outcomes of later-enrollees with observed outcomes of never-enrollees to construct a lower bound.}

The point identification of the Lechner estimand is quite appealing, but its implementation is complex especially with many periods. Specifically, an $S$-period setting requires $S$ matching steps and involves the estimation of $S$ propensity scores. For the two-period case, implementing the estimand (\ref{eq:Lechner}) involves matching twice: first matching on the estimated propensity score $\tilde{p}_{1}(\mathbf{X}^{1})$, and then matching on the estimated propensity score vector $\left(p_{2}(\mathbf{X}^{2}),\tilde{p}_{1}(\mathbf{X}^{1})\right)$. We have eight periods for our Ohio analysis, as we allow for an enrollment window of eight quarters post-layoff. As such, we need to perform matching eight times, and each time, the corresponding propensity score vector increases in length by one. \citet{LechnerandMiquel2009} sensibly use Mahalanobis matching when matching on multiple propensity scores in their four-period analysis, but concerns with the curse of dimensionality become more relevant with more periods, thereby limiting the general applicability of the estimator.

More importantly, the complexity with the \citet{LechnerandMiquel2009} estimator creates inferential challenges. As with all nearest-neighbor propensity score matching papers at the time, sampling variation in estimating the propensity score was ignored when computing analytical standard errors. With the advent of \citet{Abadie_Imbens2016}, it became standard to analytically account for the uncertainty in propensity score estimation for nearest-neighbor matching.\footnote{The inferential challenge addressed by \citet{Abadie_Imbens2008,Abadie_Imbens2016} stems from the fact that nearest-neighbor matching estimators are ``non-smooth functionals of the distribution of the matching variables.'' ``Smooth'' propensity-score-based estimators, such as kernel-based matching (e.g., \citealp{Heckman_etal1997,Heckmanetal1998,Heckmanetal1998_ECMA,HeckmanSmith1999,Biewen_etal2014}) and inverse propensity weighting (e.g., \citealp{Lechner2009JBES,Busso_etal2014}), do not face the same challenge, and using bootstrap to account for the uncertainty in propensity score estimation still provides valid inference.} Thus, should we choose to rely on \citet{LechnerandMiquel2009}, extending \citet{Abadie_Imbens2016} to the complex case of Mahalanobis matching on a vector of estimated propensity scores seems warranted, but doing so would distract from the substantive analysis of this paper. Further investigation on inference is also needed for the inverse propensity weighting (IPW) estimator of \citet{Lechner2009JBES}: As we discuss in Appendix \ref{subsec:Lechner_IPW}, the original study tries five variance estimators but does not settle on a preferred choice.

For these reasons, we rely on the partial identification result in Proposition \ref{prop:Prop_S2_lb} (and the corresponding estimator) to generate our main estimates. Its advantage is simplicity, as the estimation and inference procedures can be implemented using off-the-shelf Stata commands. Its disadvantage is the loss of point identification, but we do not view it as a threat to the empirical substance of our analysis. As mentioned above, we can also construct an upper bound, and together, the two bounds indicate an informative range for the TOT effect. Moreover, we can produce the TOT point estimate by following \citet{LechnerandMiquel2009}, which is only slightly above our estimated lower bound. We report these estimates in Section \ref{subsec:Overall-Returns}.

Finally, while this section focuses on research that assumes selection on observables, another strand of the dynamic treatment effect literature explicitly models unobserved heterogeneity. Examples include \citet{AbbringandvandenBerg2003,AbbringandvandenBerg2004}, \citet{HeckmanandNavarro2007}, \citet{Baetal2017}, \citet{Han2021}, and \citet{Fitzenbergeretal2023}. We provide an overview of these studies in Appendix \ref{subsec:unobserved_heterogeneity}.

\subsection{Matching Specification\label{subsec:Matching-Specification}}

This section describes our matching specification. As discussed above, the specification is informed by our own validation exercise, described in \citet{LeungandPei2020}. In particular, we find that in another training context, where detailed earnings histories exist for a sample of recently employed workers, modeling the propensity score as a simple logit where the terms enter linearly performs well against more flexible models. However, the validation exercise sample is quite small relative to the Ohio sample. Since we have a large sample, we use both exact and propensity score matching to ensure that our matched comparison group is as similar as possible to the enrollees.

As described in Section \ref{sec:Data}, we define enrollees as those who start school within eight quarters after filing a UI claim. Proposition \ref{prop:Prop_S2_lb} (more precisely, its many-period generalization, Proposition \ref{prop:Sgeneral_ps_lb} in Appendix \ref{subsec:S-period-Generalization}) suggests that we can estimate the TOT lower bound separately for those who enroll in the first through eighth quarter after layoff and aggregate them to obtain a lower bound for the overall TOT. The Proposition also states that, for all eight enrollee cohorts, we should construct the corresponding matched comparison group by drawing from the pool of workers who do not enroll within the two-year post-layoff period.

We perform exact matching along three dimensions. First, we require that enrollees be exactly matched to non-enrollees laid off in the same quarter. This is motivated by the fact that economic conditions and policies varied widely over our study period. Workers laid off in 2004 may differ from those starting unemployment at the peak of the Great Recession and face dramatically different labor market landscapes.\footnote{Although \citet{Heckmanetal1998_ECMA} stress the importance of matching workers from the same local labor market, evidence from \citet{Michalopoulousetal2004} and \citet{Mueser_etal2007} suggests that this is less important when comparison groups are drawn from a single state. Since our data come from (moderately sized) Ohio, temporal rather than geographic variation captures most of the variation in labor market conditions within our sample. That said, we also include the unemployment rate in the month and county of layoff in the propensity score model, which accounts for geographic differences.} Furthermore, policies enacted to help workers overcome challenges during the Great Recession, such as UI extensions and information campaigns about resources for retraining, may have influenced workers' decisions to enroll (\citealp{BarrTurner2015,BarrTurner2016}). By comparing workers that were laid off around the same time, we attempt to control for the influence of time-varying labor market conditions and policies.

Second, following the training evaluation literature, we require that workers be exactly matched on gender, as decisions to enroll may differ between men and women. As argued by \citet{HeckmanSmith1999}, men's decisions to enroll may be more heavily influenced by economic prospects, while women's decisions may depend more on family responsibilities. The different motivations for enrolling may translate into different training effects across gender.

Finally, we exactly match enrollees and non-enrollees based on whether they were working in manufacturing at layoff. The manufacturing sector is of particular interest to policymakers, as it has been in rapid decline over the past decades (\citealp{Autoretal2013}), particularly in Rust Belt states like Ohio.

Within the (exactly matched) layoff quarter, gender, and manufacturing cells for each of the eight enrollment timing cohorts, we estimate a separate propensity score model. As we demonstrate in the validation study in \citet{LeungandPei2020}, it is crucial to include pre-enrollment earnings, and we use three years of pre-layoff quarterly earnings as (linear) inputs into the model.\footnote{While the literature also favors including indicators for zero earnings in each quarter, we do not include these to minimize the number of perfect prediction and overlap problems within the exact-matching-cells. However, as discussed in Section \ref{subsec:Overall-Returns} below, including these indicators does not meaningfully change our estimates.} In addition to pre-layoff earnings, we also include quarterly earnings between layoff and enrollment per our identification result (we illustrate the importance of incorporating these earnings with empirical evidence in Section \ref{subsec:Overall-Returns}). Finally, we include in our propensity score model demographic and prior job characteristics to the extent that our data allow it.\footnote{Out of the 980 propensity score models, 27 do not converge to a solution with the full set of covariates. In these cases, we estimate the model by eliminating one covariate at a time in the following order: quarterly earnings (starting with most recent), unemployment rate, tenure dummies (starting with longest), dependent dummy, race dummies (unknown, American Indian/Asian/Hawaiian/Hispanic, and Black), age dummies (starting with oldest), sector dummies (transportation, retail, accommodation and food, healthcare, administrative support, wholesale, and construction), and gender.} Specifically, we include race indicators (White, Black, other, or unknown), pre-layoff sector indicators (construction, wholesale trade, administrative support and waste management, healthcare and social assistance, accommodation and food services, retail trade, transportation, and other), job tenure categories (less than one year, one to six years, and more than six years), age indicators (age below 19, each year from age 19 and 59, and older than 59), whether a worker has a dependent at the time of UI claim, and county unemployment rate during the month of layoff.

We use the estimated propensity score to pair each enrollee with her nearest neighbor from the comparison sample, where each comparison worker may be matched to more than one enrollee (i.e., matching with replacement). Choosing a larger number of neighbors may further reduce variance in the estimated treatment effect, but the quality of the match may deteriorate as we allow for larger propensity score differences between the enrollee and comparison workers. Since we have a large enough sample size to attain precise estimates and are more concerned with bias, we use only one neighbor in our main specification. We do, however, explore the sensitivity of our results with respect to the number of neighbors in Section \ref{subsec:Overall-Returns}.

After each logit propensity score regression, we estimate the (cell-specific) TOT with the average difference in outcomes between each enrollee and its match, and the standard error is computed following \citet{Abadie_Imbens2016}. We then aggregate the TOTs across the enrollment-timing $\times$ layoff-quarter $\times$ gender $\times$ sector cells to obtain the overall treatment effect, where the weights are proportional to the number of enrollees in each cell.

Given that we match workers along the dimensions discussed above, the question of residual differences between enrollees and matched non-enrollees remains. We argue that these remaining factors that compel workers to enroll are unlikely to be related to future earnings potential, conditional on the covariates used for matching. The empirical literature documents several potential sources of variation. First, research has shown that distance to a community college affects whether a worker ultimately enrolls (\citealp{Card1993}). While distance is not a strong instrument in our setting (Appendix \ref{sec:Alternative-Identification-Strat}), it is indeed negatively associated with enrollment. Second, information and nudges appear to influence students' decision-making. For example, in our context, \citet{BarrTurner2016} show that a letter targeted to UI claimants informing them of financial aid resources affects enrollment. Relatedly, complexity in UI rules (such as what counts as ``approved training'' that allows workers to continue receiving benefits) may introduce additional variation in whether workers decide to pursue training opportunities. As mentioned in Section \ref{sec:Institutional-Background}, although there are 7,000 courses that Ohio approves automatically, academic courses were approved on a case-by-case basis in the absence of an official policy (\citealp{NASWA2010}). Third, in the context of training programs, there may be variation in who can access training resources based on requirements set by local job centers (\citealp{Fortsonetal2017}). Finally, there has been evidence of capacity constraints in community colleges and programs within community colleges (e.g., \citealp{Grosz2020}), which may generate exogenous variation in who can enroll.

\subsection{Matching Design Quality Check: Overlapping Support and Covariate Balance\label{subsec:Estimation-Strategy-and}}

As emphasized by \citet{Smith_Todd2005}, we need to assess the validity of the overlapping support assumption. We first point out that some observations indeed appear to violate this assumption, but they are a tiny fraction of the sample. Specifically, when we use the trimming threshold from the algorithm by \citet{Imbensandrubin2015} (p. 367-368), we find that only 842 observations out the nearly 72,000 enrollees are dropped---841 observations are dropped due to perfect prediction within exact matching cells, and one observation has a propensity score that is too close to 1.\footnote{The remaining difference in the enrollee sample size between Panels A and B of Table \ref{tab:descriptives} is due to the elimination of the winter quarter of 2013 in the HEI data mentioned in Section \ref{sec:Data}. We do not estimate the propensity score and enrollment effect for (OTC only) enrollees who start in that quarter, which eliminates an additional 24 enrollees.}

We show the overlap of the enrollee and non-enrollee distributions in Appendix Figure \ref{fig:overlap}. Because many of the propensity scores are close to zero, for ease of visual inspection, we overlay the histograms of the log odds ratio ($lor$) of the two groups---$lor$ is a monotone transformation of the propensity score (i.e. $lor\equiv\log\frac{p(x)}{1-p(x)}$). In the top row, we plot the estimated log odds ratio distributions for workers who enroll one, four and eight quarters post-layoff, and overlay the corresponding distributions for the non-enrollees. We see that the $lor$ distributions have little support on the positive range, indicating that the propensity score is below 0.5 for the vast majority of observations, and not surprisingly enrollees tend to have a higher propensity to pursue further education. We show the \emph{frequency} plots of $lor$ for the two groups in the bottom row, which are more relevant for assessing overlapping support. Because the number of observations in the non-enrollee group is far larger than that in the enrollee group, there appears to be sufficient overlap even in the higher range of the $lor$.

To compare covariate values across samples, we follow the literature (e.g., \citealp{Imbens2015}) and report in Panel A of Table \ref{tab:descriptives} the normalized differences between enrollees and non-enrollees for each covariate. That is, for each covariate $X$, we report $(\bar{X}_{E}-\bar{X}_{N})/\sqrt{(S_{X,E}^{2}+S_{X,N}^{2})/2}$, where $\bar{X}_{E}$ and $\bar{X}_{N}$ are the respective sample means of $X$ among enrollees and non-enrollees, and $S_{X,E}$ and $S_{X,N}$ are the corresponding sample standard deviations. The denominator can be interpreted as an average standard deviation (ASD), allowing the normalized difference to be interpreted in percentage terms of the ASD. The covariate that exhibits by far the largest difference is age: Enrollees are younger by 58 percent of ASD. The contrasts are less stark for the other 21 covariates: Nine have a normalized difference below 5 percent of ASD, two between 5 and 10 percent, eight between 10 and 20 percent, and two just above the 20 percent threshold of what \citet{RosenbaumRubin1985} consider a large difference (\citealp{Imbens2015} suggests a higher rule of thumb threshold of 30 percent). In Panel B of Table \ref{tab:descriptives_matched}, we show the average characteristics of the \emph{matched} enrollee and non-enrollee samples. Indeed, covariates are balanced across the two groups: normalized differences are very small, and all $t$-tests fail to reject equality despite the large sample size. We proceed to present our main empirical results on the effects of  retraining in the next section, where we also provide additional graphical evidence in support of balance in pre-enrollment earnings trajectories.

\section{Empirical Results: Effects of Further Education During Unemployment\label{sec:Results}}

\subsection{Overall Effects\label{subsec:Overall-Returns}}

We begin by graphically presenting the average earnings of the full sample of enrollees and their matched non-enrollees. In Figure \ref{fig:matched_overall}, the solid line shows the average earnings of enrollees over time, from 20 quarters before until 16 quarters after enrollment begins. The dashed line shows the earnings averaged across each enrollee's nearest neighbor in the non-enrollee sample (see Appendix Figure \ref{fig:unmatched} for a comparison of earnings trajectories for enrollees and unmatched non-enrollees). Prior to enrollment, enrollees and their closest comparison workers have similar earnings trajectories, increasing in the period from five years prior to enrollment to approximately two years prior, before dropping to 50 percent of the peak by the time of enrollment. The seemingly slow decline in earnings is due to the fact that we are averaging the earnings of workers who enroll at different times post-layoff, and not because of a drawn-out earnings reduction process for all workers. The close alignment of the pre-enrollment trajectories signifies high match quality and is not mechanically guaranteed just because the earnings enter the logit model for propensity score estimation (see Figure 2 of  \citealp{LeungandPei2020}). After enrollment, the two lines begin to diverge. Enrollees have lower earnings for approximately two years before surpassing their comparison group. This ``lock-in'' effect, which may come about because enrollees are more constrained in their ability to search for jobs and work while in school, is consistent with the finding in \citet{Heinrich_etal2013} for the WIA Dislocated Worker program.\footnote{Although the mean (median) enrollment duration is 4.5 (4) quarters in the sample, the average duration between first and last enrollment quarters is closer to 6 quarters due to the fact that workers are not always ``continuously'' enrolled. Furthermore, it may take some time post-training for earnings to recover: \citet{Fortsonetal2017} note that there were about two quarters between when WIA trainees completed training and began post-training employment.} The gains appear to grow after the lock-in period while the earnings of non-enrollees flatten. In the third and fourth years post-enrollment, enrollees earn \$348 more per quarter than non-enrollees as reported in Table \ref{tab:att_estimates}, a gain of about six percent. It is notable that even at more than four years after layoff, both enrollees and non-enrollees do not catch up (on average) to their pre-layoff earnings.

As discussed in Section \ref{subsec:Identification}, these estimates are lower bounds of the enrollment effect, under the assumption that those who enroll in later periods have lower counterfactual earnings than those who never enroll (Assumption \ref{assu:S2_potential_outcome} and its many-period generalization Assumption \ref{assu:Sgeneral_potential_outcome} in Appendix \ref{subsec:S-period-Generalization}). We first present evidence in support of this assumption using the test proposed in Section \ref{subsec:Identification} and Appendix \ref{subsec:Discussion-of-Assumption2}. Specifically, for an enrollment quarter $s=1,\dots,7$ and a later-enrollee cohort enrolling $l>s$ quarters later ($1\leqslant l \leqslant 8-s$), the test compares their earnings during each of the $l$ quarters (\emph{before} the later-enrollees enroll) against non-enrollees with similar characteristics at the beginning of quarter $s$. We should find lower earnings among later-enrollees under Assumptions \ref{assu:S2_potential_outcome} and \ref{assu:Sgeneral_potential_outcome} if these interim earnings positively correlate with future potential earnings $Y_t(0)$. We implement the test using propensity score matching and present the earnings differences between later-enrollees and matched non-enrollees for all seven values of $l$ (we aggregate across $s$ for concise presentation) in Appendix Table \ref{tab:lower_bound_test}. Indeed, all earnings differences are negative, lending credibility to the assumption underlying the lower bound result.

Second, we show that the lower bound is informative in our setting. As discussed in Section \ref{subsec:Relations-to-DTE-lit}, results from \citet{Lechner2009JBES} and \citet{LechnerandMiquel2009} imply the point identification of the TOT. We implement the sequential propensity score matching estimator of \citet{LechnerandMiquel2009} and find that enrollees earn \$351 more than non-enrollees in the third and fourth years post-enrollment (also as discussed in Section \ref{subsec:Relations-to-DTE-lit}, developing an inference procedure for the \citealp{LechnerandMiquel2009} estimator by extending the \citealp{Abadie_Imbens2016} standard errors is outside the scope of this paper). Our lower bound estimate of \$348 is remarkably close and its 95 percent confidence interval comfortably contains \$351. We also construct an upper bound using Proposition \ref{prop:Sgeneral_ps_ub} in Appendix \ref{subsec:S-period-Generalization}. Its estimate is \$488, or nine percent; together, the two bounds pinpoint an informative range for the earnings effects. The tightness of the bounds results from the small number of later-enrollees relative to non-enrollees, so replacing the later-enrollees' potential earnings with other values will not have a major impact. The fact that the point-identified estimate is much closer to our lower bound than upper bound suggests that the counterfactual earnings of the later-enrollees are better approximated by the earnings of their non-enrollee replacements than by zero.

Because our lower bound is very close to the point estimate of the TOT effect, we refer to the lower bound simply as ``the effect'' in the remainder of this paper for brevity. Next, we conduct several sensitivity checks on our average effect estimates. While the set of non-enrollees in Figure \ref{fig:matched_overall} is selected by using 12 quarters of pre-layoff earnings in the propensity score formulation, Panels A and B of Appendix Figure \ref{fig:partialmatch} show that earnings patterns are similar under alternative matching specifications that include fewer quarters of earnings (one pre-enrollment quarter and four pre-enrollment quarters, respectively). However, when we match without earnings between layoff and enrollment in Panel C, the patterns are quite different: the enrollees in this panel have much lower earnings than their matched comparison group heading into enrollment.\footnote{The estimates of enrollment effects in the third and fourth year after enrollment are \$378 (seven percent), \$366 (seven percent), and -\$219 (-4 percent) for Appendix Figure \ref{fig:partialmatch} Panels A, B, and C, respectively.} Appendix Figure \ref{fig:partialmatch} highlights the importance of controlling for the most recent information just before enrollment. In contrast, including dummies for zero earnings (on top of the full set of quarterly earnings) in the propensity score model does not appear to affect estimates substantially---the effect in the third and fourth year post-enrollment remains at six percent (\$339 per quarter, Panel A of Appendix Figure \ref{fig:zeroearn_noborder}). We also probe the sensitivity of our results to dropping counties that border another state, as workers from these counties are more likely to be employed in another state post-layoff (which we would not be able to observe) and find that it does not change our estimates much---in the third and fourth year post-enrollment, enrollees earn \$324 more per quarter (six percent, Panel B of Appendix Figure \ref{fig:zeroearn_noborder}). Additionally, we  explore the sensitivity of our findings to using more neighbors. We show in Panel A of Appendix Figure \ref{fig:neighbors_analysis} the estimated treatment effect against the number of neighbors used.\footnote{For this exercise, we only use the cell with the largest number of claims (male non-manufacturing workers laid off in the first quarter of 2009). We focus on this subsample to ease the significant computational burden of running the matching analysis 25 times with the full sample.} The estimated treatment effect tends to shift upward as we increase the number of neighbors, and the variance is reduced by up to six percent as we show in Panel B of the same figure. An estimate using the full sample with five neighbors follows this general pattern, but since our sample size allows for precise estimates, we use the lowest bias one-neighbor specification. Lastly, we apply IPW and local linear matching with block bootstrap at the county level to assess the sensitivity of our main estimates with respect to alternative classes of estimators and cluster-robust inference. With IPW, the impacts on earnings are -\$393 (standard error 16) during the first two years and \$341 (standard error 19) during years three and four post-enrollment.\footnote{The standard errors for IPW without clustering are 15 and 18, respectively. Their similarity implies that treating observations as i.i.d. across individuals as we did when producing our main estimates is a reasonable approximation.} With local linear matching, the corresponding impact estimates are -\$364 (standard error 19) and \$375 (standard error 21), respectively. Compared to the effects reported in Table \ref{tab:att_estimates}--- -\$381 (standard error 21) and \$348 (standard error 23)---the largest deviation of the alternative point estimates is under eight percent, giving us confidence in our main estimates. While the alternative standard errors are somewhat smaller, all estimates are highly statistically significant. 

Finally, the two panels of Appendix Figure \ref{fig:ext_margin} show that the probability of having positive earnings and weeks worked per quarter, respectively, mirror the earnings patterns. A natural question that arises is whether or not the gain in employment can explain all of the gains in earnings, or whether enrollment increases both employment and wage rates. One way to answer this question is to note that in Panel B of Appendix Figure \ref{fig:ext_margin}, non-enrollees work for 7.3 weeks and earn \$5,354 per quarter on average three to four years after enrollment, implying a weekly wage of approximately \$730. Enrollees, on the other hand, are employed 7.9 weeks and have a weekly wage of approximately \$722. Although the differences in weekly wage rates are not causal effects, this indicates that increased wage rates are not driving the enrollment effects.\footnote{Since Appendix Figure \ref{fig:ext_margin} shows a small gap in weeks worked between enrollees and non-enrollees in the pre-period, the weeks ``effect'' may be overstated. To see how this affects our conclusion, consider the following decomposition of the earnings effect: $\Delta\text{Earnings =\ensuremath{\Delta\text{Weeks}}\ensuremath{\times}\ensuremath{\frac{\text{Earnings}_{N}}{\text{Weeks}_{N}}}+\ensuremath{\Delta\frac{\text{Earnings}}{\text{Weeks}}\times\ensuremath{\text{Weeks}_{E}}}}$ where Earnings and Weeks are average quarterly earnings and average weeks worked in a quarter, respectively, for enrollees ($E$) and non-enrollees ($N$), and $\Delta$ denotes the difference between enrollees and non-enrollees. If we adjust the weeks ``effect'' downward by 0.22 weeks (the gap in the pre-period), the first term of this decomposition still accounts for 70 percent of the total earnings effect.} Another way to see this is to examine the earnings distributions of enrollees and matched non-enrollees, shown in Appendix Figure \ref{fig:distributional_effects}. While the top panels of the figure document similar earnings distributions before enrollment, the bottom panels show that the distributions diverge eight quarters after enrollment, and the gap widens further by quarter 16, where the gains are concentrated at the extensive margin. This finding parallels that by \citet{Belotetal2025}, who document that the positive effect of providing automated occupational advice to the long-term unemployed also operates mainly through the extensive margin: The intervention increased unemployed workers' probability of reaching a modest earnings threshold and their probability of finding a (stable) job but had statistically insignificant effects on cumulative earnings overall. We conclude that training mainly affects employment and likely has minimal effect on wage rates four years after enrolling.

\subsection{Effects By Subgroup\label{subsec:subgroup_returns}}

We now present the enrollment effects by subgroup. For subgroups that have exactly matched participants (i.e., enrollment timing, layoff quarter, gender, and sector), we estimate the enrollment effect by restricting to enrollees in the subgroup and examine the difference in outcomes compared to their matched non-enrollees. For subgroups that are not exactly matched (age, tenure, and race groups), we first restrict the sample to the subgroup and then re-match enrollees to non-enrollees using the estimated propensity score described in Section \ref{subsec:Matching-Specification}.\footnote{We cannot simply compare enrollees within a certain subgroup to their original matched comparison groups because this compares the outcomes of enrollees within a subgroup to non-enrollees that are potentially not in the subgroup. A simple way to see this is to consider a randomized experiment: if a population is randomly assigned to a treatment or control group with a coin flip, the propensity to be treated is 50 percent for the entire population. If we wanted to estimate the treatment effect for women only, we cannot simply compare the outcomes of treated women with the entire control population, even though their propensity scores are the same at 0.5.} That is, we do not implement another matching procedure with re-estimated propensity scores conditional on the subgroup, which is computationally expensive and could bring in workers not included in the set of matched non-enrollees in Section \ref{subsec:Overall-Returns}, resulting in inconsistencies across samples.

A potential challenge is that the nearest neighbor for an enrollee in the subgroup analysis may differ from that in the full sample analysis and may be a worse match. But as we show in Appendix Table \ref{tab:subgroup_balance}, our method still achieves balance in most of the 28 subgroups, as measured by estimates of TOT on earnings one to two years and three to four years pre-enrollment, respectively. After adjusting for multiple hypotheses testing with the Holm method, pre-enrollment earnings from three subgroups (workers under 40, workers with job tenures greater than six years, and Black workers) remain statistically unbalanced at the 5 percent level.\footnote{The Holm method is also known as sequential Bonferroni. Hypotheses are tested sequentially in order of significance, reducing the number of hypotheses tested in each step by one (\citealp{Holm1979}). It provides more power to reject false hypotheses than standard Bonferroni without imposing additional assumptions.}\multfootsep\footnote{Note that, because the nearest neighbor may change from the full sample to a particular subgroup, balance in the full sample and one subgroup does not imply balance in the complement of that subgroup.} But even for these three groups, the imbalance goes away in the most recent year immediately preceding enrollment as seen in Appendix Figures \ref{fig:subgroups_age}, \ref{fig:subgroups_tenure}, and \ref{fig:subgroups_race}. This is reassuring as the most recent pre-enrollment earnings are possibly most predictive of future potential outcomes. To alleviate any remaining concerns with this imbalance, which manifests in parallel trajectories between enrollee and matched non-enrollees before earnings start to decline, we also report estimates using difference-in-differences matching in Table \ref{tab:subgroup_ests}. These estimates are obtained by first differencing future earnings with (symmetrically timed) pre-enrollment earnings within person, and then comparing the resulting differences between enrollees and matched non-enrollees.\footnote{The differencing step does not materially affect estimates. This is consistent with \citet{ChabeFerret2017}, who finds that when conditioning on many periods of pre-treatment earnings as we do here, the bias of matching with or without differencing is similar.} The reported standard errors in Table \ref{tab:subgroup_ests} are from testing whether the average pairwise difference between enrollees and matched non-enrollees is different from zero, and we abstract away from the sampling variation in forming these matched pairs. This abstraction is inconsequential for the full sample where we also have available the standard errors that account for the sampling variation in forming matched pairs per \citet{Abadie_Imbens2016}: compared with their counterparts in Table \ref{tab:att_estimates}, the standard errors in columns (1) and (6) of the first row in Table \ref{tab:subgroup_ests} are only slightly larger. 

For each of these subgroups, the effects may vary due to differences in ``types'' and intensity of schooling (e.g., types of courses taken, duration of enrollment, and whether a credential was obtained), differences in enrollee composition, or other factors such as labor market conditions. While we cannot tease out the exact reasons for why we find different effects for some subgroups, we discuss in turn likely explanations.

\paragraph{By Enrollment Timing}

As discussed in Section \ref{sec:Data}, workers typically do not enroll immediately after layoff, and some take longer to go back to school than others. Later enrollees may differ from earlier enrollees in many ways, including in terms of employment opportunities. Therefore, although we show in Appendix Figure \ref{fig:by_enroll_timing} that impacts of enrollment differ by the quarter in which workers go back to school within two years after layoff, these differences should not be interpreted as causal nor informative about optimal enrollment timing. Each panel, which corresponds to enrolling in a certain quarter (one through eight) since job loss, shows a sharp drop in earnings in the pre-enrollment period. However, it is clear that the groups differ in the extent to which their earnings recover post-enrollment, with larger earnings gains for those who enroll later relative to layoff. As reported in Table \ref{tab:subgroup_ests}, we find that workers who enroll later have a smaller ``lock-in'' effect in the first two years after enrolling, and larger gains from enrollment (up to 12 percent) in the third and fourth years.

There are several potential explanations for these patterns. First, it is possible that the larger lock-in effects are due to a higher ``treatment dosage'' for those who enroll earlier. Those who enroll one quarter after layoff have a mean enrollment duration of 4.7 quarters, while those who enroll eight quarters after layoff are only enrolled on average for 4.2 quarters (and the pattern is monotonic for those who enroll in the second through seventh quarters). It is also true that these longer durations of enrollment correspond to higher rates of credential receipt: For those who enroll within the first quarter post-layoff, about 27 percent receive a credential, while for those who enroll in the eighth quarter after layoff, the credentialing rate is about 20 percent (the pattern is roughly monotonic in between, though it peaks for those who enroll in the second quarter). However, we also see that earlier enrollees have lower post-enrollment gains, which seems inconsistent with the fact that they have more ``intensive'' schooling (though it is possible that we simply have not looked at a long-enough post-period to observe the full gains).

Another explanation, however, is consistent with the discussion in \citet{Heckmanetal1999_HoLEv3} that training is a form of job search, and enrollees tend to be workers who are not yet reemployed. Correspondingly, those who enroll later have relatively long spells of unemployment. In other words, as discussed above, earnings patterns seem to reflect selection in who enrolls earlier or later. Appendix Figure \ref{fig:by_enroll_timing} shows that later enrollees tend to be those who have earnings that have been depressed for a while before enrolling, and their counterfactual earnings without school tend to be flatter than earlier enrollees. The lack of lock-in for later enrollees likely reflects their low opportunity cost of schooling. Therefore, while we find larger enrollment effects for those who enroll later, we note that this result does not imply that it would be more advantageous for workers to delay enrollment. 

\paragraph{By Layoff Year}

Since our study period covers the Great Recession, there is considerable variation in the labor market conditions at the times of enrollment and reemployment. Prior studies have shown that training programs are more effective when the unemployment rate is high at program entry and low when training ends (\citealp{LechnerWunsch2009}; \citealp{Kluve2010}; \citealp{CardKluveWeber2015}). In particular, poor labor market conditions at program entry are associated with smaller lock-in effects due to lower opportunity costs of training, as well as larger gains at the end of training. Outside of the training literature, there is also evidence that graduating college during an economic downturn is associated with persistent earnings losses (\citealp{Kahn2010}; \citealp{Oreopoulosetal2012}).

In Appendix Figure \ref{fig:by_year}, we present the earnings of enrollees and matched non-enrollees separately by year of layoff. Consistent with the literature, we find that the effects of training are larger among workers laid off when unemployment rates peaked during the Great Recession: Table \ref{tab:subgroup_ests} shows that workers laid off in 2010 and early 2011 have earnings gains of 8 to 12 percent in the third and fourth years post-enrollment, while the cohorts laid off prior to the recession in 2006 to 2007 (who would be entering the labor market as unemployment was rising rapidly) have gains that are indistinguishable from zero (the effects for those laid off in 2008 and 2009 fall somewhere in between these extremes).

To explore the extent to which compositional differences explain the effects in different years, we conduct an exercise similar to those by \citet{LechnerWunsch2009} and \citet{HeinrichMueser2014}, where we reweight each year's estimates so that the enrollee composition matches those laid off in 2004 based on observable characteristics.\footnote{Specifically, for enrollees laid off in year $m$, we estimate the counterfactual average earnings impact (conditioning on $D=1$ is omitted for brevity):
\begin{eqnarray*}
 &  & \int E[Y_t(1)-Y_t(0)|\mathbf{X},\text{year}=m]\mathrm{d} F(\mathbf{X}|\text{year}=2004)\\
 & = & \int E[Y_t(1)-Y_t(0)|\mathbf{X},\text{year}=m]\underbrace{\frac{\Pr(\text{year}=2004|\mathbf{X})}{\Pr(\text{year}=m|\mathbf{X})}\frac{\Pr(\text{year}=m)}{\Pr(\text{year}=2004)}}_\text{weight}\mathrm{d} F(\mathbf{X}|\text{year}=m).
\end{eqnarray*}
To estimate the weights, we restrict the sample to only enrollees laid off in 2004 and year $m$. We predict whether an observation is from the 2004 cohort via logit using observable characteristics (gender, industry, age, race, presences of dependents, pre-layoff wages, and quarter of layoff), and the predictions give us the first term of the weights. For the second term of the weights, we simply need the proportion of observations
in each year. We then calculate the reweighted treatment effect for enrollees laid off in year $m$ via the usual nearest neighbor matching, but with each treatment-control pair multiplied by its estimated weight.} We present the results in Appendix Table \ref{tab:recession_reweight} and find that the changing composition of the layoff cohorts over time play a minimal role in the larger treatment effects of 2010 and 2011, though it does appear to moderately dampen the effect for 2009.\footnote{The small discrepancies in the number of observations between Appendix Table \ref{tab:recession_reweight} and Table \ref{tab:subgroup_ests} are due to the fact that we require ``overlap'' between the 2004 enrollees and other year's enrollees. Practically, this means that when enrollees have characteristics that perfectly predict their layoff year (i.e., they are so different from the 2004 enrollees that they cannot be used for reweighting), we drop them from the analysis.} The finding that composition does not explain much of the effect variation over time echoes \citet{LechnerWunsch2009}. It is also broadly similar to \citet{HeinrichMueser2014}, who find that training programs targeting displaced workers were more beneficial in 2008-2009 relative to 2007 (though composition explains more of the temporal variation in that context than ours).

\paragraph{By Gender}

Appendix Figure \ref{fig:by_gender} shows the enrollment effects by gender. In contrast to \citet{Jacobson_etal2005_JE}, we find that the gains to enrollment are larger for men in the four years post-enrollment, though we note that women spend more time in school, at an average of 5.1 quarters over four years versus 4.1 quarters for men. This means that women have a longer ``lock-in'' period relative to men, and a longer follow-up period may reveal larger effects for women. In the third and fourth years after enrollment, we find that the earnings gains of enrollment are nine percent for men and four percent for women. In Section \ref{subsec:What-Does-Schooling}, we discuss the different types of courses and credentials obtained by women and men, and the subsequent differences in industries of employment.

\paragraph{By Manufacturing versus Non-manufacturing}

Due to the decline of the manufacturing sector and the large earnings losses suffered by unemployed workers therein, the effects of retraining among this group have attracted more policy attention (\citealp{CouchPlaczek2010}). Appendix Figure \ref{fig:by_industry} shows separately how further schooling impacts manufacturing and non-manufacturing workers. Consistent with previous studies on the earnings impact of displacement, manufacturing workers, who make up 29 percent of our sample, have higher average pre-layoff earnings than non-manufacturing workers and experience a larger drop in earnings post-layoff. In the third and fourth years post-enrollment, manufacturing workers see a two percent earnings gain on average, smaller than the nine percent for non-manufacturing workers (Table \ref{tab:subgroup_ests}). However, the long lock-in period and the fact that manufacturing enrollees are 14 percentage points more likely than matched non-enrollees to switch industries suggest that the effect may increase further in the future. Since switchers forgo industry-specific skills, it may take longer for the human capital investment to pay off (we discuss industry switching in more detail in Section \ref{subsec:What-Does-Schooling}). Furthermore, we see in Panel A of Appendix Figure \ref{fig:by_industry} that the earnings trajectory of the non-enrollees is quite flat for those who previously worked in manufacturing, allowing for the possibility of higher longer-run enrollment effects.

\paragraph{Other Subgroups}

The final rows of Table \ref{tab:subgroup_ests}, and the accompanying earnings trajectory plots in Appendix Figures \ref{fig:subgroups_age}-\ref{fig:subgroups_race}, show the enrollment effects for several other subgroups of interest. We find statistically significant positive effects across workers of different ages and some racial groups. There do not seem to be large differences in enrollment effects for older (age 40 and above) workers relative to younger workers and the effects are of a similar magnitude as in \citet{Jacobson_etal2005_ILR}. Although \citet{Jacobson_etal2005_JE,Jacobson_etal2005_ILR} focus on long-tenured workers who have worked three or more years before layoff, we find that the shortest-tenured group with less than one year of tenure enjoys the largest gains at ten percent. Finally, while there is no consensus on how the effects to community college vary by race (\citealp{BelfieldBailey2011}) and estimates are not available in \citet{Jacobson_etal2005_JE,Jacobson_etal2005_ILR}, we find larger effects for Whites than for Blacks.\footnote{Effects may vary by race for two reasons. In a human capital framework with skill complementarity, returns to education depend on the quantity and quality of prior schooling (e.g., \citealp{CardKrueger1992}; \citealp{CunhaHeckman2007}). Therefore, to the extent that prior levels of schooling or schooling quality vary by race, we might expect different training effects for Blacks and Whites (\citealp{CardKrueger1992}; \citealp{NealJohnson1996}). Discrimination in the labor market may also play a role. For example, \citet{BertrandMullainathan2004} find that Black job applicants have lower returns (as measured by callbacks) to resume quality than White job applicants.}

Lastly, we consider those with previous college experience. We focus on those who are age 25 or younger to capture prime college attendance years (i.e., ages 18-25), and who were laid off in 2007 or later so that we can observe recent college experience (i.e., within the prior seven years).\footnote{While one may be concerned with the validity of conditional independence for younger workers with a limited earnings history, we note that our sample only contains workers who have had substantial enough earnings to be eligible for UI.} These restrictions eliminate 94 percent of the sample, though within this subsample, the proportion of those who enroll after layoff is higher than in the full sample (nine percent versus four percent).\footnote{Although it would also be interesting to examine those with previous college degrees, we do not observe many prior degrees in our data. Of the already small subgroup of those with prior college, only 13 percent are observed to have received a degree.} Among these enrollees, 70 percent had no prior college. We re-estimate the propensity scores for each subsample (enrollees who have or do not have prior college) with adjustment for the much smaller samples. Instead of exactly matching on claim quarter, gender, and sector, we simply require enrollees be matched with non-enrollees who were laid off within the same year and are of the same gender, and add claim quarter and sector dummies to the propensity score model. Appendix Figure \ref{fig:subgroups_prioredu} shows that those without prior college experience larger effects of about 12 percent while those with recent prior college experience a marginally statistically significant gain (at the ten percent level) of about three percent in the third and fourth years after enrolling. One interpretation of this finding is decreasing marginal benefits: Among young unemployed workers, additional training is less beneficial if they already had some college in the past. However, we caution that this is a narrow subgroup of workers and the results may not generalize to older or higher tenure workers who may have been out of school for a longer period of time.

\subsection{Effects By School Type and Quality\label{subsec:school type_returns}}

We now turn to how earnings gains vary by the type of school attended. Our main enrollee sample consists of those whose first observed institution is a community college or technical center. We now separately analyze the enrollment effects for each. First, we note that technical center enrollees differ from community college enrollees in terms of both pre-layoff and enrollment characteristics. As shown in Appendix Table \ref{tab:cc_otc_chars}, technical centers tend to enroll many more former manufacturing workers (50 percent of technical center enrollees versus 25 percent of community college enrollees), and those who have higher tenures and pre-layoff earnings. Technical center enrollees also tend to be enrolled for slightly longer periods and are much more likely to obtain a (sub-associate) credential.

To estimate the effects of community college (technical center) attendance, we restrict the sample to enrollees whose first institution is a community college (technical center), and non-enrollees from the main analysis sample. We then implement the same matching specification as our main analysis. Appendix Figure \ref{fig:schooltypes} shows the earnings of enrollees in the two types of institutions and their matched non-enrollees. We find that the patterns of earnings are quite different for community college and technical center attendees. The ``lock-in'' effect for technical center enrollees is much more substantial than that of community college enrollees, of about -\$990 (22 percent) versus -\$283 (six percent) for the first two years post-enrollment. This may reflect the relatively more ``intensive'' nature of the courses offered at technical centers; the fact that many more enrollees gain a credential from a technical center (in approximately the same duration of enrollment) suggests higher coursework intensity. In the third and fourth years post-enrollment, the earnings gain for technical center enrollees is also larger, at \$516 (ten percent) versus \$336 (six percent).

Since the majority of our sample are community college enrollees, it is worth examining whether earnings effects vary by community college quality.  Although quality is multidimensional, we focus on three measures: instructional spending per student (following \citealp{Stange2012}), proportion of students earning a credential (``completing'') within eight years after entry, and median long-run (ten-year) earnings of those who enroll in an institution. The first measure is calculated using information from IPEDS and the latter two are based on the College Scorecard developed by the U.S. Department of Education. We break the sample of community college enrollees into two groups by whether they are high or low (above or median within our sample)  in each quality measure and match them to non-enrollees.\footnote{In the enrollee sample, the correlation between enrolling in a school with above-median institutional spending and one with above-median completion rates is -0.17. The correlation between enrolling in a school with above-median institutional spending and one with above-median earnings is -0.25. Finally, the correlation between enrolling in a school with above-median earnings and one with above-median completion rates is 0.30.} Appendix Figure \ref{fig:schoolquality} shows that school quality matters for earnings effects. For those enrolled in schools with high instructional expenditures, the third and fourth year effect is \$419 (seven percent), while the corresponding effect for those attending schools with low expenditures is \$300 (six percent). Similar patterns emerge using the other quality measures: For those attending institutions with high completion rates and earnings, the third and fourth year earnings effect is \$471 (nine percent) and \$443 (eight percent), respectively, versus \$379 (seven percent) and \$291 (five percent), respectively, for institutions with low completion and earnings.

\subsection{Enrollment Effects For Workforce Investment Act Enrollees\label{subsec:WIA}}

As discussed in the introduction, much of the evidence on retraining in the U.S. comes from evaluations of training programs, such as those funded by WIA. While our study population consists of a broader set of workers who seek retraining, this section focuses on the subset of enrollees in our sample who also received training through WIA to better connect to the literature on government-sponsored training programs. 

To identify the subgroup of enrollees who also received WIA training, we rely on administrative WIA data, which contains training dates. We define a worker as a "WIA enrollee" if they are an enrollee (i.e., enrolled in a community college or technical center within two years of layoff) \emph{and} they are observed to have participated in WIA training within two years of layoff. We identify 8,771 such enrollees within our main sample of almost 71,000 enrollees. We observe an additional 18,070 individuals in our analysis sample who do not enroll in a public two-year postsecondary institution but who are observed to have WIA training, which occurs because WIA training may take place at employers, private/online schools, or other institutions not captured in our data. In \citet{Fortsonetal2017}, 24 percent of those who trained under the ``full WIA'' arm of the nationally representative WIA Gold Standard Evaluation enrolled in two-year community colleges; the analogous number in our setting is 19 percent. 

Appendix Table \ref{tab:wia_chars} shows that WIA enrollees differ in characteristics from our overall enrollee sample. More than half of WIA enrollees are formerly manufacturing workers, compared with 29 percent for the main analysis sample. They are also older and have higher earnings prior to layoff. In terms of schooling characteristics, WIA enrollees are more likely to enroll in technical centers (45 percent versus 15 percent for the main sample), are enrolled longer (5.2 versus 4.5 quarters), and are more likely to obtain a credential (58 percent versus 26 percent).

Using this sample of WIA enrollees and the non-enrollees from our main analysis sample, we implement the matching estimator discussed in Section \ref{subsec:Matching-Specification} with the same adjustment for the small number of WIA enrollees as in the subsample analysis on previous college experience in Section \ref{subsec:subgroup_returns}.\footnote{We drop 459 enrollee observations that fail our overlap check.} Figure \ref{fig:wia_subgroup} shows the earnings of WIA enrollees and their matched non-enrollees.\footnote{Individuals who do not enroll in a public postsecondary institution but who are observed to have WIA training are in the control group of non-enrollees, though results are identical when we exclude them from the control group.} Compared with Figure \ref{fig:matched_overall} for all enrollees, WIA enrollees have higher average pre-layoff earnings and experience a larger drop at layoff. After enrollment, WIA enrollees endure a statistically significantly larger lock-in effect (-33 percent relative to non-enrollees in the first two years post-enrollment), catch up to matched non-enrollees more slowly, and have smaller effects (four percent; statistically significant at the five percent level) in years three and four. However, as discussed previously for manufacturing workers, the trajectories indicate that effects may increase further into the future.

Our results are broadly consistent with a randomized evaluation of WIA by \citet{Fortsonetal2017}. Their experiment compares the outcomes of randomly selected workers who were offered the full array of WIA services (including training services if eligible) versus those offered only non-training WIA services (e.g., either ``core'' or both ``core'' and ``intensive'' services, which include informational tools, job search assistance, and placement services). \citet{Fortsonetal2017} find no statistically significant difference in earnings between the two groups 12 quarters after randomization, the last time period observed, though the difference in actual training receipt in their study is only about nine percentage points between the treatment and control groups (and their estimate is a local average treatment effect that is not directly comparable to our estimates of the treatment-on-the-treated effect). However, the point estimates of \citet{Fortsonetal2017} document similar dynamics as Figure \ref{fig:wia_subgroup}, in that the earnings of the treatment group catch up to the control group at about two years after randomization.

\subsection{Long-Run Effects}

While our data do not span a long enough period to allow for the estimation of longer run enrollment effects for the entire analysis sample, we can examine whether the earnings gains are likely to persist by following a subsample of early enrollees further out. For this analysis, we restrict our attention to workers who enroll no later than 2007Q3 (15,833 enrollees) and compare their earnings to their matched non-enrollees for ten years post-enrollment. Since all workers in this subsample were laid off before the Great Recession, they differ from our main analysis sample in that their earnings (and short-run enrollment effects) are somewhat depressed in the four-year follow-up period as shown in Table \ref{tab:subgroup_ests}. We see this pattern again in Figure \ref{fig:long_run_effects}, where the twin vertical dashed lines denote the two follow-up periods for our main effect estimates. Despite the dip in earnings that begins around the second to third year after enrollment, the effects appear to increase over time as the labor market rebounds, ultimately resulting in a 13 percent gain in the tenth year after enrollment (point estimate \$706, standard error 73). We find that enrollees are five percentage points more likely to be employed at the end of the ten-year follow-up period, and conditional on employment, enrollees earn six percent more (the extensive margin explains 58 percent of the overall earnings gain). Our estimates are higher than \citet{Jacobson_etal2005_JE}'s preferred extrapolated long-run effects of retraining from the 1990s (six to eight percent as mentioned in Section \ref{sec:Intro}), which they suggest may be downwardly biased due to differential pre-trends. Our estimates are qualitatively consistent with the long-run positive effects of TAA for high-quality (long-duration) training by \citet{Hyman2018}, although it is hard to draw quantitative comparisons because \citet{Hyman2018} reports intent-to-treat effects and the TAA treatment includes both benefit payments and training.

\subsection{What Does Schooling Do?\label{subsec:What-Does-Schooling}}

In this section, we explore the mechanisms underlying our main estimates. We first present results from simple decomposition analyses showing that the positive earnings effects are primarily driven by increased employment in certain industries, especially healthcare. We then analyze the course and credential data and document that the ``excess'' employment in these industries can be largely accounted for by the number of workers taking related courses.

We start our analysis by exploring the extent to which enrollees are more likely to leave their pre-layoff (two-digit) industry. We focus on two-digit industries rather than finer categories to highlight the contrast between pre-layoff and post-layoff jobs. The right (left) side of Figure \ref{fig:ind_switch-prob} Panel A shows the probability that an enrollee (matched non-enrollee) is employed in their pre-layoff industry, not employed, or employed in a different industry, over the four-year follow-up period. Enrollees are less likely than non-enrollees to be employed immediately after enrollment, and Figure \ref{fig:ind_switch-prob} shows that this difference is almost entirely explained by the differential probability of re-employment within the pre-layoff industry, as the two groups are about equally likely to switch industries in quarter zero. Over time, the probability of working in a different industry increases more quickly for enrollees: by the end of the four-year follow-up period, they are nine percentage points more likely to work in a different industry. On net, enrollees are seven percentage points more likely to be employed in quarter 16 after enrollment.

Panel B of Figure \ref{fig:ind_switch} decomposes the average earnings of enrollees and non-enrollees in each quarter of the follow-up period from Figure \ref{fig:matched_overall} into components contributed by industry stayers and industry switchers (following \citealp{Autoretal2014}). Specifically, each group's quarterly earnings are decomposed as
\begin{eqnarray}
\text{\ensuremath{E[\text{Earnings}]}} & = & \underbrace{\Pr(\text{Empl. in Same Ind.)\ensuremath{\cdot E[\text{\text{Earnings}| Empl. in Same Ind.}]}}}_{(\text{i})}+\nonumber \\
 &  & \underbrace{\Pr(\text{Empl. in Diff. Ind.)\ensuremath{\cdot E[\text{\text{Earnings}| Empl. in Diff. Ind.}]}}}_{(\text{ii})}\label{eq:ind_switch}
\end{eqnarray}
where ``Empl. in Same Ind.'' and ``Empl. in Diff. Ind'' denote employment in the pre-layoff and non-pre-layoff industry, respectively. The gray lines in the figure represent components (i) and (ii) in equation (\ref{eq:ind_switch}) for enrollees (solid) and non-enrollees (dashed), and each pair sums up to their respective black lines. Consistent with Panel A, the earnings difference in quarter zero is driven by a higher contribution from industry stayers (component i) among non-enrollees. Over time, however, industry switchers among enrollees (component ii) out-contribute their non-enrollee counterparts. By quarter 16, industry switching entirely explains the overall earnings gain: the difference in component (ii) between enrollees and non-enrollees is 103 percent of the overall earnings gain, while the difference in component (i) is -3 percent.\footnote{We further show in Appendix \ref{sec:akm_firm_effects} that it does not appear to be the case that the effect is driven by industry switchers systematically finding employment in different types of firms (i.e., high-wage firms).}

Appendix Figure \ref{fig:destination_ind} shows which industries drive the enrollment effects by plotting the number of enrollees and matched non-enrollees employed in each sector (or not employed at all) in quarter 16. Among women (Panel A), enrollees are much more likely to work in healthcare. Among men (Panel B), enrollees are also more likely to work in healthcare (though to a lesser extent) and construction. Both male and female enrollees are less likely to work in manufacturing. To connect these findings to Figure \ref{fig:ind_switch}, Appendix Figure \ref{fig:destination_ind_switching} shows that industry switchers make up the bulk of the increased employment in healthcare. This pattern of industry switching driving the effects of enrollment is consistent with \citet{CarruthersSanford2018}, who study the effects of attending sub-associate level institutions in Tennessee. This industry-mobility finding is also in a similar spirit to \citet{Belotetal2025}, where the employment effect of providing occupational advice is driven by occupation mobility (as mentioned in Section \ref{sec:Data}, our wage records do not contain occupation; as a result, we cannot study occupation mobility).

Given that increased employment in healthcare and construction explains most of the gains associated with schooling, we now provide evidence that the courses taken and credentials received by individuals in these sectors are related to their work. We link courses or credential subjects to industries using a mapping of academic subjects to occupations from the National Center for Education Statistics and the joint distribution of occupations and industries from the National Employment Matrix by the Bureau of Labor Statistics. Specifically, we define an academic subject as associated with an industry if the subject prepares a worker for an occupation where more than a quarter of the occupation is employed in a single two-digit industry (see Appendix \ref{subsec:courses_appendix} for details). For community colleges, where individuals typically take multiple courses, we associate an enrollee's coursework with the modal industry across her courses.

Figure \ref{figrelated_course} shows the fraction of enrollees in each post-layoff industry who have taken courses that are linked to that industry as described above. We see a large fraction of women (Panel A) employed in healthcare having taken related courses. The analogous plot for men (Panel B) shows that meaningful fractions of enrollees employed in construction, healthcare, and manufacturing have taken related courses. Strikingly, for both men and women, the number of healthcare or construction course-takers mirrors the enrollee-matched-non-enrollee employment difference in the sector. Turning to the analogous results for industry-related credentials, Figure \ref{fig:related_credential} shows that for both men and women, the ``excess'' healthcare employment match those with health-related credentials, but the same is not true for construction.\footnote{We find that 73 percent 
 of those who have taken healthcare courses and are employed in healthcare receive a credential, compared to the overall credential rate of 26 percent. The credential data suggests that it is not a specific occupation within healthcare that is driving the result: of the approximately 3,500 enrollees employed in healthcare with credentials, 26 percent received a credential in Licensed Practical Nursing, 14 percent in Nursing Assistance, and 12 percent in Registered Nursing.} The finding that the healthcare sector is a key driver of employment growth among those who retrain is consistent with empirical evidence from other studies. \citet{Grosz2020} shows that one specific health program (associate's degree in nursing) leads to substantial earnings gains relative to not enrolling at all. Others find evidence that earnings gains associated with credentials in health appear larger than those for non-health subjects (e.g.,
\citealp{Bohnetal2016}; \citealp{Stevensetal2019}). The lack of credentials in construction suggests that the mechanism through which schooling improves labor market prospects may differ by industry. While training may increase access to jobs that require licenses or certification in healthcare, the mechanism is more likely to be industry-specific skill-building in construction.\footnote{We do observe that industry switchers are more likely to have obtained a credential generally. 32 percent of switchers have a credential while only 22 percent of non-employed and 19 percent of those who stayed in the same industry have credentials.}

\subsection{Enrollment Effects and Completers\label{subsec:returns_degree}}

An interesting question that remains is whether the enrollment effects differ by the nature of the schooling spell (i.e., whether someone ultimately obtains a credential, or ``completes''). One way to answer this question would be to make assumptions about the earnings processes and selection into completion. This is the prevailing method in the literature for estimating credential effects, and is used, for example, by  \citet{Jepsen_etal2014} and \citet{Stevensetal2019}. These papers estimate the effects of different types of community college degrees by assuming that conditional on an enrollee fixed effect, there is no more selection into completion of different types of credentials. We have run the same fixed effects regression, but we find that, in our sample, there are large differences in earnings patterns (particularly in the transitory components) among completer and non-completer enrollees that would violate the identifying assumptions of a fixed effects model. This may not be surprising given that we also find differential trends in earnings leading up to enrollment among all UI claimants when we control only for individual fixed effects and displacement effects, as discussed in Appendix \ref{sec:Alternative-Identification-Strat}.

In a previous working paper (\citealp{LeungandPei2023}), we tried out an alternative strategy where we formally defined the enrollment-effect-for-completers parameter and constructed bounds for it. In that framework, we borrowed from the principal stratification literature (e.g., \citealp{Frangakis2002}) and treated completer as a pre-determined type that was only revealed after a worker enrolled. This assumption might be implausible if training completion was strongly influenced by shocks after training began. Regardless, this approach led to uninformative bounds, which could be largely attributed to the relatively low fraction of completers.

While the bounds were not informative, we provide suggestive evidence that the post-enrollment earnings growth is concentrated among completers. We decompose enrollee earnings into parts contributed by completers and non-completers and present the results in Appendix Figure \ref{fig:degree_decomp}.\footnote{Specifically, this graph plots components (i) and (ii) in the following decomposition for enrollees: $\text{\ensuremath{E[\text{Earnings}]}}=\underbrace{\Pr(\text{Completer})\ensuremath{\cdot E[\text{\text{Earnings}| Completer}]}}_{(\text{i})}+\underbrace{\Pr(\text{Non-completer)\ensuremath{\cdot E[\text{\text{Earnings}| Non-completer}]}}}_{(\text{ii})}$.} We see completers contribute disproportionately to earnings growth. Between quarter 0 and 16, enrollees gained \$2,566 in earnings over the four years, with completers (which make up 26 percent of enrollees) contributing \$982 and non-completers (which make up 74 percent of enrollees) contributing \$1,585. In other words, completers' gain is more than three quarters larger than that of non-completers, and if the completer proportion were increased to 50 percent, the average enrollee earnings gain would increase to \$2,959. 

\section{Cost-Benefit Analysis of Further Education\label{sec:IRR}}

In light of the estimated average earnings impacts, can we justify the investment in further education? In this section, we provide back-of-the-envelope calculations on the private and social returns to retraining an additional worker. The private return compares the net present value of the stream of \emph{after}-tax earnings impacts of retraining against the out-of-pocket education expenses an enrollee has to pay upfront. The social return compares the net present value of the stream of \emph{pre}-tax earnings impacts of retraining against the overall cost of enrolling an additional unemployed worker.

While we rely on our estimates of earnings impacts to calculate the returns, we need further assumptions regarding other inputs, such as years of work life remaining and tax rates. First, we assume that an average enrollee has 30 years remaining in her work life just prior to retraining. Given that an average enrollee in our sample is about 35 years old at layoff (Table \ref{tab:descriptives}) and that it takes slightly less than a year from layoff to enrollment (Table \ref{tab:enrollment_chars}), our assumption implies that the enrollee will stop working at around 66, which is consistent with \citet{Jacobson_etal2005_ILR} and just under the normal retirement age of 67 faced by most of our enrolled workers. Second, we use the ten-year post-enrollment real earnings impacts graphed in Figure \ref{fig:long_run_effects} for workers who began schooling before the fall quarter of 2007 and assume that the earnings gains from year 11 until year 30 will stay at the tenth year level. The earnings impacts increase monotonically from -\$2,327 in the first year to \$2,915 in the ninth year before dipping slightly to \$2,824 in the tenth year (the tenth year impact does not differ statistically significantly from the ninth year impact). Our returns measures will be biased downward if the gains keep getting higher beyond the tenth year, but they will overstate the benefit of further education if gains eventually fall. We return to the latter possibility at the end of this section. Third, we use a real interest rate of two percent to discount future earnings when calculating the net present value. This is based on the fact that the daily Treasury real long-term rates, as calculated from the yields of outstanding long-term Treasury Inflation-Protected Securities (TIPS), averaged to 1.98 percent between 2009 and 2010, the two years during which enrollment in our sample peaked. Our real interest rate is lower than the four percent used by \citet{Jacobson_etal2005_ILR} due to the different time period we examine.\footnote{\citet{Jacobson_etal2005_ILR} study workers who enrolled in the 1990s when the long-term Treasury bill yields were around seven percent and inflation rates around three percent. TIPS were not introduced until the late 1990s, and the published Treasury real long-term rates became available in early 2000, at which point they hovered just above four percent---consistent with the choice by \citet{Jacobson_etal2005_ILR}.} Finally, we follow \citet{Jacobson_etal2005_ILR} and assume that workers pay 25 percent of their earnings in various taxes. This average tax rate may be on the higher end given the lower federal income tax rates relative to those in the 1990s, which will imply a conservative private return estimate. The tax rate does not enter the social return calculations.

To compute the private cost of further education, we rely on IPEDS and the Digest of Education Statistics by NCES. To estimate the yearly out-of-pocket expenses for a typical enrollee, we first subtract the expected amount of grant from the sum of annual tuition and book costs during the 2010-2011 academic year. While IPEDS reports tuition and book costs directly, we need to calculate the expected amount of grant. To do this, we multiply together the reported average grant amount among students receiving a grant and the fraction of students receiving a grant. The first measure in the product is an estimate of $E[\text{grant amount}|\text{receiving grant}]$, which is \$4,194 for community colleges and \$4,559 for technical centers, and the second measure an estimate of $\Pr(\text{receiving grant})$, which is 73 percent for community colleges and 74 percent for technical centers. To the extent that UI claimants are more likely to qualify for and receive larger grants, we overstate out-of-pocket costs and understate private returns. We estimate out-of-pocket expenses separately for the two types of institutions (community colleges and technical centers), and we take an average across the two types weighted by the proportion of enrollees attending each.\footnote{Unlike two- and four-year institutions that report cost information for each academic year, most technical centers report cost statistics by program, and we use the average costs corresponding to the largest program offered.}

To estimate the social investment on an additional enrollee, which encapsulates both the enrollee's expenses and government subsidies, we follow \citet{Rouse1998} and proxy it with the variable cost per full-time-equivalent (FTE) student. Like \citet{Rouse1998}, who estimates the cost of educating a student in an associate degree program, we obtain our variable cost by excluding from the total expenditure any fixed costs (spending on administration, public service, operation and maintenance) and research outlays (which is not important for training enrollees in our sample and accounts for less than 0.1 percent of the total expenditure). We compute the cost measures using information for the 2010-2011 academic year from \citet{Digest2012}, and the annual social investment per FTE student is \$8,084 in a community college.\footnote{This figure represents the national average and is not Ohio-specific. \citet{Digest2012} only break down the total expenditure into detailed categories at the national level, meaning that we cannot conduct the \citet{Rouse1998} exercise for Ohio. That said, the reported total expenditure per two-year FTE student is \$12,398 nationally and \$12,346 in Ohio. Given the small difference, the national average is likely to serve as a good substitute.} Expenditure information is not available for technical centers in \citet{Digest2012}. To be conservative, we proxy it with a higher measure of \$14,122, which is the variable cost for an FTE student at a four-year institution.\footnote{To estimate the cost of educating a two-year college student in a four-year institution, \citet{Rouse1998} adjusts the variable cost of four-year colleges because it is cheaper to add a two-year student than an upper class undergraduate or graduate student. We arrive at the dollar figure by adopting the same adjustment for the cost of educating a student enrolled in a technical center.} Lastly, since the average length of enrollment is 4.5 terms, we multiply the yearly cost measures by $4.5/2=2.25$, assuming two academic terms per year. There could be more than two terms per year, and enrollees in our sample might not have enrolled full time, both of which imply an overestimate of cost and a conservative estimate of the return.

Putting all the numbers together, the net present value of the average post-tax earnings impacts is \$37,056 against an out-of-pocket cost of \$6,121, and the resulting private net benefit an enrollee accrues is \$30,936. The net present value of the average pre-tax earnings impacts is \$49,408 against a social investment of \$20,217, and the resulting social net benefit an enrollee accrues is \$29,192. Another way to interpret these quantities is that the worker gets \$6.05 for every dollar invested, and the society gets \$2.44. These ratios are high compared to those reported by \citet{Jacobson_etal2005_ILR}, whose private and social benefit-to-cost ratio estimates are in the range of \$1.69-\$4.52 and \$1.20-2.61, respectively (their Table 5A columns 1-4). The smaller two percent real interest rate we use is largely responsible for the differences---using a four percent real interest rate to discount future earnings as in \citet{Jacobson_etal2005_ILR}, our private and social benefit-to-cost ratios become \$4.37 and \$1.76, respectively.

Our estimates imply a private internal annual rate of return (IRR) of 15 percent and a social IRR of 8 percent. \citet{Jacobson_etal2005_ILR} do not report the private IRR, so we compare our estimate against those by \citet{HeckmanLochnerTodd2006Handbook}. Our private IRR is somewhat higher than those reported by \citet{HeckmanLochnerTodd2006Handbook} for a comparable education level, which come from applying generalized Mincer regressions to the Decennial Census data. After accounting for tuition and taxes, \citet{HeckmanLochnerTodd2006Handbook} find private IRRs between 8 and 14 percent for two more years of study among those with 12 years of education (their Table 4).\footnote{Our sample is arguably more comparable to this population than the population going from 14 to 16 years of education, as the latter consists of many who obtain a Bachelor's degree. Our higher private IRR of 15 percent may be driven by the different time period examined. \citet{HeckmanLochnerTodd2006Handbook} show their IRRs increased by 6 to 17 percentage points between 1950 and 1990, and further gains are likely going into the 2000s. Another potential reason for our higher IRR is that our sample of UI claimants with previous labor force attachment is more positively selected.} Our social IRR is comparable to the preferred estimates of 4 to 14 percent by \citet{Jacobson_etal2005_ILR} that pass a specification check (their Table 5B column 2).

We have mentioned reasons that these estimates may be conservative (e.g., the approximations of tax rate and the use of investment per FTE student to measure technical center training cost). Additionally, the ``long-run'' earnings estimates used for this calculation come from an early cohort of enrollees who reentered the labor market at the height of the Great Recession and likely saw lower earnings gains compared to later cohorts, as seen in Appendix Figure \ref{fig:by_year}. Finally, our implicit assumption that out-of-pocket expenses are paid in cash is also likely to understate the private IRR.\footnote{According to IPEDS data for the 2010-11 academic year, 51 percent of Ohio students at community colleges and 39 percent at technical centers took out loans. The average loan amounts were \$4,500 and \$5,548, respectively. Following \citet{Rouse1998} and \citet{Jacobson_etal2005_ILR}, loans (and grants) to students do not affect the social IRR calculation (the corresponding investment is calculated from the institutions' perspective, which is agnostic about how tuition and related expenses are paid).} For example, if the average out-of-pocket expenses are financed entirely with an unsubsidized Stafford loan at the 6.8 percent nominal rate during our sample enrollment period, the private IRR increases to 20.4 percent.\footnote{We assume a two percent annual inflation rate as the average was 1.92 percent in the time span of our post-enrollment data.} This higher IRR reflects the fact that when training returns exceed borrowing costs, it is optimal to cover education expenses with loans.

Of course, by using average measures that include \emph{ex post} earnings impacts in our calculations, we ignore both uncertainty and heterogeneity. When making enrollment decisions, workers are unsure how much they will make post-training. Individuals also have different abilities leading to different labor market returns and capacity to pay back loans. Similarly, workers incur different non-monetary training costs we do not account for---these include psychic costs or long hours spent studying, which \citet{Carneiroetal2003}, \citet{Cunhaetal2005,Cunhaetal2007}, and \citet{CunhaandHeckman2007,CunhaandHeckman2008} show can be substantial. We acknowledge that accounting for uncertainty and heterogeneity tends to lower the return estimates (\citealp{HeckmanLochnerTodd2006Handbook}), but we abstract away from them to facilitate comparisons with existing estimates. Similarly, we follow \citet{Rouse1998}, \citet{Jacobson_etal2005_ILR}, and \citet{HeckmanLochnerTodd2006Handbook} and abstract away from general equilibrium effects.\footnote{For example, sending many more unemployed claimants back to school to prepare for a new career in healthcare will shift the supply of new healthcare workers, which will in turn depress wages and crowd out other workers (\citealp{McCalletal2016_Handbook} refer to these as relative skill price and displacement effects, respectively). The general equilibrium effect could also be positive---as a referee suggests, having a more educated workforce may improve health, reduce crime, and strengthen democracy (see \citealp{Lochner2011} for a nice review).}

With these abstractions, the only assumption we invoke that may substantially overstate the benefit of further education is the persistence of training effects over a twenty-year horizon (between year ten and year thirty). While this extrapolation is consistent with \citet{Jacobson_etal2005_ILR}, it may not serve as a good approximation. For example, many of our enrollees might have lost jobs again in the COVID-19 pandemic, including those working in healthcare due to the postponement of routine and elective procedures (\citealp{Scott2020Vox}), and the earnings gain might have decreased substantially as a result. One way to interpret the numbers without the twenty-year extrapolation is to ask how long it takes to break even on the educational investment. The impacts graphed in Figure \ref{fig:long_run_effects} suggest private (social) investment breaks even at 8 (14) years after enrollment begins.

\section{Conclusion\label{sec:Conclusion}}

In this paper, we estimate the effects of retraining unemployed workers. Linking together high-quality administrative records, we follow the population of Ohio UI claimants between 2004 and 2011 who enroll in public postsecondary institutions. Adopting a matching method that bridges two strands of the dynamic treatment effect literature, we find that enrollees experience a ``lock-in'' effect of depressed earnings immediately after enrolling but see an average earnings gain of \$348 per quarter (about six percent) over the third and fourth years post-enrollment. We find that much of the gain is driven by enrollees who take courses and are subsequently employed in health-related fields. A longer-run follow-up of an early subsample suggests that the gains persist and widen over a ten-year period.

With our earnings impact estimates, we conduct a cost-benefit analysis from the private and social perspectives. We find that the out-of-pocket investment from an average worker breaks even 8 years after enrolling, whereas the social investment (including both private investment by the worker and government subsidies) breaks even in 14 years. Assuming an average worker stops working around the normal retirement age as in \citet{Jacobson_etal2005_ILR}, the private benefit is \$6.05 for every dollar of her out-of-pocket investment in schooling, and the social benefit is \$2.44 for every dollar of total educational investment. The implied private and social IRRs are 15 percent and 8 percent, respectively, which are in the range of estimates by \citet{HeckmanLochnerTodd2006Handbook} and \citet{Jacobson_etal2005_ILR}.

\citet{BraxtonandTaska2025} show that technological change that leads to new skill requirements is a major driver of earnings declines after job loss. This suggests that policies that encourage and enable unemployed workers to retrain, including efforts to expand community college access, can therefore be beneficial in the long run. Specifically, policies that target UI recipients to ease their transition to enrollment may be effective based on existing evidence. \citet{BarrTurner2015} show that UI benefit extensions during the Great Recession increased enrollment among the unemployed. Using the same temporal variation in benefit extensions, we replicate their analysis in Appendix \ref{sec:UI-enrollment} and find a similar effect within Ohio: for every 10-week increase in UI benefits, there is a ten percent increase in enrollment, or an additional 1,200 enrollees per year. Combined with our estimates of positive private and social returns to enrollment, this indicates that UI policies may have social benefits beyond what is typically considered.

While we find positive returns to enrollment on average, we caution that our cost-benefit analysis abstracts from the uncertainty and heterogeneity in enrollment effects. Indeed, an open question remains as to whether gains vary across training types. Although we find that the retraining effects are driven by course-taking and subsequent employment in healthcare and construction, evidence that can guide policymakers and unemployed workers on the most effective type of training is a fruitful avenue for future research.
\pagebreak{}
\section*{References}

\begin{singlespace}

\begin{btSect}[jpe]{opportunity}
\btPrintCited
\end{btSect}
\end{singlespace}

\clearpage{}
\begin{figure}[H]
\caption{Earnings of Enrollees and Matched Non-enrollees}
\label{fig:matched_overall}
\begin{centering}
\smallskip{}
\par\end{centering}
\begin{centering}
\includegraphics{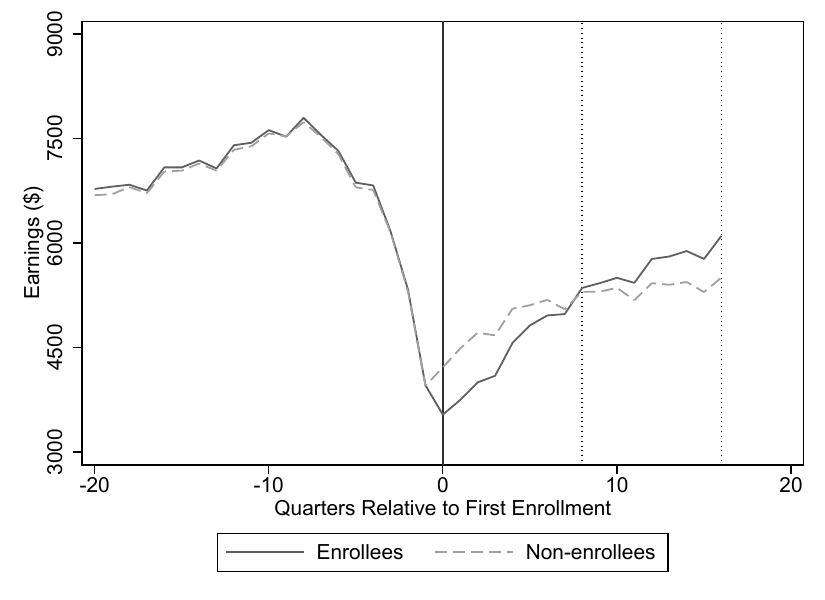}
\par\end{centering}
\centering{}%
\begin{minipage}[t]{0.8\columnwidth}%
Notes: This figure plots the average quarterly earnings of enrollee and matched non-enrollee UI claimants. The solid vertical line denotes the quarter of first enrollment, and the vertical dashed lines denote eight and 16 quarters after first enrollment. $N=141,758$, corresponding to 136,074 unique individuals.
\end{minipage}
\end{figure}

\clearpage{}
\begin{figure}[H]
\caption{Earnings of Enrollees Who Participated in WIA and Matched Non-enrollees}
\label{fig:wia_subgroup}
\begin{centering}
\includegraphics{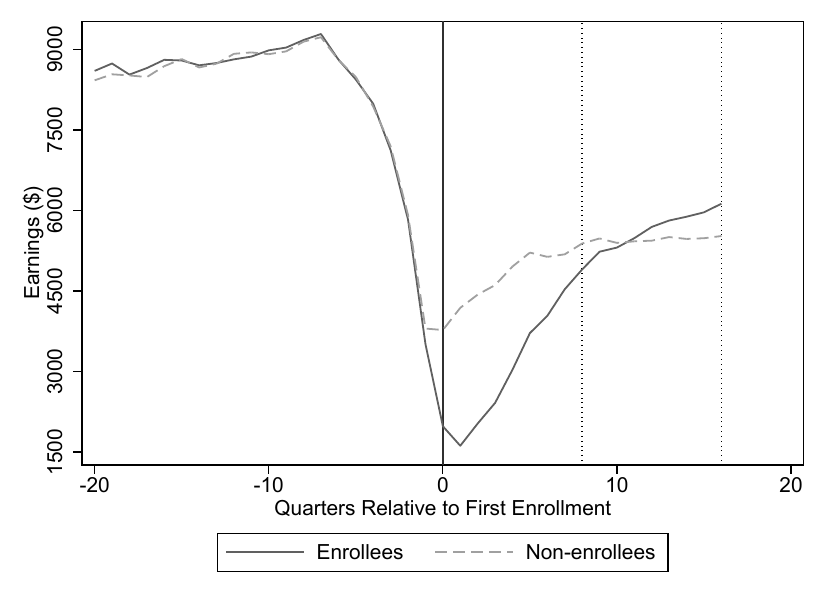}
\par\end{centering}
\centering{}%
\begin{minipage}[t]{0.8\columnwidth}%
Notes: This graph shows the average quarterly earnings of enrollees who received WIA training services, and their matched non-enrollees. The solid vertical line denotes the quarter of first enrollment, and the vertical dashed lines denote eight and 16 quarters after first enrollment. $N=16,624$ UI claims, corresponding to 16,487 unique individuals.
\end{minipage}
\end{figure}
\clearpage{}

\begin{figure}[H]
\caption{Long-run Earnings of Enrollees and Matched Non-enrollees}
\label{fig:long_run_effects}
\begin{centering}
\includegraphics{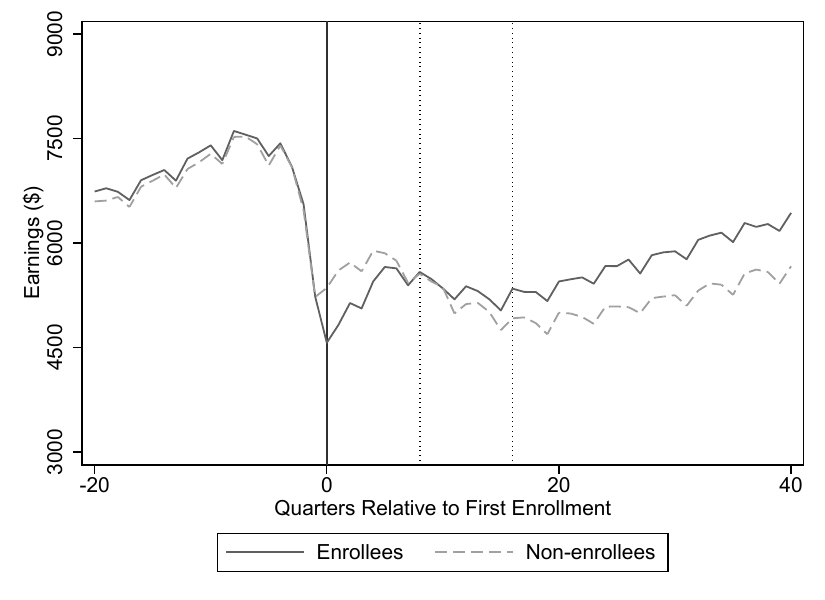}
\par\end{centering}
\centering{}%
\begin{minipage}[t]{0.8\columnwidth}%
Notes: This graph shows the average quarterly earnings of UI claimants who enrolled from 2004 through the third quarter of 2007 and their matched non-enrollees. The solid vertical line denotes the quarter of first enrollment, and the vertical dashed lines denote eight and 16 quarters after first enrollment. $N=31,666$ UI claims, corresponding to 30,973 unique individuals.
\end{minipage}
\end{figure}

\begin{figure}[H]
\caption{Industry Switching Among Enrollees and Matched Non-enrollees}
\label{fig:ind_switch-prob}\label{fig:ind_switch}
\begin{centering}
\smallskip{}
\par\end{centering}
\begin{centering}
(A) Probability of Switching Industries Over Time
\par\end{centering}
\begin{centering}
\includegraphics[scale=0.85]{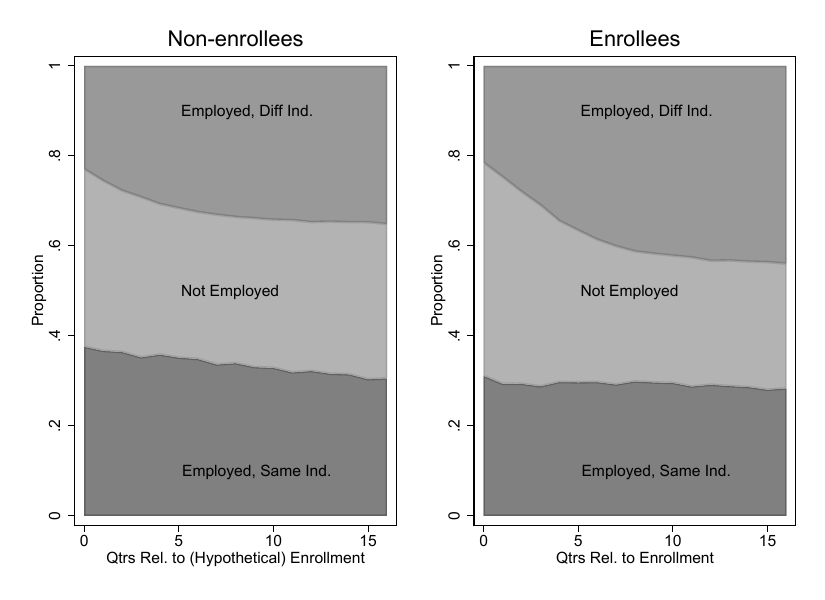}
\par\end{centering}
\begin{centering}
(B) Decomposition of Earnings: the Role of Industry Switching
\par\end{centering}
\begin{centering}
\includegraphics[scale=0.85]{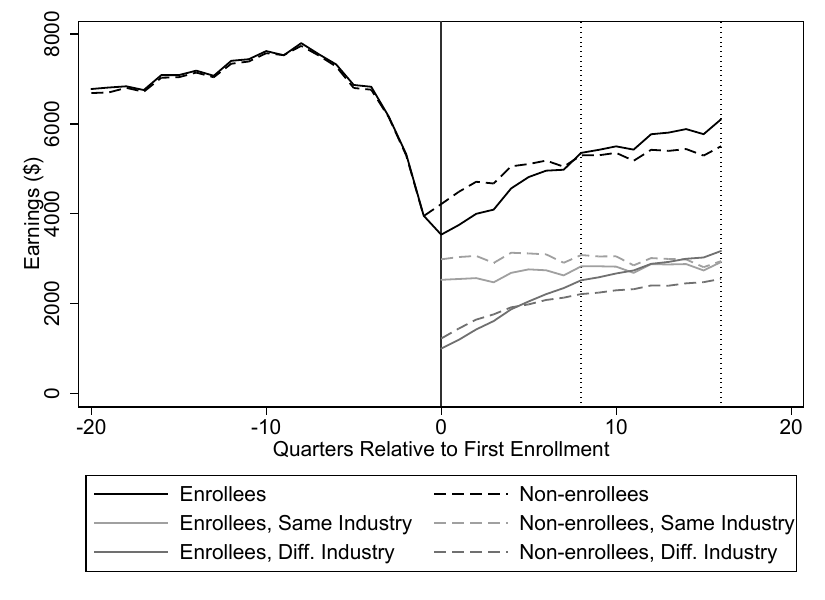}
\par\end{centering}
\centering{}%
\begin{minipage}[t]{0.9\columnwidth}%
Notes: The figures in Panel A plot the probability of employment in the same (pre-layoff) two-digit industry, employment in a different two-digit industry, and non-employment over time for enrollees and matched non-enrollee UI claimants. Panel B plots the average quarterly earnings of enrollee and matched non-enrollee UI claimants (black solid and dashed lines). The gray lines disaggregate the post-enrollment earnings into two components: average quarterly earnings from the pre-layoff (``same'') industry and from a different industry, each scaled by the probability of employment in either the same or different industries. The solid (dashed) gray lines sum up to the solid (dashed) black lines. $N=141,758$, corresponding to 136,074 unique individuals.
\end{minipage}
\end{figure}

\clearpage{}
\begin{figure}[H]
\caption{Industry-Related Course-Taking, by Industry of Employment at 16th Quarter Post-Enrollment}
\label{figrelated_course}
\begin{centering}
\smallskip{}
\par\end{centering}
\begin{centering}
(A) Women
\par\end{centering}
\begin{centering}
\includegraphics[scale=0.8]{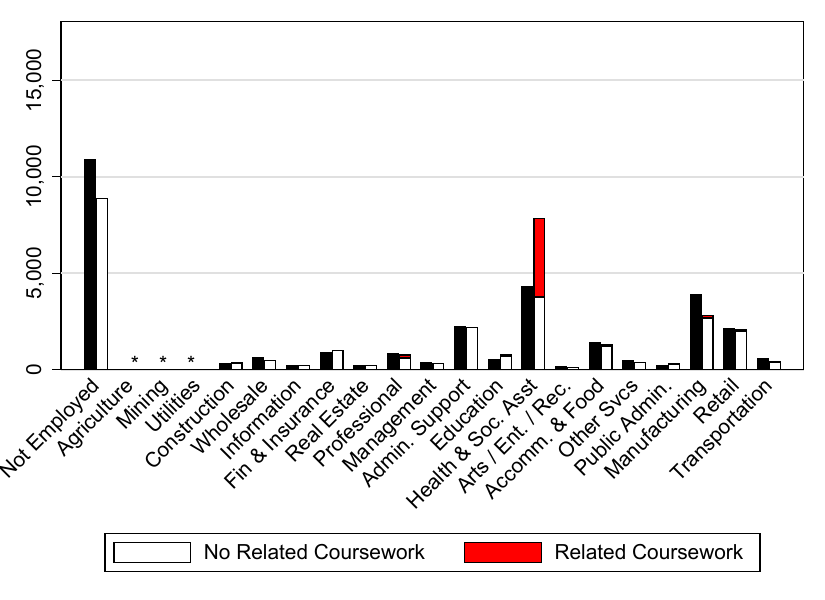}
\par\end{centering}
\begin{centering}
(B) Men
\par\end{centering}
\begin{centering}
\includegraphics[scale=0.8]{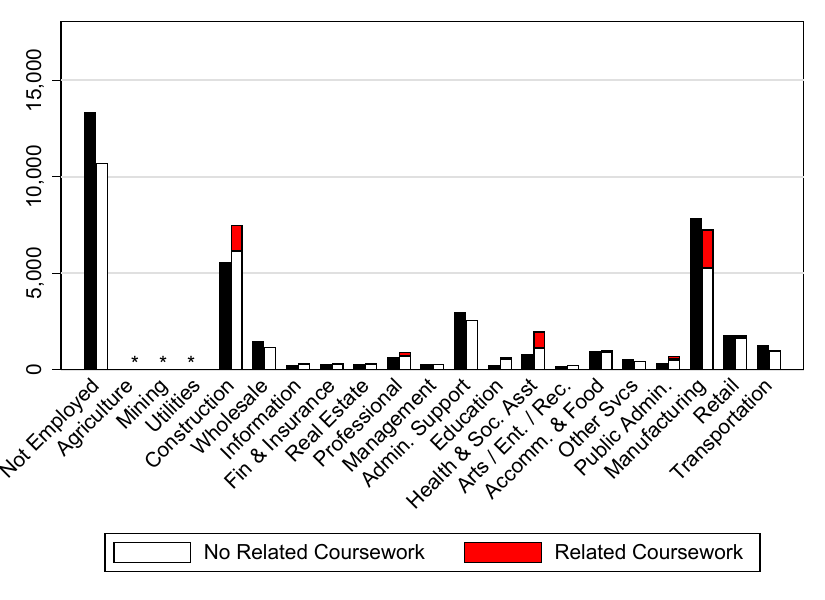}
\par\end{centering}
\centering{}%
\begin{minipage}[t]{0.8\columnwidth}%
Notes: In each plot, the right side of each pair of bars shows the number of enrollees who take or do not take sector-related coursework. Courses are defined as sector-related if the course prepares a worker for an occupation where a large proportion of the occupation is concentrated in one industry (see text for details). The left bar shows the number of matched non-enrollees in each sector. Agriculture, Mining, and Utilities sectors have fewer than 200 workers in each enrollee/non-enrollee cell and are not plotted. If there are 10 or fewer enrollees in a sector who take sector-related coursework, it is not plotted.
\end{minipage}
\end{figure}

\clearpage{}
\begin{figure}[H]
\caption{Industry-Related Credential Receipt, by Industry of Employment at 16th Quarter Post-Enrollment}
\label{fig:related_credential}
\begin{centering}
\smallskip{}
\par\end{centering}
\begin{centering}
(A) Women
\par\end{centering}
\begin{centering}
\includegraphics[scale=0.8]{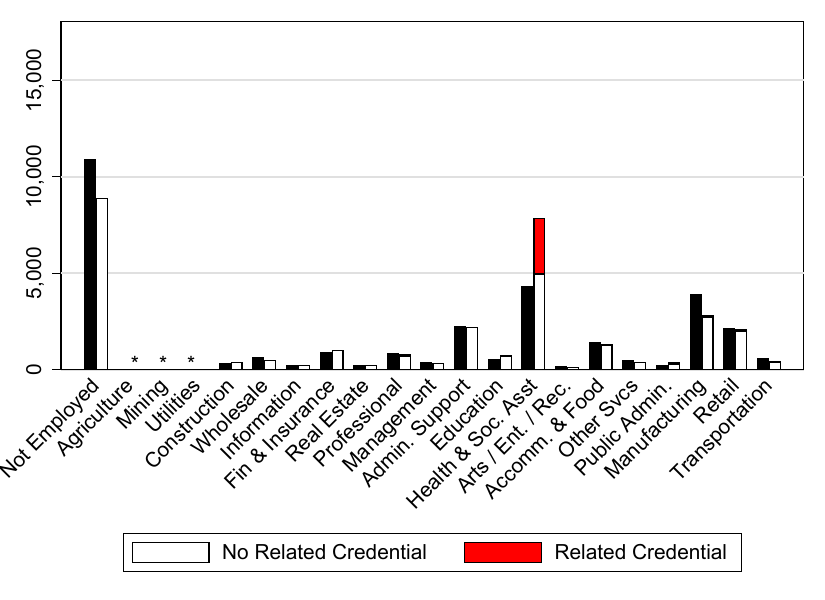}
\par\end{centering}
\begin{centering}
(B) Men
\par\end{centering}
\begin{centering}
\includegraphics[scale=0.8]{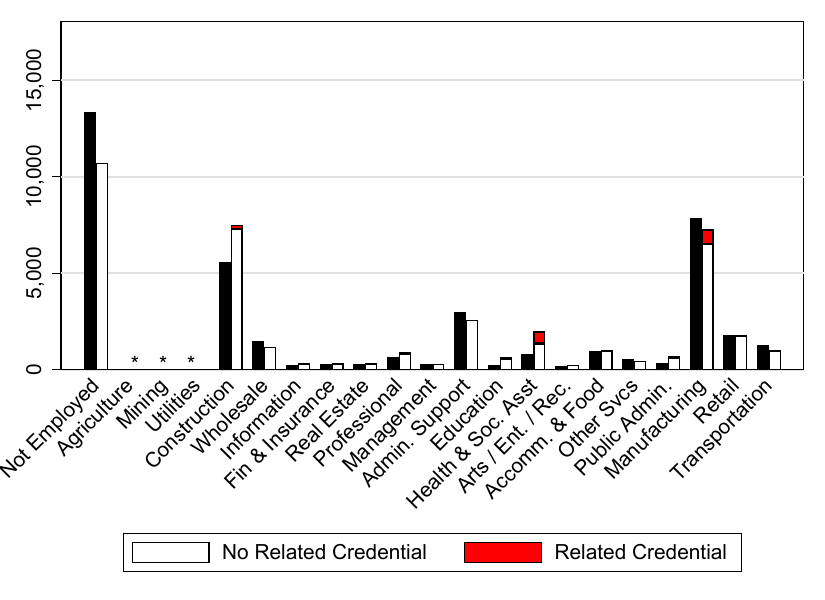}
\par\end{centering}
\centering{}%
\begin{minipage}[t]{0.8\columnwidth}%
Notes: In each plot, the right side of each pair of bars shows the number of enrollees who obtain or do not obtain sector-related credentials. Credentials are defined as sector-related if the credential prepares a worker for an occupation where a large proportion of the occupation is concentrated in one industry (see text for details). The left bar shows the number of matched non-enrollees in each sector. Agriculture, Mining, and Utilities sectors have fewer than 200 workers in each enrollee/non-enrollee cell and are not plotted. If there are 10 or fewer enrollees in a sector who have sector-related credentials, it is not plotted.
\end{minipage}
\end{figure}

\newpage{}
\begin{table}[H]
\caption{(A) Descriptive Characteristics of Enrollees and Non-enrollees}
\label{tab:descriptives}\medskip{}

\begin{centering}
\includegraphics[viewport=45bp 300bp 612bp 737bp]{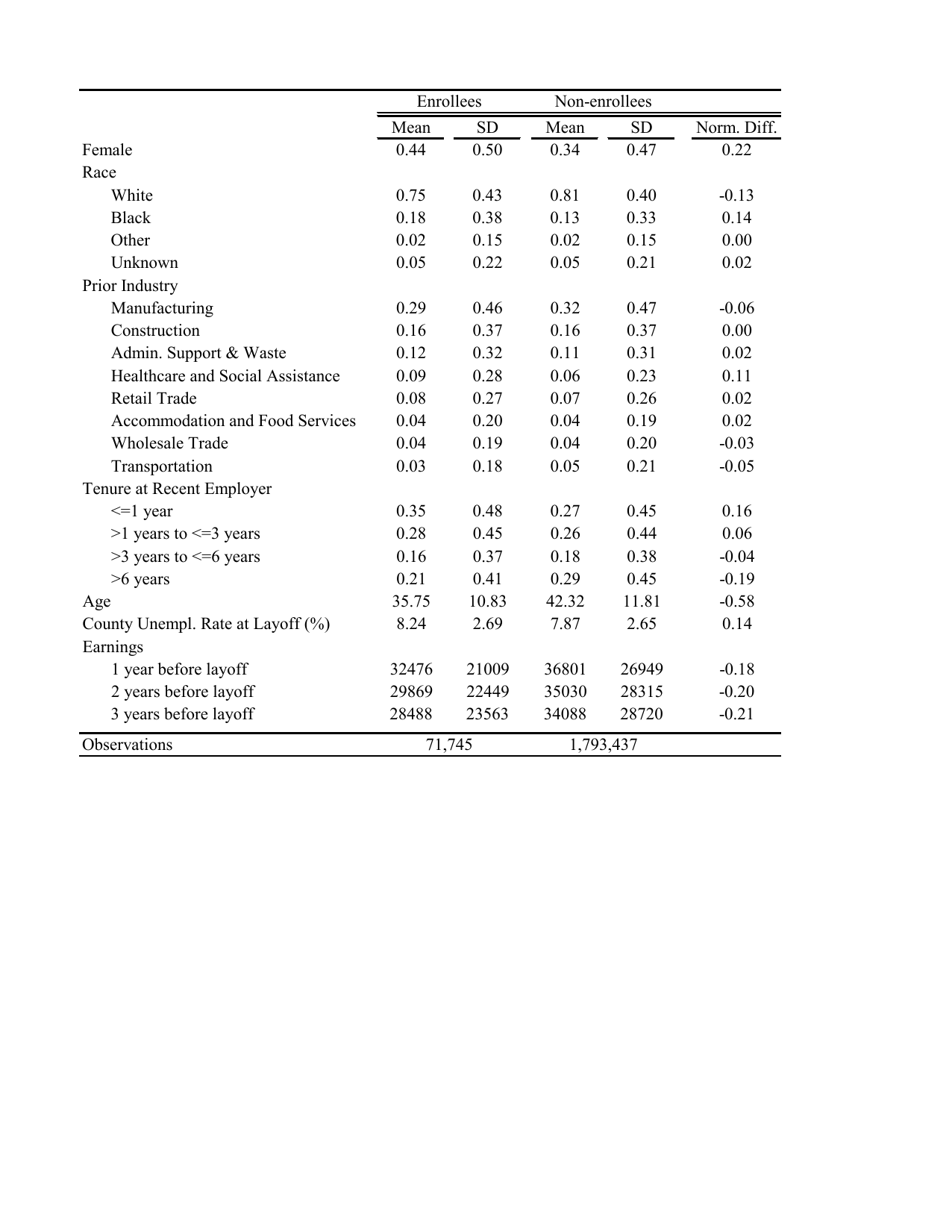}
\par\end{centering}
\centering{}%
\begin{minipage}[t]{0.96\columnwidth}%
Notes: ``Enrollees'' (``Non-enrollees'') are UI claimants who enroll (do not enroll) in a public postsecondary institution in Ohio within two years of filing a UI claim. ``SD'' denotes standard deviation. ``Norm. Diff.'' is the normalized difference defined in Section \ref{subsec:Estimation-Strategy-and}. All earnings are expressed in 2012 dollars.
\end{minipage}
\end{table}
\newpage{}

\setcounter{table}{0} 
\begin{table}[H]
\caption{(B) Descriptive Characteristics of Enrollees and \uline{Matched} Non-enrollees}
\label{tab:descriptives_matched}\medskip{}

\begin{centering}
\includegraphics[viewport=66bp 285bp 612bp 737bp]{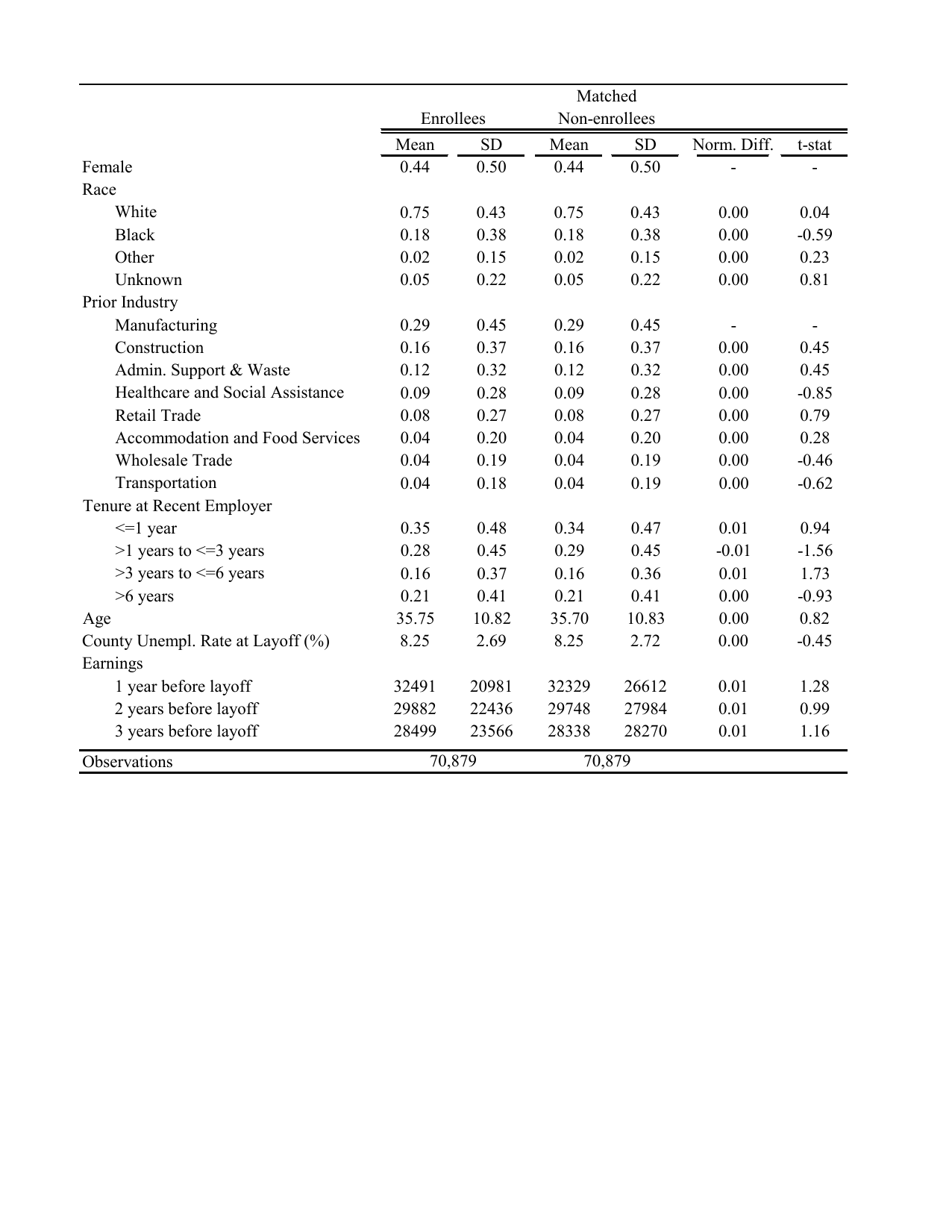}
\par\end{centering}
\centering{}%
\noindent\begin{minipage}[t]{1\columnwidth}%
Notes: ``Enrollees'' are UI claimants who enroll in a public postsecondary institution in Ohio within two years of filing a UI claim, and ``Matched Non-enrollees'' are their matched comparison group. ``SD'' denotes standard deviation. ``Norm. Diff.'' is the normalized difference defined in Section \ref{subsec:Estimation-Strategy-and}. ``t-stat'' is the t-statistic corresponding to the difference in means between enrollees and matched non-enrollees.
\end{minipage}
\end{table}
\newpage{}

\begin{table}[H]
\caption{Enrollment Characteristics}
\label{tab:enrollment_chars}\medskip{}

\includegraphics[viewport=3bp 468bp 612bp 737bp]{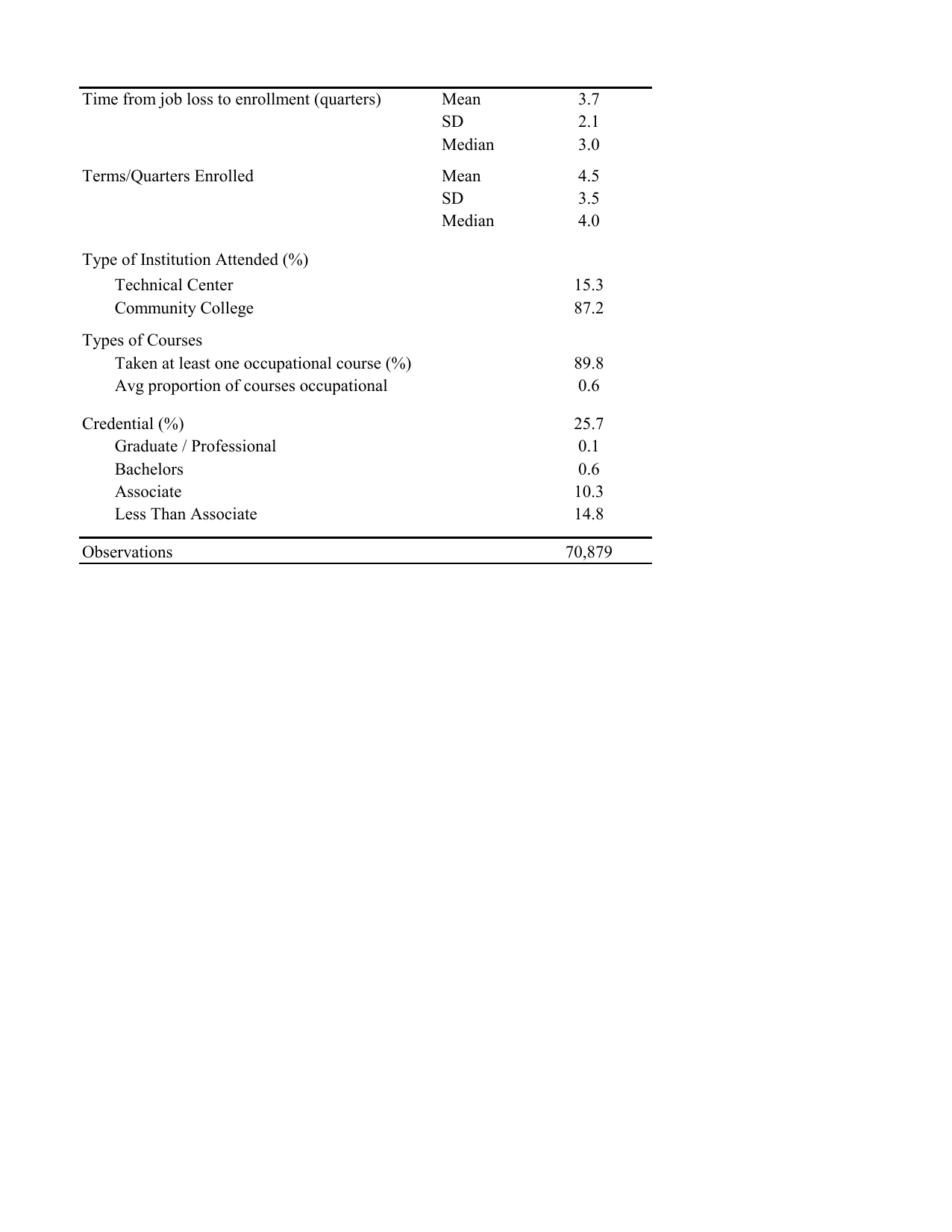}
\centering{}%
\begin{minipage}[t]{0.75\columnwidth}%
\vspace{42bp}
Notes: Type of Institution Attended, Terms/Quarters Enrolled, Types of Courses, and Credential are calculated within four years of first enrollment. Enrollees may attend more than one type of institution over the four-year period. ``Less than Associate'' credentials include less than two-year awards from HEI and any credential from OTC. "SD" is standard deviation.
\end{minipage}
\end{table}

\newpage{}
\begin{table}[H]
\caption{Enrollment Effect Estimates}
\label{tab:att_estimates}\medskip{}

\includegraphics[viewport=-10bp 500bp 612bp 740bp]{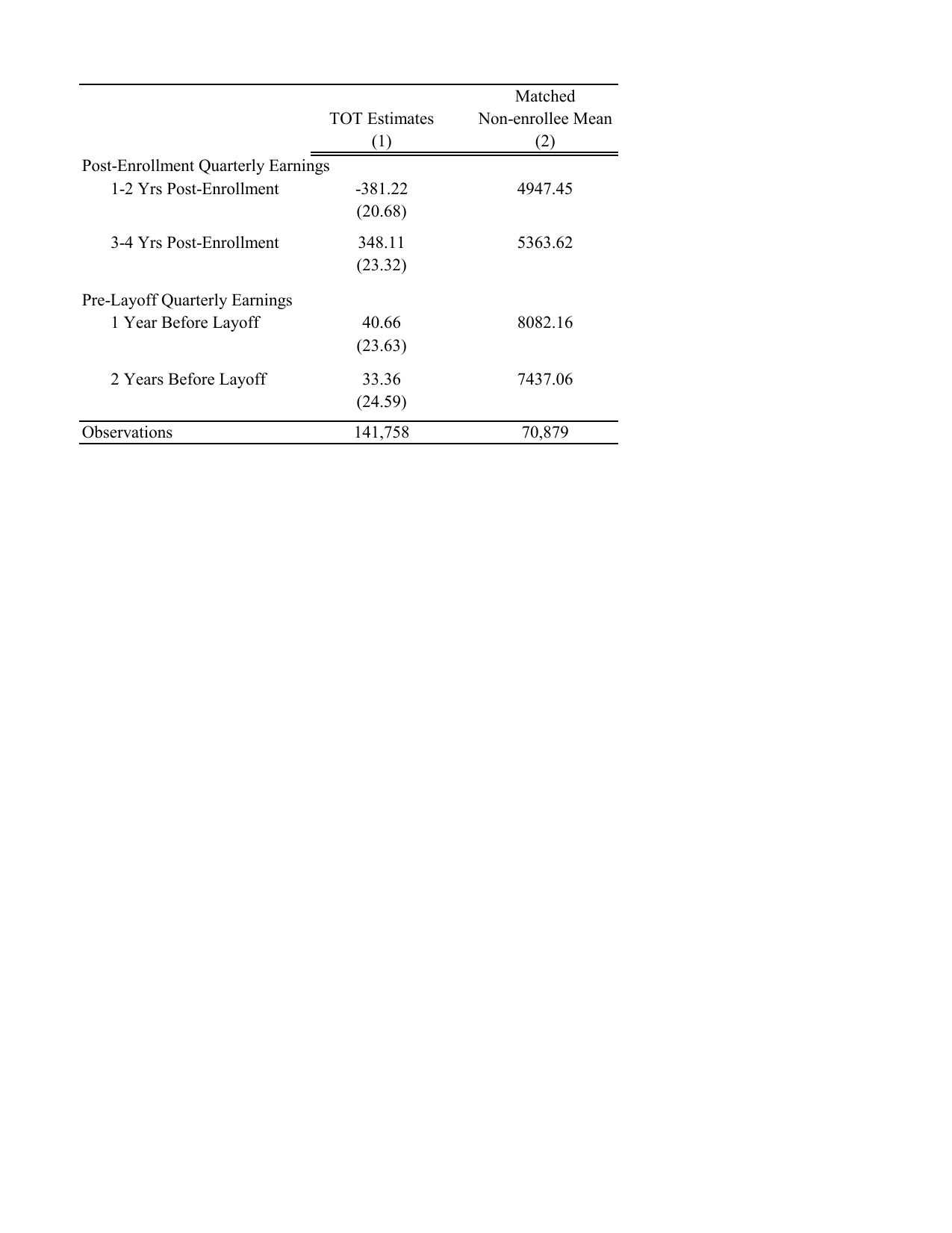}
\centering{}%
\begin{minipage}[t]{0.72\columnwidth}%
Notes: Column (1) shows the estimated effect of enrollment on earnings in the time period denoted by the row headings. Column (2) shows the mean earnings in the matched non-enrollee sample. \citet{Abadie_Imbens2016} standard errors are in parentheses.
\end{minipage}
\end{table}
\newpage\vspace{-10bp}
\begin{sidewaystable}[H]
\caption{Enrollment Effect Estimates By Subgroup}
\label{tab:subgroup_ests}\vspace{-45bp}

\begin{centering}
\includegraphics[viewport=0bp 68bp 792bp 617bp,scale=0.85]{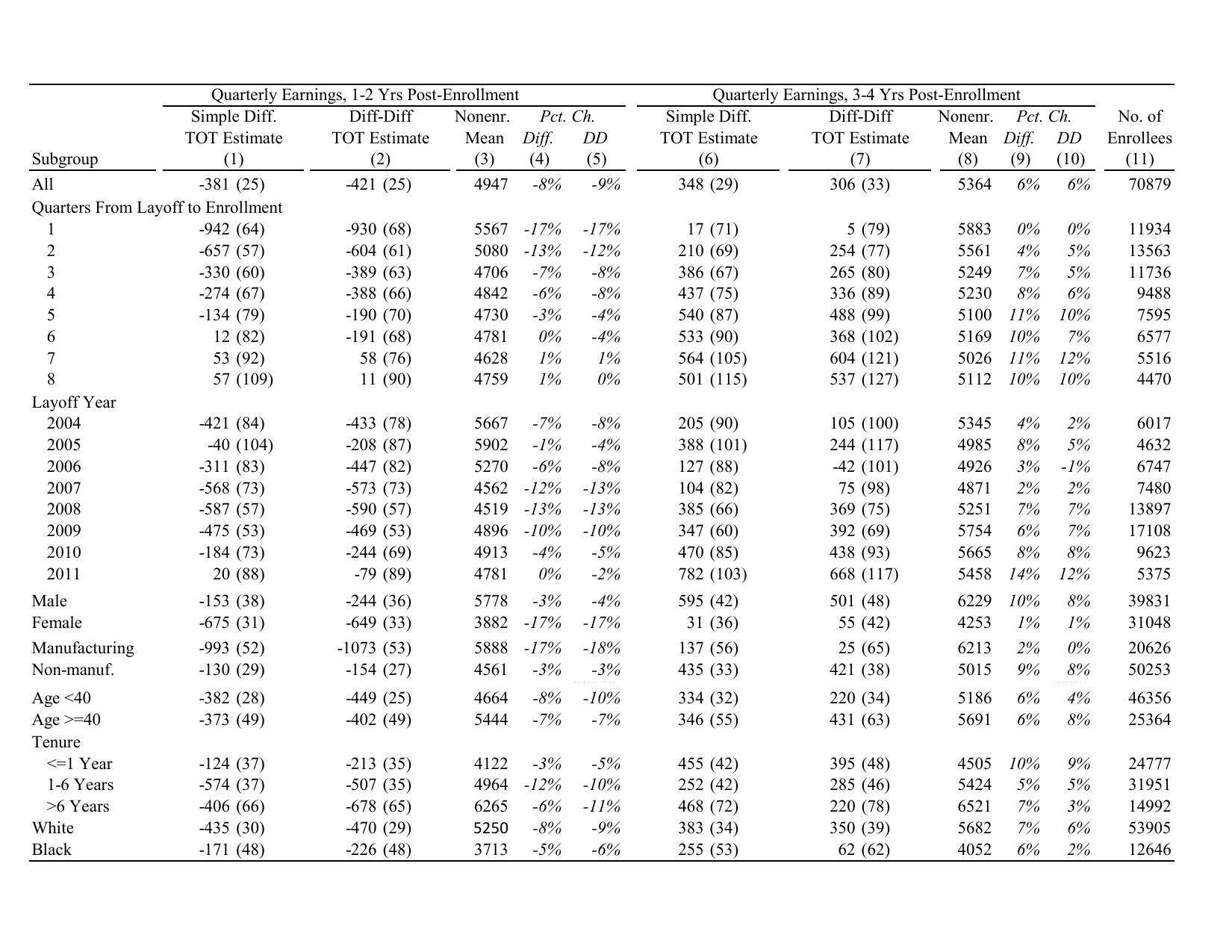}\vspace{5bp}
\par\end{centering}
\noindent\begin{minipage}[t]{1\columnwidth}%
Notes: Columns (1) and (2) show the estimated effect of enrollment on quarterly earnings in the two years after first enrollment, computed by taking the difference between enrollees and matched non-enrollees, and by matched difference-in-differences, respectively, for each subgroup denoted by the row headings. Columns (6) and (7) show the analogous results for the third and fourth years after enrolling. Columns (3) and (8) show the mean quarterly earnings of the matched non-enrollees. Columns (4), (5), (9), and (10) express the columns (1), (2), (6), and (7) as percentages of the matched non-enrollee mean. Standard errors for the mean pairwise difference between enrollees and matched-non-enrollees are reported in parentheses.
\end{minipage}
\end{sidewaystable}

\newpage\pagenumbering{arabic}

\appendix
\begin{center}
\textbf{\Large{}Appendix (For Online Publication Only)}{\Large\par}
\par\end{center}

\section{Data Sources and Sample Construction: Additional Details\label{sec:Data Appendix}}

\renewcommand{\theequation}{A\arabic{equation}}

\setcounter{equation}{0}

\setcounter{footnote}{0}

\counterwithin{table}{section} 
\counterwithin{figure}{section}

\subsection{Construction of Analysis Sample and Variables\label{subsec:analysis_sample}}

Our unemployment insurance (UI) claims sample consists of eligible, regular UI claims (i.e., claims that are filed when a worker is first unemployed). To create the sample, we first construct UI spells by grouping together regular claims and all associated extension claims (e.g., benefits under the Extended Benefit, Emergency Unemployment Compensation, and Trade Readjustment Allowance programs), which have the same benefit year beginning date. We then check for cases where an individual has two overlapping UI spells (i.e., one spell begins before the previous one ends), which we further group together as part of the same UI spell. These steps help ensure that the beginning of each spell corresponds to when a worker is first laid off, rather than a continuation of an ongoing unemployment spell. We drop observations that have missing gender, age under 16 or over 100, and non-Ohio zip codes. For enrollees in our sample, we only use the UI spell that is closest to (but before) enrollment. Non-enrollees are allowed to have multiple UI spells in the sample.

Several of our key variables come from the quarterly wage database that is part of the Ohio UI system. Our main outcome measures are quarterly earnings. Although the earnings data we receive are topcoded, the censoring points are sufficiently high so as to be not very relevant for our sample, as is evident from the summary statistics of Table \ref{tab:descriptives_matched}: prior to 2009, earnings above \$99,999 were censored at \$99,999; for all quarters starting in 2009, the top one percent of earnings have been topcoded to the average of the top one percent. The wage data also contain employer pseudo-IDs and industry, which allow us to construct job tenure measures. Since these employer pseudo-IDs are not in the UI claims data, we first search in the wage data for an employer that matches the industry reported in the UI claims data in the five quarters before layoff, starting with the most recent quarter. Once we have identified the employer that matches the industry in the claims data, we define tenure as the number of quarters since the first time a worker is observed to have worked for the employer in the wage data. We then create three tenure categories: less than one year, one to six years, and more than six years (since we only observe wages starting in 1995Q2, the maximum tenure for the earliest claims in the sample is 8.75 years). If we do not find an employer with a matching industry or if the industry is missing in the claims data, we report the tenure at the most recent employer in the wage data. Only a small fraction of the wage data have missing industry (0.5 percent).

Since we only observe earnings within Ohio, it is possible that we are understating the earnings of workers who work out-of-state. To gauge the extent of this issue, we examine the out-migration rates of workers who claim UI in the Survey of Income and Program Participation (SIPP). In the 2004 (2008) SIPP panels, there were 139 (261) individuals who reported receiving UI benefits while residing in Ohio at some point during the four years of the panel. Of these individuals, 8 (4) individuals moved out-of-state, and 3 (2) found jobs in their new destination in the 2004 (2008) panels. Although the samples of UI recipients are small in the SIPP, these rates of out-migration are similar to the overall yearly migration rates for Ohio found using IRS Statistics of Income data (2005-2016) and the American Community Survey (2010-2016), which is about two percent. To get a sense for how out-migration may affect our main estimates, consider the extreme scenario where two percent of the non-enrollee control group have no in-state earnings due to out-migration and that \emph{none} of the enrollee treatment group has migrated. Then replacing the zero earnings of the bottom two percent of the control group with their mean earnings will increase the average control earnings by about \$107 (two percent times \$5,364 as reported in Table \ref{tab:att_estimates}) three to four years post-enrollment, resulting in a treatment effect estimate of \$241 per quarter. Even in this very conservative scenario, there is a positive enrollment effect of more than four percent.

\subsection{Enrollment, Course, and Credential Data\label{subsec:courses_appendix}}

Our enrollment data come from two sources: 1) The Higher Education Information (HEI) data, which cover two- and four- year public colleges, and 2) the Ohio Technical Center (OTC) data, which contain information on technical centers (the OTC data also include training that takes place in correctional facilities and high schools, though these account for less than one percent of our enrollee sample). The HEI data, available from the summer of 1999 to the spring of 2017, are all reported at the person-institution-term level so we can observe enrollment, courses taken, and credentials/degrees obtained in every term for each individual. Terms are ``Winter'', ``Spring'', ``Summer'', and ``Autumn'', which we map to the first, second, third, and fourth quarters of the calendar year in our analysis. Starting in 2013, ``Winter'' terms were eliminated as all public colleges moved to a semester system. The OTC data, spanning years between 2002 and 2017, contain start and end dates for each course, and credentials are often associated with specific courses.

Each course and credential has a Classification of Instructional Programs (CIP) code that denotes the subject area. In the HEI data, 0.3 percent of courses taken by enrollees in our sample are missing CIP codes (none of the credentials are missing CIP codes). In the OTC data, although we see that 10 percent of courses in our enrollee sample have missing CIP codes, there is another variable containing an internal subject code that we can use to fill in most of the missing CIP codes. That is, for courses that have missing CIP codes but non-missing internal subject codes, we fill in the most common CIP code associated with that internal subject code. This procedure effectively reduces the percentage of OTC courses with missing CIP codes to 0.7 percent. Also, the HEI course information for 2006 appears to be incomplete relative to the enrollment data: 17 percent of the enrollee-by-term observations in 2006 are not associated with courses. This missing data issue only affects results in Section \ref{subsec:What-Does-Schooling} and may slightly undercount the number of enrollees that take coursework related to specific industries or areas.

To categorize courses and credentials to specific (two-digit) industries, we create a mapping of CIP codes to industries using the data from the National Center for Education Statistics (NCES) and the Bureau of Labor Statistics (BLS). The NCES provides a list of occupations that each CIP subject prepares for, while the National Employment Matrix (2018) from the BLS has data on the share of workers of a particular occupation in a specific industry (\citealp{NCES_CIPSOC_app}; \citealp{BLS_NEM_app}). To create this mapping, we consider only ``occupational'' (as opposed to ``academic'') CIP codes, as defined by the NCES. We also remove ``postsecondary teacher'' occupations (SOC 25-1000) because many subjects list educators of that subject as a possible occupation, and these tend to be categorized as postsecondary teachers, most of whom work in the Educational Services sector. Since we believe most workers who train in a particular subject are not aiming to teach in the area, we do not want to falsely attribute a subject to the education sector. We then order the occupations within CIP subject by the total employment of the occupation according to the National Employment Matrix, and map each CIP code to the most populous occupation. We further map a CIP code to a two-digit NAICS industry if, for the occupation the CIP maps to, more than a quarter of the employment is within that industry. We use the largest industry if there are multiple industries that meet this criterion. We designate a CIP code as not associated with an industry if 1) the largest industry accounts for less than 25 percent of the occupation the CIP maps to, 2) the CIP is not associated with any occupation, or 3) the mapped ``industry'' is self-employment (i.e., more workers in the mapped occupation are self-employed than working in any particular sector). An example of a course not mapped to an industry is Human Resources Management (CIP 52.1001): although it prepares a worker for the Human Resource Specialist occupation, human resource specialists work in many different industries and cannot be assigned to a specific one.

Courses and credentials in the OTC data are mapped to industries via their CIP code. When a worker takes more than one course, we use the course with the most course hours (and randomly pick a course with the same largest number of course hours). In the HEI data, where each enrollee typically takes more than one course, we first map each of her courses to an industry and then assign to her coursework the modal industry over all her courses (for this assignment, we require that she take at least three courses in that industry). When we observe a worker earn more than one credential, we map her credentials to the modal (non-missing) industry over all her credentials, and in the case of ties, we keep the industry mapped from the highest credential. When a worker is observed in both the OTC and HEI data, and the industries of her coursework do not match, we use the one in the OTC data (because courses in OTC are more easily matched to an industry as there are fewer courses); if the industries of the credential do not match, we use the one in the HEI data (because institutions in the HEI data tend to confer higher credentials).

Unfortunately, we do not observe enrollment in private institutions in our data. As we noted in Section \ref{sec:Data}, we believe unobserved private enrollment is more likely in the control group and disproportionately at for-profit institutions, which likely will lead us to understate the effects of training, since \citet{CelliniTurner2019_app} find positive (though statistically insignificant) effects of for-profit colleges. 

It is hard to know the share of our matched control group that enrolled in private institutions. To approximate this share, we turn to the October Current Population Survey (CPS), which contains detailed information on schooling. In the October CPS, we restrict to the years 2004-2011, those who are aged 25-39, and those enrolled in a two-year institution or ``taking courses.'' We also restrict to those for whom industry and occupation information is not missing (it is missing only for those who are in the labor force and have never worked, or out of the labor force and have not worked within the past year). Unfortunately, while institution ownership information is available for those attending ``regular'' school, it is missing for a large fraction of the students who are just ``taking courses'': The institution ownership status is missing for none of the 3,550 two-year college enrollees, but for 81 percent of the 3,917 who are ``taking courses'' but not in ``regular'' school. We observe that about 10 percent of enrollees attend private institutions, 48 percent public institutions, and 43 percent institutions of unknown ownership.

If we assume that all of the ``unknown ownership'' group is private, that will mean that roughly a half (53 percent) of enrollees attend private institutions. Since we observe that about 4 percent of UI claimants enroll in public institutions, this will imply that another 4 percent or so enroll in private institutions. It is unclear, though, how many of our matched controls actually enroll in private institutions. At one extreme, if we assume that ALL of the matched controls enroll in private institutions, and that the effect of private enrollment is 21 percent lower than public institutions (from Cellini and Turner, 2018), this will imply that the effect of public enrollment relative to not enrolling at all is $0.06/0.21=29$ percent (using our main estimate of six percent). If we assume instead that the matched controls have the same proportion of private enrollees as the general populations (four percent), the effect of public enrollment relative to not enrolling will be around $0.06/(1-0.04*0.79)=6.2$ percent, which is quite close to our overall estimate. 

Finally, we note that in the CPS, private enrollees differ from public enrollees in terms of past labor market characteristics. As shown in Table \ref{tab:cps_private_enrollment}, we find that private enrollees are less likely to come from the retail industry and more likely to come from finance, insurance, and real estate, as well as professional services (which includes health professionals). These estimates suggest that some of our subgroup estimates (i.e., by past industry) may be subject to more substitution bias from failing to observe private enrollments. Notably, the estimates for those who were previously in healthcare may suffer from more of a downward bias than, say, those who were previously in manufacturing, though, again, it is hard to speak about magnitudes.

\subsection{Workforce Investment Act Data\label{subsec:WIA_data_appendix}}

To analyze the enrollment effects for workers who trained under the Workforce Investment Act (WIA) program, our analysis sample has been merged with the WIA administrative program data (from the WIA Standardized Record Data system). The data contain quarterly snapshots of WIA participants and exiters between 2006Q1 and 2015Q4. The earliest snapshot (2006Q1) contains participants who exited the program starting in 2004Q1. The analysis sample in Section \ref{subsec:WIA} contains only WIA participants who received job training from WIA. We observe the month of WIA registration, the beginning and end months of WIA training, program funding stream (e.g., adult, dislocated worker, or youth), and type of training (e.g., on-the-job, skill upgrading, entrepreneurial skills). We say that an enrollee from our main analysis sample receives WIA training if we observe her starting WIA training within 24 months after the UI claim date. We only observe WIA participation for workers in our main analysis sample of UI claims.

\section{Identification Results: Details, Proofs, and Generalizations\label{sec:identification_appendix}}

In this section, we provide additional details to Sections \ref{subsec:Identification} and \ref{subsec:Relations-to-DTE-lit}. In Section \ref{subsec:Discussion-of-Assumption2}, we connect Assumption \ref{assu:S2_potential_outcome} to models of training decisions by \citet{Heckman1978_app}, \citet{AshenfelterandCard1985_app}, and \citet{HeckmanandRobb1985JOE_app,HeckmanandRobb1985book_app}. We supply proofs and expand on the partial identification results in Section \ref{subsec:Details-on-Bounds}. Section \ref{subsec:S-period-Generalization} generalizes our assumptions and identification results in Section \ref{subsec:Identification} from two periods to an arbitrary number of periods, $S$. In Section \ref{subsec:DTE_lit_details}, we fill in the details of Section \ref{subsec:Relations-to-DTE-lit} on Robins (Section \ref{subsec:Robins_details}), Lechner (Section \ref{subsec:Lechner_IPW}), and the literature modeling unobserved heterogeneity (Section \ref{subsec:unobserved_heterogeneity}).

\subsection{Discussion of Assumption \ref{assu:S2_potential_outcome}\label{subsec:Discussion-of-Assumption2}}

In the two-period case, our Assumption \ref{assu:S2_potential_outcome} states that 
\begin{equation}
E[Y_t(0)|D_1=0,D_2=1,\mathbf{X}^{1}]\leqslant E[Y_t(0)|D_1=0,D_2=0,\mathbf{X}^{1}]\label{eq:AC_2_period}
\end{equation}
for $t \geqslant 1$. That is, among observationally equivalent workers at the beginning of period 1, those who start training in period 2 have weakly lower future potential earnings, $Y_t(0)$, than those who never enroll. This selection condition is the key ingredient behind our lower bound result for period-1 enrollees (and thus the overall TOT) in Proposition \ref{prop:Prop_S2_lb}. 

Given its importance, we discuss Assumption \ref{assu:S2_potential_outcome} further in this section. In particular, we show that, under two mild additional conditions, Assumption \ref{assu:S2_potential_outcome} is implied by dynamic extensions of classic models of training participation by \citet{Heckman1978_app}, \citet{AshenfelterandCard1985_app}, and \citet{HeckmanandRobb1985JOE_app,HeckmanandRobb1985book_app}. These models posit that a worker decides to retrain if and only if she receives a low earnings signal. More specifically, for the static one-shot training decision these papers consider, a worker decides to retrain if and only if her earnings $k$ periods prior ($k \geqslant 0$) plus an idiosyncratic error term falls below a fixed threshold. \citet{AshenfelterandCard1985_app} evaluate training effects by considering different nonnegative values of $k$, while the papers (coauthored) by Heckman focus on the case of $k=0$, deriving the training decision rule from an economic model of a worker maximizing expected earnings.\footnote{Specifically, the Heckman papers show that a risk-neutral worker decides to train if the present value of the earnings gain exceeds the sum of the direct training cost and the opportunity cost of foregone earnings during training. The present value of the earnings gain is captured by the fixed threshold that does not vary across individuals: Workers may not know their potential individual treatment effect \emph{ex ante}, but base their training decision on the population average effect instead (p. 181, \citealp{HeckmanandRobb1985book_app}). Note that this assumption of a constant expected gain is different from and more plausible than assuming a constant treatment effect---see \citet{Heckmanetal1999_HoLEv3_app} and \citet{Smith2022_app} for empirical and methodological reviews of treatment effect heterogeneity.}

We extend the logic to our dynamic setting where training can commence at more than one point in time. Formally, at the beginning of period $s$, the decision rule for a previously unenrolled worker is
\begin{equation}
D_s=1 \Longleftrightarrow Y_{s-k}(0)+\upsilon_s<\bar{y}_s\label{eq:training_decision_rule}
\end{equation}
for some $k\geqslant0$, where $\bar{y}_s$ is a constant and $\upsilon_s$ a random variable.\footnote{When $k=0$, the decision rule involves a potential earnings variable $Y_s(0)$ that is not yet realized. We follow both \citet{AshenfelterandCard1985_app} and the Heckman papers by assuming perfect foresight.}  \citet{AshenfelterandCard1985_app} assume this random variable to be completely idiosyncratic in the sense that it is independent to any component of the earnings process; in our setting, this translates to the shock being independent to the joint distribution of potential outcomes in any period. 

The dynamic setting also requires us to take a stand on the shock distribution over time. For simplicity of exposition, we impose serial independence of the shocks (it can be relaxed as we discuss below). Together, these restrictions form our first additional condition:
\begin{equation}
\{Y_{t}(0)\}_{t\in\mathbb{Z}}\independent \{\upsilon_s\}_{s\in\mathbb{N}}, \text{ and the shocks } \{\upsilon_s\} \text{ are independent across time and individuals.}\label{eq:augmented_independence_upsilon}
\end{equation}

\begin{remark}
\label{rem:training_cost}
\normalfont
\citet{HeckmanandRobb1985JOE_app} (p. 244) interpret the $\upsilon$ random variable as the \emph{negative} of the worker's training subsidy (or equivalently, $\upsilon$ reflects training costs) and allow for nonzero correlation between this random variable and potential earnings through observables. Their correlation is negative, for example, if high ability individuals have higher potential earnings and incur a lower training cost. Training cost could be lower for these workers because training takes less time or because they are better at finding training subsidies. Additionally,  serial independence in Assumption (\ref{eq:augmented_independence_upsilon}) seems at odds with the subsidy/cost interpretation, under which the $\upsilon_s$'s are likely to be positively correlated over time. As we discuss below, inequality (\ref{eq:AC_2_period}) still holds under models that relax (\ref{eq:augmented_independence_upsilon}).
\end{remark}

Before we introduce our second additional condition, we need to be more explicit about what the conditioning set in Assumption \ref{assu:S2_potential_outcome} encapsulates. Without loss of generality, we think of $\mathbf{X}^{s}$ as a vector of past earnings leading up to $D_{s}$, for $s=1,2$. Specifically, the conditioning set $\mathbf{X}^{1}$ in Assumption \ref{assu:S2_potential_outcome} consists of earnings up to $K$ periods before layoff: $\mathbf{X}^{1}=(Y_{-K},Y_{-(K-1)},\ldots,Y_{0})$. While we also include in our matching specification other demographic and labor market indicators, such as gender, race, age, industry, layoff timing, we can think of the inequality in Assumption \ref{assu:S2_potential_outcome} as holding within each cell defined by these variables, so that the inequality also holds when these variables are incorporated into $\mathbf{X}^{1}$.

Our second additional condition is no-anticipation: for $s\geqslant 1$,
\begin{equation}
\mathbf{X}^{s}=\{Y_{-K}(0),Y_{-(K-1)}(0),...,Y_{0}(0),...,Y_{s-1}(0)\};\text{}Y_{1}=Y_{1}(0),....,Y_{s}=Y_{s}(0)\text{ when }D_{1}=...=D_{s}=0.\label{eq:no-ant}    
\end{equation}
Relevant to the two-period case, setting $s=1$ in (\ref{eq:no-ant}) implies that $\mathbf{X}^{1}=\{Y_{-K}(0),Y_{-(K-1)}(0),\ldots,Y_{0}(0)\}$ and $Y_{1}=Y_{1}(0)$ when $D_{1}=0$. We show in Appendix \ref{subsec:Robins_details} below that this is equivalent to the typical no-anticipation assumption from the dynamic treatment effect literature (see \citealp{AbbringandHeckman2007_app} for a nice exposition and discussion on the no-anticipation assumption), which puts restrictions on potential outcomes under different treatment sequences. In both parts of (\ref{eq:no-ant}), the potential outcomes for a not-yet enrolled worker are not affected by whether she would pursue training subsequently. As a result, the first part of (\ref{eq:no-ant}) implies that the baseline covariate vector $\mathbf{X}^{1}=(Y_{-K},Y_{-(K-1)},\ldots,Y_{0})$ coincides with the vector of potential outcomes under no-treatment. And the second part implies that period-1 outcomes among period-1 non-enrollees do not depend on their potential treatment status in period 2. As we argue below, the no-anticipation assumption is implicitly maintained in the classic papers referenced above; we simply make it explicit here. 

Finally, we borrow the specification of the earnings process from \citet{AshenfelterandCard1985_app}: for individual $i$ in time $t$, \citet{AshenfelterandCard1985_app} specify her earnings as the sum of a random effect, a time fixed effect, a treatment effect if she is a trainee, and a transitory error term.\footnote{Note this specification implies no-anticipation: the potential outcome $Y_{it}(0)$ is the same regardless of whether the worker would enroll in period $t+1$, and it is equal to the actual outcome for a worker not yet enrolled as of period $t$.} Under this specification, the potential earnings process is $Y_{it}(0)=\omega_i+\lambda_t+\epsilon_{it}$, where $\omega_i$ is the random effect, $\lambda_t$ the (deterministic economy-wide) time fixed effect, and $\epsilon_{it}$ the transitory error term.\footnote{For clarity, we include the $i$ subscript in this paragraph. We will drop it again for simplicity after the assumption statements (\ref{eq:income_process}) and (\ref{eq:shock_process}), and all expectations are understood to be taken across $i$.} We follow \citet{AshenfelterandCard1985_app} and specify $\epsilon_{it}$ as an AR(1) process with coefficient $\rho\geqslant0$ (\citealp{AshenfelterandCard1985_app} estimate $\rho$ to be between 0.73 and 0.80). Lastly, both \citet{AshenfelterandCard1985_app} and the Heckman papers assume that the conditional expectation of potential earnings is linear in the training selection variable, which would be the case if the $Y_{it}(0)$ process and the $\upsilon$'s are joint normal. We impose this joint normality here. Putting all the pieces together, we make the following functional form assumptions by following \citet{AshenfelterandCard1985_app}: 
(a) For each individual $i$, the income process $\{Y_{it}(0)\}_{t\in\mathbb{Z}}$ is joint normal, and
\begin{equation}
Y_{it}(0)=\omega_{i}+\lambda_{t}+\epsilon_{it} \text{ with } \epsilon_{it}=\rho\epsilon_{i(t-1)}+\eta_{it}\text{ and }\rho\in[0,1),\label{eq:income_process}
\end{equation}
where $\omega_i$ is i.i.d. across $i$ with $var(\omega_{i})\equiv\sigma_{\omega}^{2}$, $\eta_{it}$ is i.i.d. across $i$ and $t$ with $var(\eta_{it})\equiv\sigma_{\eta}^{2}$, and, as a result, $var(\epsilon_{it})\equiv\sigma_{\epsilon}^{2}=\sigma_{\eta}^{2}/(1-\rho^{2})$.\footnote{Heckman's specification of the potential earnings process nests that of \citet{AshenfelterandCard1985_app}. For individual $i$ in time $t$, her potential earnings are $Y_{it}(0)=\mathbf{W}_{it}\mathbf{\varsigma}+U_{it}$, where $\mathbf{W}$ is a vector of observables. In \citet{HeckmanandRobb1985book_app} (p. 181), the error term is further specified as $U_{it}=\omega_{i}+\epsilon_{it}$, where $\epsilon_{it}$ follows an AR(1) process. Thus, the Heckman specification becomes (\ref{eq:income_process}) when $\mathbf{W}_{it}$ consists of the time fixed effect, a case we focus on here.}
(b) The idiosyncratic shock
\begin{equation}
\upsilon_{is} \text{ is normally distributed } \forall i,s.\label{eq:shock_process}
\end{equation}
\begin{remark}
\label{rem:income_positive_dependence}
\normalfont
The specification of the income process is important for our results in this section because it implies that a low income signal in one period would predict low potential incomes in other periods. While our specification gives rise to mean reversion---a critical feature and challenge in the early studies of training evaluation and difference-in-differences methods---it precludes ``regression past the mean.'' That is, it does not allow the possibility that a low income signal today would predict high potential incomes in the future.
\end{remark}
\noindent We now state the main proposition of this section:
\begin{prop}
\label{prop:AC_2_period}
The training decision rule (\ref{eq:training_decision_rule}), Assumptions (\ref{eq:augmented_independence_upsilon}) and (\ref{eq:no-ant}), and specifications of the income and shock processes (\ref{eq:income_process}) and (\ref{eq:shock_process}) imply inequality (\ref{eq:AC_2_period}).
\end{prop}
\noindent To prove Proposition \ref{prop:AC_2_period}, we first establish several results in the form of one lemma and two corollaries.
\begin{comment}
\begin{lem}
\label{lem:independence}
For $t\geqslant1$, the training decision rule (\ref{eq:training_decision_rule}) and Assumptions (\ref{eq:augmented_independence_upsilon}) and (\ref{eq:no-ant}) imply: 
\[
Y_{t}(0)\independent\upsilon_{2}|D_{1}=0,\mathbf{X}^{1}.
\]
\end{lem}

\begin{proof}
Since $D_1$ is a function of $Y_{1-k}(0)$ and $\upsilon_1$ per the decision rule (\ref{eq:training_decision_rule}) and that $\mathbf{X}^{1}$ is a set of potential earnings per Assumption (\ref{eq:no-ant}), $(D_{1},Y_{t}(0),\mathbf{X}^{1})\independent\upsilon_{2}$
by Assumption (\ref{eq:augmented_independence_upsilon}). Our desired conditional independence
follows from a simple textbook result (e.g., Theorem 8.12 of \citealp{Kallenberg2021}).
\end{proof}
\end{comment}
\begin{lem}
\label{lem:MTP2}Given the training decision rule (\ref{eq:training_decision_rule}) with income and shock
processes (\ref{eq:income_process}) and (\ref{eq:shock_process}),
Assumptions (\ref{eq:augmented_independence_upsilon}) and (\ref{eq:no-ant}) imply that

\noindent (a) for any $t\geqslant1$, the ($K+t+1$)-dimensional random vector
$\mathbf{Y}_{t}\equiv(Y_{t}(0),Y_{t-1}(0),...,Y_{1}(0),\text{\ensuremath{\mathbf{X}^{1}}})$
is Multivariate Totally Positive of Order 2 ($\text{MTP}_{\text{2}}$);

\noindent (b) for $k=0,1$, the conditional expectation function
\[
E[Y_{2-k}(0)|Z_{1}\geqslant\bar{y}_1,Z_{2},\mathbf{X}^{1}]
\]
is increasing in $Z_{2}$ where $Z_{1}\equiv Y_{1-k}(0)+\upsilon_{1}$
and $Z_{2}\equiv Y_{2-k}(0)+\upsilon_{2}$.
\end{lem}
\begin{proof}
(a) First, under (\ref{eq:augmented_independence_upsilon}) and (\ref{eq:income_process}),
$\mathbf{Y}_{t}\equiv(Y_{t}(0),Y_{t-1}(0),...,Y_{1}(0),\text{\ensuremath{\mathbf{X}^{1}}})$
is multivariate normal. Per \citet{KarlinandRinott1980a_app}, $\mathbf{Y}_{t}$
is $\text{MTP}_{\text{2}}$ if and only if the inverse of its variance-covariance
matrix has nonpositive off-diagonal entries. By Assumption (\ref{eq:income_process}),
\[
cov(\mathbf{Y}_{t})\equiv\Sigma_{\mathbf{Y}}=\sigma_{\omega}^{2}\mathbf{1}\mathbf{1}^{\prime}+\Sigma_{\boldsymbol{\epsilon}}
\]
where $\mathbf{1}$ is a ($K+t+1$)-dimensional vector of ones
and $\Sigma_{\boldsymbol{\epsilon}}$ is the variance-covariance matrix
of the $\epsilon$'s, of which the $jk$-th entry is $(\Sigma_{\boldsymbol{\epsilon}})_{jk}=\sigma_{\eta}^{2}/(1-\rho^{2})\rho^{|j-k|}$.
By the Sherman-Morrison formula (\citealp{ShermanandMorrison1950_app};
\citealp{Bartlett1951_app}),
\[
\Sigma_{\mathbf{Y}}^{-1}=\Sigma_{\boldsymbol{\epsilon}}^{-1}-\frac{\sigma_{\omega}^2\Sigma_{\boldsymbol{\epsilon}}^{-1}\mathbf{1}\mathbf{1}^{\prime}\Sigma_{\boldsymbol{\epsilon}}^{-1}}{1+\sigma_{\omega}^2\mathbf{1}^{\prime}\Sigma_{\boldsymbol{\epsilon}}^{-1}\mathbf{1}}.
\]
It is a textbook result (e.g., p. 217 of \citealp{Lindsey2004_app}) that
$\Sigma_{\boldsymbol{\epsilon}}^{-1}$ is band-diagonal
\[
\Sigma_{\boldsymbol{\epsilon}}^{-1}=\frac{1}{\sigma_{\eta}^{2}}\left(\begin{array}{ccccc}
1 & -\rho & \ldots & 0 & 0\\
-\rho & 1+\rho^{2} & -\rho & 0 & 0\\
\vdots & \vdots & \ddots & \vdots & \vdots\\
0 & 0 & \ldots & 1+\rho^{2} & -\rho\\
0 & 0 & \ldots & -\rho & 1
\end{array}\right)
\]
where the first minor diagonal entries are $-\rho$ and all other
off-diagonal entries are 0. Notice that $\Sigma_{\boldsymbol{\epsilon}}^{-1}\mathbf{1}$
is a vector of row sums of $\Sigma_{\boldsymbol{\epsilon}}^{-1}$ which
are all positive. Therefore, all entries in the numerator $\sigma_{\omega}^2\Sigma_{\boldsymbol{\epsilon}}^{-1}\mathbf{1}\mathbf{1}^{\prime}\Sigma_{\boldsymbol{\epsilon}}^{-1}$
are positive, as is the scalar $1+\sigma_{\omega}^2\mathbf{1}^{\prime}\Sigma_{\boldsymbol{\epsilon}}^{-1}\mathbf{1}$
in the denominator. Since all off-diagonal entries of $\Sigma_{\boldsymbol{\epsilon}}^{-1}$
are nonpositive, all off-diagonal entries of $\Sigma_{\mathbf{Y}}^{-1}$
are nonpositive. Hence, $\mathbf{Y}_{t}$ is $\text{MTP}_{\text{2}}$.

(b) We focus on the case of $k=1$, as the proof for $k=0$ is analogous. First, we establish that the random vector $(Y_{1}(0),Z_{1},Z_{2},\mathbf{X}^{1})$
is $\text{MTP}_{\text{2}}$. By Assumptions (\ref{eq:augmented_independence_upsilon}) and (\ref{eq:income_process}),
the joint density of $(Y_{1}(0),Z_{1},Z_{2},\mathbf{X}^{1})$ is 
\[
f_{(Y_{1}(0),Z_{1},Z_{2},\mathbf{X}^{1})}(y_{1},z_{1},z_{2},y_{0},y_{-1},...,y_{-K})=f_{(Y_{1}(0),\mathbf{X}^{1})}(y_{1},y_{0},y_{-1},...,y_{-K})\cdot f_{\upsilon_{1}}(z_{1}-y_{0})\cdot f_{\upsilon_{2}}(z_{2}-y_{1}).
\]
Property (1.16) of \citet{KarlinandRinott1980a_app} together with part
(a) of this lemma implies that $f_{(Y_{1}(0),\mathbf{X}^{1})}$ is
$\text{MTP}_{\text{2}}$. By (\ref{eq:shock_process}), $\upsilon_1$ and $\upsilon_2$ are normally distributed. Since the normal density function is log concave, analogous to the proof of Proposition 3.7
of \citet{KarlinandRinott1980a_app}, $f_{\upsilon_{1}}(z_{1}-y_{0})$ and $f_{\upsilon_{2}}(z_{2}-y_{1})$
are $\text{MTP}_{\text{2}}$.\footnote{\citet{KarlinandRinott1980a_app} use the term Polya frequency
of function of order 2, which is equivalent
to log concavity.} By Proposition 3.3 of \citet{KarlinandRinott1980a_app},
$(Y_{1}(0),Z_{1},Z_{2},\mathbf{X}^{1})$ is $\text{MTP}_{\text{2}}$.
By Proposition 2.5(a) of \citet{RinottandScarsini2006_app}, the conditional
distribution of $(Y_{1}(0),Z_{1},Z_{2},\mathbf{X}^{1})|Z_{1}\geqslant\bar{y}_1$
is also $\text{MTP}_{\text{2}}$ since $Z_{1}\geqslant\bar{y}_1$ defines
a rectangular set that satisfies the condition of the proposition.
By property (1.16) of \citet{KarlinandRinott1980a_app}, $(Y_{1}(0),Z_{2},\mathbf{X}^{1})|Z_{1}\geqslant\bar{y}_1$
is $\text{MTP}_{\text{2}}$, and our desired result follows from Theorem
4.1 of \citet{KarlinandRinott1980a_app}.
\end{proof}
\begin{cor}
\label{cor:subvec_MTP2}Under Assumptions (\ref{eq:augmented_independence_upsilon}) and (\ref{eq:no-ant}) and given the income and shock processes (\ref{eq:income_process})
and (\ref{eq:shock_process}), the random vector $(Y_{t}(0),Y_{1}(0),\text{\ensuremath{\mathbf{X}^{1}}})$
is $\text{MTP}_{\text{2}}$ for $t\geqslant2$, and $(Y_{t}(0),Y_{2}(0),Y_{1}(0),\text{\ensuremath{\mathbf{X}^{1}}})$ is 
$\text{MTP}_{\text{2}}$ for $t\geqslant3$.
\end{cor}
\begin{proof}
The corollary follows from Lemma \ref{lem:MTP2} and Property (1.16)
of \citet{KarlinandRinott1980a_app}, which states that any subvector
of an $\text{MTP}_{\text{2}}$ random vector is also $\text{MTP}_{\text{2}}$.
\end{proof}
\begin{cor}
\label{cor:AC_2_period}Under Assumptions (\ref{eq:augmented_independence_upsilon}) and (\ref{eq:no-ant}) and given the income and shock processes (\ref{eq:income_process})
and (\ref{eq:shock_process}),
\[
E[Y_{1}(0)|Y_{0}(0)+\upsilon_{1}\geqslant\bar{y}_1,Y_{1}(0)+\upsilon_{2}<\bar{y}_2,\mathbf{X}^{1}]\leqslant E[Y_{1}(0)|Y_{0}(0)+\upsilon_{1}\geqslant\bar{y}_1,Y_{1}(0)+\upsilon_{2}\geqslant\bar{y}_2,\mathbf{X}^{1}].
\]
\end{cor}
\begin{proof}
Corollary \ref{cor:AC_2_period} is a simple consequence of Lemma
\ref{lem:MTP2}(b).
\end{proof}
\noindent Now we go on to prove Proposition \ref{prop:AC_2_period}:
\begin{proof}
We consider two separate cases, depending on whether the income variables that help decide training participation are part of the conditioning vector $\mathbf{X}^{1}$. First case: $k\in[2,K+1]$, which implies the income variables $Y_{1-k}$ and $Y_{2-k}$ are elements of $\mathbf{X}^{1}$.
Using the $Z$ notation from Lemma \ref{lem:MTP2}: For $t\geqslant 1$,
\begin{align*}
 & E[Y_{t}(0)|D_{1}=0,D_{2}=1,\mathbf{X}^{1}]\\
= & E[Y_{t}(0)|Z_1\geqslant \bar{y}_1,Z_2<\bar{y}_2,\mathbf{X}^{1}]\\
= & E[E[Y_{t}(0)|\upsilon_1,\upsilon_2,\mathbf{X}^1]|Z_1\geqslant \bar{y}_1,Z_2<\bar{y}_2,\mathbf{X}^{1}]\\
= & \mathbf{X}^{1} \boldsymbol{\zeta}_t\\
= & E[E[Y_{t}(0)|\upsilon_1,\upsilon_2,\mathbf{X}^1]|Z_1\geqslant \bar{y}_1,Z_2\geqslant \bar{y}_2,\mathbf{X}^{1}]\\
= & E[Y_{t}(0)|D_{1}=0,D_{2}=0,\mathbf{X}^{1}]
\end{align*}
where the first equality is given by the training decision rule (\ref{eq:training_decision_rule}) and no anticipation (\ref{eq:no-ant}), the second equality by the law of iterated expectations, the third equality by Assumption (\ref{eq:augmented_independence_upsilon}) and the Gaussian income process (\ref{eq:income_process}), and the last two equalities by the reverse logic of the first three equalities.

Second case: $k\notin[2,K+1]$. We focus on the case of $k=1$, and the same argument applies in the cases where $k=0$ or where $k>K+1$.\footnote{We include three years of pre-layoff earnings in our propensity score estimation. It is more likely that the most recent (and post-layoff) earnings determine the training decision than earnings from more than three years ago, as reflected in the discussion of ``temporal alignment'' by \citet{CalonicoSmith2017_app}.} Under our assumptions, for $t\geqslant 1$,
\begin{align*}
 & E[Y_{t}(0)|D_{1}=0,D_{2}=1,\mathbf{X}^{1}]\\
\overset{\text{(i)}}{=} & E[Y_{t}(0)|Y_{0}(0)+\upsilon_{1}\geqslant\bar{y}_1,Y_{1}(0)+\upsilon_{2}<\bar{y}_2,\mathbf{X}^{1}]\\
\overset{\text{(ii)}}{=} & E[E[Y_{t}(0)|\upsilon_{1},\upsilon_{2},Y_{1}(0),\mathbf{X}^{1}]|Y_{0}(0)+\upsilon_{1}\geqslant\bar{y}_1,Y_{1}(0)+\upsilon_{2}<\bar{y}_2,\mathbf{X}^{1}]\\
\overset{\text{(iii)}}{=} & E[E[Y_{t}(0)|Y_{1}(0),\mathbf{X}^{1}]|Y_{0}(0)+\upsilon_{1}\geqslant\bar{y}_1,Y_{1}(0)+\upsilon_{2}<\bar{y}_2,\mathbf{X}^{1}]\\
\overset{\text{(iv)}}{=} & E[\alpha_{t}+\beta_{t}Y_{1}(0)+\mathbf{X}^{1}\boldsymbol{\gamma}_{t}|Y_{0}(0)+\upsilon_{1}\geqslant\bar{y}_1,Y_{1}(0)+\upsilon_{2}<\bar{y}_2,\mathbf{X}^{1}]\\
\overset{\text{(v)}}{=} & \alpha_{t}+\beta_{t}E[Y_{1}(0)|Y_{0}(0)+\upsilon_{1}\geqslant\bar{y}_1,Y_{1}(0)+\upsilon_{2}<\bar{y}_2,\mathbf{X}^{1}]+\mathbf{X}^{1}\boldsymbol{\gamma}_{t}\\
\overset{\text{(vi)}}{\leqslant} & \alpha_{t}+\beta_{t}E[Y_{1}(0)|Y_{0}(0)+\upsilon_{1}\geqslant\bar{y}_1,Y_{1}(0)+\upsilon_{2}\geqslant\bar{y}_2,\mathbf{X}^{1}]+\mathbf{X}^{1}\boldsymbol{\gamma}_{t}\\
\overset{\text{(vii)}}{=} & E[Y_{t}(0)|D_{1}=0,D_{2}=0,\mathbf{X}^{1}],
\end{align*}
where (i) follows from the training decision rule (\ref{eq:training_decision_rule}), (ii) from the law of iterated expectations, (iii) from Assumption (\ref{eq:augmented_independence_upsilon}), (iv) from the income process specification (\ref{eq:income_process}), (v) from the linearity of the expectation operator, (vi) from Corollary \ref{cor:subvec_MTP2} (which together with Theorem 4.1 of \citealp{KarlinandRinott1980a_app} implies $\beta_{t}\geqslant0$ for $t\geqslant2$; note: $\beta_{t}=1$ when $t=1$) and Corollary \ref{cor:AC_2_period}, and (vii) from the reverse process of (i)-(iv).
\end{proof}
\begin{remark}
\label{rem:corr_shock_spec}
\normalfont
As mentioned in Remark \ref{rem:training_cost}, if we interpret $\upsilon$ as reflecting the cost of training per \citet{HeckmanandRobb1985JOE_app}, it may be positively serially correlated and negatively correlated with potential earnings. We can relax Assumption (\ref{eq:augmented_independence_upsilon}) to allow these correlations. A simple way to do so is to set $\upsilon_{is}=r \omega_{i}+u_{is}$ where $r \in [-1,0]$ and where the process $\{u_{is}\}$ is serially independent and is also independent to potential earnings. We can show that, for this more general case, Proposition \ref{prop:AC_2_period} still holds. Because the proof is similar in spirit to what we have above but notationally more convoluted  as it involves a long vector of primitive error terms of $\omega_i$, $\{u_{is}\}_{s=1,2}$, $\{\eta_{is}\}_{s=-K+1,...,t}$, and $\epsilon_{i(-K)}$, we omit it here.
\end{remark}

\begin{remark}
\label{rem:2period_testability}
\normalfont
Under the framework above, setting $t=1$ leads to
\begin{equation}
E[Y_{1}|D_{1}=0,D_{2}=1,\mathbf{X}^{1}]\leqslant E[Y_{1}|D_{1}=0,D_{2}=0,\mathbf{X}^{1}].\label{eq:test_assumption_2}
\end{equation}
Inequality (\ref{eq:test_assumption_2}) amounts to an empirical test of Assumption \ref{assu:S2_potential_outcome}. It states that among period-1 non-enrollees with the same $\mathbf{X}^{1}$, those who enroll in period 2 have lower average earnings in period 1. This test is similar in spirit to the pre-program test by \citet{HeckmanHotz1989_app} in that it uses pre-treatment data of the to-be treated group (and the control group) to assess the validity of identifying assumptions.
\end{remark}

While we focus on the two-period case in this section for ease of exposition, we generalize our assumptions and results---including the test in Remark \ref{rem:2period_testability}---to an arbitrary number of periods in Appendix \ref{subsec:S-period-Generalization}.

\subsection{Details on Partial Identification Results\label{subsec:Details-on-Bounds}}

\subsubsection*{Proof of Proposition \ref{prop:Prop_S2_lb}}

We first state and prove two Lemmas. Lemma \ref{lem:lem_X_lb} has a similar form to part (a) of Proposition \ref{prop:Prop_S2_lb}, except the estimand conditions on the covariates directly, as opposed to the propensity scores:
\begin{lem}
\label{lem:lem_X_lb}Under Assumptions \ref{assu:S2_seq_unconfound} and \ref{assu:S2_potential_outcome} and provided that $p_{1}(\mathbf{X}^{1}),p_{2}(\mathbf{X}^{2})<1$,
\begin{align}
E[Y_t|D_{1} & =1]-E\left[E[Y_t|D=0,\mathbf{X}^{1}]|D_{1}=1\right]\leqslant E[Y_t(1)-Y_t(0)|D_{1}=1]\label{eq:lem_X_lb_eq1}\\
E[Y_t|D_{2} & =1]-E\left[E[Y_t|D=0,\mathbf{X}^{2}]|D_{2}=1\right]=E[Y_t(1)-Y_t(0)|D_{2}=1].\label{eq:lem_X_lb_eq2}
\end{align}
\end{lem}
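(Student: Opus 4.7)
My plan is to handle the two statements separately. Both rely on rewriting the inner conditional expectations using the observation rule and the dynamic CIA, and then, for (\ref{eq:lem_X_lb_eq1}), invoking Assumption \ref{assu:S2_potential_outcome} to pass from an equality to the desired inequality. Throughout I will use the observation rule: for a period-$s$ enrollee, $Y_t = Y_t(1)$ for $t \geqslant s$, and for a never-enrollee, $Y_t = Y_t(0)$ for all $t$.

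For (\ref{eq:lem_X_lb_eq2}), the easier equality case, I would first note that $E[Y_t \mid D_2 = 1] = E[Y_t(1) \mid D_2 = 1]$ for $t \geqslant 2$ by the observation rule. Then I would show that the second term identifies $E[Y_t(0) \mid D_2 = 1]$. Since $\{D=0\} = \{D_1=0, D_2=0\}$, I can write $E[Y_t \mid D=0, \mathbf{X}^2] = E[Y_t(0) \mid D_1=0, D_2=0, \mathbf{X}^2]$, and then use Assumption \ref{assu:S2_seq_unconfound}(b) to conclude that this equals $E[Y_t(0) \mid D_1=0, \mathbf{X}^2]$. Since $D_2=1$ implies $D_1=0$, applying Assumption \ref{assu:S2_seq_unconfound}(b) once more gives $E[Y_t(0) \mid D_1=0, \mathbf{X}^2] = E[Y_t(0) \mid D_1=0, D_2=1, \mathbf{X}^2] = E[Y_t(0) \mid D_2=1, \mathbf{X}^2]$. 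Taking the outer expectation over $\mathbf{X}^2 \mid D_2=1$ and using the law of iterated expectations yields the equality.

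For (\ref{eq:lem_X_lb_eq1}), the harder inequality case, again $E[Y_t \mid D_1=1] = E[Y_t(1) \mid D_1=1]$. The key step is to show $E\left[E[Y_t \mid D=0, \mathbf{X}^1] \mid D_1=1\right] \geqslant E[Y_t(0) \mid D_1=1]$. I would first use Assumption \ref{assu:S2_seq_unconfound}(a) to write $E[Y_t(0) \mid D_1=1] = E[E[Y_t(0) \mid \mathbf{X}^1] \mid D_1=1] = E[E[Y_t(0) \mid D_1=0, \mathbf{X}^1] \mid D_1=1]$. Then I would decompose the inner expectation by conditioning on $D_2$:
\begin{align*}
E[Y_t(0) \mid D_1=0, \mathbf{X}^1] &= E[Y_t(0) \mid D_1=0, D_2=0, \mathbf{X}^1] \cdot \Pr(D_2=0 \mid D_1=0, \mathbf{X}^1) \\
&\quad + E[Y_t(0) \mid D_1=0, D_2=1, \mathbf{X}^1] \cdot \Pr(D_2=1 \mid D_1=0, \mathbf{X}^1).
\end{align*}
Assumption \ref{assu:S2_potential_outcome} lets me replace the later-enrollee term on the right with the never-enrollee term, yielding $E[Y_t(0) \mid D_1=0, \mathbf{X}^1] \leqslant E[Y_t(0) \mid D_1=0, D_2=0, \mathbf{X}^1] = E[Y_t \mid D=0, \mathbf{X}^1]$. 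Taking the outer expectation over $\mathbf{X}^1 \mid D_1=1$ and subtracting from $E[Y_t(1) \mid D_1=1]$ produces the claimed lower bound.

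The main subtlety, and what I expect to be the one place a careful reader would want to pause, is the two uses of Assumption \ref{assu:S2_seq_unconfound}(b) needed in part (\ref{eq:lem_X_lb_eq2}) to swap between conditioning on $\{D_1=0, D_2=0\}$, on $\{D_1=0\}$, and on $\{D_2=1\}$: one has to verify that each invocation is within the ``at-risk'' set $\{D_1=0\}$ so the CIA applies. Aside from that, the argument is essentially a routine iterated-expectations calculation, with Assumption \ref{assu:S2_potential_outcome} doing all the heavy lifting to turn the $D_1=1$ case into an inequality rather than an equality.
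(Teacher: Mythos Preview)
Your proposal is correct and follows essentially the same route as the paper: for (\ref{eq:lem_X_lb_eq1}) the paper also starts from $E[Y_t(0)\mid D_1=1,\mathbf{X}^1]=E[Y_t(0)\mid D_1=0,\mathbf{X}^1]$ via Assumption \ref{assu:S2_seq_unconfound}(a), decomposes over $D_2$, applies Assumption \ref{assu:S2_potential_outcome} to bound by $E[Y_t\mid D=0,\mathbf{X}^1]$, and then integrates over $\mathbf{X}^1\mid D_1=1$; for (\ref{eq:lem_X_lb_eq2}) the paper simply notes it is the standard static argument, which is exactly the two applications of Assumption \ref{assu:S2_seq_unconfound}(b) you spell out.
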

\begin{proof}
Equation (\ref{eq:lem_X_lb_eq2}) directly follows from Assumption \ref{assu:S2_seq_unconfound} and the overlap condition $p_{2}(\mathbf{X}^{2})<1$, and the proof is similar to that of the static identification result (\ref{eq:static_identification}).

To prove equation (\ref{eq:lem_X_lb_eq1}), first notice that
\begin{align}
 & E[Y_t(0)|D_{1}=1,\mathbf{X}^{1}]\nonumber \\
\overset{\text{(i)}}{=} & E[Y_t(0)|D_{1}=0,\mathbf{X}^{1}]\nonumber\\
= & \sum_{d_{2}=0,1}E[Y_{t}(0)|D_{1}=0,D_{2}=d_{2},\mathbf{X}^{1}]\Pr(D_{2}=d_{2}|D_{1}=0,\mathbf{X}^{1})\nonumber \\
\leqslant & E[Y_t(0)|D_{1}=0,D_{2}=0,\mathbf{X}^{1}]\nonumber \\
= & E[Y_t(0)|D=0,\mathbf{X}^{1}] = E[Y_t|D=0,\mathbf{X}^{1}],\label{eq:lem_X_lb_proof}
\end{align}
where equality (i) holds because of Assumption \ref{assu:S2_seq_unconfound}. Our desired result follows:
\begin{align*}
 & E[Y_t|D_{1}=1]-E[E[Y_t|D=0,\mathbf{X}^{1}]|D_{1}=1]\\
\leqslant & E[Y_t(1)|D_{1}=1]-E\left[E[Y_t(0)|D_{1}=1,\mathbf{X}^{1}]|D_{1}=1\right]\\
= & E[Y_t(1)-Y_t(0)|D_{1}=1].
\end{align*}

\end{proof}
Lemma \ref{lem:ps_property} is a variant of a general property of the propensity score---conditional on the propensity score, any function of covariate inputs of the propensity score is balanced across treatment and control.
\begin{lem}
\label{lem:ps_property}Let $g(\mathbf{X}^{1})$ be a function of $\mathbf{X}^{1}$,
\[
g(\mathbf{X}^{1})\independent D_{1}|D_{2}=0,p_{1}(\mathbf{X}^{1}).
\]
\end{lem}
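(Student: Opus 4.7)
\textbf{Proof plan for Lemma \ref{lem:ps_property}.} The statement is a slight variant of the classical Rosenbaum--Rubin balancing property of the propensity score, adapted to the conditional propensity score $p_1(\mathbf{X}^1) = \Pr(D_1 = 1 \mid D_2 = 0, \mathbf{X}^1)$. My plan is to verify conditional independence in the standard way: show that the conditional distribution of $D_1$ given $(D_2=0,\ p_1(\mathbf{X}^1))$ is unchanged by additionally conditioning on the function $g(\mathbf{X}^1)$. Since $D_1$ is binary, this reduces to proving
\[
\Pr\!\left(D_1 = 1 \,\big|\, D_2 = 0,\ g(\mathbf{X}^1),\ p_1(\mathbf{X}^1)\right) \;=\; \Pr\!\left(D_1 = 1 \,\big|\, D_2 = 0,\ p_1(\mathbf{X}^1)\right).
\]

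The first step is to evaluate both sides by the law of iterated expectations, averaging $\Pr(D_1=1 \mid D_2=0, \mathbf{X}^1)$ over the appropriate conditioning $\sigma$-algebra. Because $g(\mathbf{X}^1)$ and $p_1(\mathbf{X}^1)$ are both measurable with respect to $\mathbf{X}^1$, conditioning on them in addition to $\mathbf{X}^1$ is redundant, so by the definition of $p_1$,
\[
\Pr(D_1 = 1 \mid D_2 = 0, \mathbf{X}^1, g(\mathbf{X}^1), p_1(\mathbf{X}^1)) \;=\; \Pr(D_1 = 1 \mid D_2 = 0, \mathbf{X}^1) \;=\; p_1(\mathbf{X}^1).
\]
The second step is to take iterated expectations: averaging $p_1(\mathbf{X}^1)$ with respect to $\mathbf{X}^1$ given $(D_2=0,\ g(\mathbf{X}^1),\ p_1(\mathbf{X}^1))$ returns $p_1(\mathbf{X}^1)$ itself, since $p_1(\mathbf{X}^1)$ is already in the conditioning set. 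The identical argument, with $g(\mathbf{X}^1)$ dropped, shows the right-hand side also equals $p_1(\mathbf{X}^1)$. Equating the two proves the desired conditional independence.

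I do not anticipate a substantive obstacle; the only point that requires care is to maintain the conditioning event $D_2=0$ at every step, since the relevant propensity score is itself defined conditional on $D_2=0$ rather than marginally. This means the Rosenbaum--Rubin style argument is applied within the subpopulation $\{D_2=0\}$, and the measurability of $g(\mathbf{X}^1)$ and $p_1(\mathbf{X}^1)$ with respect to $\mathbf{X}^1$ does the rest of the work.
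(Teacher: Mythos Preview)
Your proposal is correct and follows essentially the same route as the paper: both show via iterated expectations that $\Pr(D_1=1\mid D_2=0,\,g(\mathbf{X}^1),\,p_1(\mathbf{X}^1))$ and $\Pr(D_1=1\mid D_2=0,\,p_1(\mathbf{X}^1))$ each equal $p_1(\mathbf{X}^1)$, using that $g(\mathbf{X}^1)$ and $p_1(\mathbf{X}^1)$ are $\mathbf{X}^1$-measurable and that $p_1(\mathbf{X}^1)$ is already in the conditioning set. Your explicit note about keeping the conditioning on $D_2=0$ throughout is exactly the right point of care.
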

\begin{proof}
Note that
\begin{align*}
 & \Pr(D_{1}=1|g(\mathbf{X}^{1}),D_{2}=0,p_{1}(\mathbf{X}^{1}))\\
= & E[D_{1}|g(\mathbf{X}^{1}),D_{2}=0,p_{1}(\mathbf{X}^{1})]\\
= & E\left[E[D_{1}|\mathbf{X}^{1},D_{2}=0]|g(\mathbf{X}^{1}),D_{2}=0,p_{1}(\mathbf{X}^{1})\right]\\
= & E[p^{1}(\mathbf{X}^{1})|g(\mathbf{X}^{1}),D_{2}=0,p_{1}(\mathbf{X}^{1})]\\
= & p_{1}(\mathbf{X}^{1}),
\end{align*}
and by similar reasoning 
\[
\Pr(D_{1}=1|D_{2}=0,p_{1}(\mathbf{X}^{1}))=p_{1}(\mathbf{X}^{1}).
\]
The independence result follows.
\end{proof}
Now we prove Proposition \ref{prop:Prop_S2_lb}.
\begin{proof}
The proof of equation (\ref{eq:ID_ps_T2_lb}) is analogous to that of the static propensity score matching result per \citet{RosenbaumRubin1983_app}, so we omit it here.

To prove inequality (\ref{eq:ID_ps_T1_lb}), we start with the propensity score analog of equation (\ref{eq:lem_X_lb_proof}) in the proof of Lemma \ref{lem:lem_X_lb} 
\begin{align*}
 & E[Y_t(0)|D_{1}=1,p_{1}(\mathbf{X}^{1})]\\
\overset{\text{(i)}}{=} & E\left[E[Y_t(0)|D_{1}=1,\mathbf{X}^{1}]|D_{1}=1,p_{1}(\mathbf{X}^{1})\right]\\
\overset{\text{(ii)}}{\leqslant} & E\left[E[Y_t(0)|D=0,\mathbf{X}^{1}]|D_{1}=1,p_{1}(\mathbf{X}^{1})\right]\\
\overset{\text{(iii)}}{=} & E\left[E[Y_t(0)|D=0,\mathbf{X}^{1}]|D_{1}=1,D_{2}=0,p_{1}(\mathbf{X}^{1})\right]\\
\overset{\text{(iv)}}{=} & E\left[E[Y_t(0)|D=0,\mathbf{X}^{1}]|D_{1}=0,D_{2}=0,p_{1}(\mathbf{X}^{1})\right]\\
\overset{\text{(v)}}{=} & E[Y_t|D=0,p_{1}(\mathbf{X}^{1})],
\end{align*}
where equalities (i) and (v) follow the law of iterated expectations, inequality (ii) follows (\ref{eq:lem_X_lb_proof}), equality (iii) follows the fact that $D_{1}=1$ implies $D_{2}=0$, and equality (iv) follows Lemma \ref{lem:ps_property} and the observation that the inner conditional expectation is just a function of $\mathbf{X}^{1}$. The lower bound result in (\ref{eq:ID_ps_T1_lb}) easily follows: 
\begin{align*}
 & E[Y_t|D_{1}=1]-E\left[E[Y_t|D=0,p_{1}(\mathbf{X}^{1})]|D_{1}=1\right]\\
\leqslant & E[Y_t(1)|D_{1}=1]-E\left[E[Y_t(0)|D_{1}=1,p_{1}(\mathbf{X}^{1})]|D_{1}=1\right]\\
= & E[Y_t(1)-Y_t(0)|D_{1}=1].
\end{align*}

The lower bound result for the overall TOT in Part (b) of Proposition \ref{prop:Prop_S2_lb} is a simple consequence of part (a). Aggregating (\ref{eq:ID_ps_T1_lb}) and (\ref{eq:ID_ps_T2_lb}) with weights equal to the respective share of workers trained in period 1 and period 2 delivers the desired result.
\end{proof}

\subsubsection*{Construction of Upper Bounds via Propensity Score Matching}

Proposition \ref{prop:Prop_S2_ub} below constructs the upper bounds for the TOT parameters via propensity score matching. Proposition \ref{prop:Prop_S2_ub}(a) bounds $\delta_{1t}$ from above. Proposition \ref{prop:Prop_S2_ub}(b) aggregates the upper bound on $\delta_{1t}$ from part (a) and the $\delta_{2t}$ from equation (\ref{eq:ID_ps_T2_lb}) to arrive at an upper bound on the overall TOT parameter $\delta_t$. 

\begin{prop}
\label{prop:Prop_S2_ub}Under Assumption \ref{assu:S2_seq_unconfound}
and provided that $p_{1}(\mathbf{X}^{1}),p_{2}(\mathbf{X}^{2})<1$
and $Y_t(0)\geqslant0$ for $t\geqslant1$,

\noindent (a): For $t\geqslant1$, 
\begin{align}
& E\left[Y_t|D_{1}=1\right]-E\left[E[Y_t|D=0,p_{1}(\mathbf{X}^{1})]\Pr(D_{2}=0|D_{1}=0,p_{1}(\mathbf{X}^{1}))|D_{1}=1\right]\nonumber \\
\geqslant & E[Y_t(1)-Y_t(0)|D_{1}=1];\label{eq:ID_ps_T1_ub}
\end{align}
(b): For $t\geqslant2$, 
\begin{align*}
& \sum_{s=1}^{2}\left\{ E\left[Y_t|D_{s}=1\right]-E\left[E[Y_t|D=0,p_{s}(\mathbf{X}^{s})]\Pr(D_{2}=0|D_{s}=0,p_{s}(\mathbf{X}^{s}))|D_{s}=1\right]\right\} \Pr(D^{s}=1|D=1)\\
\geqslant & E[Y_t(1)-Y_t(0)|D=1].
\end{align*}
\end{prop}

\begin{proof}
For part (a), notice that under Assumption \ref{assu:S2_seq_unconfound} and provided $Y_t(0)\geqslant0$,
\begin{align*}
 & E[Y_t(0)|D_{1}=1,p_{1}(\mathbf{X}^{1})]\\
= & E[Y_t(0)|D_{1}=0,p_{1}(\mathbf{X}^{1})]\\
= & \sum_{d_{2}=0,1}E[Y_{t}(0)|D_{1}=0,D_{2}=d_{2},p_{1}(\mathbf{X}^{1})]\Pr(D_{2}=d_{2}|D_{1}=0,p_{1}(\mathbf{X}^{1}))\\
\geqslant & E[Y_t(0)|D=0,p_{1}(\mathbf{X}^{1})]\Pr(D_{2}=0|D_{1}=0,p_{1}(\mathbf{X}^{1})).
\end{align*}
Our desired result follows:
\begin{align*}
 & E[Y_t(1)-Y_t(0)|D_{1}=1]\\
\leqslant & E[Y_t(1)|D_{1}=1]-E\left[E[Y_t(0)|D=0,p_{1}(\mathbf{X}^{1})]\Pr(D_{2}=0|D_{1}=0,p_{1}(\mathbf{X}^{1}))|D_{1}=1\right]\\
= & E[Y_t|D_{1}=1]-E\left[E[Y_t|D=0,p_{1}(\mathbf{X}^{1})]\Pr(D_{2}=0|D_{1}=0,p_{1}(\mathbf{X}^{1}))|D_{1}=1\right].
\end{align*}

\noindent For part (b), noting that $\Pr(D_{2}=0|D_{2}=0,P_{2}(\mathbf{X}^{2}))=1$, the period 2 quantity in the summand is simply the left hand side of equation (\ref{eq:ID_ps_T2_lb}). The result follows from aggregating inequality (\ref{eq:ID_ps_T1_ub}) and equation (\ref{eq:ID_ps_T2_lb}) using weights equal to the respective share of workers trained in the two periods.
\end{proof}

\subsubsection*{Estimating the Bounds}

As mentioned in Section \ref{subsec:Identification}, the identification results of Proposition \ref{prop:Prop_S2_lb}(a) lead to standard propensity score matching estimators of the lower bound for $\delta_{1t}$ and the value of $\delta_{2t}$. We denote these estimators by $\hat{\delta}_{1t}^{\text{lb}}$ and $\hat{\delta}_{2t}$, respectively, whose asymptotic variances are $v_{1t}^{\text{lb}}$ and $v_{2t}$, respectively. Following \citet{Abadie_Imbens2016}, the distributions of $\hat{\delta}_{1t}^{\text{lb}}$ and $\hat{\delta}_{2t}$ are asymptotically normal, $v_{1t}^{\text{lb}}$ and $v_{2t}$ can be consistently estimated, and inference can be conducted accordingly. Based on Proposition \ref{prop:Prop_S2_lb}(b), the natural lower bound estimator, $\hat{\delta}_{t}^{\text{lb}}$, for the overall TOT parameter is simply $\hat{\delta}_{t}^{\text{lb}}=\pi_{1}\hat{\delta}_{1t}^{\text{lb}}+\pi_{2}\hat{\delta}_{2t}$, where $\pi_{1}$ and $\pi_{2}$ are the respective shares of workers beginning enrollment in period 1 and 2. For inference, we take the recommendation from \citet{Imbensandrubin2015} (p. 441) and treat $\pi_{1}$ and $\pi_{2}$ as non-random, in which case the variance of the asymptotically normally distributed $\hat{\delta}_{1t}^{\text{lb}}$ is $\pi_{1}^{2}v_{1t}^{\text{lb}}+\pi_{2}^{2}v_{2t}$ in large samples.

Based on Proposition \ref{prop:Prop_S2_ub}, we can also estimate the upper bound for $\delta_{1t}$, denoted by $\delta_{1t}^{\text{ub}}$, via propensity score matching. With nearest neighbor matching, for example, the dependent variable for each matched observation in the $D=0$ population is the product of $Y_t$ and an estimate of $\Pr(D_{2}=0|D_{1}=0,p_{1}(\mathbf{X}^{1}))$. We can follow \citet{Heckman_etal1997_app,Heckmanetal1998_app,Heckmanetal1998_ECMA_app,HeckmanSmith1999_app} and estimate this conditional probability function at different values of the propensity score using a local linear regression. Under standard regularity conditions for nonparametric regressions and the correct parametric specification of the propensity score model---an assumption maintained by \citet{Abadie_Imbens2016_app}---the nonparametric estimate is consistent, and the sampling error can be ignored asymptotically when conducting inference using the \citet{Abadie_Imbens2016_app} procedure. Finally, we can estimate the upper bound for $\delta_t$ with $\hat{\delta}_{t}^{\text{ub}}=\pi_{1}\hat{\delta}_{1t}^{\text{ub}}+\pi_{2}\hat{\delta}_{2t}$ and conduct inference similarly to that for $\hat{\delta}_{t}^{\text{lb}}$.

\subsection{\texorpdfstring{$S$}{S}-period Generalization\label{subsec:S-period-Generalization}}

Now we generalize Assumptions \ref{assu:S2_seq_unconfound} and \ref{assu:S2_potential_outcome} and the partial identification results from two periods to $S$ periods. First, analogous to Section \ref{subsec:Identification}, we use $D_{s}$ to denote whether a worker first enrolls in period $s$ for $s=1,\dots,S$. As such, at most one $D_{s}$ can take on the value of $1$, and when it does, the rest of the period-specific enrollment variables equal zero, i.e., $D_{s^\prime}=0$ for $s^\prime \neq s$. Our main treatment variable of interest $D$ is defined as $D\equiv\sum_{s=1}^{S}D_{s}$; it follows that $D=1$ if a worker enrolls sometime during the $S$ periods and $D=0$ otherwise.

Second, we define the period-$s$ propensity score as 
\[
p_{s}(\mathbf{X}^{s})\equiv\Pr(D_{s}=1|D_{s'}=0\text{ for all }s'\neq s,\mathbf{X}^{s})
\]
for $s=1,\dots,S$. Just like our definition of $p_{1}(\mathbf{X}^1)\equiv\Pr(D_{1}=1|D_{2}=0,\mathbf{X}^1)$ in the two-period case, the general propensity score $p_{s}(\mathbf{X}^{s})$ conditions on future treatment status (unless $s=S$). This reflects the fact that our comparison group consists only of workers who never enroll during the $S$ periods, as opposed to those who have not enrolled before period $s$ but may enroll in subsequent periods. The propensity score in the latter scenario would only condition on $D_{s^\prime}=0 \text{ for all } s^\prime < s$, which is the one used by the treatment-now-versus-later literature.

With the generalized definition of the treatment variables and the propensity score, we can state the assumptions that generalize Assumptions \ref{assu:S2_seq_unconfound} and \ref{assu:S2_potential_outcome}, respectively: 
\begin{assumption}
\label{assu:Sgeneral_seq_unconfound}$Y_t(0)\independent D_{1}|\mathbf{X}^{1}$ for $t\geqslant1$ and $Y_t(0)\independent D_{s}|D_{s-1}=\dots=D_{1}=0,\mathbf{X}^{s}$ for $s=2,\dots,S$ and $t\geqslant s$.
\end{assumption}
\begin{assumption}
\label{assu:Sgeneral_potential_outcome}$E[Y_t(0)|D_{s+l}=1,\mathbf{X}^{s}]\leqslant E[Y_t(0)|D=0,\mathbf{X}^{s}]$  for $1\leqslant s\leqslant S$, $1\leqslant l\leqslant S-s$, and $t\geqslant s$.
\end{assumption}

We now state the generalized results for an arbitrary $S$. We begin with Proposition \ref{prop:AC_S-period}, which generalizes Proposition \ref{prop:AC_2_period} and establishes that Assumption \ref{assu:Sgeneral_potential_outcome} is implied by classic models of training participation. Remark \ref{rem:Speriod_testability} generalizes Remark \ref{rem:2period_testability} and states the testable implications of Assumption \ref{assu:Sgeneral_potential_outcome}. Finally, Propositions \ref{prop:Sgeneral_ps_lb} and \ref{prop:Sgeneral_ps_ub} construct lower and upper bounds of the TOT effects, respectively. The proofs of the propositions of this section are analogous to their two-period counterparts, but involve much more complex notations, and are therefore omitted. 
\begin{prop}
\label{prop:AC_S-period}
The training decision rule (\ref{eq:training_decision_rule}), Assumptions (\ref{eq:augmented_independence_upsilon}) and (\ref{eq:no-ant}), and specifications of the income and shock processes (\ref{eq:income_process}) and (\ref{eq:shock_process}) imply Assumption \ref{assu:Sgeneral_potential_outcome}.
\end{prop}
\begin{remark}
\label{rem:Speriod_testability}
\normalfont
We can generalize the testable implication of Assumption \ref{assu:S2_potential_outcome} in Remark \ref{rem:2period_testability} to those of Assumption \ref{assu:Sgeneral_potential_outcome}. Specifically, comparing workers who have similar $\mathbf{X}^s$ ($1<s\leqslant S$), those who enroll $l$ periods later ($1\leqslant l\leqslant S-s$) have lower average earnings than their never-enrolled counterparts in all $l$ interim periods. We conduct these tests in Section \ref{subsec:Overall-Returns} and show evidence consistent with Assumption \ref{assu:Sgeneral_potential_outcome} and hence models of training participation by \citet{Heckman1978_app}, \cite{AshenfelterandCard1985_app}, and \citet{HeckmanandRobb1985JOE_app,HeckmanandRobb1985book_app}.
\end{remark}

\begin{prop}
\label{prop:Sgeneral_ps_lb}Under Assumptions \ref{assu:Sgeneral_seq_unconfound} and \ref{assu:Sgeneral_potential_outcome} and provided that $p_{s}(\mathbf{X}^{s})<1$ for all $s$,

\noindent (a): for $s=1,\dots,S-1$ and $t \geqslant s$,
\[
E\left[Y_t|D_{s}=1\right]-E\left[E[Y_t|D=0,p_{s}(\mathbf{X}^{s})]|D_{s}=1\right]\leqslant E[Y_t(1)-Y_t(0)|D_{s}=1],
\]
and for $t \geqslant S$
\[
E\left[Y_t|D_{S}=1\right]-E\left[E[Y_t|D=0,p_{S}(\mathbf{X}^{S})]|D_{S}=1\right]=E[Y_t(1)-Y_t(0)|D_{S}=1];
\]
(b): for $t\geqslant S$,
\[
\sum_{s=1}^{S}\left\{ E\left[Y_t|D_{s}=1\right]-E\left[E[Y_t|D=0,p_{s}(\mathbf{X}^{s})]|D_{s}=1\right]\right\} \Pr(D_{s}=1|D=1)\leqslant E[Y_t(1)-Y_t(0)|D=1].
\]
\end{prop}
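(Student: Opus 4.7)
\textbf{Proof proposal for Proposition \ref{prop:Sgeneral_ps_lb}.} My plan is to mimic the structure of the proof of Proposition \ref{prop:Prop_S2_lb} while handling the richer dynamic bookkeeping of the $S$-period case. I will first prove a covariate-level (non-propensity-score) analog of part (a) by combining the dynamic CIA, a decomposition of the ``at-risk'' comparison group over later enrollment timings, and Assumption \ref{assu:Sgeneral_potential_outcome}. Then I will upgrade the covariate conditioning to propensity-score conditioning through a generalization of Lemma \ref{lem:ps_property}. Part (b) will follow by averaging.

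\textbf{Step 1 (covariate-level inequalities).} Fix $s\in\{1,\dots,S-1\}$ and $t\geqslant s$. Since $D_{s}=1$ entails $D_{s'}=0$ for all $s'\neq s$, I would write
\[
E[Y_t(0)|D_s=1,\mathbf{X}^s]=E[Y_t(0)|D_s=1,D_{s'}=0\text{ for }s'<s,\mathbf{X}^s].
\]
Assumption \ref{assu:Sgeneral_seq_unconfound} then swaps $D_s=1$ for $D_s=0$ in the conditioning. Decomposing the remaining at-risk set according to the next enrollment period gives
\[
E[Y_t(0)|D_s=0,D_{s'}=0\text{ for }s'<s,\mathbf{X}^s]=\sum_{r=s+1}^{S}E[Y_t(0)|D_r=1,\mathbf{X}^s]\,\pi_{r}(\mathbf{X}^{s})+E[Y_t(0)|D=0,\mathbf{X}^s]\,\pi_{0}(\mathbf{X}^{s}),
\]
where $\pi_{r}(\mathbf{X}^{s})$ and $\pi_{0}(\mathbf{X}^{s})$ denote the conditional probabilities, which sum to one. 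Assumption \ref{assu:Sgeneral_potential_outcome} applied term-by-term bounds each $E[Y_t(0)|D_r=1,\mathbf{X}^s]$ for $r>s$ above by $E[Y_t(0)|D=0,\mathbf{X}^s]$, so the whole convex combination is bounded above by $E[Y_t(0)|D=0,\mathbf{X}^s]=E[Y_t|D=0,\mathbf{X}^s]$. For $s=S$, there are no later enrollees and the decomposition collapses to a strict equality, delivering point identification.

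\textbf{Step 2 (propensity-score balancing).} To move from $\mathbf{X}^{s}$ to $p_{s}(\mathbf{X}^{s})$, I would generalize Lemma \ref{lem:ps_property} to show that, conditional on the event $\{D_{s'}=0\text{ for }s'\neq s\}$ (i.e., the ``$D_s=1$ or $D=0$'' subpopulation), any function $g(\mathbf{X}^{s})$ is mean-independent of $D_{s}$ given $p_{s}(\mathbf{X}^{s})$. The proof runs exactly as in Lemma \ref{lem:ps_property}: iterating expectations yields $\Pr(D_s=1|g(\mathbf{X}^s),D_{s'}=0\text{ for }s'\neq s,p_s(\mathbf{X}^s))=p_s(\mathbf{X}^s)$, which equals $\Pr(D_s=1|D_{s'}=0\text{ for }s'\neq s,p_s(\mathbf{X}^s))$. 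Applying this to $g(\mathbf{X}^{s})=E[Y_t|D=0,\mathbf{X}^{s}]$ and chaining with the covariate-level inequality from Step 1 gives
\[
E[Y_t(0)|D_s=1,p_s(\mathbf{X}^s)]\leqslant E[Y_t|D=0,p_s(\mathbf{X}^s)],
\]
with equality when $s=S$. Integrating over the conditional distribution of $p_s(\mathbf{X}^s)$ given $D_s=1$ and subtracting both sides from $E[Y_t|D_s=1]=E[Y_t(1)|D_s=1]$ delivers part (a).

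\textbf{Step 3 (aggregation for part (b)).} For $t\geqslant S$, the overall TOT can be written as $\delta_t=\sum_{s=1}^{S}E[Y_t(1)-Y_t(0)|D_s=1]\Pr(D_s=1|D=1)$ by the law of iterated expectations, since $\{D_s=1\}_{s=1}^{S}$ partitions $\{D=1\}$. Summing the period-specific inequalities from part (a) with these weights yields the aggregated lower bound.

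\textbf{Main obstacle.} The step I expect to require the most care is the inequality-preserving propensity score balancing (Step 2), since the conditioning events differ across enrollment periods and one must verify that $p_s(\mathbf{X}^s)$ truly balances functions of $\mathbf{X}^s$ between $D_s=1$ and $D=0$ rather than between $D_s=1$ and the broader at-risk set $\{D_{s'}=0\text{ for }s'<s\}$. Getting this right is what makes the replacement of $\mathbf{X}^s$ by $p_s(\mathbf{X}^s)$ legitimate while keeping the $D=0$ comparison group in the identifying estimand.
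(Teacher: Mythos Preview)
Your proposal is correct and follows precisely the approach the paper intends: the paper omits the proof of Proposition \ref{prop:Sgeneral_ps_lb}, stating that it is ``analogous to those of Propositions \ref{prop:Prop_S2_lb} and \ref{prop:Prop_S2_ub} but the notations are much more complex,'' and your three steps---a covariate-level inequality generalizing Lemma \ref{lem:lem_X_lb}, a propensity-score balancing step generalizing Lemma \ref{lem:ps_property} to the event $\{D_{s'}=0\text{ for all }s'\neq s\}$, and aggregation over enrollment timings---mirror the two-period proof exactly. Your identification of the ``main obstacle'' is also on point: the key subtlety is that $p_s(\mathbf{X}^s)$ is defined conditional on $\{D_{s'}=0\text{ for }s'\neq s\}$ (the union of $\{D_s=1\}$ and $\{D=0\}$), which is precisely what allows the balancing argument to transfer the inner expectation from $D_s=1$ to $D=0$ rather than to the broader at-risk set.
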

\bigskip
\begin{prop}
\label{prop:Sgeneral_ps_ub}Under Assumption \ref{assu:Sgeneral_seq_unconfound} and provided that $p_{s}(\mathbf{X}^{s})<1$ for all $s$ and $Y_t(0)\geqslant0$ for all $t$,

\noindent (a): for $s=1,\dots,S-1$ and $t \geqslant s$,
\[
E\left[Y_t|D_{s}=1\right]-E\left[E[Y_t|D=0,p_{s}(\mathbf{X}^{s})]\Pr(D=0|D_{1}=\dots=D_{s}=0,p_{s}(\mathbf{X}^{s}))|D_{s}=1\right]\geqslant E[Y_t(1)-Y_t(0)|D_{s}=1],
\]
(b): for $t \geqslant S$,
\begin{align*}
 & \sum_{s=1}^{S}\left\{ E\left[Y_t|D_{s}=1\right]-E\left[E[Y_t|D=0,p_{s}(\mathbf{X}^{s})]\Pr(D=0|D_{1}=\dots=D_{s}=0,p_{s}(\mathbf{X}^{s}))|D_{s}=1\right]\right\} \Pr(D_{s}=1|D=1)\\
\geqslant & E[Y_t(1)-Y_t(0)|D=1].
\end{align*}
\end{prop}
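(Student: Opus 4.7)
The plan is to mimic the two-period proof of Proposition \ref{prop:Prop_S2_ub} with the appropriate many-period decomposition. The core task is to bound $E[Y_t(0)\mid D_s=1,p_s(\mathbf{X}^s)]$ from below by an observable quantity, separately for each enrollment cohort $D_s=1$, and then to aggregate these cohort-level upper bounds into the overall TOT upper bound in part (b).

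First I would show that, under Assumption \ref{assu:Sgeneral_seq_unconfound} and the $S$-period analog of Lemma \ref{lem:ps_property}, $E[Y_t(0)\mid D_s=1,p_s(\mathbf{X}^s)]=E[Y_t(0)\mid D_1=\dots=D_s=0,p_s(\mathbf{X}^s)]$. This step generalizes the opening equality in the two-period proof: since $D_s=1$ mechanically implies $D_1=\dots=D_{s-1}=0$, sequential conditional independence with respect to the covariate vector $\mathbf{X}^s$ gives an equality in $\mathbf{X}^s$, and the usual propensity-score balancing argument (functions of $\mathbf{X}^s$ are mean-independent of $D_s$ within the subpopulation $D_{s'}=0$ for $s'\neq s$, conditional on $p_s(\mathbf{X}^s)$) transfers it from $\mathbf{X}^s$ to $p_s(\mathbf{X}^s)$.

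Next I would decompose the right-hand side by the future enrollment realization. Among workers with $D_1=\dots=D_s=0$, exactly one of the mutually exclusive events $\{D=0\}$ and $\{D_r=1\}$ for $r=s+1,\dots,S$ must occur, so by the law of total expectation,
\begin{align*}
&E[Y_t(0)\mid D_1=\dots=D_s=0,p_s(\mathbf{X}^s)]\\
&\qquad=\sum_{r=s+1}^{S}E[Y_t(0)\mid D_r=1,p_s(\mathbf{X}^s)]\Pr(D_r=1\mid D_1=\dots=D_s=0,p_s(\mathbf{X}^s))\\
&\qquad\quad+E[Y_t(0)\mid D=0,p_s(\mathbf{X}^s)]\Pr(D=0\mid D_1=\dots=D_s=0,p_s(\mathbf{X}^s)).
\end{align*}
Invoking $Y_t(0)\geqslant 0$ to drop all later-enrollee terms and then replacing $E[Y_t(0)\mid D=0,\cdot]$ with $E[Y_t\mid D=0,\cdot]$ yields the lower bound on $E[Y_t(0)\mid D_s=1,p_s(\mathbf{X}^s)]$ that, after integrating out $p_s(\mathbf{X}^s)$ against the distribution of $D_s=1$ and subtracting from $E[Y_t\mid D_s=1]=E[Y_t(1)\mid D_s=1]$, is exactly the inequality in part (a).

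For part (b) I would weight the cohort-specific inequalities from part (a) by $\Pr(D_s=1\mid D=1)$ and sum over $s=1,\dots,S$. The $s=S$ term plays a special role: there are no later enrollment opportunities, so $\Pr(D=0\mid D_1=\dots=D_S=0,p_S(\mathbf{X}^S))=1$ and the $s=S$ summand collapses to the point-identification result for $\delta_{St}$ already supplied by Proposition \ref{prop:Sgeneral_ps_lb}(a). Adding these cohort pieces reconstructs $E[Y_t(1)-Y_t(0)\mid D=1]$ on the right and the claimed estimand on the left. I expect the main obstacle to be the first step, namely verifying carefully that the propensity-score balancing argument (Lemma \ref{lem:ps_property}) extends cleanly to the $S$-period definition $p_s(\mathbf{X}^s)=\Pr(D_s=1\mid D_{s'}=0\;\forall s'\neq s,\mathbf{X}^s)$; the rest is bookkeeping with the law of total expectation and non-negativity of $Y_t(0)$.
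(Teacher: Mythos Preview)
Your proposal is correct and follows essentially the same approach as the paper. The paper explicitly omits the $S$-period proof, stating only that it is ``analogous to those of Propositions \ref{prop:Prop_S2_lb} and \ref{prop:Prop_S2_ub} but the notations are much more complex''; your sketch reproduces that two-period structure exactly---establish the key equality via CIA and propensity-score balancing, decompose over future enrollment, drop the later-enrollee terms by non-negativity of $Y_t(0)$, and then aggregate with weights $\Pr(D_s=1\mid D=1)$---and you correctly flag the balancing-score extension (the $S$-period analog of Lemma \ref{lem:ps_property}) as the only place requiring care.
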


\subsection{Relation to the Dynamic Treatment Effect Literature: Details\label{subsec:DTE_lit_details}}

\subsubsection{Identification in the Robins Framework\label{subsec:Robins_details}}

In this section, we state the Robins identification results by adapting the excellent summary of \citet{AbbringandHeckman2007_app}. For ease of exposition, we focus on the two-period setting, but the results easily generalize to more periods. With two periods, there are four possible treatment sequences per Robins: trained in both periods, trained only in period 1, trained only in period 2, and did not receive training in either period. If we denote the training decision in period $s$ by $\tilde{D}_{s}$, then a treatment sequence $g$ is the concatenation of $\tilde{D}_{1}$ and $\tilde{D}_{2}$ and takes on the value of 11, 10, 01, or 00. We further denote the corresponding potential outcome for treatment sequence $g$ at time $t\geqslant1$ by $Y_{t}^{g}$, and the observation rule (or the consistency condition) is that $Y_t=Y^{g}_{t}$ when the actual treatment sequence is $g$.

There are two key assumptions in the Robins framework. The first is sequential randomization: for each treatment sequence $g=11,10,01,00$,
\begin{equation}
\tilde{D}_{1}\independent Y_{t}^{g}|\mathbf{X}^{1}\text{ for }t\geqslant1\text{; and }\tilde{D}_{2}\independent Y_{t}^{g}|(\mathbf{X}^{1},Y_{1}),\tilde{D}_{1}\text{ for }t\geqslant2.\label{eq:Robins-SR}
\end{equation}
It is easy to see that the assumption in (\ref{eq:Robins-SR}) is enveloped by our Assumption \ref{assu:S2_seq_unconfound} (CIA): the two assumptions are equivalent if $\mathbf{X}^{2}$, the period-2 conditioning set in the latter, takes the specific form of $(\mathbf{X}^{1},Y_{1})$. The second of Robins's assumption is no anticipation:
\begin{equation}
Y_{1}^{01}=Y_{1}^{00}\text{ and }Y_{1}^{11}=Y_{1}^{10}.\label{eq:Robins-NA}
\end{equation}
It states that potential outcomes in period 1 do not depend on the future treatment decision in period 2. As a side note, (\ref{eq:Robins-NA}) is equivalent to the two-period case of our no-anticipation assumption (\ref{eq:no-ant}) in Appendix \ref{subsec:Discussion-of-Assumption2}: $Y_1=Y_1(0)$ when $D_1=0$. To see this, first note that $Y_1=Y_1(0)$ for $D=0$ (or $D_1=D_2=0$) following our observation rule, so that the substantive part of our no-anticipation assumption is $Y_1=Y_1(0)$ when $D_1=0$ but $D_2=1$. In the Robins framework, the $D_2=1$ population is subject to the $g=01$ treatment sequence, and for them, $Y_1=Y^{01}_1=Y^{00}_1$ where the second equality holds under (\ref{eq:Robins-NA}). Since our $D=0$ population is subject to the $g=00$ treatment sequence, our $Y_1(0)$ is the same as Robins's $Y^{00}_1$, and we have the desired result. Finally, we can apply this reasoning to show that the first part of (\ref{eq:no-ant}) is consistent with Robins by extending the definition of treatment sequences to cover times before the first period.

Under the assumptions in (\ref{eq:Robins-SR}) and (\ref{eq:Robins-NA}), the joint distributions of potential outcomes across time are identified for each counterfactual treatment sequence. For example, the potential outcome distributions under $g=00$ are identified as: for $t\geqslant2$,
\begin{equation}
\Pr(Y_{t}^{00}=y_{t},Y_{1}^{00}=y_{1}|\mathbf{X}^{1})=\Pr(Y_{1}=y_{1}|\tilde{D}_{1}=0,\mathbf{X}^{1})\Pr(Y_{t}=y_{t}|\tilde{D}_{1}=0,\tilde{D}_{2}=0,Y_{1}=y_{1},\mathbf{X}^{1}),\label{eq:Robins_ID}
\end{equation}
and identifications for other $g$ are similar. The proof of (\ref{eq:Robins_ID}) follows that in Section 3.2 of \citet{AbbringandHeckman2007_app} and is omitted here. The estimand in (\ref{eq:Robins_ID}) is the sequential product of conditional outcome distributions using observations along the path of $g$. 

Nonparametrically estimating these distributions can suffer from the curse of dimensionality, particularly when the conditioning set contains multiple continuous variables (\citealp{Robins1998_app,Robins2000_app}). To circumvent the challenge, Robins and various coauthors impose parametric restrictions on the relationship between potential outcomes, treatment, and confounders. For the structural nested model estimator (e.g., \citealp{Robins1994_app}), the restriction applies to the relationship between differences in potential outcomes (i.e., causal effects), treatment, and time-varying confounders in each period. For the marginal structural model estimator (e.g., \citealp{Robins1998_app}), the restriction applies to the relationship between levels of potential outcomes, summary measures of the treatment sequence (e.g., length of treatment exposure), and baseline confounders. Interested readers should consult \citet{Robins2000_app} and references therein for a summary and comparison of his various estimators.

\subsubsection{Practical Challenges in Adapting the \texorpdfstring{\citet{Lechner2009JBES_app}}{Lechner (2009)} Inverse Propensity Weighting Estimator\label{subsec:Lechner_IPW}}

Focusing on the two-period case, \citet{Lechner2009JBES_app} proposes a sequential inverse propensity weighting (IPW) estimator for the counterfactual outcome of no training for workers enrolled in the first period. In his empirical analysis, \citet{Lechner2009JBES_app} reports standard errors obtained from five different methods and notes that they reassuringly lead to the same rejection decisions in most cases. However, \citet{Lechner2009JBES_app} does not choose a preferred standard error estimator and states that ``it is beyond the scope of {[}his{]} article to investigate the issue of precise variance estimation of the IPW estimator in depth.'' 

It turns out that the IPW estimator takes a more complex form  in the general $S$-period case. Formally, define the propensity scores $\tilde{p}_{1}\equiv\Pr(D_{1}=1|\mathbf{X}^{1})$ and $\tilde{p}_{s}(\mathbf{X}^{s})\equiv\Pr(D_{s}=1|\mathbf{X}^{s},D_{1}=D_{2}=\dots=D_{s-1}=0)$ for $s\geqslant2$, which reflect the probability of treatment in period $s$ among workers not yet treated conditional on observables up to $s$ (note that $\tilde{p}_{S}(\mathbf{X}^{S})$ coincide with $p_{S}(\mathbf{X}^{S})$ defined in Section \ref{subsec:S-period-Generalization}). The IPW estimand is given by the following proposition.
\begin{prop}
\label{prop:Lechner_IPW}Under Assumption \ref{assu:S2_seq_unconfound} and provided that $\tilde{p}_{s}(\mathbf{X}^{s})<1$ for all $s$,
\begin{equation}
E[Y_t(0)|D_{1}=1]=\frac{1}{\Pr(D_{1}=1)}E\left[\frac{\tilde{p}_{1}(\mathbf{X}^{1})\cdot Y_t \cdot 1_{[D=0]}} {\prod_{s=1}^{S}\left(1-\tilde{p}_{s}(\mathbf{X}^{s})\right)}\right]\text{ for \ensuremath{t \geqslant 1}}.\label{eq:Lechner-IPW-estimand}
\end{equation}
\end{prop}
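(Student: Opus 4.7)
The plan is to prove (\ref{eq:Lechner-IPW-estimand}) by backward induction on the period index $s$, recursively cancelling each weight $1/(1-\tilde{p}_s(\mathbf{X}^s))$ via the dynamic CIA at stage $s$, and then converting the resulting $\mathbf{X}^1$-conditional mean of $Y_t(0)$ into the desired expectation over $\{D_1=1\}$ using the stage-$1$ CIA. Since the estimand involves all $S$ periods, I take the hypothesis ``Assumption \ref{assu:S2_seq_unconfound}'' to mean its $S$-period generalization, Assumption \ref{assu:Sgeneral_seq_unconfound}. The first step is to factor the indicator as $1_{[D=0]}=\prod_{s=1}^{S}1_{[D_s=0]}$ and, for $s=1,\dots,S$, to define
$$W_s \equiv \frac{Y_t\,\prod_{r=s}^{S}1_{[D_r=0]}}{\prod_{r=s}^{S}\bigl(1-\tilde{p}_r(\mathbf{X}^r)\bigr)},$$
so that the integrand on the right-hand side of (\ref{eq:Lechner-IPW-estimand}) equals $\tilde{p}_1(\mathbf{X}^1)\,W_1$ and $W_s=\frac{1_{[D_s=0]}}{1-\tilde{p}_s(\mathbf{X}^s)}\,W_{s+1}$ for $s<S$.

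Next I would prove by backward induction on $s$ the claim
$$E\bigl[W_s\bigm|\mathbf{X}^s,\,D_1=\cdots=D_{s-1}=0\bigr]=E\bigl[Y_t(0)\bigm|\mathbf{X}^s,\,D_1=\cdots=D_{s-1}=0\bigr].$$
For the base case $s=S$, the observation rule gives $Y_t=Y_t(0)$ on $\{D=0\}$, and the stage-$S$ CIA yields $E[Y_t(0)\,1_{[D_S=0]}\mid\mathbf{X}^S,D_1=\cdots=D_{S-1}=0]=E[Y_t(0)\mid\mathbf{X}^S,D_1=\cdots=D_{S-1}=0]\cdot(1-\tilde{p}_S(\mathbf{X}^S))$, which cancels the denominator. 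For the inductive step, I would condition first on $(\mathbf{X}^{s+1},D_s)$, apply the inductive hypothesis on the event $\{D_s=0\}$ to replace the inner expectation by $E[Y_t(0)\mid\mathbf{X}^{s+1},D_1=\cdots=D_s=0]$, collapse the outer tower over $\mathbf{X}^{s+1}$ (using that the variables in $\mathbf{X}^s$ are contained in $\mathbf{X}^{s+1}$), and finally apply the stage-$s$ CIA to strip ``$D_s=0$'' from the conditioning. The remaining factor $\Pr(D_s=0\mid\mathbf{X}^s,D_1=\cdots=D_{s-1}=0)=1-\tilde{p}_s(\mathbf{X}^s)$ then cancels the newly introduced denominator, completing the induction.

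The case $s=1$ gives $E[W_1\mid\mathbf{X}^1]=E[Y_t(0)\mid\mathbf{X}^1]$. Multiplying by $\tilde{p}_1(\mathbf{X}^1)$, taking an unconditional expectation, and using $\tilde{p}_1(\mathbf{X}^1)=E[D_1\mid\mathbf{X}^1]$ together with stage-$1$ CIA to write $\tilde{p}_1(\mathbf{X}^1)\,E[Y_t(0)\mid\mathbf{X}^1]=E[D_1\,Y_t(0)\mid\mathbf{X}^1]$, I obtain $E[\tilde{p}_1(\mathbf{X}^1)\,W_1]=E[D_1\,Y_t(0)]=\Pr(D_1=1)\,E[Y_t(0)\mid D_1=1]$, and dividing by $\Pr(D_1=1)$ yields (\ref{eq:Lechner-IPW-estimand}). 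The main bookkeeping obstacle is respecting the filtration: because $\mathbf{X}^{s+1}$ is realized after $D_s$ (per the timeline stated just after Assumption \ref{assu:S2_seq_unconfound}), the inductive step must first condition on $(\mathbf{X}^s,D_1=\cdots=D_{s-1}=0)$, then enrich the conditioning with $(\mathbf{X}^{s+1},D_s)$ to deploy the hypothesis, and invoke CIA only at the end; the overlap condition $\tilde{p}_s(\mathbf{X}^s)<1$ for all $s$ is needed throughout so that every conditional probability and conditional mean used in the recursion is well defined.
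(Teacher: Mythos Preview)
Your proposal is correct and follows essentially the same sequential inverse-probability-weighting argument the paper has in mind: the paper omits the proof and simply notes that it ``extends that in Appendix B.1 of \citet{Lechner2009JBES},'' and your backward-induction recursion---cancelling $1/(1-\tilde p_s(\mathbf X^s))$ at each stage via Assumption~\ref{assu:Sgeneral_seq_unconfound} and then reweighting by $\tilde p_1(\mathbf X^1)$ to land on the $D_1=1$ population---is precisely that extension to $S$ periods. Your treatment of the filtration (enriching to $(\mathbf X^{s+1},D_s)$ before applying the inductive hypothesis, then towering down and invoking CIA at stage $s$) is the right bookkeeping, and your explicit note that the argument requires the $S$-period CIA rather than the two-period Assumption~\ref{assu:S2_seq_unconfound} is a useful clarification the paper glosses over.
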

\noindent We omit the proof of Proposition \ref{prop:Lechner_IPW}, as it simply extends that in Appendix B.1 of \citet{Lechner2009JBES_app}.

According to Proposition \ref{prop:Lechner_IPW}, the denominator of the IPW estimator for the counterfactual $Y_t$ mean among period-1 enrollees is the product of eight estimated propensity scores in our eight-period setting. In comparison, the denominator in Lechner's two-period setting is the product of just two propensity scores. Because of this added complexity, we will need to reassess the sensitivity of different inferential methods. Furthermore, with eight periods, the subset of these methods based on generalized method of moments will involve many more parameters and moments, and is more likely to encounter numerical issues than in \citet{Lechner2009JBES_app}. Given these challenges, we choose not to adapt \citet{Lechner2009JBES_app} to our analysis.

\subsubsection{Studies that Model Unobserved Heterogeneity\label{subsec:unobserved_heterogeneity}}

An alternative strand of the dynamic treatment effect literature explicitly models the influence of unobserved heterogeneities. This strand typically models unobserved heterogeneities as ``random effects'', which are assumed to be independent of observed covariates. In this section, we provide a brief overview of key studies in this strand.

Under no-anticipation and a conditional independence assumption (treatment is independent to outcome conditioning on both observed covariates and random effects), \citet{AbbringandvandenBerg2003_app,AbbringandvandenBerg2004_app} prove identification of dynamic treatment effects for duration outcomes in mixed proportional hazard models for both treatment and outcome timings. The word ``mixed'' refers to the model specification that a random effect enters into a proportional hazard model multiplicatively, which is crucial for identification. 

\citet{HeckmanandNavarro2007_app} unite the literature on dynamic treatment and on discrete choice---they consider semiparametric identification in a statistical framework where treatment in each period is determined by an index-crossing model. They allow anticipatory effects, and their identification relies on a large support condition that ensures sufficient variation of the index and helps to break treatment's dependence on unobserved heterogeneity. \citet{HeckmanandNavarro2007} also propose an economically interpretable structural framework, and interested readers should consult \citet{Cunhaetal2007} and \citet{Heckmanetal2016JOE} for extensions. 

\citet{Baetal2017} study labor market transitions among female adults recommended for classroom training in the NJS. They restrict to participants who enter the program during a nonemployment spell (following \citealp{EberweinHamLaLonde1997_app}), and they specify the joint likelihood of this initial nonemployment spell, timing of training participation, and subsequent employment and nonemployment spells. The likelihood is integrated over distributions of the random effects, which are assumed to be discrete with finite supports as per \citet{HeckmanandSinger1984_app} and \citet{McCall1996_app}. \citet{Baetal2017_app} estimate their complex model via simulated annealing.

\citet{Han2021_app} specifies a nonparametric model for dynamic treatments and outcomes. Assuming exclusion restrictions (i.e., the existence of suitable instruments) and a sequential rank similarity condition on the unobserved heterogeneity variables, the paper proves identification of dynamic average treatment effects.

Most recently, \citet{Fitzenbergeretal2023_app} estimate training effects using a flexible model of employment and training transitions using German administrative data. They assume sequential randomization but add an individual specific effect into the conditioning set, which in essence extends Robins's approach by incorporating unobserved heterogeneity. They estimate their model using Bayesian Markov Chain Monte Carlo, which allows them to directly get at the posterior distribution of the unobserved effect for each individual. Doing so accounts for potential dynamic selections on unobservables, which represents an important distinction from \citet{Baetal2017_app}.

\section{Alternative Identification Strategies\label{sec:Alternative-Identification-Strat}}

\subsection{\texorpdfstring{\citet{Jacobson_etal2005_JE_app}}{Jacobson, Lalonde and Sullivan (2005a)} Fixed Effects Specifications}

In this section, we consider models similar to those estimated in \citet{Jacobson_etal2005_JE_app} (hereafter, JLS), who use longitudinal earnings data to estimate the effects of attending community colleges for a population of UI claimants. To account for unobserved individual characteristics (that are either constant or evolving linearly), JLS estimate models of the form 
\begin{equation}
Y_{it}=\beta E_{it}+\alpha_{i}+\omega_{i}t+\gamma_{t}+\delta_{it}(s_{i},z_{i})+\varepsilon_{it}\label{eq:jls-1}
\end{equation}
where $Y_{it}$ is the earnings of individual $i$ at time $t$, $E_{it}$ is an indicator for whether the individual has started school as of time $t$, $\alpha_{i}$ and $\text{\ensuremath{\omega_{i}t}}$ are individual fixed effects and linear time trends, $\gamma_{t}$ denote time fixed effects, and $\delta_{it}(s_{i},z_{i})$ are layoff effects that depend on layoff date $s_{i}$ and fixed individual characteristics $z_{i}$.\footnote{This is a simplification of \citet{Jacobson_etal2005_JE_app}'s model in that enrollment here is a binary state and we are estimating the average earnings effects after first enrolling. JLS also consider the incremental earnings effects of credits earned as well as earnings dynamics during and after school by replacing $\beta E_{it}$ with $\phi_{it}(c_{i,}f_{i},l_{i},z_{i})$ where $c_{i}$ is credits earned, and $f_{i}$ and $l_{i}$ denote the enrollment entry and exit periods.} In the most basic specifications, $\delta_{it}(s_{i},z_{i})=\sum_{k=-12}^{24}D_{it}^{k}\delta_{k}$, where $D_{it}^{k}$ are a full set of dummy variables for quarter relative to layoff: $D_{it}^{k}=1$ if individual $i$ had been laid off in quarter $t-k$. We also follow JLS and include a set of heterogeneous layoff effects that allow the earnings patterns relative to layoff to parametrically depend on demographic variables $z_{i}$, which include gender, race, whether a claimant lives in one of the ten largest counties, tenure, and ten-year age bins---see \citet{Jacobson_etal2005_JE_app} for the exact form of these heterogeneous layoff patterns.

To check for pre-trends in earnings before enrollment, we estimate models of the form: 
\[
Y_{it}=\sum_{k=-20}^{23}R_{it}^{k}\beta_{k}+\alpha_{i}+\omega_{i}t+\gamma_{t}+\delta_{it}(s_{i},z_{i})+\varepsilon_{it}
\]
where $Y_{it}$ is earnings for person $i$ in quarter $t$, $R_{it}^{k}$ is an indicator for beginning enrollment in quarter $t-k$ (and is equal to 0 for all non-enrollee observations). The regressions are estimated on a 5 percent random sample to speed up computation. We plot the estimated $\beta_{k}$ from the regressions in Appendix Figure \ref{fig:jls_event_ohio}.

The specification shown in Panel A of the figure includes individual fixed effects $\text{\ensuremath{\alpha_{i}}}$, calendar quarter fixed effects $\text{\ensuremath{\gamma_{t}}}$, and quarter relative to layoff dummies; Panel B adds individual time trends $\alpha_{i}+\omega_{i}t$ ; Panel C adds heterogeneous layoff effects per the second term of equation (2) in \citet{Jacobson_etal2005_JE_app}. We find that this ``event study'' specification check yields problematic pre-enrollment earnings differences between enrollees and non-enrollees, regardless of whether we account for individual trends. Therefore, we do not rely on these fixed effects models for our analysis.

\subsection{Distance-Based Instrumental Variables}

Another research design that has been used to estimate educational effects relies on the idea that, all else equal, students are more likely to enroll if they live close to a school (\citealp{Card1993_app}). Therefore, the distance to the nearest school can be used as an instrumental variable (IV) for enrollment. Since we observe the zip codes of UI claimants in our sample, we can adapt this design to our setting by computing the linear distance from a worker's zip code to the nearest community college.\footnote{We use the nearest community college because nearly 90 percent of enrollment is in community colleges.}

Since the validity of the distance IV design hinges on the exogeneity of the distance measure, it is important to control for all other potential determinants of earnings that may also be correlated with distance to schools. In fact, we find workers who live close to a community college to be different along observable dimensions from those who live farther away. Therefore, we control for a similar set of covariates as in our matching specification: 12 quarters of pre-layoff earnings, calendar quarter of layoff, prior industry (eight indicators), prior job tenure category (less than one year, one to six years, and more than six years), age indicators (age below 19, each year from age 19 and 59, and older than 59), whether a worker has a dependent, county unemployment rate during the month of layoff, and county fixed effects. As in our main analysis, we estimate models separately for men and women, and by whether or not they previously worked in manufacturing.

However, we do not find consistently strong first-stage relationships between distance and enrollment. Moreover, we find that the results are sensitive to transformations of the instrument (e.g., including a quadratic distance term), casting doubt on an IV framework in which the instrument is assumed to be independent to the reduced-form equation error term. We conclude that using an instrumental variables design to estimate enrollment effects is not desirable for our context.

\subsection{Timing of Layoff}

One may conjecture that the timing of layoff may induce a discontinuous change in enrollment probability. That is, workers laid off just before the beginning of a semester may be much more likely to enroll than those laid off just after. This sudden decrease in probability may be leveraged to identify enrollment effects.

However, there do not appear to be salient discontinuities in Appendix Figure \ref{fig:enroll_by_claimdate}, which plots fraction of enrollees by UI claim date. This should not be surprising given that schooling decisions may take time to materialize. In fact, the modal number of quarters between UI claim and enrollment is two as documented in the notes of Appendix Figure \ref{fig:by_enroll_timing}. Therefore, implementing a regression discontinuity design in layoff date does not seem to be a viable path to estimating the effect of enrollment for unemployed workers.

\section{Decomposition of Enrollment Estimates by High and Low Wage Firms \label{sec:akm_firm_effects}}

In this section, we examine the role of firms in driving our estimates. Specifically, we ask the extent to which being employed in a higher quality firm (as measured by a firm wage premium as in \citealp{AbowdKramarzMargolis1999}---hereafter, AKM) can explain the earnings effects of enrollment.

For this analysis, we first estimate the AKM effects. To do this, we use the entire universe of earnings data that we have, spanning 2003 to 2014. We estimate models of the form 
\[
y_{it}=\theta_{i}+\psi_{\mathbf{J}(i,t)}+\varepsilon_{it}
\]
where $y_{it}$ is the log earnings of individual $i$ in quarter $t$, $\theta_{i}$ is the individual-specific wage effect, and $\psi_{\mathbf{J}(i,t)}$ is the firm effect for the firm $\mathbf{J}(i,t)$ that employs person $i$ in quarter $t$. The firm effects are only identified for the "largest connected set" of firms through worker mobility. These effects are estimated on a sample of 9,593,167 individuals.

Panel A of Appendix Figure \ref{fig:firmakm-prob} shows the probability that enrollees (right panel) and their matched non-enrollees (left panel) are employed at firms that have higher wage premiums than their pre-layoff firm, lower (including same) wage premiums than their pre-layoff firm, or are not employed, over the four years in the post-period.\footnote{A small proportion of workers cannot be categorized because AKM estimates are missing for either their pre-layoff or post-layoff firms.} We see that at the beginning of the period, enrollees and their matched non-enrollees are about equally likely to be employed in a firm with a higher wage premium than pre-layoff, but that non-enrollees are more likely to be employed overall (driven by employment at firms with lower wage premiums). At the end of the post-period, however, enrollees are more likely to be employed in a higher wage premium firm relative to matched non-enrollees. Panel B of Appendix Figure \ref{fig:firmakm-prob} decomposes the earnings effects into the relative contributions of employment in higher premium or lower premium firms, similar to Figure \ref{fig:ind_switch}. We find that, in the 16th quarter post-enrollment, 86 percent of the earnings difference is explained by employment in higher premium firms. 

One natural question, given this finding and Figure \ref{fig:ind_switch}, is whether workers who move to different industries are also moving to higher premium firms. We find that the story is nuanced. Appendix Table \ref{tab:decompqtr16} further decomposes the overall earnings difference between enrollees and their matched non-enrollees in the 16th quarter post-enrollment into components due to those who are employed in a different industry and a higher premium firm, the same industry and a higher premium firm, a different industry and a lower premium firm, and the same industry and a lower premium firm. The first column of this table shows that the earnings difference is not driven by those employed in different industries \emph{and} higher premium firms: 55 percent is explained by movement to high-premium firms in different industries but a substantial 46 percent is explained by movement to low-premium firms in different industries.  In contrast, 31 percent is explained by movement to high-premium firms in the same industry, and -36 percent by low-premium firms in the same industry. Therefore, it appears that industry switching plays a larger role than employment in high-premium firms.\footnote{Similarly, \citet{MooreandScott-Clayton2025_app} find that firm pay premiums explain only a minority of earnings losses due to layoff.} The other columns of this table further decompose the earnings effects by whether they are due to differences in employment (columns 2-4) or differences in wages conditional on employment (columns 5-7). Overall, we find that employment differentials play a larger role in explaining the earnings differences than wages conditional on employment (comparing columns 4 and 7). When we decompose the employment differential (column 2), we find that 68 and 62 percent are explained by employment in high- and low-premium firms in different industries, respectively, while only 17 and -48 percent are explained by high- and low-premium firms in the same industry.

\section{Impact of UI Benefit Policies on Enrollment\label{sec:UI-enrollment}}

In this section, we examine the role of UI benefit policies on enrollment decisions. \citet{BarrTurner2015_app} show that generous benefit durations induce more unemployed workers to pursue schooling. We replicate this finding for Ohio and discuss the implications for UI policy in light of our enrollment effect estimates. We note that while UI benefit durations affect enrollment, this policy variation cannot be readily used to estimate the effects of enrollment in our main analysis because benefit durations can directly impact unemployment and labor force participation (\citealp{Rothstein2011_app}).

\subsection{Background and Summary of \texorpdfstring{\citet{BarrTurner2015_app}}{Barr and Turner (2015)}}

Under normal economic conditions, unemployed workers in Ohio (and in most other states) are eligible to receive 26 weeks of UI benefits, which replace 50 percent of past earnings up to a cap (the cap ranges from \$323 to \$524 per week during our sample period, depending on the year and number of dependents a worker has). The duration of benefits may be increased during economic downturns. As shown in Appendix Figure \ref{fig:potdur_overtime}, benefit durations varied substantially over our sample period in Ohio, reaching 99 weeks in 2009 and persisting for a few years afterward.\footnote{The narrow valleys are due to the failure to extend EUC08 legislation before a scheduled policy expiration. However, claimants were retroactively compensated after these lapses.} The degree to which benefits were extended depended on the state unemployment rate and policies in place at a given time (see \citealp{Rothstein2011_app} for details). Using the October Education Supplement of the Current Population Survey, \citet{BarrTurner2015_app} show that these changes in UI policy across states and over time increased unemployed workers' propensity to pursue postsecondary education. We use the same temporal variation to estimate the magnitude of the effect within Ohio.

One major difference between our analysis and \citet{BarrTurner2015_app}'s is our ability to observe the timing of enrollment (and unemployment), which allows us to relate enrollment to the UI policy in place at the time of the enrollment decision.\footnote{\citet{BarrTurner2015_app}'s main regressor is the UI duration in the August of the year enrollment is observed, under the assumption that workers make decisions to enroll at the start of the academic year.} This is important because, as discussed in \citet{Rothstein2011_app}, sudden changes in federal legislation during the Great Recession resulted in changing expectations on benefit duration over the unemployment spell. One way to see this is in Appendix Figure \ref{fig:wksleft_sim}. For this figure, we simulate the number of UI benefit weeks remaining for workers at various quarters of the unemployment spell using only variation in UI extension policies over time, and we plot the averages for different cohorts of claimants.\footnote{This simulation assumes that all workers receive 26 weeks of regular benefits and are continuously unemployed starting on their claim date. The simulation code is lightly adapted from \citet{Rothstein2011_app}.} Before the recession, workers expected 26 weeks of benefits at the beginning of unemployment and would run out of benefits after two quarters. For cohorts laid off in 2008, workers began unemployment under the assumption that they were eligible for 26 weeks of benefits, but as new extensions began in June 2008, workers with relatively long spells of unemployment were eligible for the new extensions. Throughout 2009 and 2010, benefits were continually extended such that although workers used their benefits, the average number of UI benefit weeks remaining declined less quickly than would be expected mechanically with the passage of time---in fact, for the 2009 cohort, the remaining benefit weeks actually increased at times. To the extent that workers base enrollment decisions on expectations of UI remaining available, it is important to consider this policy variation over the course of unemployment.

\subsection{Effect of UI Benefit Duration on Enrollment}

Our analysis uses a 5 percent subset of our UI claims sample, which covers claimants from 2004-2011Q3.\footnote{One difference between our sample and \citet{BarrTurner2015_app}'s is that ours is a sample of \emph{layoffs} while \citet{BarrTurner2015_app}'s is a sample of \emph{unemployment spells}. The latter likely contains disproportionately more long-term unemployed workers.} For each claimant, we have a balanced panel of eight quarters starting in the first quarter after her claim. We are interested in the effect of the expected UI benefit duration on enrollment over the first two years of layoff. We estimate equations of the following form:
\[
E_{it}=\beta P_{it}+p(UR_{t};\rho)+\sum_{k=1}^{8}\delta^{k}D_{it}^{k}+\lambda_{t}+\mathbf{X}_{i}\gamma+\varepsilon_{it}
\]
where $E_{it}$ is the enrollment indicator for worker $i$ in quarter $t$, $P_{it}$ is the UI potential benefit duration in quarter $t$, $D_{it}^{k}$ are a set of indicators denoting the $k$th quarter since layoff, $UR_{t}$ is the state unemployment rate at time $t$, and $\mathbf{X}_{i}$ contain demographic and pre-layoff job characteristics. The main coefficient of interest is $\beta$, which is the effect of the benefit duration on enrollment. Since benefit durations are partially determined by state economic conditions, we include a quadratic function $p(UR_{t};\rho)$ of the unemployment rate, following \citet{BarrTurner2015_app}. We also flexibly control for time since layoff ($\delta^{k}$), year and quarter-in-year effects ($\lambda_{t}$), and worker characteristics including gender, age category, race, whether a worker reports having dependents, past wage quintile, tenure at last employer (four categories), past industry (two-digit NAICS), and past occupation (two-digit SOC).

Appendix Table \ref{tab:ui_extensions} presents our regression results. The first column shows that a ten-week increase in potential benefit duration raises the probability of enrollment by 0.15 percentage points, or a 10 percent increase. Although this effect is smaller than the point estimate found in \citet{BarrTurner2015_app}, it does fall inside their 95 percent confidence interval. As noted above, since potential durations were changing throughout the unemployment spell during the recession, the benefit duration presumed at the beginning of a worker's unemployment spell differs from the benefit duration actually experienced. In the second column of Appendix Table \ref{tab:ui_extensions}, we show that the potential duration at the beginning of the spell has no impact on enrollment beyond the impact of the potential duration at the time of enrollment. The third column estimates the effect using an individual fixed effects model, utilizing only the variation in potential benefit expectations over time for each spell, and finds that the effect of potential duration is slightly smaller, though still statistically significant. Finally the fourth column only includes workers who are on their first unemployment spell, defined as those who have not yet experienced a full 13-week quarter of employment. Although the probability of enrollment is higher in this sample, the estimate is similar in percentage terms: a 10-week increase in potential benefit duration increases enrollment by 11 percent.

\subsection{Policy Implications}

These estimates indicate that a 10-week increase in UI potential durations induces approximately 1,200 more workers to enroll annually. Assuming that our estimated earnings gain of \$348 per quarter in the third and fourth year represent an estimate of the long-run effects, the increased enrollment would imply about \$1.7 million per year in earnings gain starting in the fifth year after enrollment (since the lock-in effects in the first two years roughly equal the positive earnings effects in the third and fourth years).

This finding also suggests that there is an externality associated with extending UI. As discussed in \citet{SchmiedervonWachter2016_app} and \citet{Lee_etal_2019_suffstats_app}, the overall impact of a policy on the government budget is a critical parameter in optimal policy-making. Therefore, estimates of the implied increases in tax revenues and government expenditures related to financial aid or tuition subsidies should be accounted for in UI policy analysis.

%\pagebreak{}
\section*{Appendix References}

\begin{singlespace}

\begin{btSect}[jpe]{opportunity_appendix}
\btPrintCited
\end{btSect}
\end{singlespace}

\clearpage{}

\renewcommand{\thetable}{A.\arabic{table}}

\renewcommand{\thefigure}{A.\arabic{figure}}

\setcounter{figure}{0}

\setcounter{table}{0} 

\begin{sidewaysfigure}[H]
\caption{Distributions of Log Odds Ratio}

\label{fig:overlap}
\begin{centering}
\includegraphics[scale=1.4]{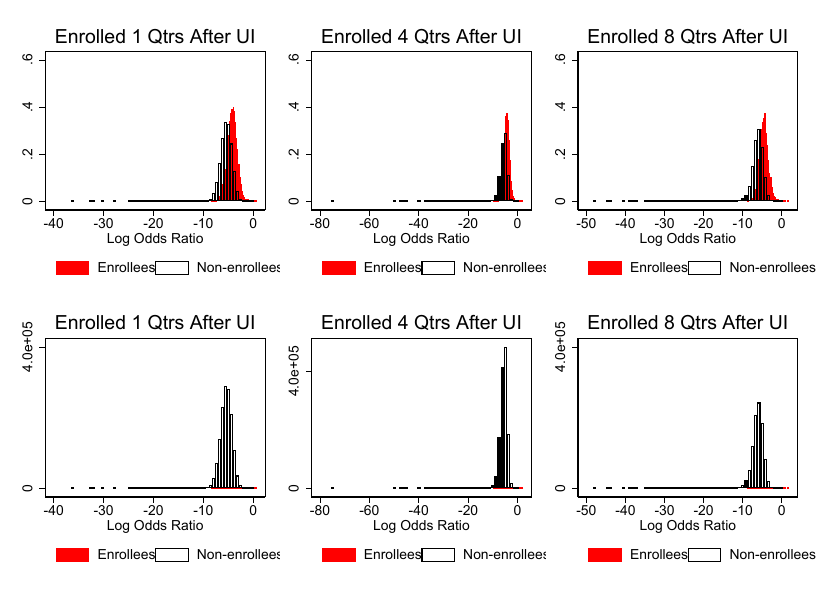}
\par\end{centering}
\centering{}%
\begin{minipage}[t]{0.7\columnwidth}%
Notes: These figures show the distributions of the estimated propensity to enroll (expressed in log-odds ratios), for UI claimants who enroll in the first, fourth, and eighth quarters after layoff, and non-enrollees. The first row of figures overlays the distribution for enrollees conditional on being an enrollee (in red), and the distribution for non-enrollees conditional on being a non-enrollee (in white). The second row shows the distribution with both enrollees and non-enrollees pooled. (Enrollees are a small proportion of the overall claimant population so while they are present, they may not be visually perceptible.) There are $N=$1,432,293; 1,348,489; 959,731 UI claims in the graphs in each column, corresponding to 1,045,644; 995,685; and 763,280 unique individuals.
\end{minipage}
\end{sidewaysfigure}

\clearpage{}
\begin{figure}[H]
\caption{Earnings of Enrollees and Non-enrollees}
\label{fig:unmatched}

\smallskip{}

\begin{centering}
\includegraphics{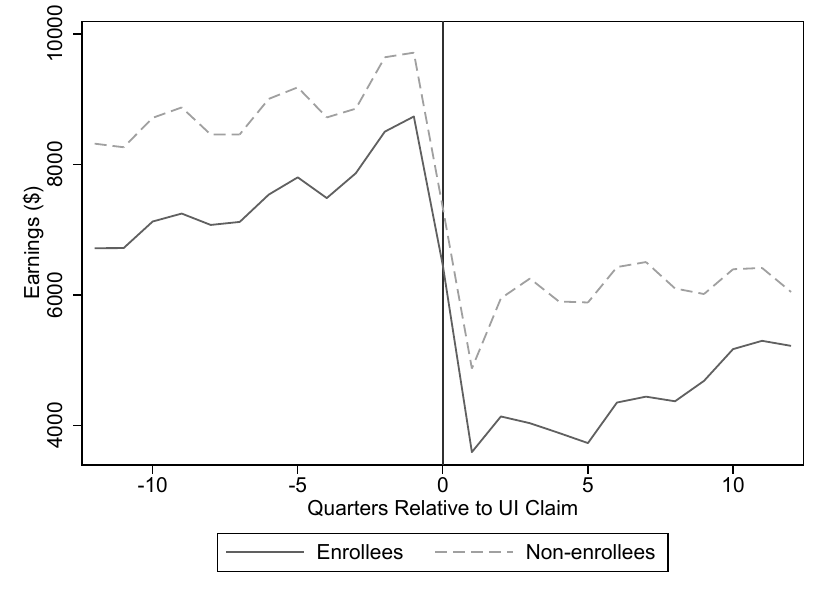}
\par\end{centering}
\centering{}%
\begin{minipage}[t]{0.8\columnwidth}%
Notes: This figure plots the average quarterly earnings of enrollee and non-enrollee UI claimants for five percent of our analysis sample. The vertical line denotes the UI claim quarter. $N=93,528$ UI claims (corresponding to 91,043 unique individuals).
\end{minipage}
\end{figure}
\clearpage{}
\begin{figure}[H]
\caption{Earnings of Enrollees and Matched Non-enrollees Using Alternative Matching Specifications}

\label{fig:partialmatch}
\begin{centering}
\smallskip{}
\par\end{centering}
\begin{centering}
(A) Matched on One Quarter of Earnings Before Layoff and Earnings Between Layoff and Enrollment
\par\end{centering}
\begin{centering}
\includegraphics[scale=0.5]{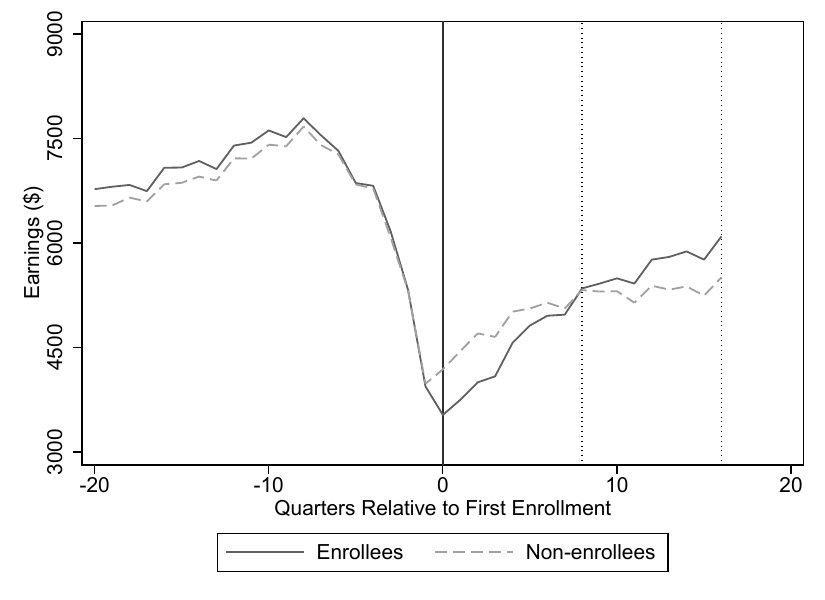}
\par\end{centering}
\begin{centering}
(B) Matched on Four Quarters of Earnings Before Layoff and Earnings Between Layoff and Enrollment
\par\end{centering}
\begin{centering}
\includegraphics[scale=0.5]{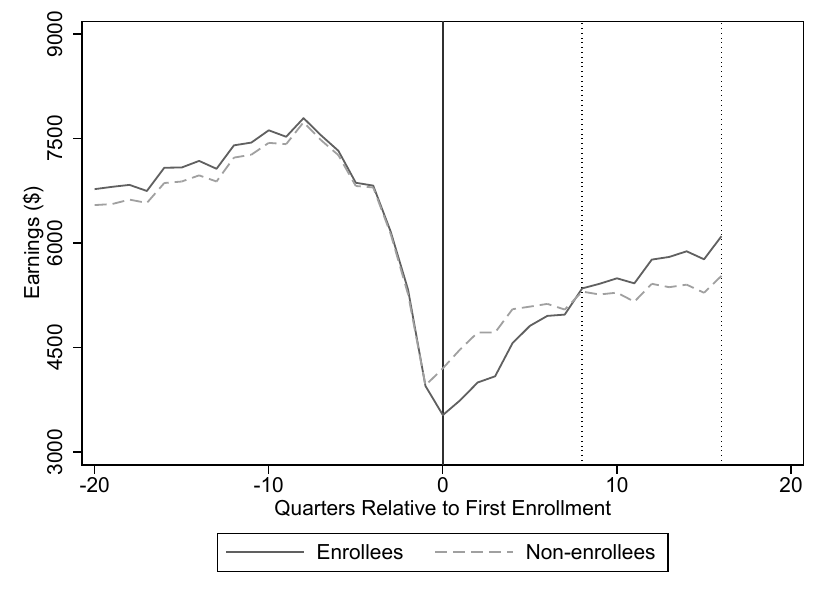}
\par\end{centering}
\begin{centering}
(C) Matched on Four Quarters of Earnings Before Layoff
\par\end{centering}
\begin{centering}
\includegraphics[scale=0.5]{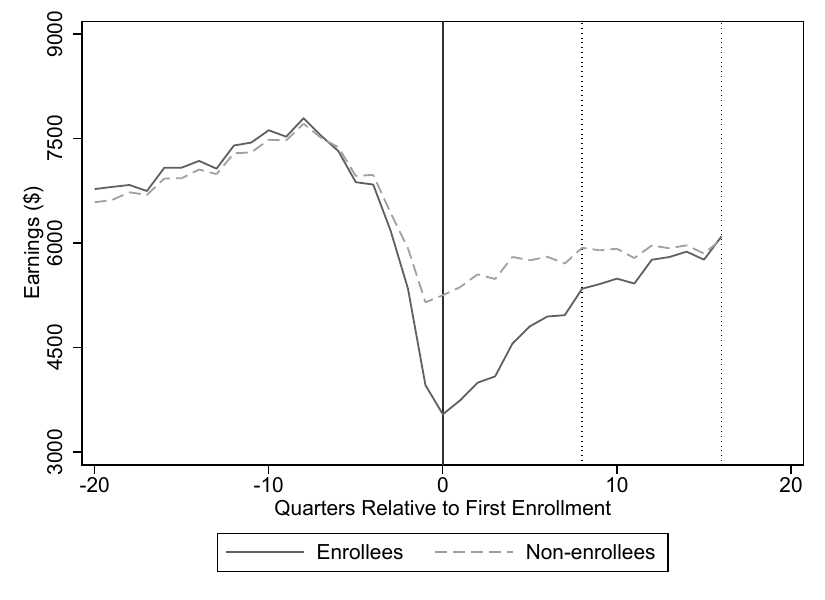}
\par\end{centering}
\centering{}%
\begin{minipage}[t]{0.8\columnwidth}%
Notes: These figures plot the average quarterly earnings of enrollee and matched non-enrollee UI claimants, where the matching is done using all demographic variables described in Section \ref{subsec:Matching-Specification} and the following alternative sets of earnings variables: one quarter of earnings before layoff and earnings between layoff and enrollment (Panel A), four quarters of earnings before layoff and earnings between layoff and enrollment (Panel B), and four quarters of earnings before layoff (Panel C). The solid vertical line denotes the quarter of first enrollment, and the vertical dashed lines denote eight and 16 quarters after first enrollment. These figures contain $N=$142,908; 142,460; and 141,958 UI claims, respectively, corresponding to 137,360; 136,741; and 136,412 unique individuals.
\end{minipage}
\end{figure}

\clearpage{}
\begin{figure}[H]
\caption{Earnings of Enrollees and Matched Non-enrollees Using Alternative Specifications}

\label{fig:zeroearn_noborder}
\begin{centering}
(A) Including Zero Earnings Indicators in Matching Specification
\par\end{centering}
\begin{centering}
\includegraphics[scale=0.85]{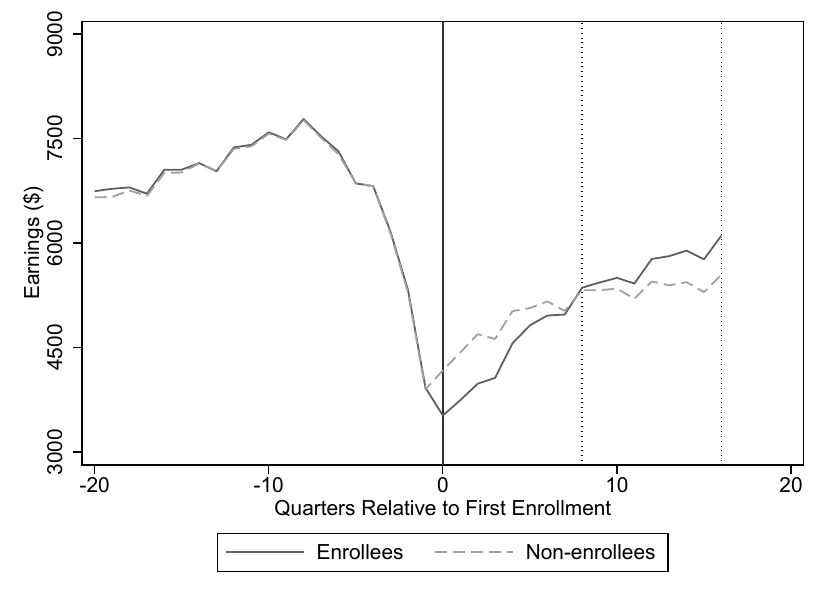}
\par\end{centering}
\begin{centering}
(B) Excluding Border Counties
\par\end{centering}
\begin{centering}
\includegraphics[scale=0.85]{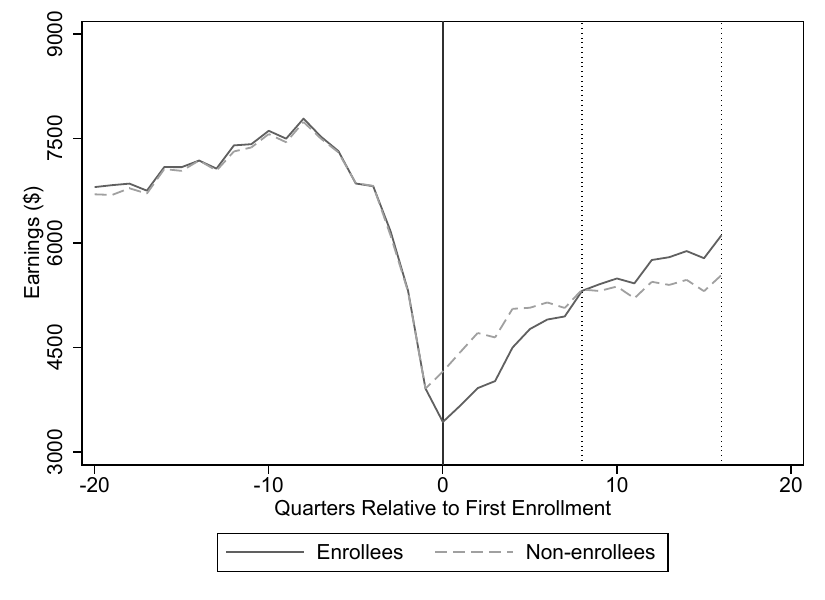}
\par\end{centering}
\centering{}%
\begin{minipage}[t]{0.8\columnwidth}%
Notes: These figures plot the average quarterly earnings of enrollee and matched non-enrollee UI claimants, where the matching is done using all demographic variables described in Section \ref{subsec:Matching-Specification} and a) including indicators for zero earnings in each quarter (Panel A), and b) excluding counties on the Ohio border (Panel B). The solid vertical line denotes the quarter of first enrollment, and the vertical dashed lines denote eight and 16 quarters after first enrollment. These figures contain $N=$139,550 and 109,098 UI claims, respectively, corresponding to 133,726 and 104,186 unique individuals.
\end{minipage}
\end{figure}

\clearpage{}
\begin{figure}[H]
\caption{Estimated Enrollment Effects, By Number of Neighbors}

\label{fig:neighbors_analysis}
\begin{centering}
(A) Estimated Enrollment Effect
\par\end{centering}
\begin{centering}
\includegraphics[scale=0.85]{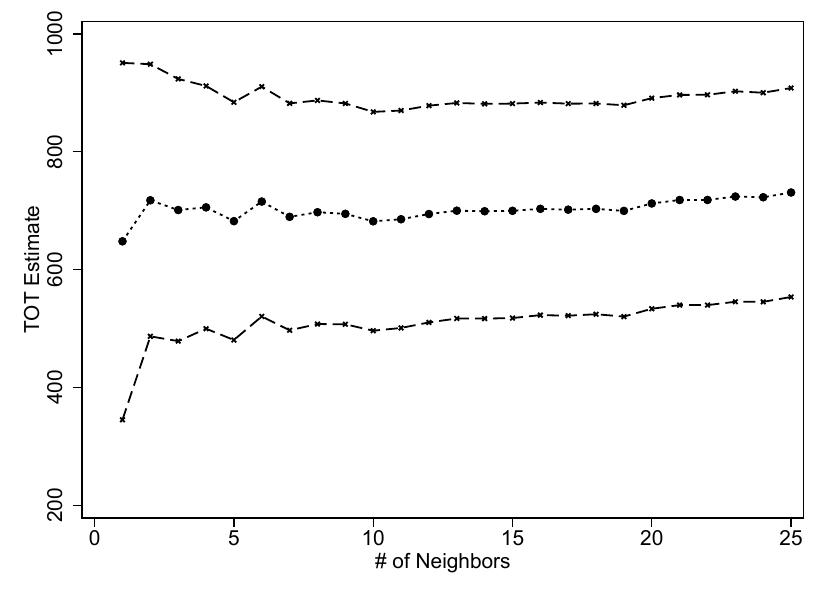}
\par\end{centering}
\begin{centering}
(B) Reduction in Variance
\par\end{centering}
\begin{centering}
\includegraphics[scale=0.85]{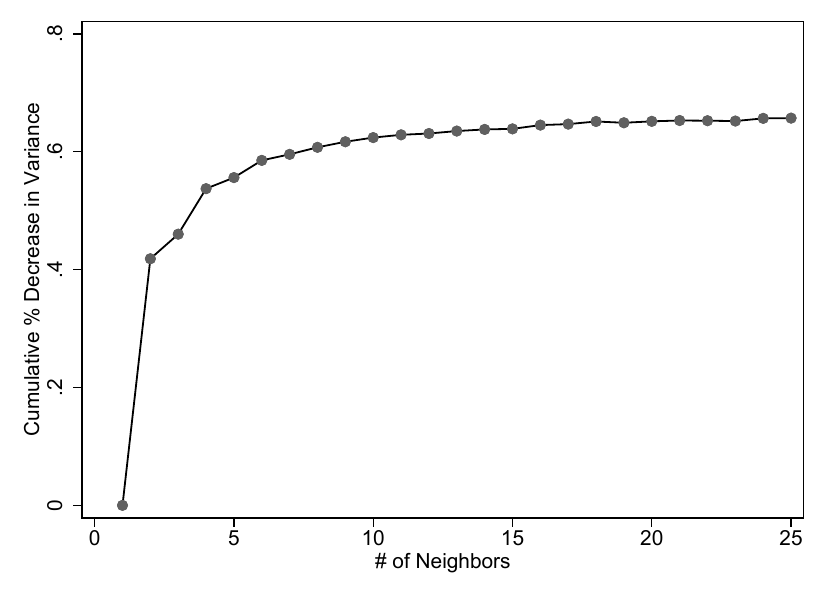}
\par\end{centering}
\centering{}%
\begin{minipage}[t]{0.8\columnwidth}%
Notes: Panel A shows how the estimated enrollment effect (3-4 years after enrolling) varies by the number of matched neighbors for men who did not previously work in manufacturing and who filed a UI claim in the first quarter of 2009. Panel B shows the percent reduction in the variance of the estimated effect, relative to using one neighbor to match. $N=54,685$ UI claims, corresponding to 54,685 unique individuals.
\end{minipage}
\end{figure}

\clearpage{}
\begin{figure}[H]
\caption{Employment of Enrollees and Matched Non-enrollees}

\label{fig:ext_margin}
\begin{centering}
\smallskip{}
\par\end{centering}
\begin{centering}
(A) Probability of Having Any Positive Earnings
\par\end{centering}
\begin{centering}
\includegraphics[scale=0.85]{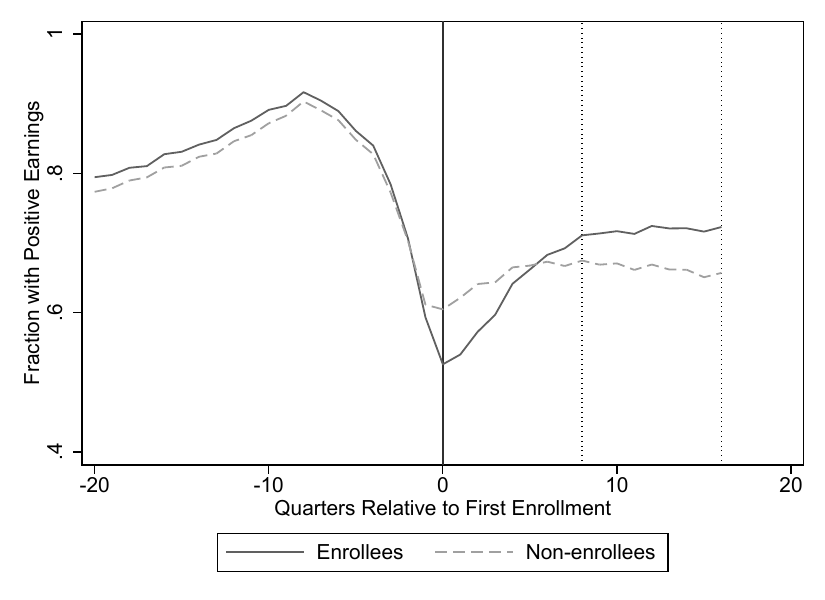}
\par\end{centering}
\begin{centering}
(B) Weeks Worked Per Quarter
\par\end{centering}
\begin{centering}
\includegraphics[scale=0.85]{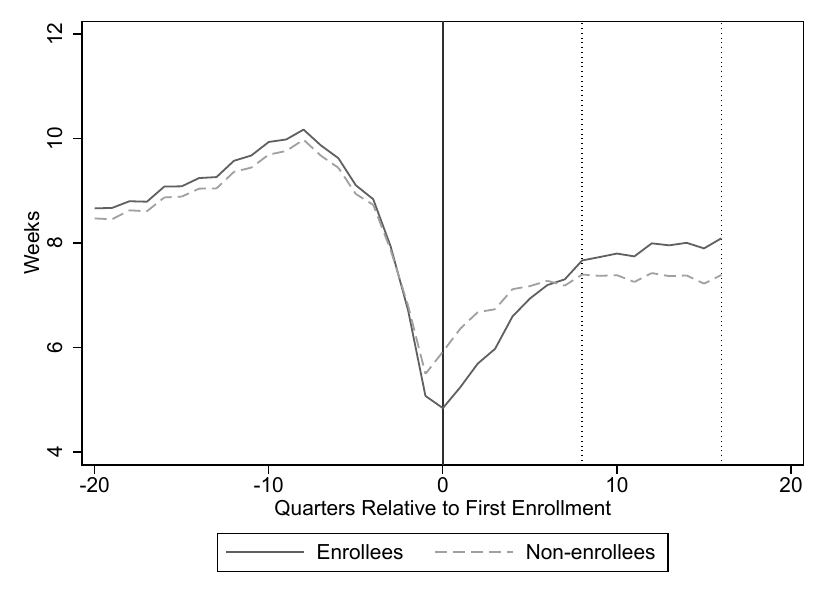}
\par\end{centering}
\centering{}%
\begin{minipage}[t]{0.8\columnwidth}%
Notes: These figures plot the fraction of enrollee and matched non-enrollee UI claimants with positive earnings in each quarter (Panel A) and the quarterly average number of weeks worked (Panel B). The solid vertical line denotes the quarter of first enrollment, and the vertical dashed lines denote eight and 16 quarters after first enrollment. $N=141,758$, corresponding to 136,074 unique individuals.
\end{minipage}
\end{figure}

\clearpage{}
\begin{figure}[H]
\caption{Distributions of Enrollee and Matched Non-enrollee Earnings}
\label{fig:distributional_effects}
\begin{centering}
\includegraphics[scale=0.5]{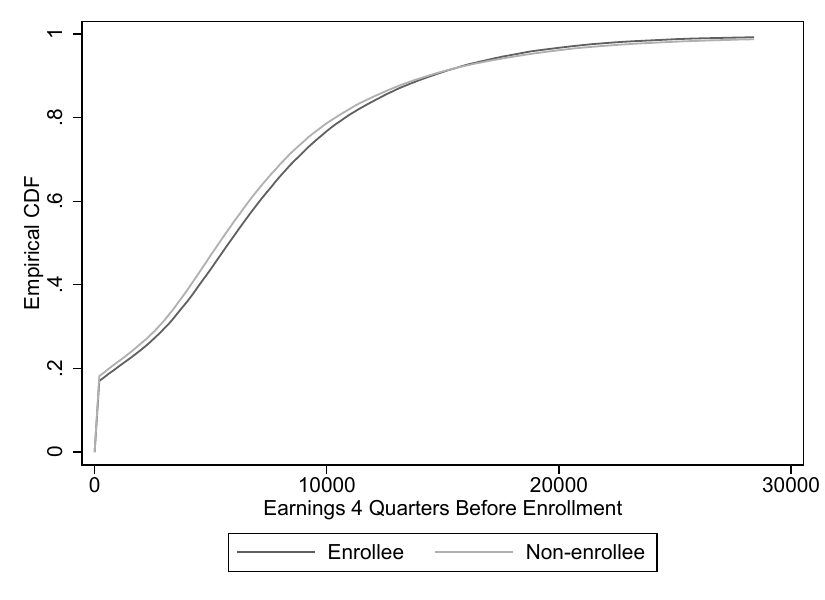}\includegraphics[scale=0.5]{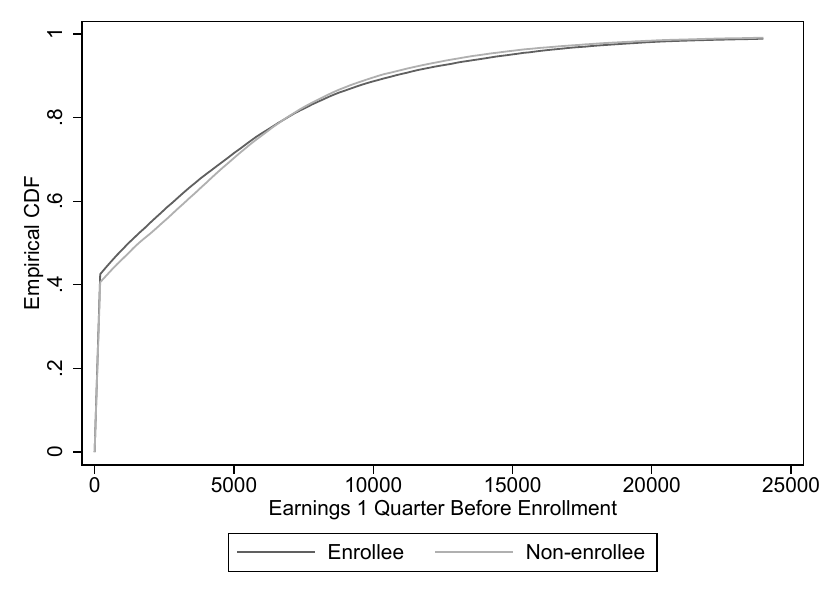}
\par\end{centering}
\begin{centering}
\includegraphics[scale=0.5]{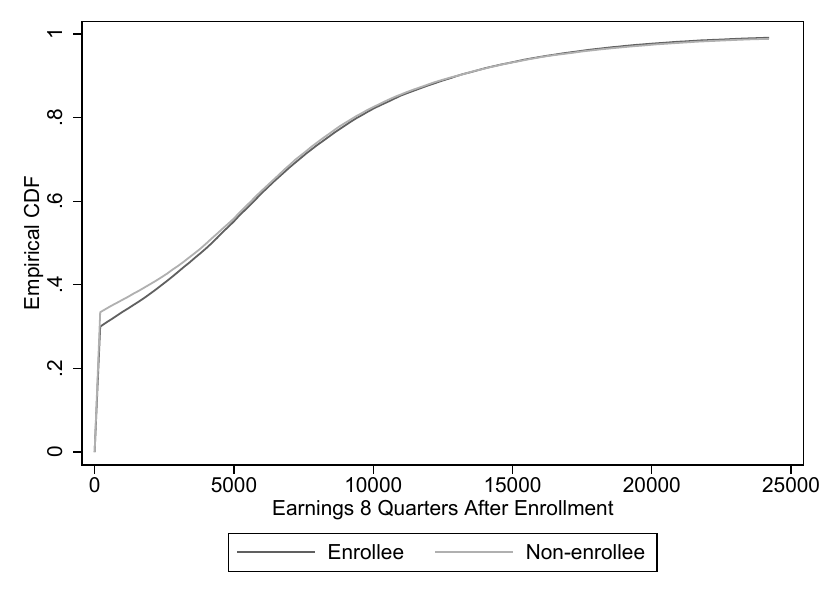}\includegraphics[scale=0.5]{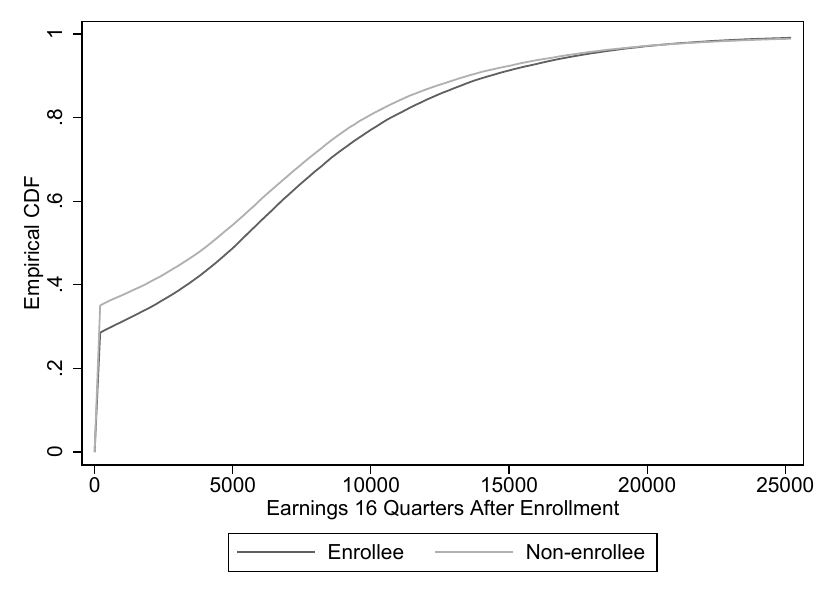}
\par\end{centering}
\centering{}%
\begin{minipage}[t]{0.8\columnwidth}%
Notes: These figures show the empirical cumulative distribution functions of earnings for enrollees and matched non-enrollees four quarters before enrollment, one quarter before enrollment, eight quarters after enrollment, and 16 quarters after enrollment. $N=141,758$, corresponding to 136,074 unique individuals, for each graph.
\end{minipage}
\end{figure}

\clearpage{}
\begin{figure}[H]
\caption{Earnings of Enrollees and Matched Non-enrollees, By Enrollment Timing}
\label{fig:by_enroll_timing}
\begin{centering}
\includegraphics[scale=0.43]{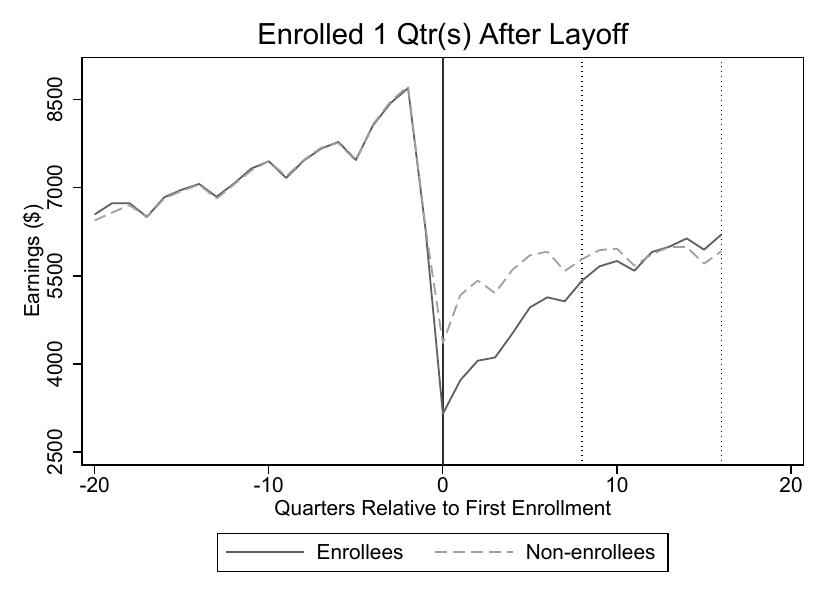}\includegraphics[scale=0.43]{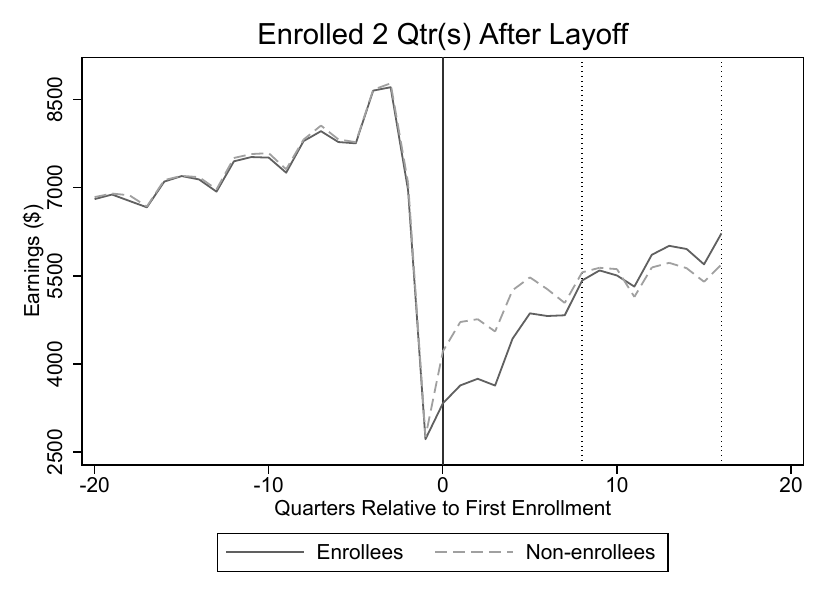}
\par\end{centering}
\begin{centering}
\includegraphics[scale=0.43]{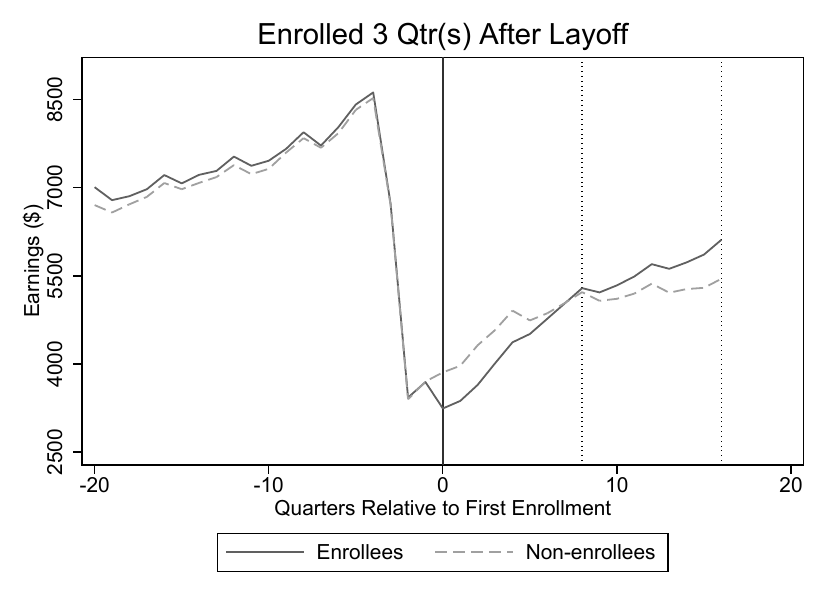}\includegraphics[scale=0.43]{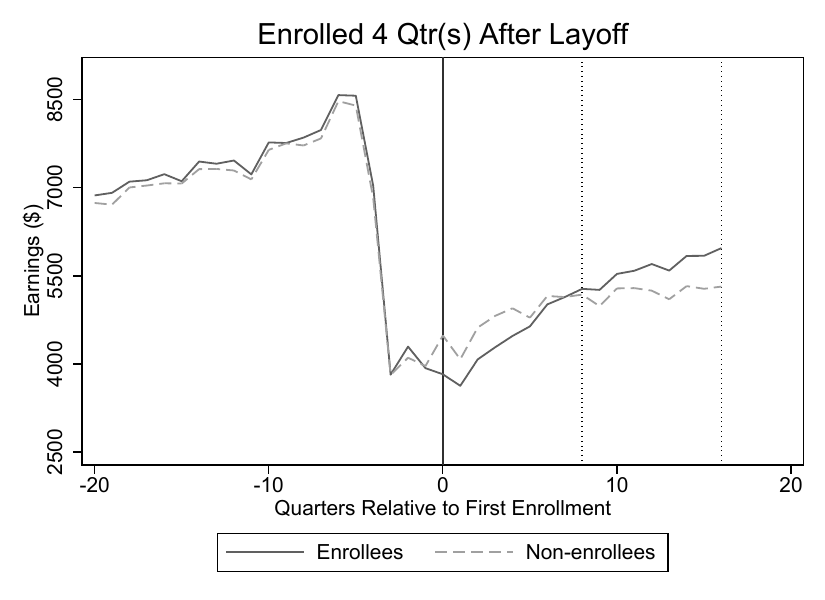}
\par\end{centering}
\begin{centering}
\includegraphics[scale=0.43]{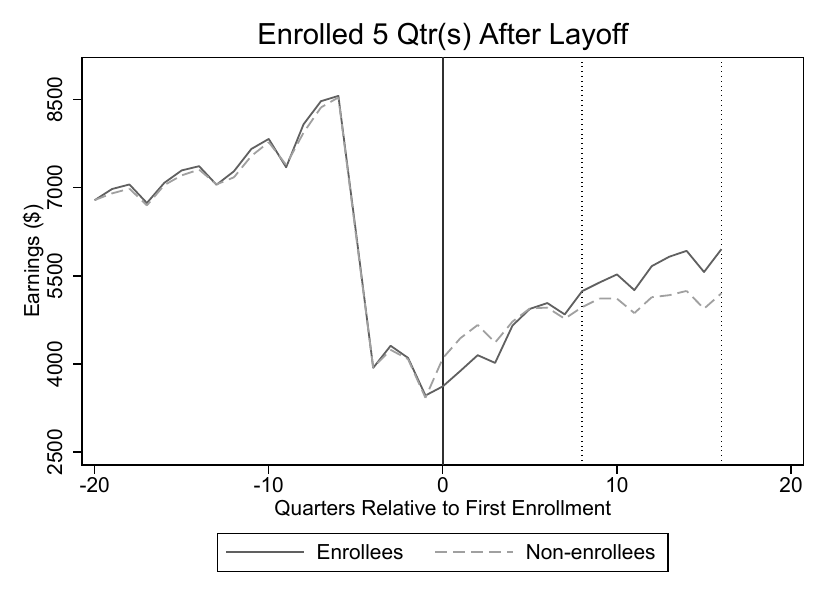}\includegraphics[scale=0.43]{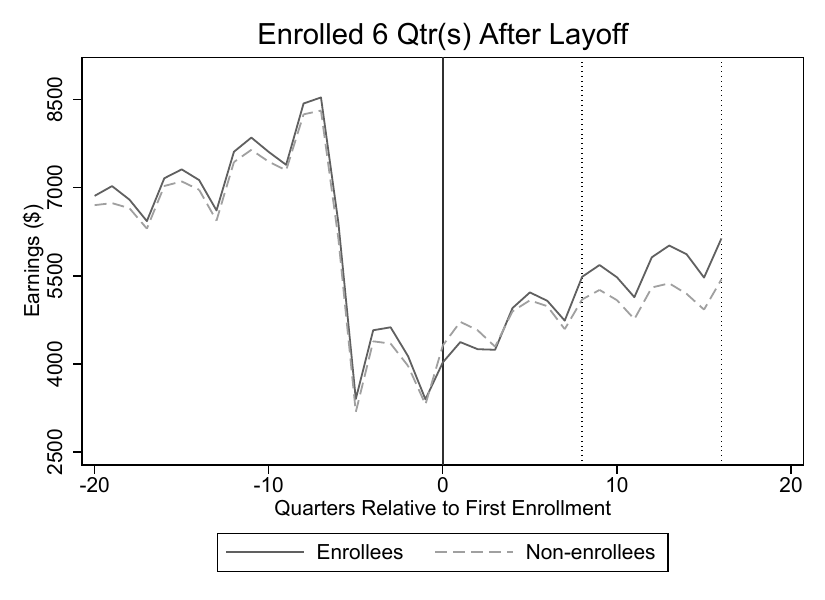}
\par\end{centering}
\begin{centering}
\includegraphics[scale=0.43]{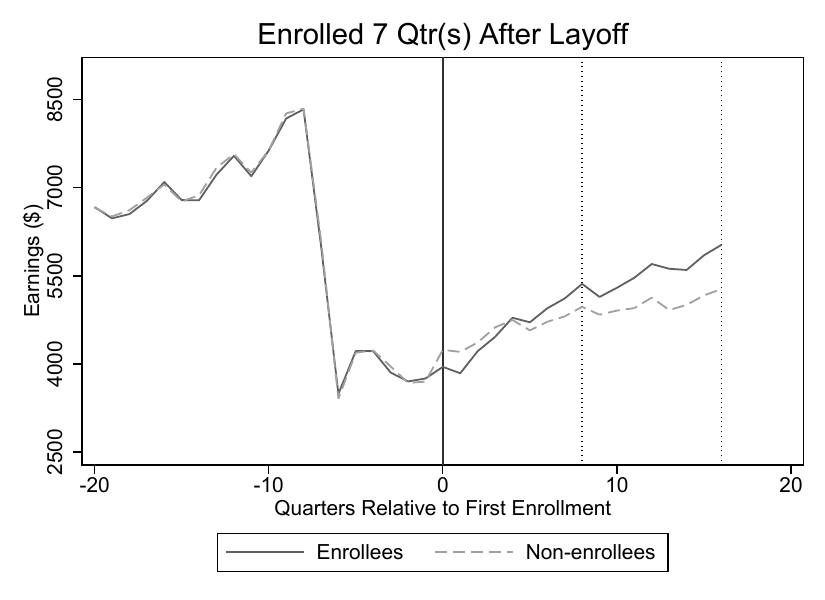}\includegraphics[scale=0.43]{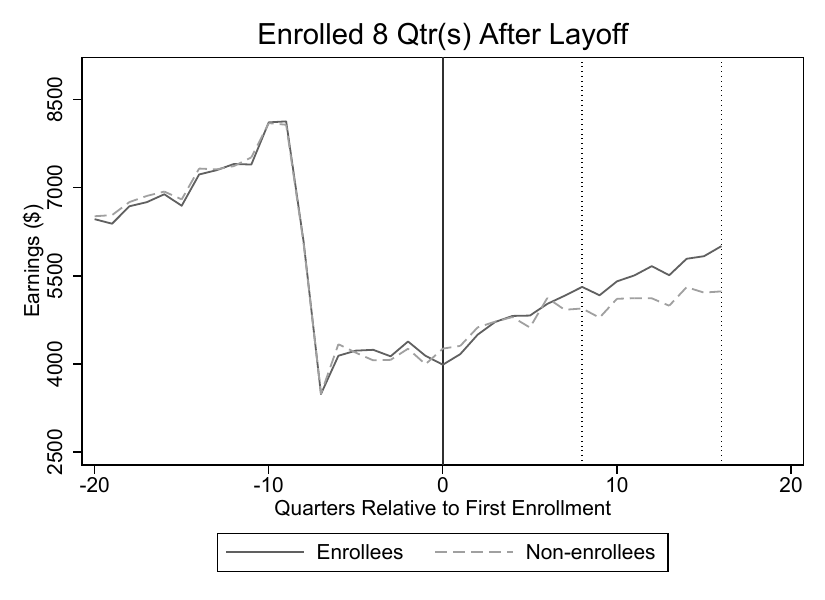}
\par\end{centering}
\centering{}%
\begin{minipage}[t]{0.7\columnwidth}%
Notes: Each graph shows the average quarterly earnings of UI claimants who enroll a certain number of quarters after filing a UI claim and their matched non-enrollees. The solid vertical lines denote the quarter of first enrollment, and the vertical dashed lines denote eight and 16 quarters after first enrollment. There are $N=$23,868; 27,126; 23,472; 18,976; 15,190; 13,154; 11,032; 8,940 UI claims in each graph, corresponding to 23,579; 26,750; 23,184; 18,769; 15,024; 13,024; 10,919; 8,866 unique individuals.
\end{minipage}
\end{figure}

\clearpage{}
\begin{figure}[H]
\caption{Earnings of Enrollees and Matched Non-enrollees, By Year of Job Loss}
\label{fig:by_year}
\begin{centering}
\includegraphics[scale=0.43]{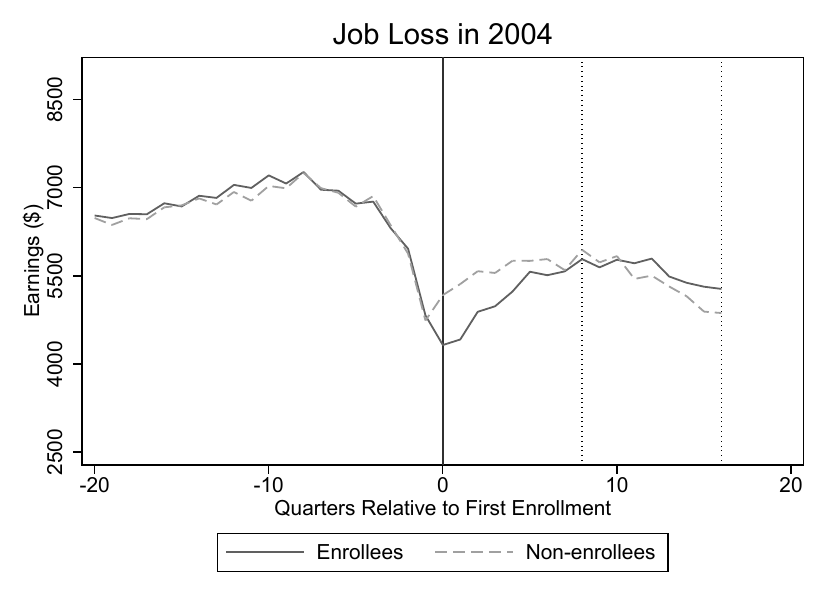}\includegraphics[scale=0.43]{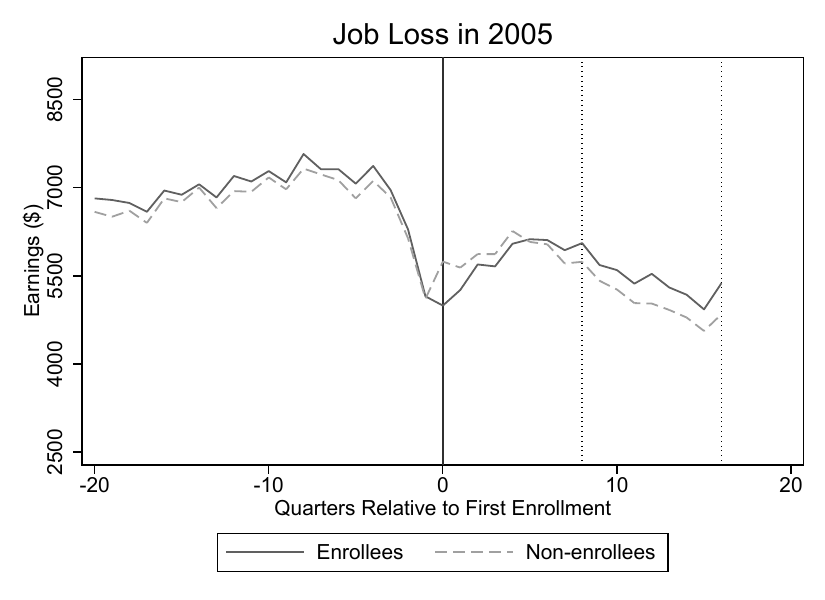}
\par\end{centering}
\begin{centering}
\includegraphics[scale=0.43]{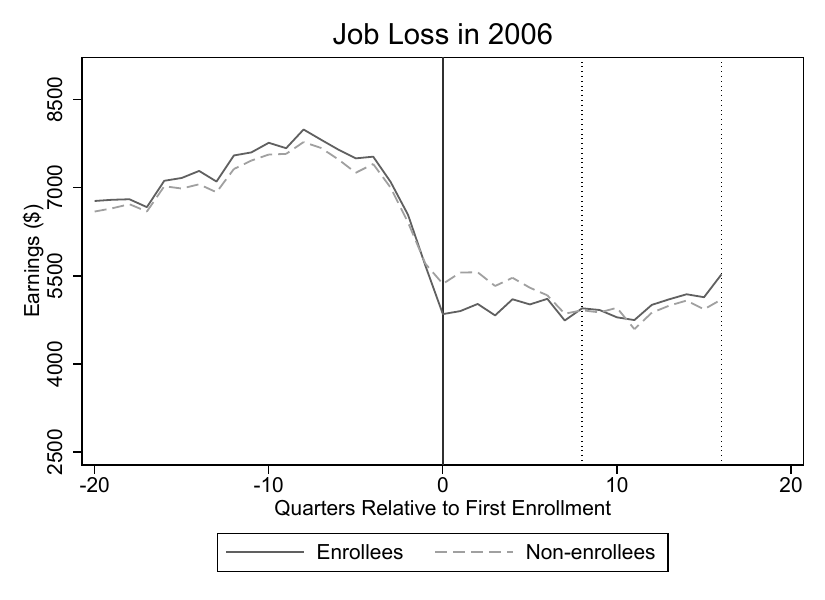}\includegraphics[scale=0.43]{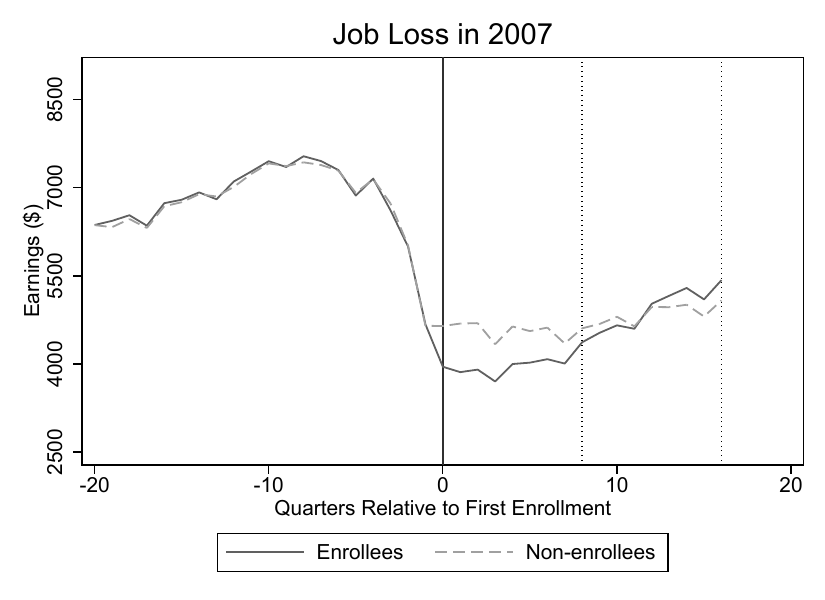}
\par\end{centering}
\begin{centering}
\includegraphics[scale=0.43]{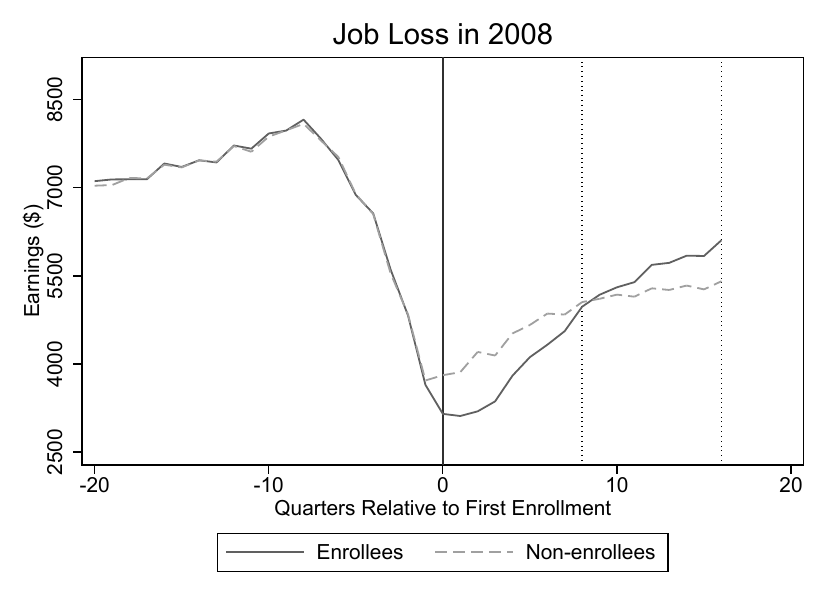}\includegraphics[scale=0.43]{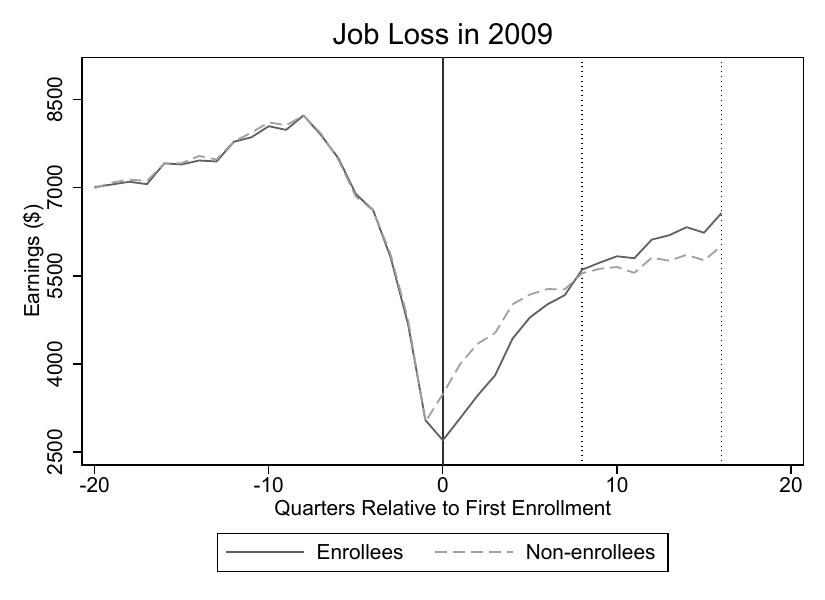}
\par\end{centering}
\begin{centering}
\includegraphics[scale=0.43]{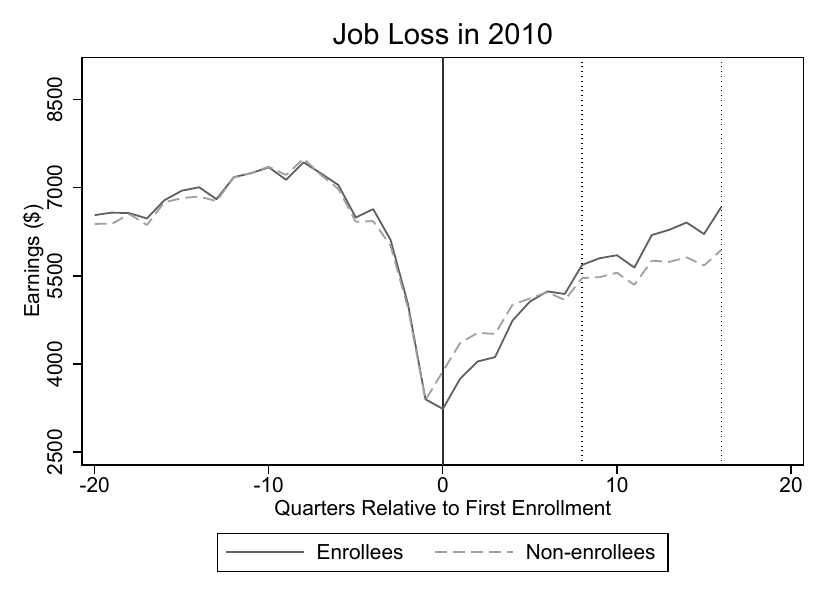}\includegraphics[scale=0.43]{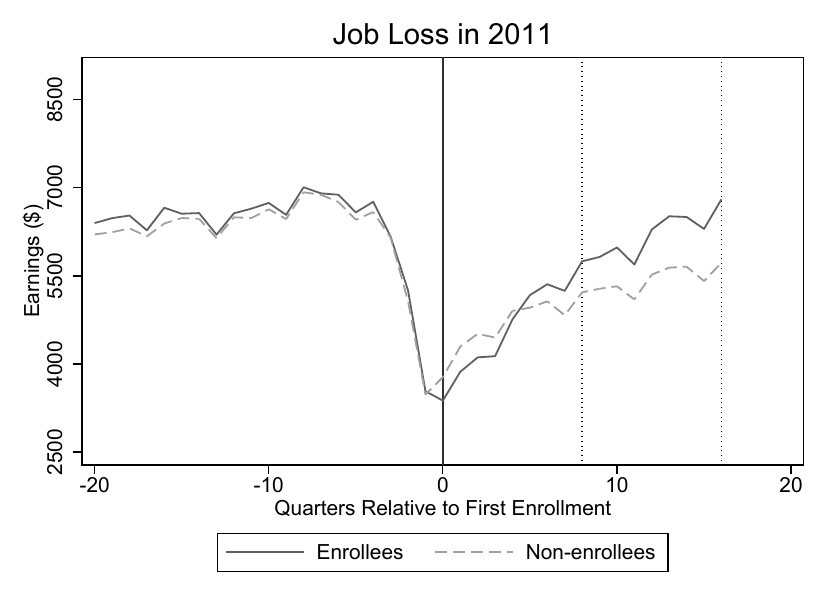}
\par\end{centering}
\centering{}%
\begin{minipage}[t]{0.7\columnwidth}%
Notes: Each graph shows the average quarterly earnings of enrollees and matched non-enrollees who filed a UI claim in a specific year. The solid vertical lines denote the quarter of first enrollment, and the vertical dashed lines denote eight and 16 quarters after first enrollment. There are $N=$12,034; 9,264; 13,494; 14,960; 27,794; 34,216; 19,246; 10,750 UI claims in each graph, corresponding to 11,800; 9,096; 13,250; 14,632; 27,132; 33,358; 18,733; 10,514 unique individuals.
\end{minipage}
\end{figure}
\clearpage{}
\begin{figure}[H]
\caption{Earnings of Enrollees and Matched Non-enrollees, By Gender}
\label{fig:by_gender}
\begin{centering}
\smallskip{}
\par\end{centering}
\begin{centering}
(A) Women
\par\end{centering}
\begin{centering}
\includegraphics[scale=0.85]{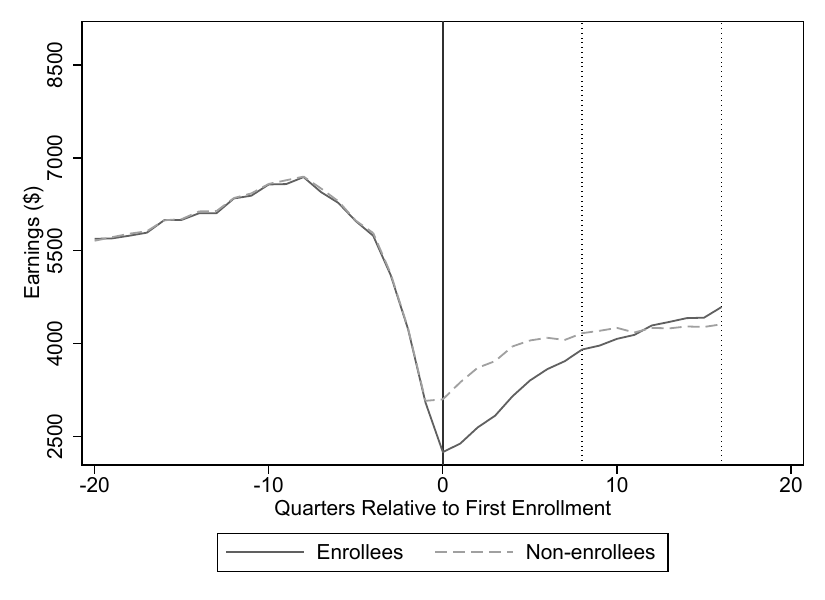}
\par\end{centering}
\begin{centering}
(B) Men
\par\end{centering}
\begin{centering}
\includegraphics[scale=0.85]{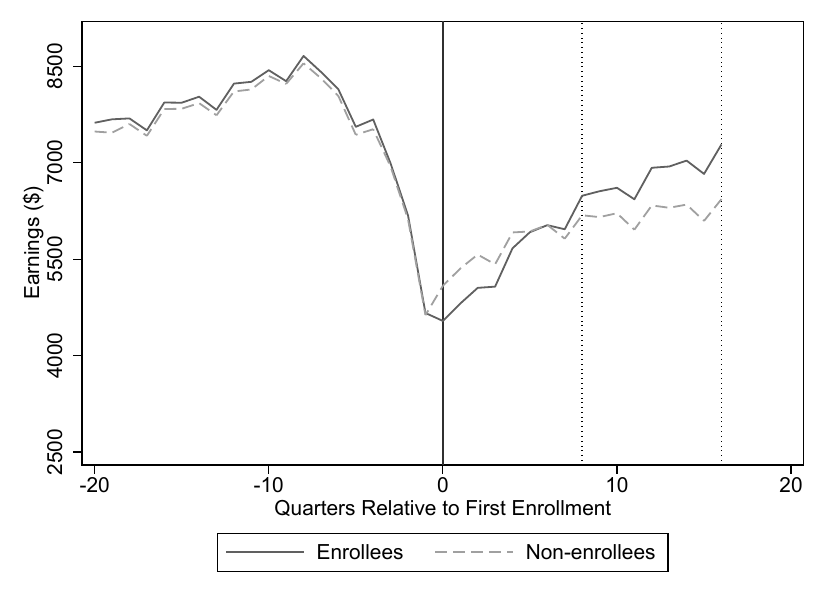}
\par\end{centering}
\centering{}%
\begin{minipage}[t]{0.8\columnwidth}%
Notes: The upper (lower) graph shows the average quarterly earnings of female (male) enrollee and matched non-enrollee UI claimants. The solid vertical lines denote the quarter of first enrollment, and the vertical dashed lines denote eight and 16 quarters after first enrollment. There are $N=$62,096 (79,662) UI claims in the upper (lower) graph, corresponding to 59,260 (76,814) unique individuals.
\end{minipage}
\end{figure}
\clearpage{}
\begin{figure}[H]
\caption{Earnings of Enrollees and Matched Non-enrollees, Manufacturing versus Non-manufacturing}
\label{fig:by_industry}
\begin{centering}
\smallskip{}
\par\end{centering}
\begin{centering}
(A) Manufacturing
\par\end{centering}
\begin{centering}
\includegraphics[scale=0.85]{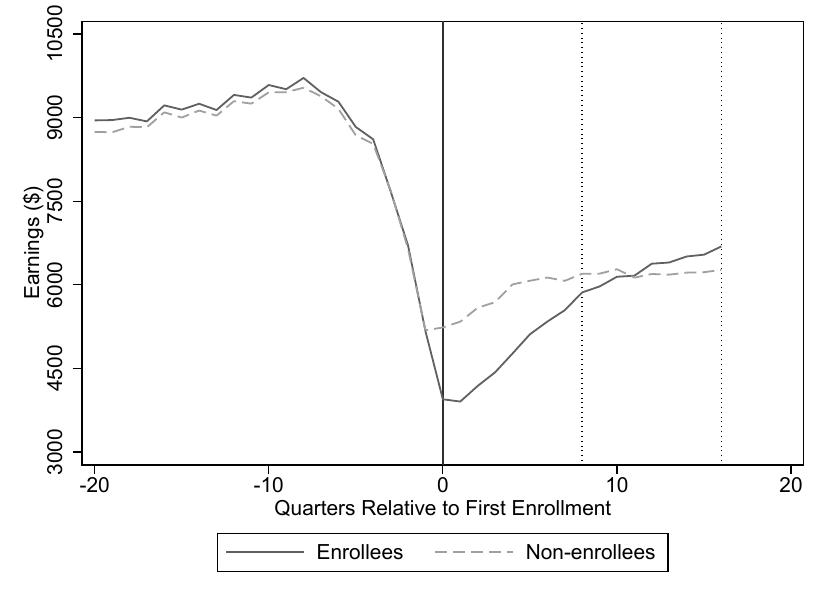}
\par\end{centering}
\begin{centering}
(B) Non-manufacturing
\par\end{centering}
\begin{centering}
\includegraphics[scale=0.85]{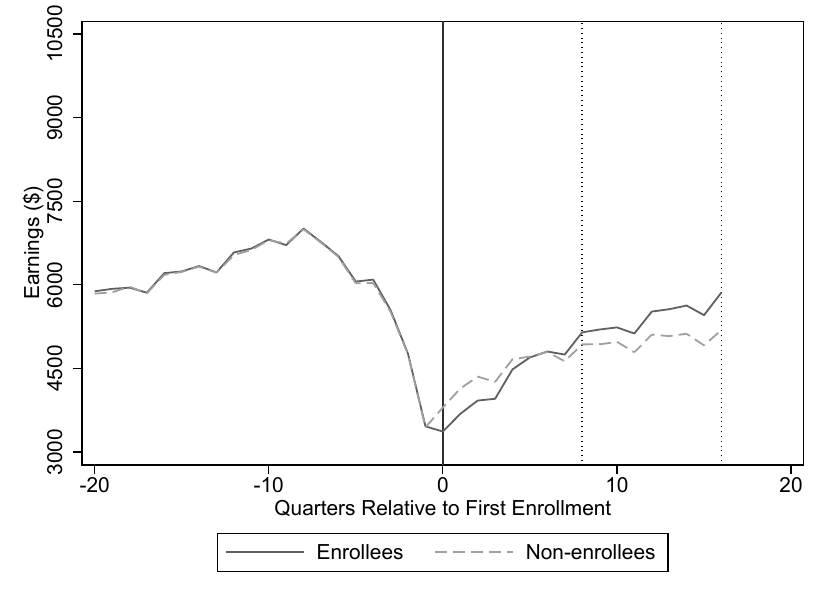}
\par\end{centering}
\centering{}%
\begin{minipage}[t]{0.8\columnwidth}%
Notes: The upper (lower) graph shows the average quarterly earnings of enrollees who previously worked in a manufacturing (non-manufacturing) sector and matched non-enrollee UI claimants. The solid vertical lines denote the quarter of first enrollment, and the vertical dashed lines denote eight and 16 quarters after first enrollment. There are $N=$41,252 (100,506) UI claims in the upper (lower) graph, corresponding to 39,514 (96,902) unique individuals.
\end{minipage}
\end{figure}
\clearpage{}
\begin{figure}[H]
\caption{Earnings of Enrollees and Matched Non-enrollees, By Age}
\label{fig:subgroups_age}
\begin{centering}
\smallskip{}
\par\end{centering}
\begin{centering}
(A) Under Age 40
\par\end{centering}
\begin{centering}
\includegraphics[scale=0.85]{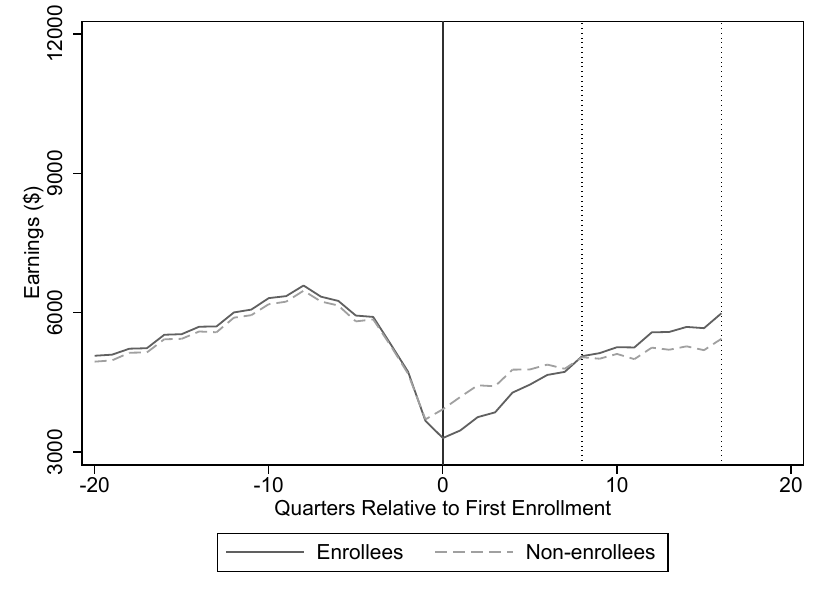}
\par\end{centering}
\begin{centering}
(B) Age 40 or Over
\par\end{centering}
\begin{centering}
\includegraphics[scale=0.85]{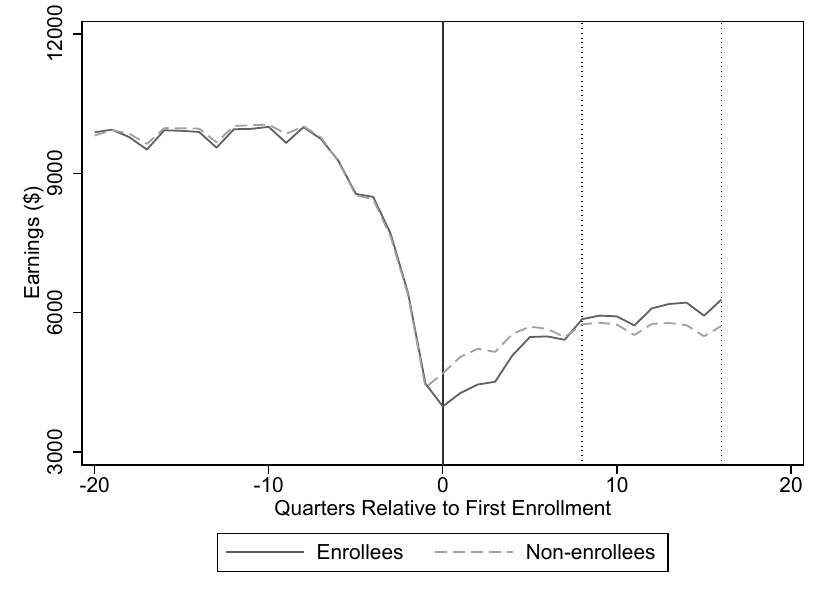}
\par\end{centering}
\centering{}%
\begin{minipage}[t]{0.8\columnwidth}%
Notes: The upper (lower) graph shows the average quarterly earnings of enrollees under age 40 (age 40 or over) and matched non-enrollee UI claimants. The solid vertical lines denote the quarter of first enrollment, and the vertical dashed lines denote eight and 16 quarters after first enrollment. There are $N=$92,712 (50,728) UI claims in the upper (lower) graph, corresponding to 88,477 (49,357) unique individuals.
\end{minipage}
\end{figure}

\clearpage{}
\begin{figure}[H]
\caption{Earnings of Enrollees and Matched Non-enrollees, By Tenure}
\label{fig:subgroups_tenure}
\begin{centering}
\smallskip{}
\par\end{centering}
\begin{centering}
(A) Job Tenure of One Year or Less
\par\end{centering}
\begin{centering}
\includegraphics[scale=0.5]{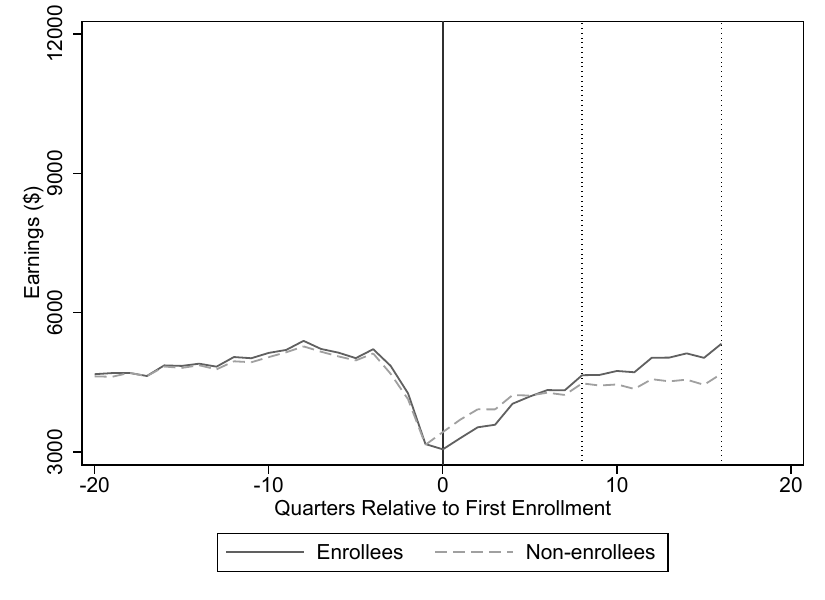}
\par\end{centering}
\begin{centering}
(B) Job Tenure More Than One Year, Less Than Or Equal to Six Years
\par\end{centering}
\begin{centering}
\includegraphics[scale=0.5]{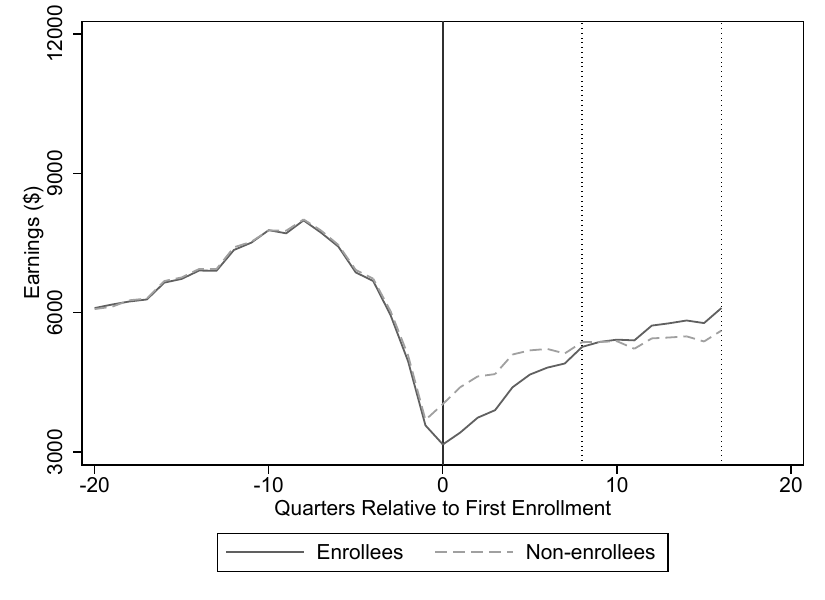}
\par\end{centering}
\begin{centering}
(C) Job Tenure More Than Six Years
\par\end{centering}
\begin{centering}
\includegraphics[scale=0.5]{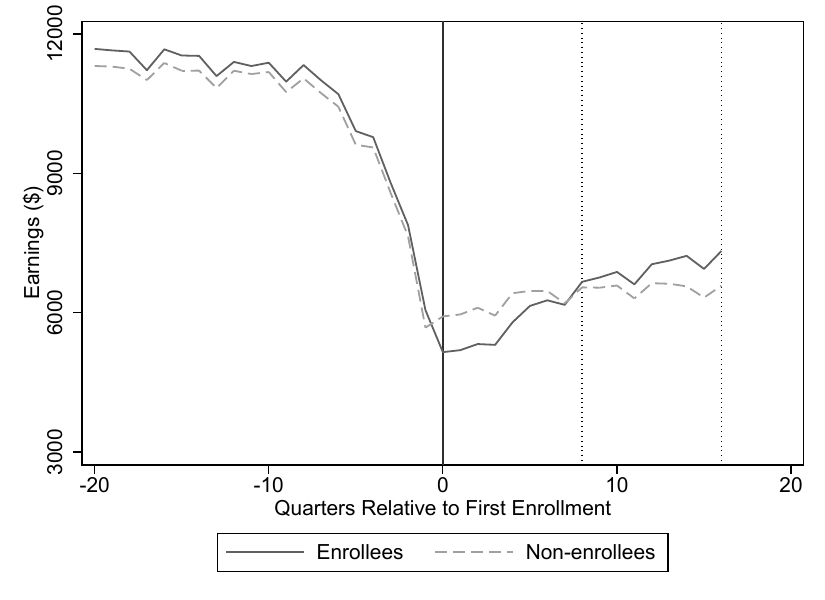}
\par\end{centering}
\centering{}%
\begin{minipage}[t]{0.7\columnwidth}%
Notes: These graphs show the average quarterly earnings of enrollees in different job tenure categories and matched non-enrollee UI claimants. The solid vertical lines denote the quarter of first enrollment, and the vertical dashed lines denote eight and 16 quarters after first enrollment. There are $N=$49,554; 63,902; and 29,984 UI claims in each graph, respectively, corresponding to 47,942; 61,865; and 29,026 unique individuals.
\end{minipage}
\end{figure}

\clearpage{}
\begin{figure}[H]
\caption{Earnings of Enrollees and Matched Non-enrollees, By Race}
\label{fig:subgroups_race}
\begin{centering}
\smallskip{}
\par\end{centering}
\begin{centering}
(A) White
\par\end{centering}
\begin{centering}
\includegraphics[scale=0.85]{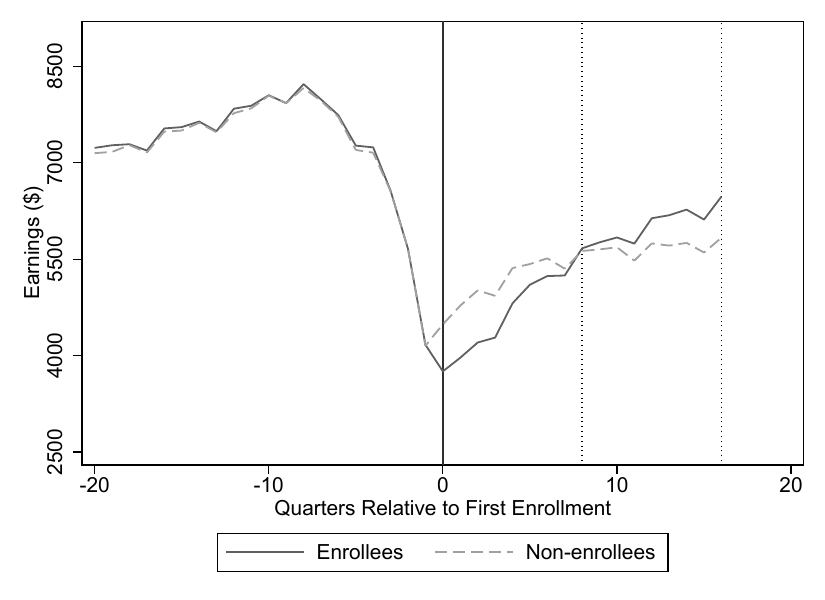}
\par\end{centering}
\begin{centering}
(B) Black
\par\end{centering}
\begin{centering}
\includegraphics[scale=0.85]{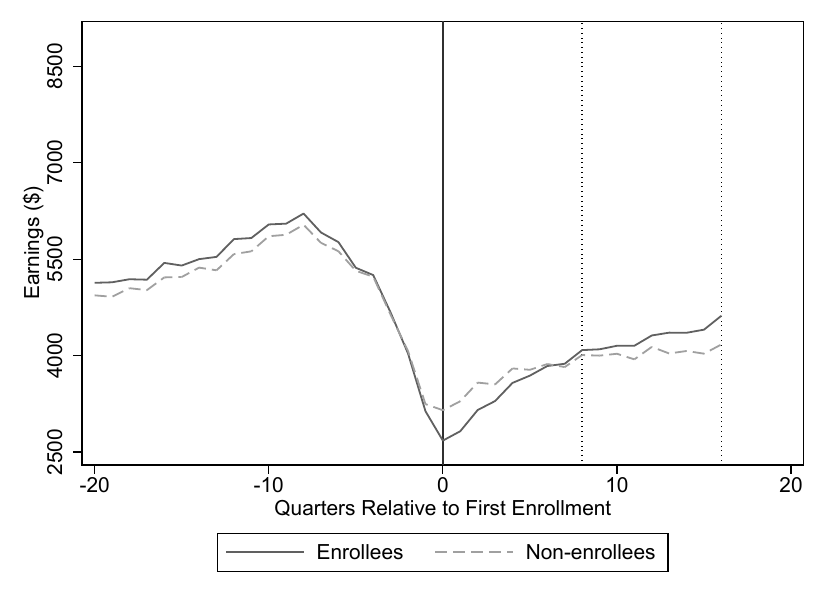}
\par\end{centering}
\centering{}%
\begin{minipage}[t]{0.8\columnwidth}%
Notes: The upper (lower) graph shows the average quarterly earnings of White (Black) enrollees and matched non-enrollee UI claimants. The solid vertical lines denote the quarter of first enrollment, and the vertical dashed lines denote eight and 16 quarters after first enrollment. There are $N=$107,810 (25,292) UI claims in the upper (lower) graph, corresponding to 103,514 (24,037) unique individuals.
\end{minipage}
\end{figure}
\clearpage\clearpage{}
\begin{figure}[H]
\caption{Earnings of Enrollees and Matched Non-enrollees, By Prior College Experience}
\label{fig:subgroups_prioredu}
\begin{centering}
\smallskip{}
\par\end{centering}
\begin{centering}
(A) No Prior College
\par\end{centering}
\begin{centering}
\includegraphics[scale=0.85]{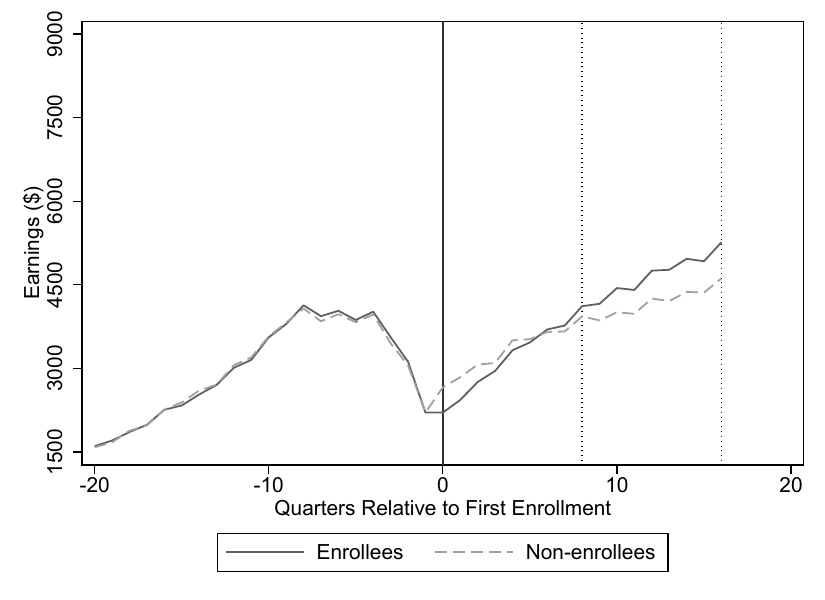}
\par\end{centering}
\begin{centering}
(B) With Prior College
\par\end{centering}
\begin{centering}
\includegraphics[scale=0.85]{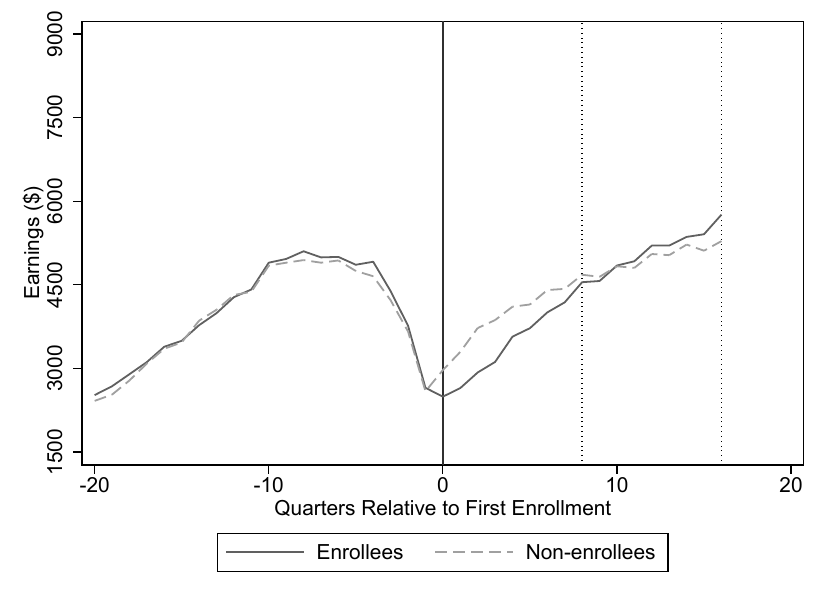}
\par\end{centering}
\centering{}%
\begin{minipage}[t]{0.8\columnwidth}%
Notes: The upper (lower) graph shows the average quarterly earnings of enrollees without (with) prior college experience, and their matched non-enrollee UI claimants. The solid vertical lines denote the quarter of first enrollment, and the vertical dashed lines denote eight and 16 quarters after first enrollment. There are $N=$12,966 (5,212) UI claims in the upper (lower) graph, corresponding to 12,421 (4,862) unique individuals.
\end{minipage}
\end{figure}
\clearpage{}
\begin{figure}[H]
\caption{Earnings of Enrollees and Matched Non-enrollees, By School Type}
\label{fig:schooltypes}
\begin{centering}
\smallskip{}
\par\end{centering}
\begin{centering}
(A) Community College
\par\end{centering}
\begin{centering}
\includegraphics[scale=0.85]{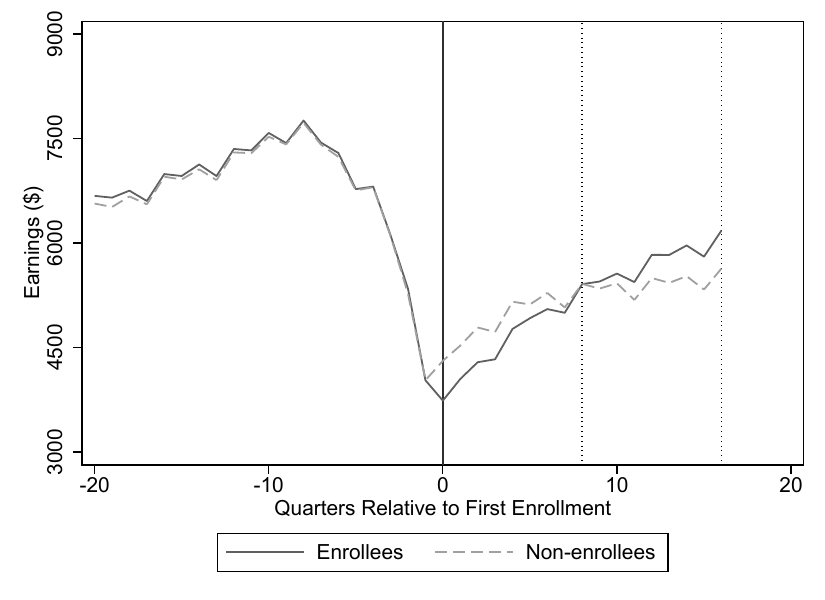}
\par\end{centering}
\begin{centering}
(B) Technical Center
\par\end{centering}
\begin{centering}
\includegraphics[scale=0.85]{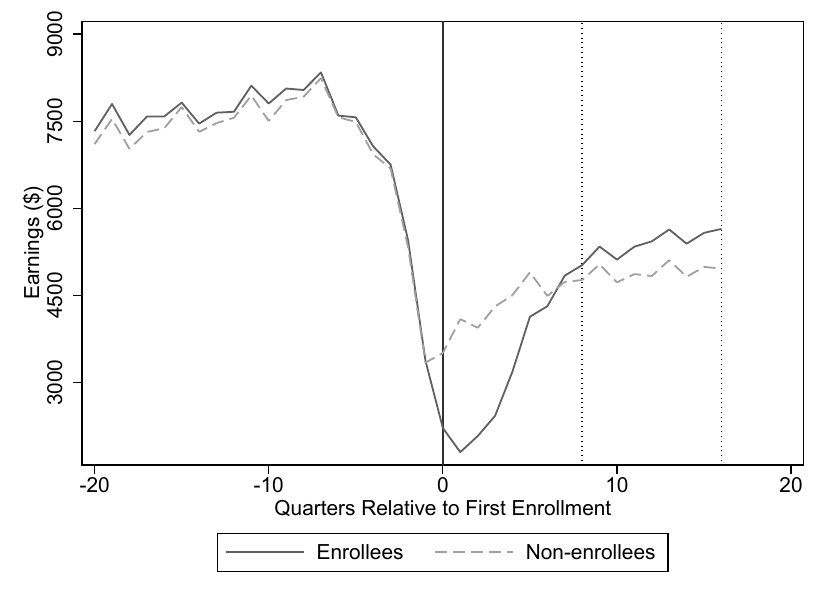}
\par\end{centering}
\centering{}%
\begin{minipage}[t]{0.8\columnwidth}%
Notes: The upper (lower) graph shows the average quarterly earnings of community college (technical center) enrollees and matched non-enrollee UI claimants. The solid vertical lines denote the quarter of first enrollment, and the vertical dashed lines denote eight and 16 quarters after first enrollment. There are $N=$121,078 (15,932) UI claims in the upper (lower) graph, corresponding to 116,691 (15,703) unique individuals.
\end{minipage}
\end{figure}
\clearpage{}
\begin{figure}[H]
\caption{Earnings of Enrollees and Matched Non-enrollees, By Institution Quality}
\label{fig:schoolquality}
\begin{centering}
\smallskip{}
\par\end{centering}
\begin{centering}
(A) Instructional Expenditures Per Student
\par\end{centering}
\begin{centering}
\includegraphics[scale=0.43]{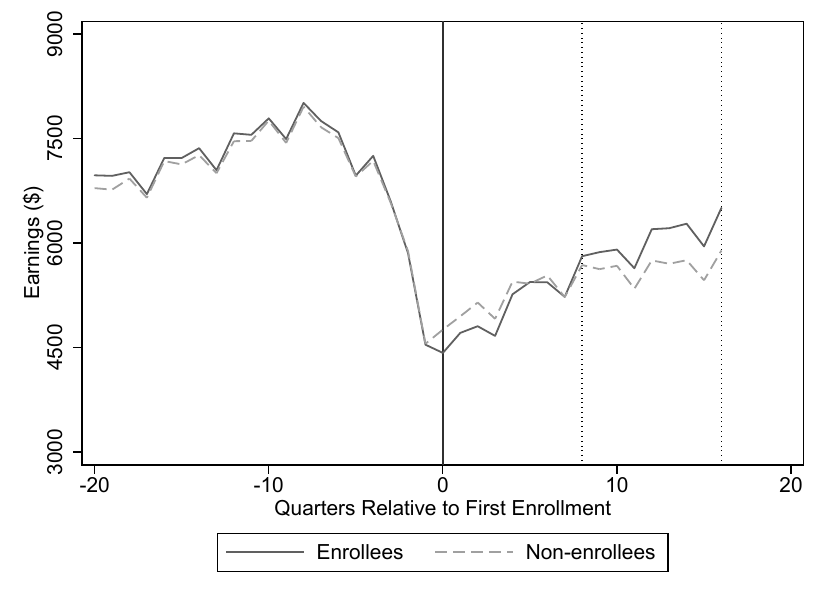} \includegraphics[scale=0.43]{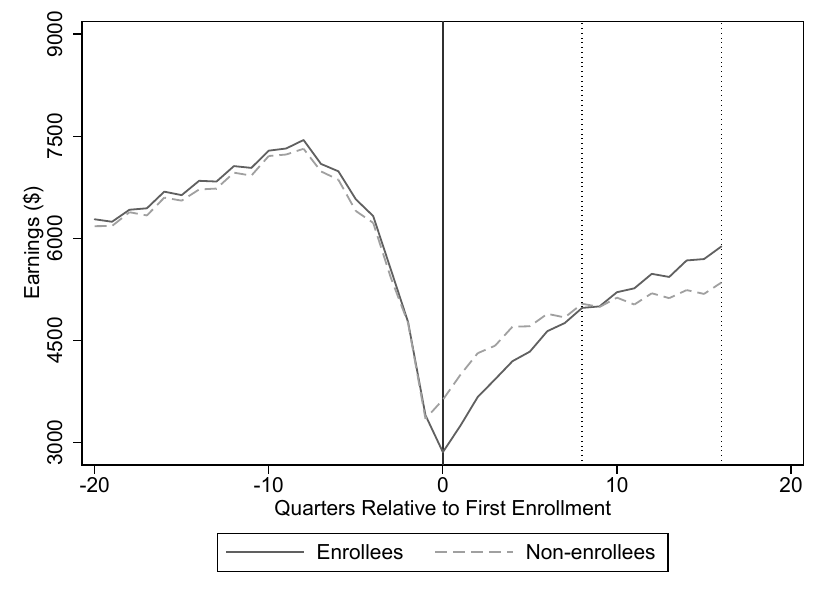}
\par\end{centering}
\begin{centering}
(B) Institutional Completion Rates
\par\end{centering}
\begin{centering}
\includegraphics[scale=0.43]{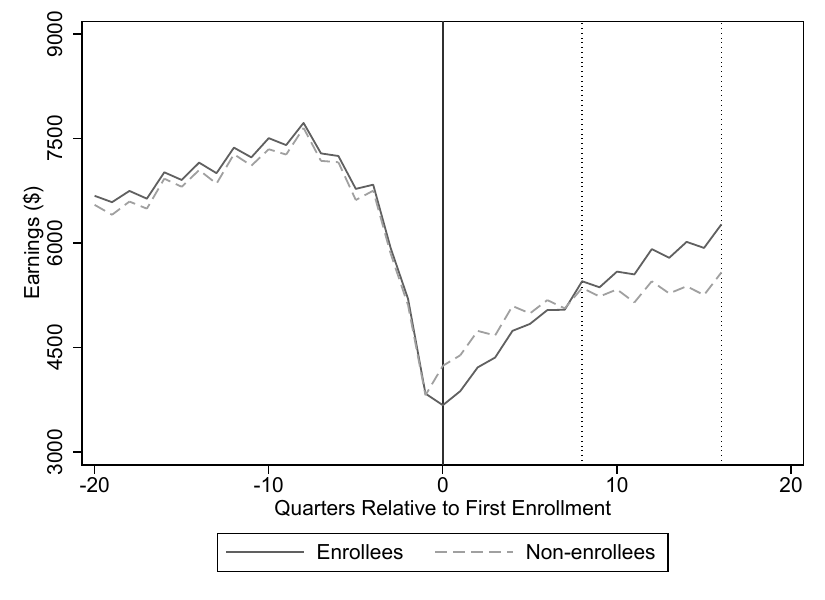} \includegraphics[scale=0.43]{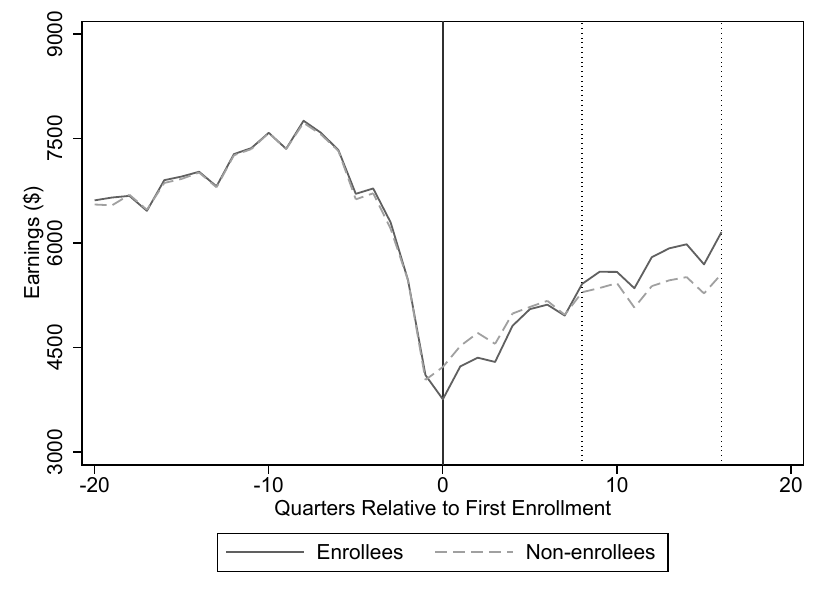}
\par\end{centering}
\begin{centering}
(C) Institutional Earnings
\par\end{centering}
\begin{centering}
\includegraphics[scale=0.43]{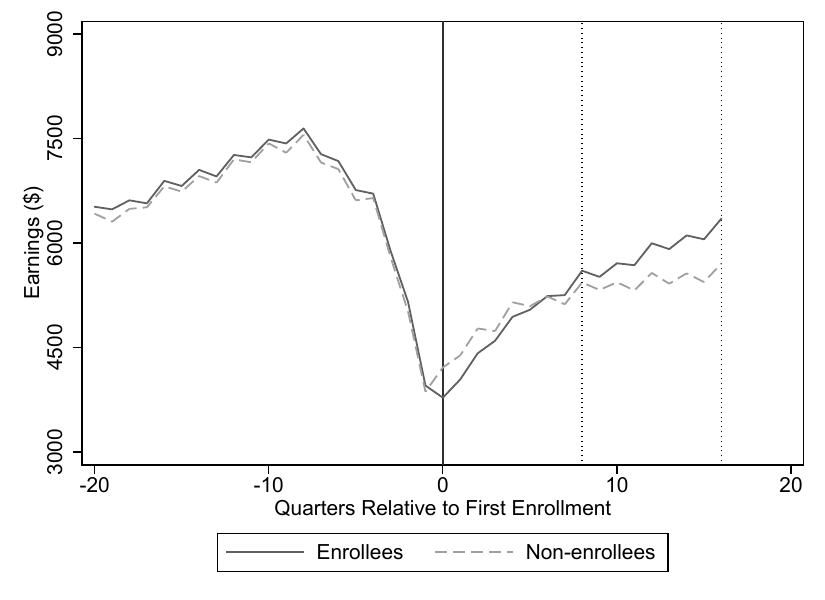}\includegraphics[scale=0.43]{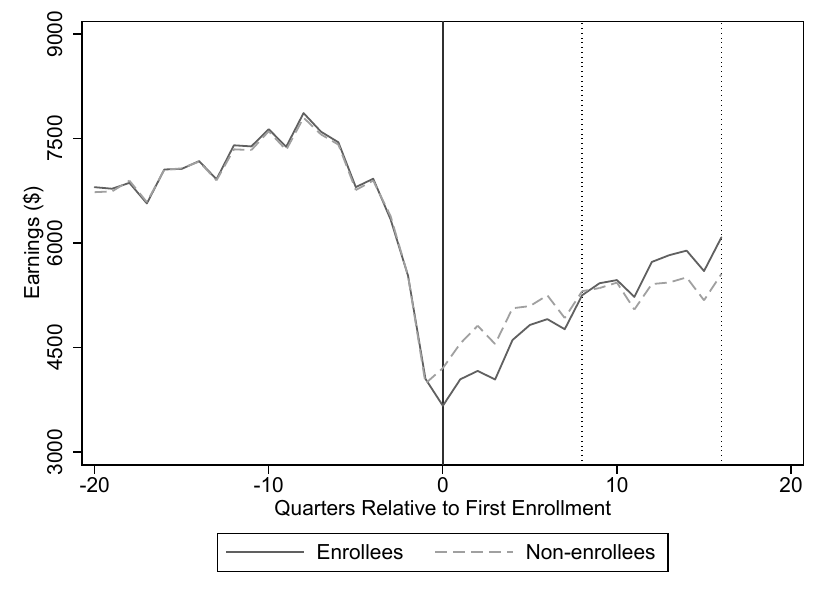}
\par\end{centering}
\centering{}%
\begin{minipage}[t]{0.8\columnwidth}%
{\small{}Notes: The left (right) graph of Panel A shows the average quarterly earnings of enrollees who attend community colleges that have above (below) median instructional expenditures per student, and matched non-enrollee UI claimants. The left (right) graph of Panel B shows analogous results for those who attend community colleges that have above (below) median institutional completion rates, and matched non-enrollee UI claimants. Institutional completion rate is defined as the proportion of entering students who graduated within eight years of entry. The left (right) graph of Panel C shows the analogous results for those who attend community colleges that have above (below) median institutional earnings. Institutional earnings is defined as the median earnings of students working and not enrolled 10 years after entry. The solid vertical lines denote the quarter of first enrollment, and the vertical dashed lines denote eight and 16 quarters after first enrollment. There are $N=$63,348, 50,754, 60,488, 54,914, 57,504, 58,238 UI claims in the upper left, upper right, middle left, middle right, lower left, and lower right graphs, respectively, corresponding to 61,742, 49,789, 59,129, 53,573, 56,261, 57,004 unique individuals.}
\end{minipage}
\end{figure}

\begin{figure}[H]
\caption{Joining Higher AKM Premium Firms Among Enrollees and Matched Non-enrollees}
\label{fig:firmakm-prob}\label{fig:firmakm}
\begin{centering}
\smallskip{}
\par\end{centering}
\begin{centering}
(A) Probability of Switching to a Higher Premium Firm Over Time
\par\end{centering}
\begin{centering}
\includegraphics[scale=0.85]{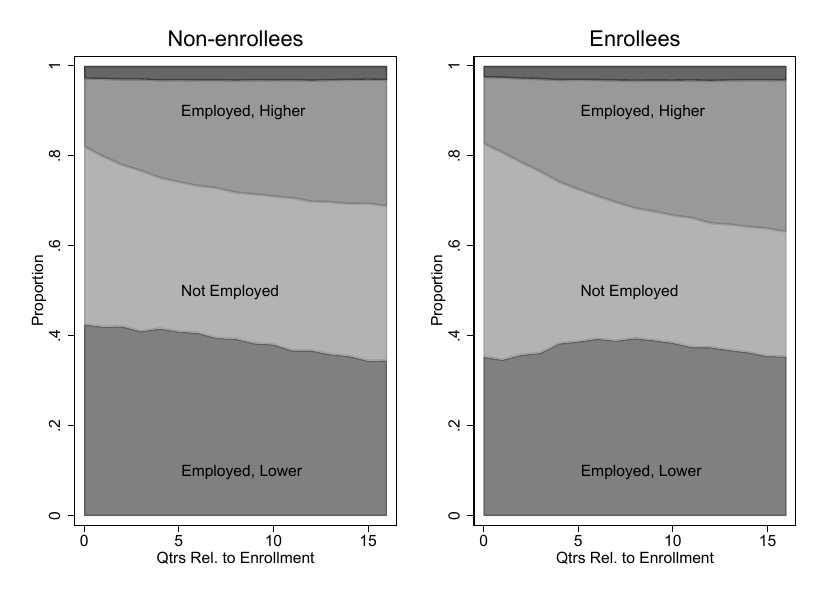}
\par\end{centering}
\begin{centering}
(B) Decomposition of Earnings: the Role of Firm Wage Premium
\par\end{centering}
\begin{centering}
\includegraphics[scale=0.85]{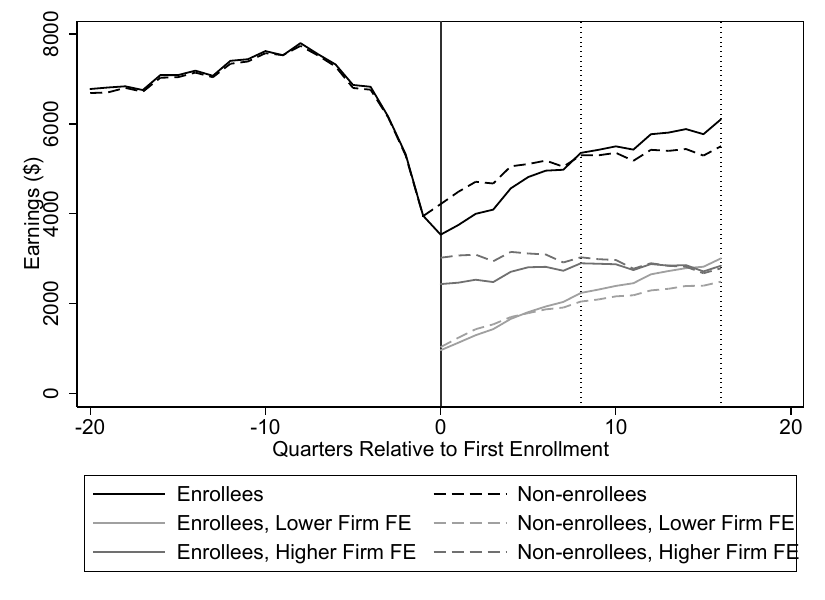}
\par\end{centering}
\centering{}%
\begin{minipage}[t]{0.9\columnwidth}%
Notes: The figures in Panel A plot the probability of employment in a firm with a higher firm AKM premium, employment in a firm with a lower (including same) premium, and non-employment over time for enrollees and matched non-enrollee UI claimants (the dark unlabeled group at the top corresponds to those who are employed but whose prior or current firm AKM effects cannot be estimated). Panel B plots the average quarterly earnings of enrollee and matched non-enrollee UI claimants (black solid and dashed lines). The gray lines disaggregate the post-enrollment earnings into two components: average quarterly earnings from joining firms with higher and lower AKM premia, each scaled by the probability of employment in those two groups, respectively.  $N=141,758$, corresponding to 136,074 unique individuals.
\end{minipage}
\end{figure}

\clearpage{}
\begin{figure}[H]
\caption{Industries of Enrollees and Matched Non-enrollees, 16th Quarter Post-Enrollment}
\label{fig:destination_ind}
\begin{centering}
\smallskip{}
\par\end{centering}
\begin{centering}
(A) Women
\par\end{centering}
\begin{centering}
\includegraphics[scale=0.85]{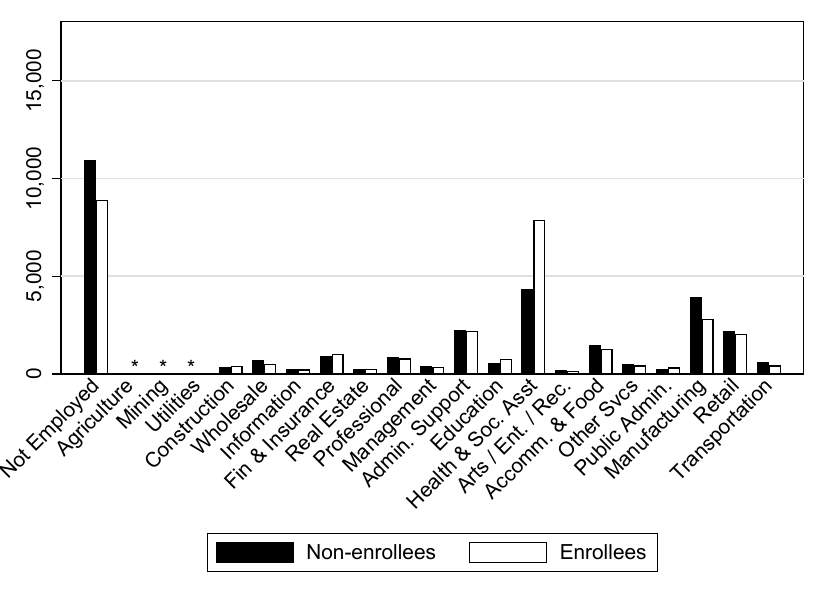}
\par\end{centering}
\begin{centering}
(B) Men
\par\end{centering}
\begin{centering}
\includegraphics[scale=0.85]{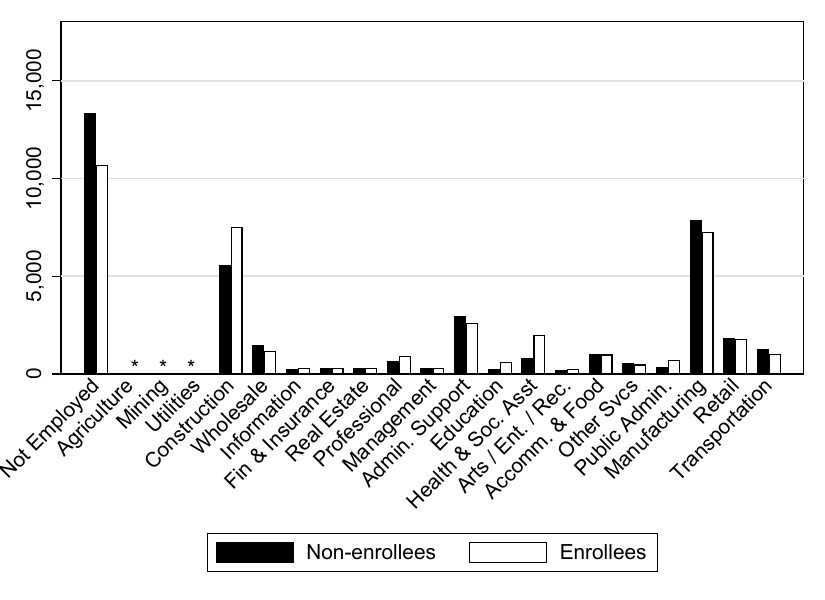}
\par\end{centering}
\centering{}%
\begin{minipage}[t]{0.8\columnwidth}%
Notes: These figures plot the number of enrollee and matched non-enrollee UI claimants employed in each sector (two-digit NAICS) or not employed. Agriculture, Mining, and Utilities sectors have fewer than 200 workers in each enrollee/non-enrollee cell and are not plotted.
\end{minipage}
\end{figure}

\clearpage{}
\begin{figure}[H]
\caption{Enrollees and Matched Non-enrollees Who Switched Industry, 16th Quarter Post-Enrollment}
\label{fig:destination_ind_switching}\smallskip{}

\begin{centering}
(A) Women
\par\end{centering}
\begin{centering}
\includegraphics[scale=0.85]{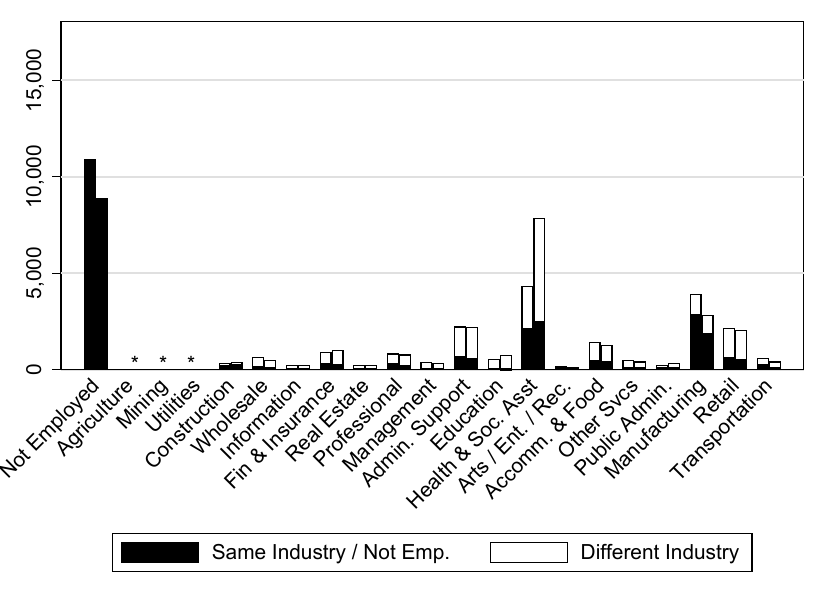}
\par\end{centering}
\begin{centering}
(B) Men
\par\end{centering}
\begin{centering}
\includegraphics[scale=0.85]{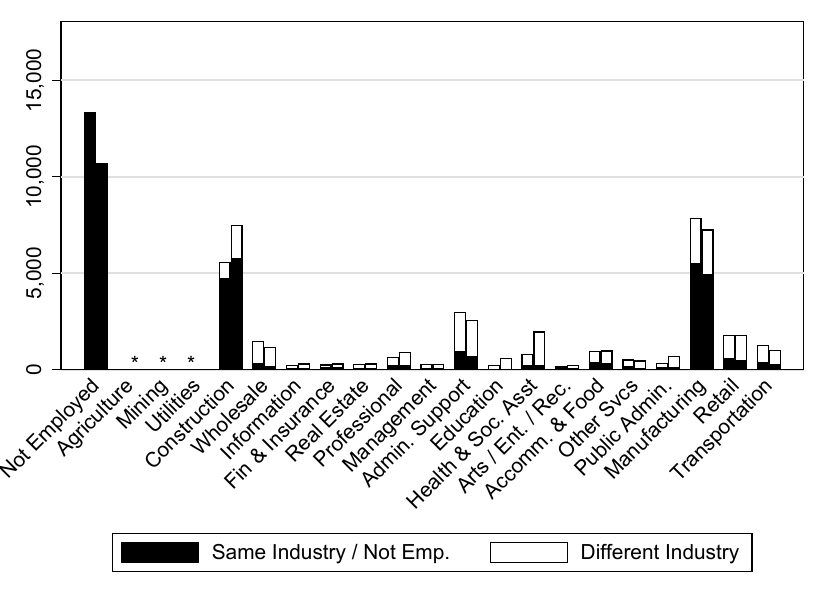}
\par\end{centering}
\centering{}%
\begin{minipage}[t]{0.8\columnwidth}%
Notes: These figures plot the number of enrollees (right bar of each pair) and matched non-enrollees (left bar) that are employed in a different sector (two-digit NAICS) than their pre-layoff sector. Agriculture, Mining, and Utilities sectors have fewer than 200 workers in each enrollee/non-enrollee cell and are not plotted.
\end{minipage}
\end{figure}

\clearpage{}
\begin{figure}[H]
\caption{Decomposition of Earnings Effect by Completers and Non-completers}
\label{fig:degree_decomp}
\begin{centering}
\smallskip{}
\par\end{centering}
\begin{centering}
\includegraphics[scale=0.85]{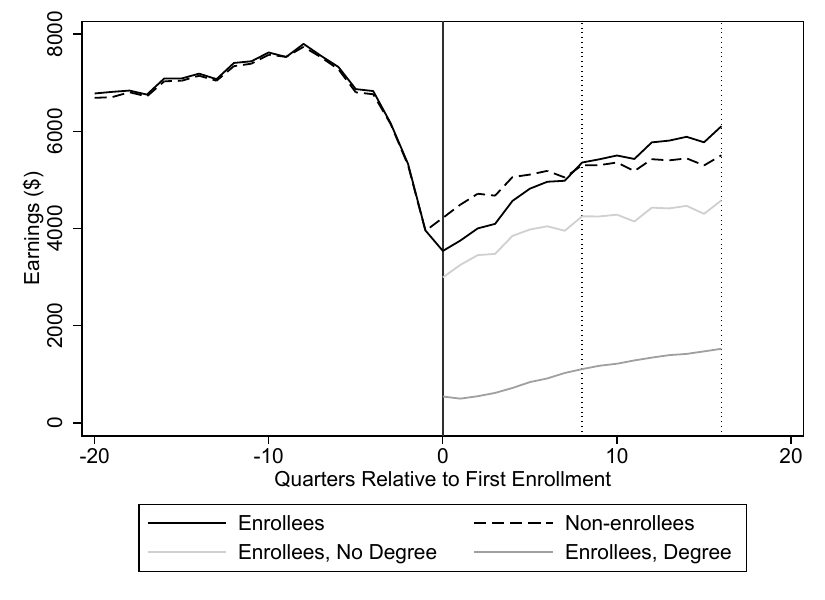}
\par\end{centering}
\centering{}%
\begin{minipage}[t]{0.8\columnwidth}%
{\small{}Notes: This figure plots the average quarterly earnings of enrollee and matched non-enrollee UI claimants (black solid and dashed lines). The gray lines disaggregate the post-enrollment earnings of enrollees into two components: average quarterly earnings of those who eventually obtain a credential and those who do not, each scaled by the proportion of obtaining a credential or not. The gray lines sum up to the solid black line. $N=141,758$, corresponding to 136,074 unique individuals.}
\end{minipage}
\end{figure}

\clearpage{}
\begin{figure}[H]
\caption{``Event Study'' Graphs for Fixed Effect Models (Ohio Sample)}
\label{fig:jls_event_ohio}
\begin{centering}
\smallskip{}
\par\end{centering}
\begin{centering}
(A) With Individual Fixed Effects
\par\end{centering}
\begin{centering}
\includegraphics[scale=0.5]{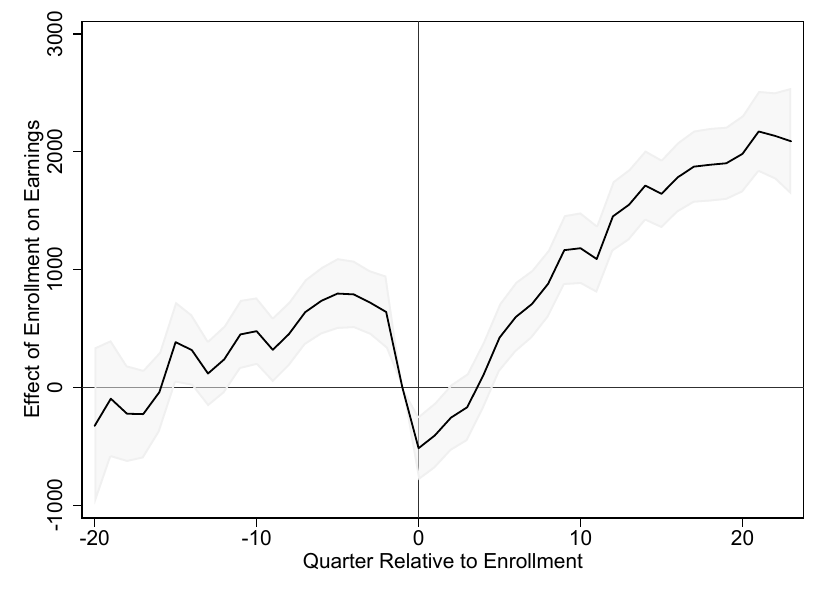}
\par\end{centering}
\begin{centering}
(B) With Individual Trends
\par\end{centering}
\begin{centering}
\includegraphics[scale=0.5]{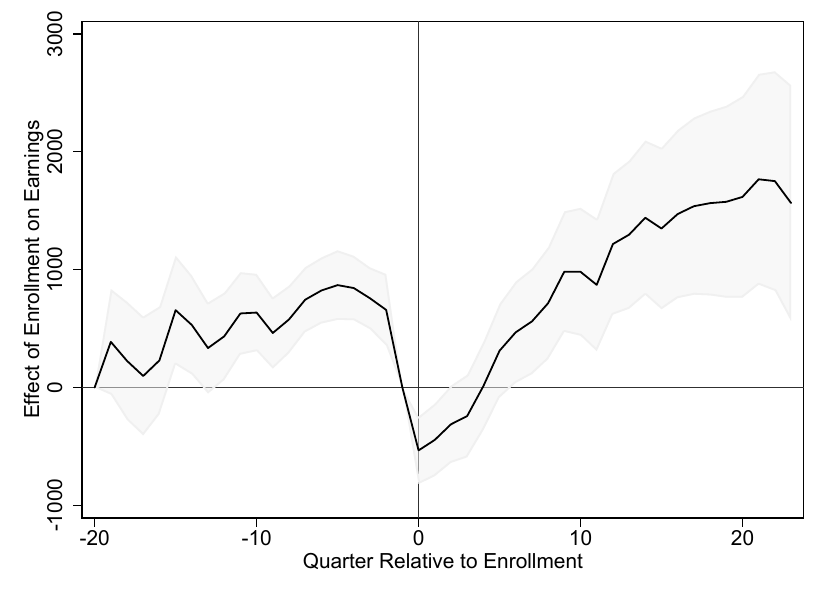}
\par\end{centering}
\begin{centering}
(C) With Individual Trends and Heterogeneous Layoff Effects
\par\end{centering}
\begin{centering}
\includegraphics[scale=0.5]{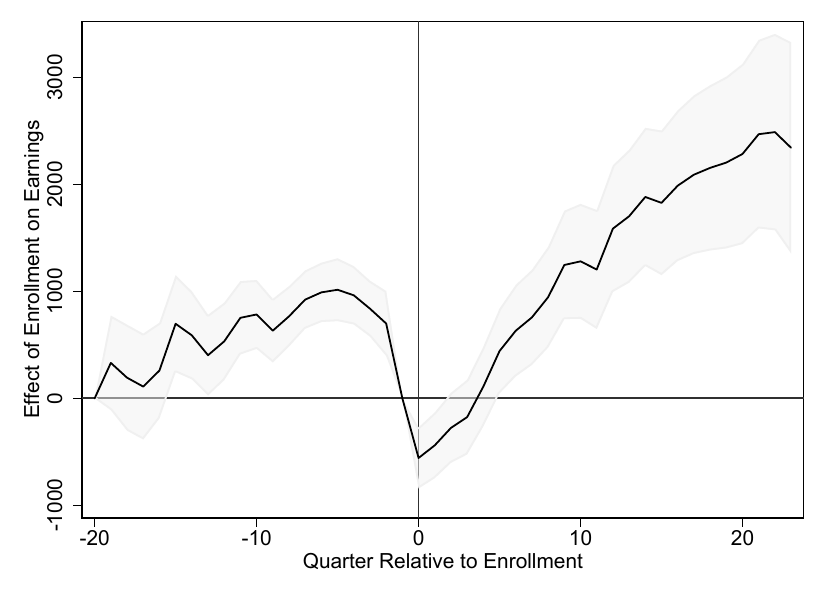}
\par\end{centering}
\centering{}%
\begin{minipage}[t]{0.7\columnwidth}%
Notes: These figures show specification checks of fixed effect models using five percent of our main analysis sample. Each graph plots the estimated ``effect'' of enrollment on earnings. The model in Panel A includes individual fixed effects, quarter fixed effects, and indicators for time relative to layoff. The model in Panel B adds individual time trends. The model in Panel C adds heterogeneous layoff effects. Shaded regions are 95 percent confidence intervals, where standard errors are clustered at the UI claim level. $N=93,528$ UI claims (corresponding to 91,043 unique individuals).
\end{minipage}
\end{figure}

\clearpage{}
\begin{figure}[H]
\caption{Enrollment by UI Claim Date}

\label{fig:enroll_by_claimdate}
\begin{centering}
\smallskip{}
\par\end{centering}
\begin{centering}
\includegraphics{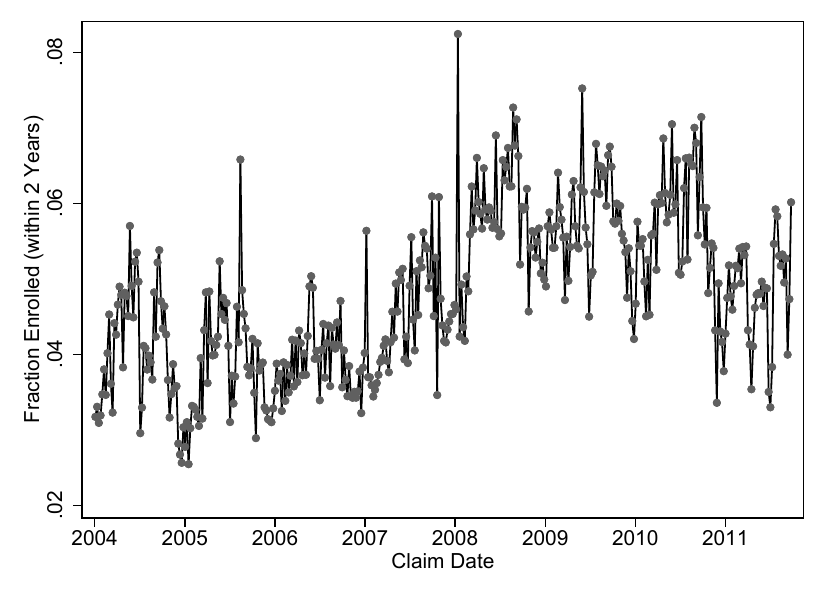}
\par\end{centering}
\centering{}%
\begin{minipage}[t]{0.8\columnwidth}%
Notes: This figure plots the fraction of UI claimants who enroll within two years of claiming UI, by the date of the UI claim. $N=1,994,777$, corresponding to 1,335,958 unique individuals.
\end{minipage}
\end{figure}
\clearpage{}
\begin{figure}[H]
\caption{Ohio UI Extensions, 2004-2012}
\label{fig:potdur_overtime}
\begin{centering}
\includegraphics{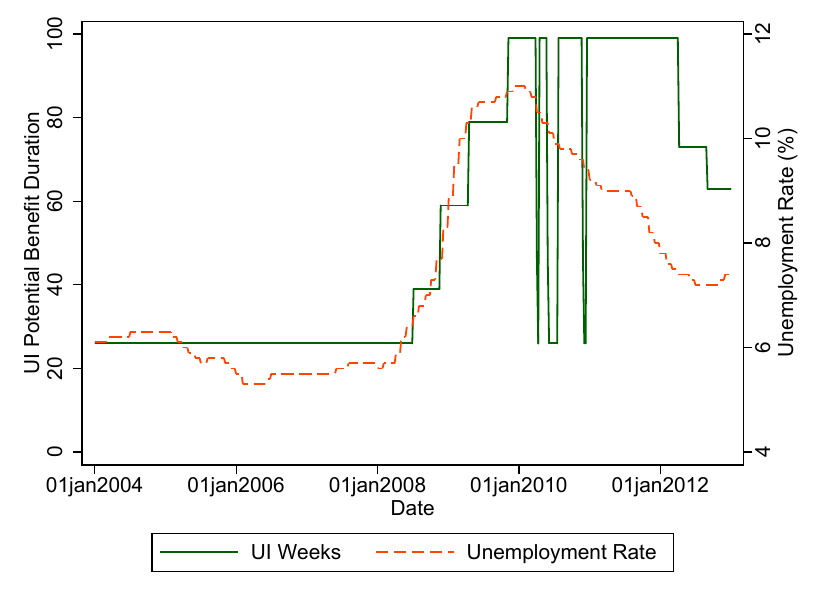}
\par\end{centering}
\centering{}%
\begin{minipage}[t]{0.8\columnwidth}%
Notes: This figure shows the statutory UI benefit duration in weeks (left axis) and the unemployment rate (right axis) in Ohio.
\end{minipage}
\end{figure}

\clearpage{}
\begin{figure}[H]
\caption{Simulated Weeks of UI Remaining to Workers of Different UI Claim Cohorts}
\label{fig:wksleft_sim}
\begin{centering}
\includegraphics{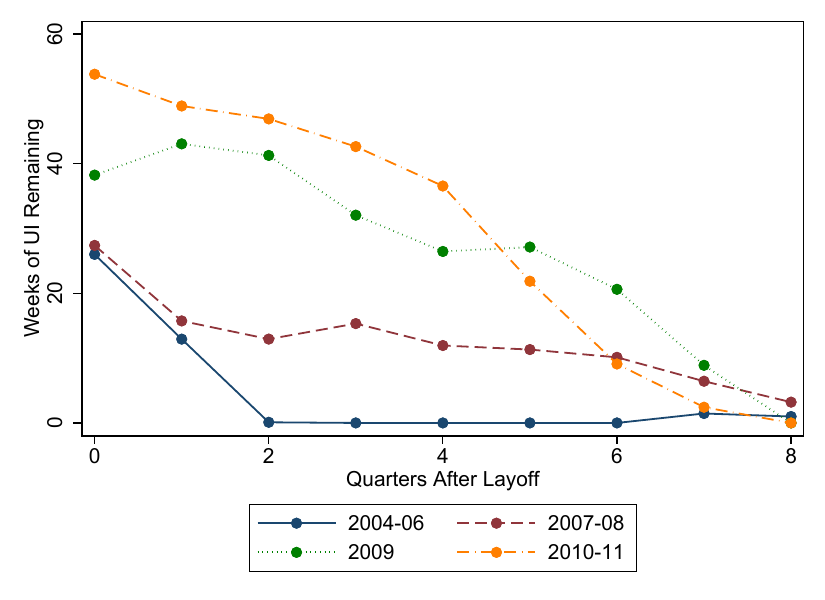}
\par\end{centering}
\centering{}%
\begin{minipage}[t]{0.8\columnwidth}%
Notes: This figure shows the simulated number of weeks of UI benefits remaining to various cohorts of UI claimants in five percent of our analysis sample. $N=99,478$ UI claims (corresponding to 96,874 unique individuals).
\end{minipage}
\end{figure}

\newpage{}
\begin{table}[H]
\caption{Earnings Differences between Later-Enrollees and Matched Non-enrollees}
\label{tab:lower_bound_test}\vspace{10bp}

\noindent \begin{centering}
\includegraphics[viewport=20bp 580bp 612bp 742bp,scale=0.85]{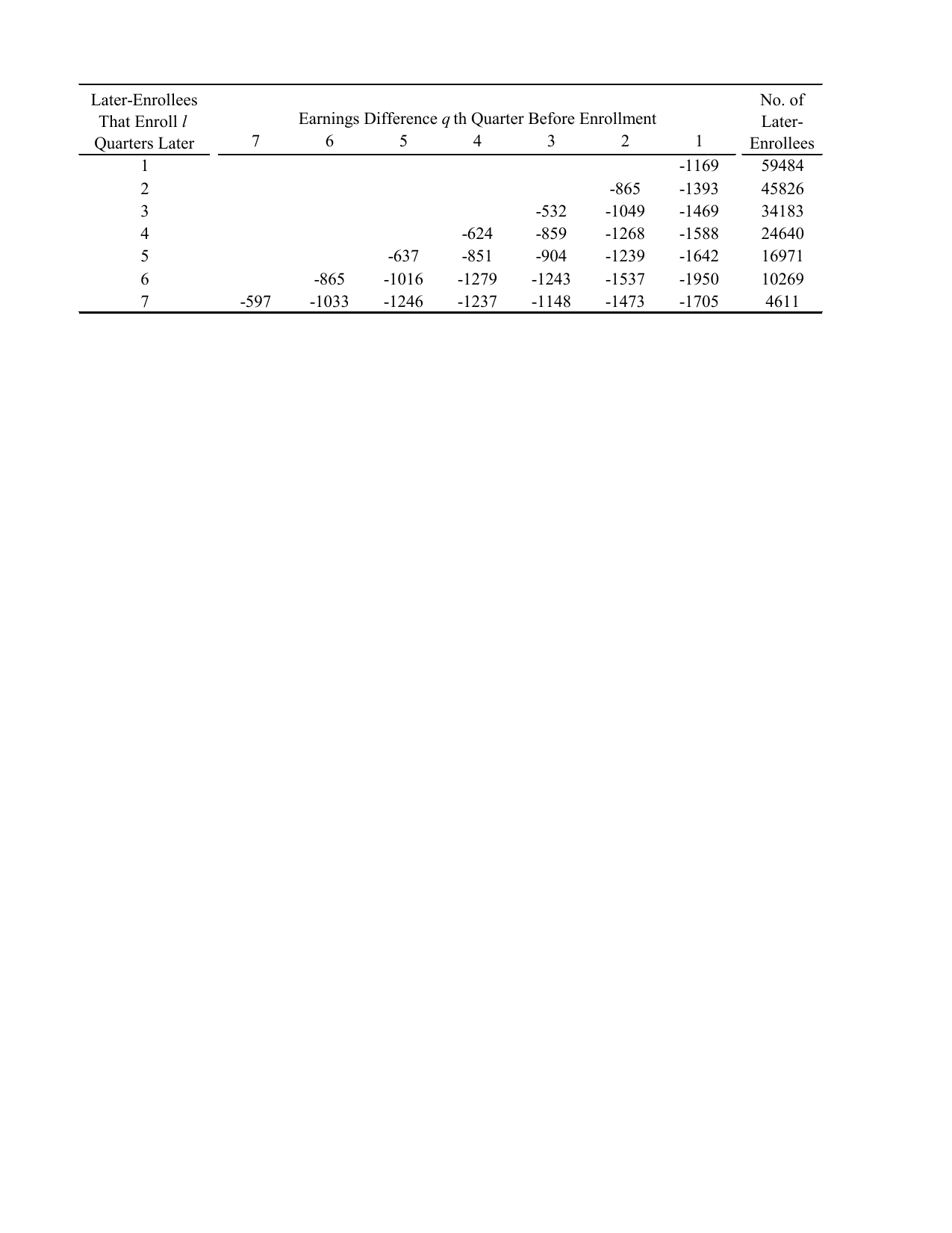}
\par\end{centering}
\centering{}%
\begin{minipage}[t]{0.82\columnwidth}%
Notes: This table presents a test of the assumption that later-enrollees have lower potential earnings than similar non-enrollees (Assumption \ref{assu:Sgeneral_potential_outcome}, which generalizes Assumption \ref{assu:S2_potential_outcome}). Each cell shows the difference in earnings between later-enrollees and matched never-enrollees. Since we define enrollment as enrolling in one of the eight quarters after layoff, later-enrollees can be categorized as enrolling in $l=1,\dots,7$, quarters later. That is, later-enrollees among period-1 non-enrollees ($D_{1}=0$) can enroll up to seven quarters later, later-enrollees among period-2 non-enrollees ($D_{1}=D_{2}=0$) can enroll up to six quarters later, and so on. For each of these later-enrollee cohorts (denoted by rows), earnings differences between the later-enrollee and matched never-enrollees are presented. In particular, for later-enrollee cohort $l$, we report earnings difference for each of the $l$ quarters before enrollment, and the columns denote the quarter $q=1,\dots,l$ in which the comparison is made. For example, the upper right number shows that for later-enrollees that enroll one quarter later, earnings are \$1,169 lower than similar never-enrollees in the period before they enroll; the last row shows that average earnings for workers who enroll in 7 quarters (i.e., they are in the subset of period-1 non-enrollees who end up enrolling in quarter 8) are lower than their matched never-enrollees for each of the seven quarters before enrollment.
\end{minipage}
\end{table}
\newpage{}
\begin{table}[H]
\caption{Pre-enrollment Earnings Differences, By Subgroup}
\label{tab:subgroup_balance}\vspace{10bp}

\begin{centering}
\includegraphics[viewport=-1bp 154bp 612bp 742bp,scale=0.85]{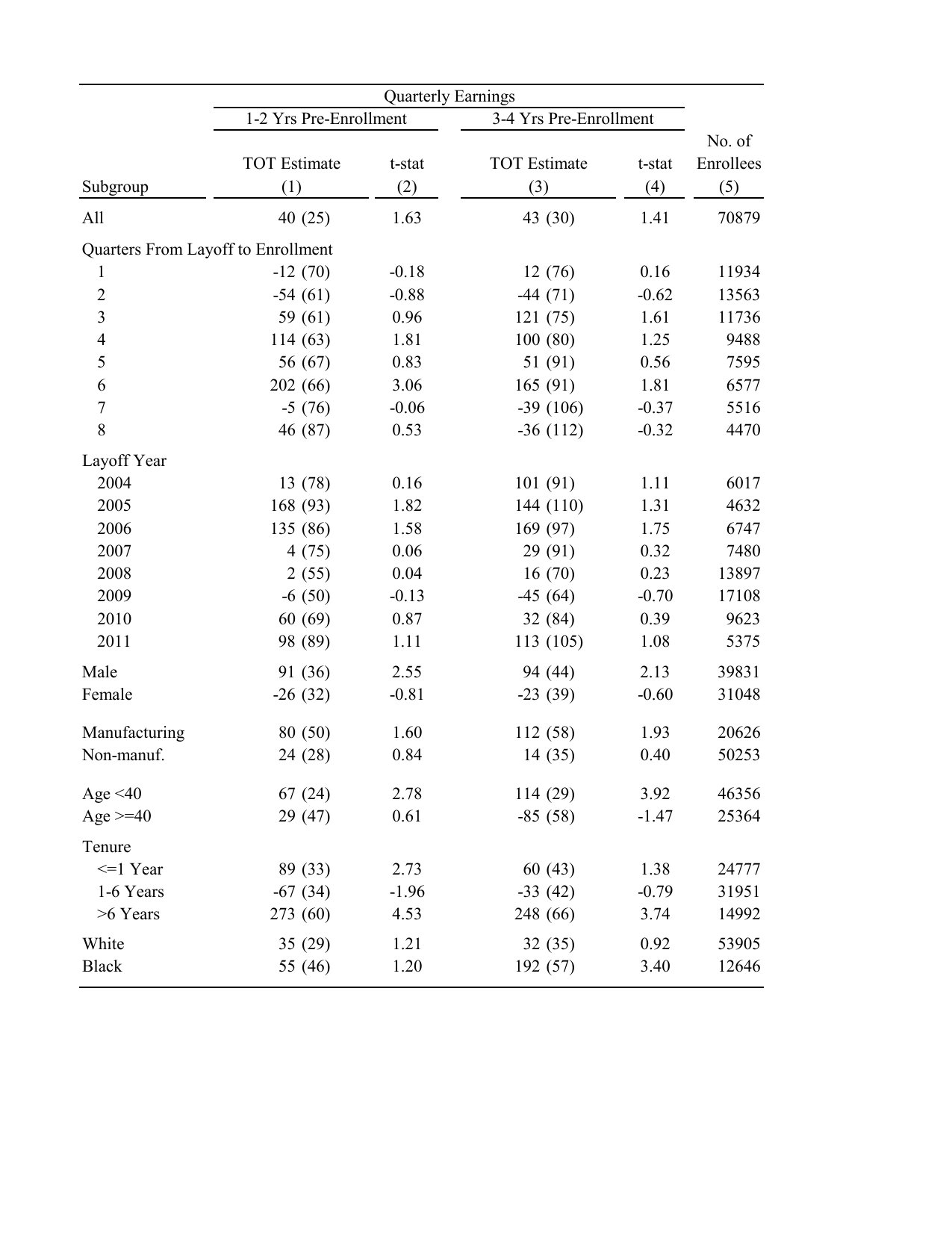}
\par\end{centering}
\centering{}%
\begin{minipage}[t]{0.82\columnwidth}%
Notes: This table presents balance tests for enrollees and matched non-enrollees within subgroups. Columns (1) and (3) show the difference between enrollees and matched non-enrollees in the two years prior to enrollment and three to four years prior to enrollment, respectively. Standard errors (in parentheses) and t-statistics for the mean pairwise difference between enrollees and matched non-enrollees are reported.
\end{minipage}
\end{table}
\newpage{}
\begin{sidewaystable}[H]
\caption{Enrollment Effects by Year of Layoff: Original versus Reweighted Estimates}
\label{tab:recession_reweight}\vspace{-146bp}

\noindent \begin{centering}
\includegraphics[viewport=60bp 220bp 612bp 742bp,scale=0.85]{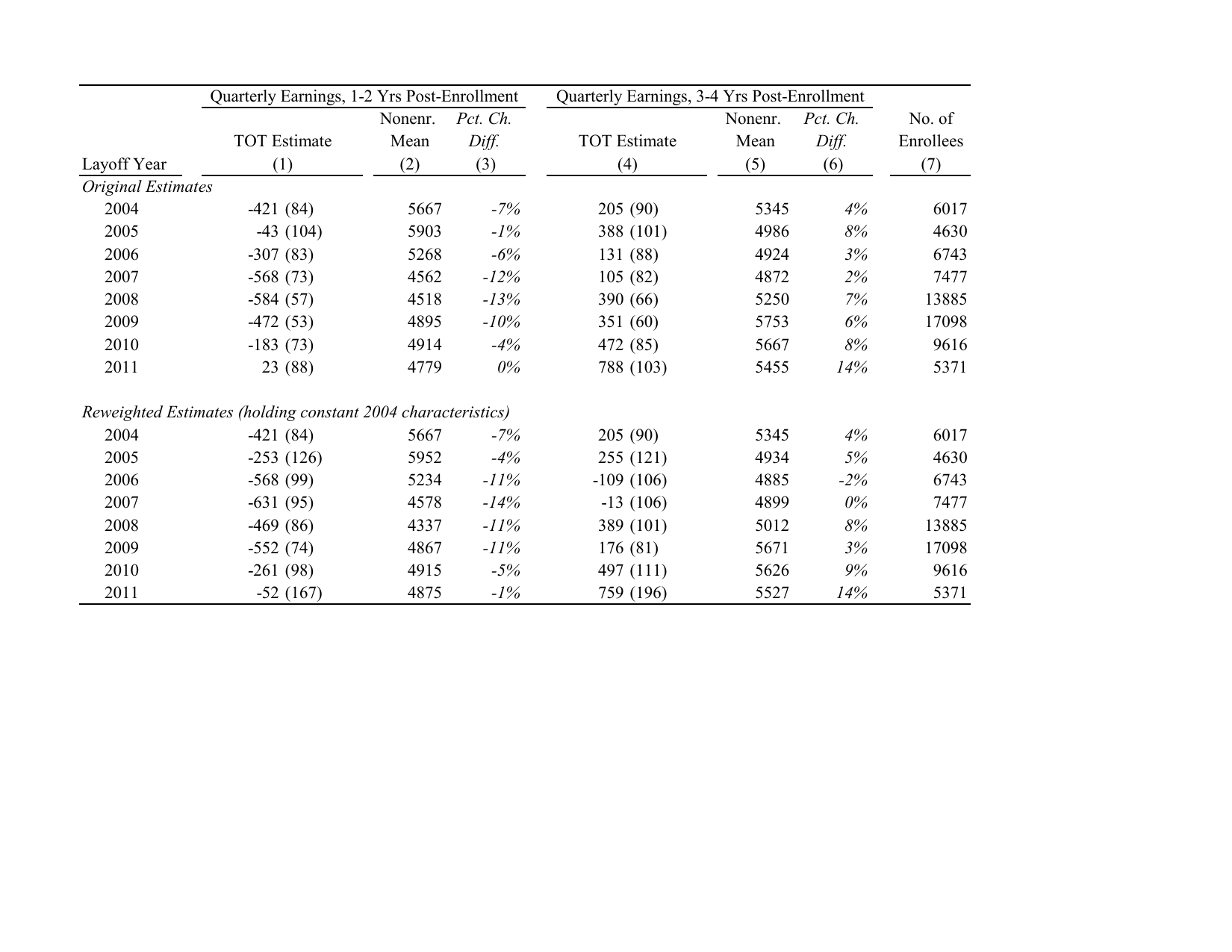}
\par\end{centering}
\centering{}%
\begin{minipage}[t]{0.72\columnwidth}%
Notes: This table shows the enrollment effects by year of layoff. The lower part of the table shows the enrollment effects where observations are reweighted to match the demographic composition of enrollees in the year 2004. Standard errors (in parentheses) for the mean pairwise difference between enrollees and matched non-enrollees are reported.%
\end{minipage}
\end{sidewaystable}
\newpage{}
\begin{table}[H]
\caption{Characteristics of Community College and Technical Center Enrollee Subsamples}
\label{tab:cc_otc_chars}\vspace{10bp}

\begin{centering}
\includegraphics[viewport=160bp 120bp 430bp 742bp,scale=0.75]{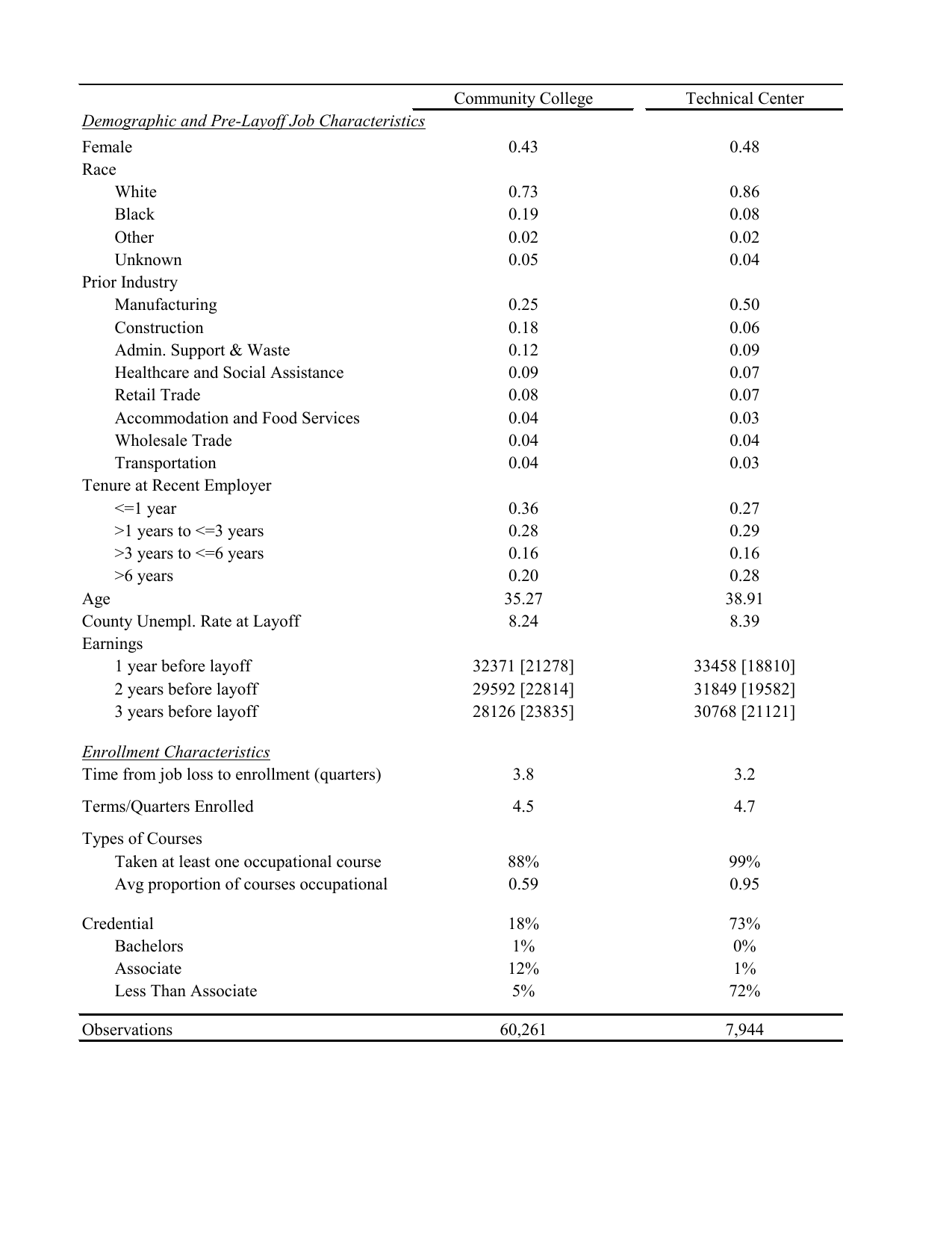}
\par\end{centering}
\centering{}%
\begin{minipage}[t]{0.77\columnwidth}%
Notes: This table presents descriptive characteristics for the subgroups of enrollees whose first institution is either a community college or a technical center. Type of Institution Attended, Terms/Quarters Enrolled, Types of Courses, and Credential are calculated within four years of first enrollment. ``Less than Associate'' credentials include less than two-year awards from HEI and any credential from OTC. Standard deviations are in brackets.
\end{minipage}
\end{table}
\newpage{}
\begin{table}[H]
\caption{Characteristics of WIA Enrollee Subsample}
\label{tab:wia_chars}\vspace{10bp}

\begin{centering}
\includegraphics[viewport=180bp 80bp 430bp 742bp,scale=0.75]{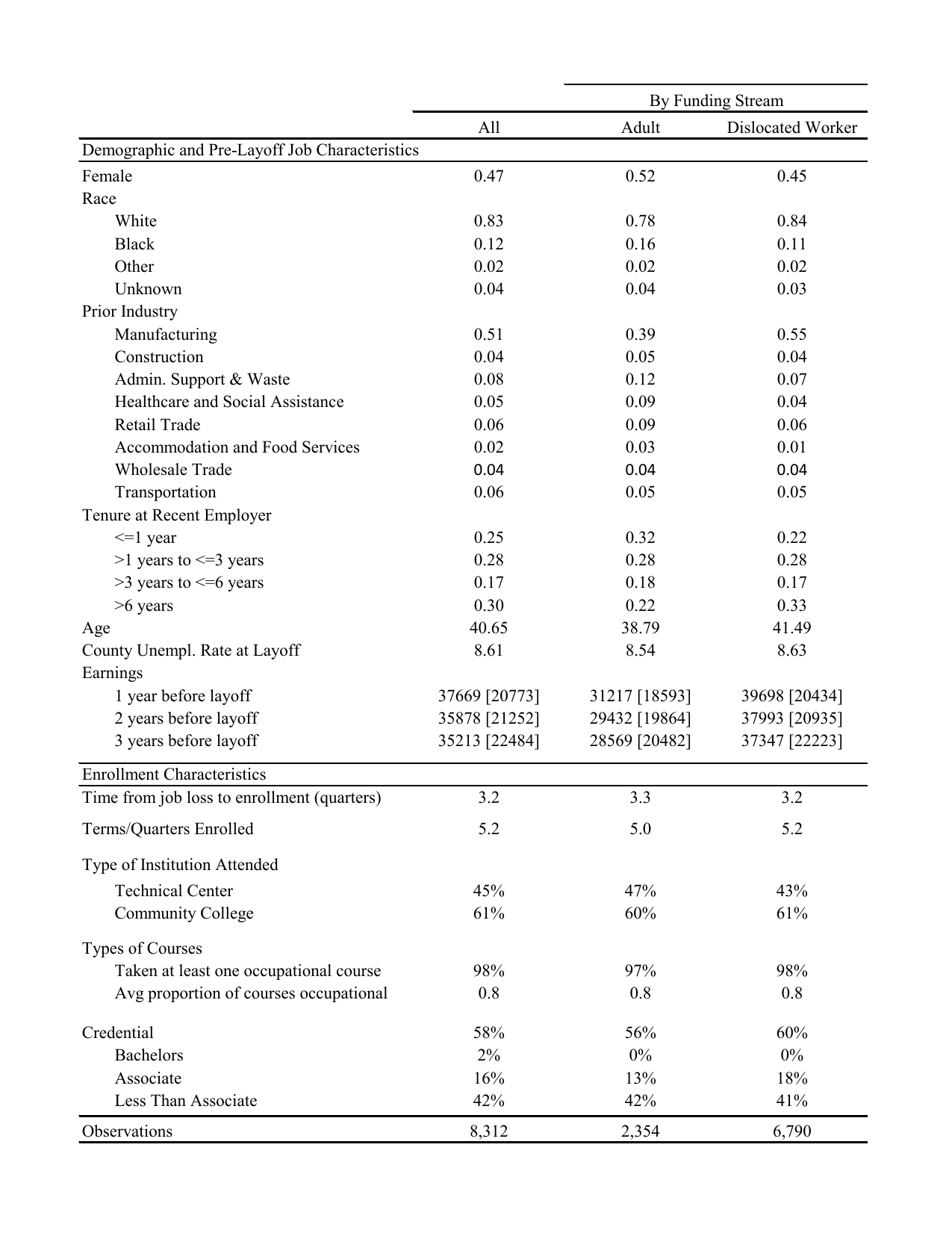}
\par\end{centering}
\centering{}%
\begin{minipage}[t]{0.8\columnwidth}%
\vspace{20bp}
Notes: This table presents descriptive characteristics for the subgroup of enrollees who received WIA training services. Type of Institution Attended, Terms/Quarters Enrolled, Types of Courses, and Credential are calculated within four years of first enrollment. Enrollees may attend more than one type of institution over the four-year period. ``Less than Associate'' credentials include less than two-year awards from HEI and any credential from OTC. Standard deviations are in brackets.
\end{minipage}
\end{table}
\newpage{}
\begin{table}[H]
\caption{Industry of Employment of CPS Enrollees by Education Institution Ownership Type}
\label{tab:cps_private_enrollment}\medskip{}

\begin{centering}
\includegraphics[viewport=170bp 475bp 430bp 742bp,scale=0.75]{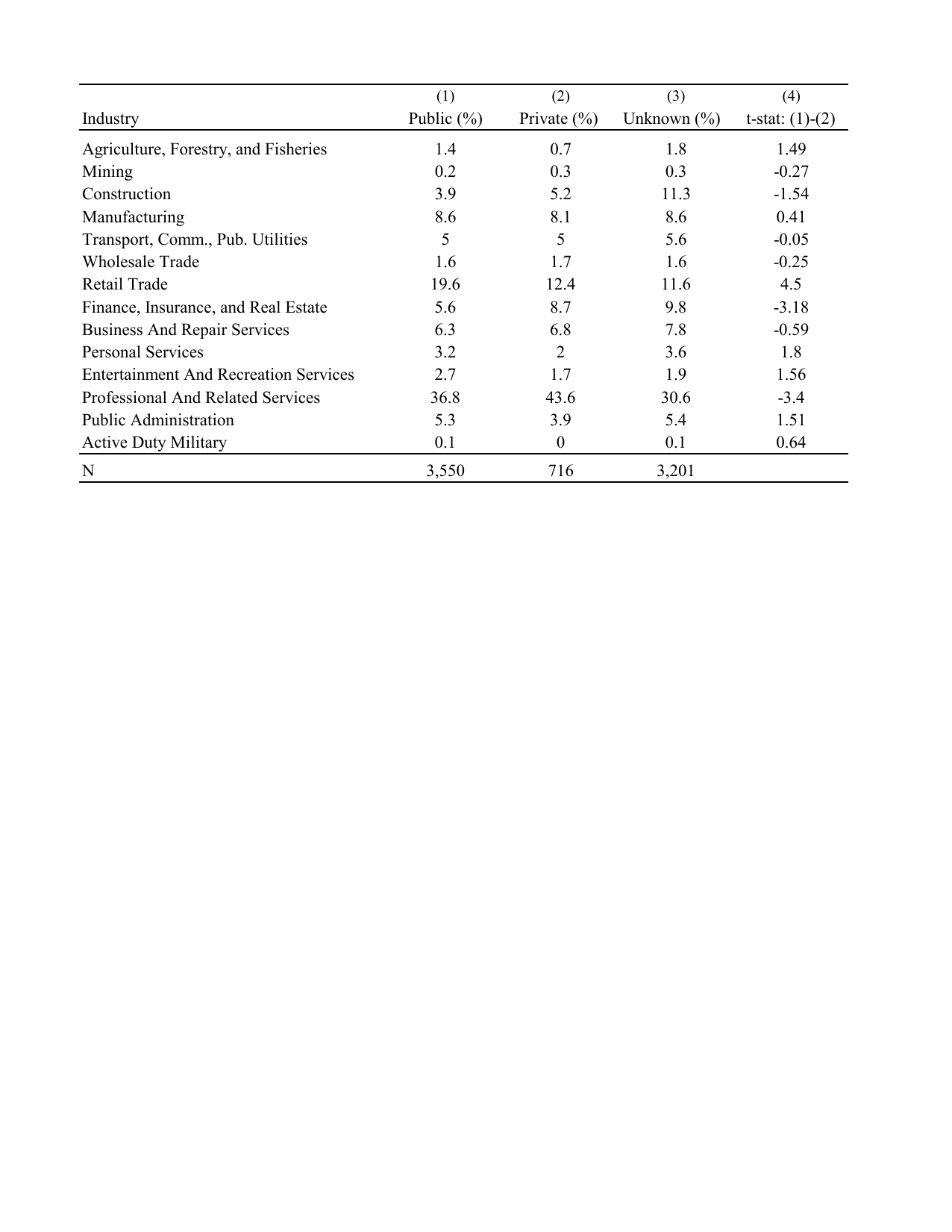}
\par\end{centering}
\centering{}%
\begin{minipage}[t]{0.8\columnwidth}%
Notes: This table presents the industry breakdown of CPS respondents who are in school or taking courses. Specifically, columns (1)-(3) report the shares of CPS respondents with past employment in each industry who are currently enrolled in institutions of public, private, and unknown ownership. Column (4) reports the t-statistics in the difference between the shares in columns (1) and (2).
\end{minipage}
\end{table}

\newpage{}
\begin{sidewaystable}[H]
\caption{Decomposition of Quarter 16 Earnings Effect}
\label{tab:decompqtr16}\vspace{-146bp}

\noindent \begin{centering}
\includegraphics[viewport=160bp 410bp 612bp 742bp,scale=0.85]{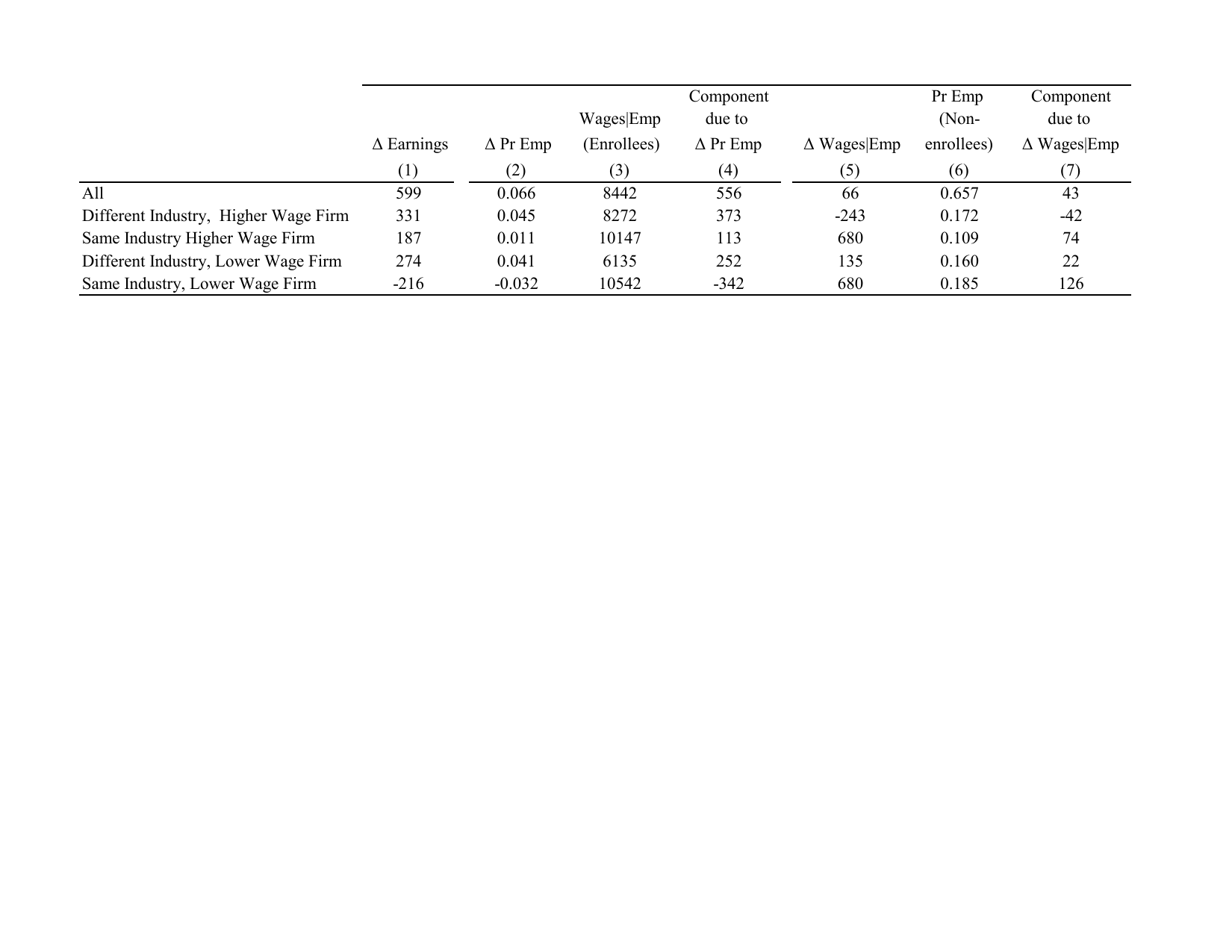}
\par\end{centering}
\centering{}%
\begin{minipage}[t]{0.85\columnwidth}%
Notes: This table decomposes the quarterly earnings difference between enrollees and matched non-enrollees at 16 quarters post-enrollment. The rows decompose the overall earnings effects into components due to different combinations of being employed in a different versus same industry as pre-layoff, and in a firm with a higher versus lower (including same) firm wage premium. Note that rows 2-5 do not add up to row 1 because of missing firm effect estimates. The columns decompose each row into components due to differences in employment and differences in wages conditional on employment. Column 4 (7) is equal to the product of columns 2 (5) and 3 (6). Columns (4) and (7) sum up to column (1). 
\end{minipage}
\end{sidewaystable}

\newpage{}
\begin{table}[H]
\caption{Effect of UI Potential Duration on Enrollment}
\label{tab:ui_extensions}\medskip{}

\begin{centering}
\includegraphics[viewport=30bp 545bp 612bp 740bp]{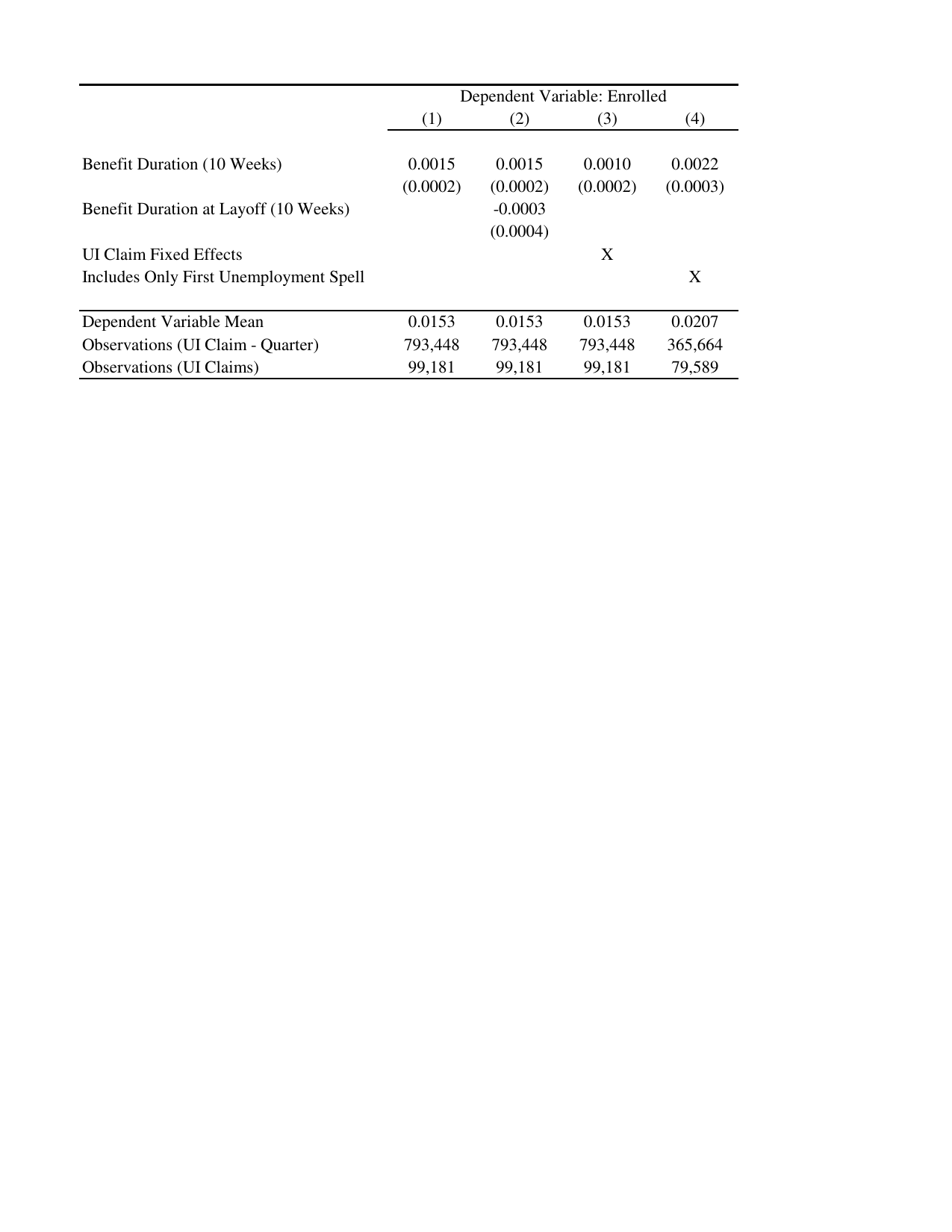}
\par\end{centering}
\centering{}%
\begin{minipage}[t]{0.9\columnwidth}%
Notes: This table shows the estimated effect of UI benefit durations on enrollment over the eight quarters after filing a UI claim. Additional controls include: (all columns) indicators for quarters post-layoff, year indicators, quarter-in-year indicators, quarterly state unemployment rate (quadratic); (all columns except (3)) female, 10-year age category, Black, Hispanic, indicator for having dependents, prior year wage quintile, tenure category, 2-digit prior industry, 2-digit prior occupation. Standard errors are clustered by individuals and in parentheses. There are 96,583 unique individuals in the regressions.
\end{minipage}
\end{table}

\end{document}